\PassOptionsToPackage{usenames,dvipsnames,svgnames,table}{xcolor}
\PassOptionsToPackage{pdftex=true,hidelinks}{hyperref}

\newif\iflongversion
\longversiontrue

\documentclass[a4paper,USenglish,thm-restate]{lipics-v2021}
\nolinenumbers

\pdfoutput=1
\usepackage{amssymb,thmtools}
\usepackage{format}
\usepackage{macros}
\usepackage{paper}

\iflongversion
\title{The Weakest Failure Detector for Genuine Atomic Multicast (Extended Version)}
\else
\title{The Weakest Failure Detector for Genuine Atomic Multicast}
\fi
\titlerunning{The Weakest Failure Detector for Genuine Atomic Multicast}

\author{Pierre Sutra}{Telecom SudParis, France }{pierre.sutra@telecom-sudparis.eu}{https://orcid.org/0000-0002-0573-2572}{}
\authorrunning{P. Sutra}
\Copyright{P. Sutra}
\ccsdesc[300]{Theory of computation~Distributed computing models}
\ccsdesc[300]{Software and its engineering~Distributed systems organizing principles}
\ccsdesc[300]{General and reference~Reliability}
\keywords{Failure Detector, State Machine Replication, Consensus}

\EventEditors{Christian Scheideler}
\EventNoEds{1}
\EventLongTitle{36th International Symposium on Distributed Computing (DISC 2022)}
\EventShortTitle{DISC 2022}
\EventAcronym{DISC}
\EventYear{2022}
\EventDate{October 25--27, 2022}
\EventLocation{Augusta, Georgia, USA}
\EventLogo{}
\SeriesVolume{246}
\ArticleNo{5}

\iflongversion
\hideLIPIcs
\else
\fi

\begin{document}

\maketitle
\begin{abstract}
  Atomic broadcast is a group communication primitive to order messages across a set of distributed processes.
  Atomic multicast is its natural generalization where each message $m$ is addressed to $\dst(m)$, a subset of the processes called its destination group.
  A solution to atomic multicast is \emph{genuine} when a process takes steps only if a message is addressed to it.
  Genuine solutions are the ones used in practice because they have better performance.
  
  Let $\Gr$ be all the destination groups and $\Fa$ be the cyclic families in it, that is the subsets of $\Gr$ whose intersection graph is hamiltonian.
  This paper establishes that the weakest failure detector to solve genuine atomic multicast is
  $\WFD=(\land_{g,h \in \Gr}~\Sigma_{g \inter h}) \land (\land_{g \in \Gr}~\Omega_g) \land \gamma$,
  where
  \begin{inparaenumorig}[]
  \item $\Sigma_P$ and $\Omega_P$ are the quorum and leader failure detectors restricted to the processes in $P$, and
  \item $\gamma$ is a new failure detector that informs the processes in a cyclic family $\f \in \Fa$ when $\f$ is faulty.
  \end{inparaenumorig}
  
  We also study two classical variations of atomic multicast.
  The first variation requires that message delivery follows the real-time order.
  In this case, $\WFD$ must be strengthened with $1^{g \inter h}$, the indicator failure detector that informs each process in $g \union h$ when $g \inter h$ is faulty.
  The second variation requires a message to be delivered when the destination group runs in isolation.
  We prove that its weakest failure detector is at least $\WFD \land (\land_{g, h \in \Gr}~\Omega_{g \inter h})$.
  This value is attained when $\Fa=\emptySet$.
\end{abstract}

\section{Introduction}
\labsection{introduction}


\subparagraph*{Context}
Multicast is a fundamental group communication primitive used in modern computing infrastructures.
This primitive allows to disseminate a message to a subset of the processes in the system, its destination group.
Implementations exist over point-to-point protocols such as the Internet Protocol.
Multicast is atomic when it offers the properties of atomic broadcast to the multicast primitive:
each message is delivered at most once, and delivery occurs following some global order.
Atomic multicast is used to implement strongly consistent data storage \cite{isis,granola,pstore,janus}.

It is easy to see that atomic multicast can be implemented atop atomic broadcast.
Each message is sent through atomic broadcast and delivered where appropriate.
Such a naive approach is however used rarely in practice because it is inefficient when the number of destination groups is large \cite{multiringpaxos,versus}.
To rule out naive implementations, Guerraoui and Schiper~\cite{gamcast} introduce the notion of genuineness.
An implementation of atomic multicast is \emph{genuine} when a process takes steps only if a message is addressed to it.

Existing genuine atomic multicast algorithms that are fault-tolerant have strong synchrony assumptions on the underlying system.
Some protocols (such as \cite{SchiperP08}) assume that a perfect failure detector is available.
Alternatively, a common assumption is that the destination groups are decomposable into disjoint groups, each of these behaving as a logically correct entity.
Such an assumption is a consequence of the impossibility result established in \cite{gamcast}.
This result states that genuine atomic multicast requires some form of perfect failure detection in intersecting groups.
Consequently, almost all protocols published to date (e.g., \cite{ramcast,tempo,whitebox,fastcast,FritzkeIMR01,delporte}) assume the existence of such a decomposition.

\subparagraph*{Motivation}
A key observation is that the impossibility result in \cite{gamcast} is established when atomic multicast allows a message to be disseminated to \emph{any} subset of the processes.
However, where there is no such need, weaker synchrony assumptions may just work.
For instance, when each message is addressed to a single process, the problem is trivial and can be solved in an asynchronous system. 
Conversely, when every message is addressed to all the processes in the system, atomic multicast boils down to atomic broadcast, and thus ultimately to consensus.
Now, if no two groups intersect, solving consensus inside each group seems both necessary and sufficient.
In this paper, we further push this line of thought to characterize the necessary and sufficient synchrony assumptions to solve genuine atomic multicast.

Our results are established in the unreliable failure detectors model \cite{CT96,petr}.
A failure detector is an oracle available locally to each process that provides information regarding the speed at which the other processes are taking steps.
%
Finding the weakest failure detector to solve a given problem is a central question in distributed computing literature \cite{petr}.
In particular, the seminal work in \cite{omega} shows that a leader oracle ($\Omega$) is the weakest failure detector for consensus when a majority of processes is correct.
If any processes might fail, then a quorum failure detector ($\Sigma$) is required in addition to $\Omega$ \cite{sigma}.
%

A failure detector is realistic when it cannot guess the future.
In \cite{realistic}, the authors prove that the perfect failure detector ($P$) is the weakest realistic failure detector to solve consensus.
Building upon this result, Schiper and Pedone~\cite{SchiperP08} shows that $P$ is sufficient to implement genuine atomic multicast.
However, $P$ is the weakest only when messages are addressed to all the processes in the system. 
The present paper generalizes this result and the characterization given in \cite{gamcast} (see~\reftab{recap}).
It establishes the weakest failure detector to solve genuine atomic multicast for any set of destination groups.

\subparagraph*{Primer on the findings}
Let $\Gr$ be all the destination groups and $\Fa$ be the cyclic families in it, that is the subsets of $\Gr$ whose intersection graph is hamiltonian.
This paper shows that the weakest failure detector to solve genuine atomic multicast is
$\WFD=(\land_{g,h \in \Gr}~\Sigma_{g \inter h}) \land (\land_{g \in \Gr}~\Omega_g) \land \gamma$,
where
\begin{inparaenumorig}[]
\item $\Sigma_P$ and $\Omega_P$ are the quorum and leader failure detectors restricted to the processes in $P$, and
\item $\gamma$ is a new failure detector that informs the processes in a cyclic family $\f \in \Fa$ when $\f$ is faulty.
\end{inparaenumorig}%
Our results regarding $\gamma$ are established wrt. realistic failure detectors.

\begin{table}[t]
  \small
  \centering
  \begin{tabular}{c|c|c|c}
    \textbf{Genuiness} & \textbf{Order} & \textbf{Weakest} &  \\ 
    \hline
    \no     & Global       & $\Omega \land \Sigma$                                               & \cite{omega,sigma} \\
    \yes    & $\cdot$       & $\notin \mathcal{U}_2$                                                     & \cite{gamcast} \\
    $\cdot$& $\cdot$             & $\leq P$                                                              & \cite{SchiperP08} \\
    $\cdot$& $\cdot$             & $\WFD$                                                                & \refsection{necessity}, \refsection{sufficiency} \\
    $\cdot$& Strict       & $\WFD \land (\land_{g,h \in \Gr}~1^{g \inter h})$                            & \refsection{variations:rt} \\
    $\cdot$& Pairwise     & $(\land_{g,h \in \Gr}~\Sigma_{g \inter h}) \land (\land_{g \in \Gr}~\Omega_g)$ & \refsection{discussion} \\
    \strong & Global       & \textbf{if} $\Fa = \emptySet$ \textbf{then} $\WFD \land (\land_{g, h \in \Gr}~\Omega_{g \inter h})$         & \multirow{2}{*}{\refsection{variations:strong}} \\
            &              & \hspace{5em} \textbf{else} $\geq \WFD \land (\land_{g, h \in \Gr}~\Omega_{g \inter h})$    
  \end{tabular}
  \caption{
    \labtab{recap}
    About the weakest failure detector for atomic multicast.
    \emph{
      (
      \strong = strongly genuine
      )
    }
  }
  \vspace{-1em}
\end{table}


This paper also studies two classical variations of the atomic multicast problem.
The strict variation requires that message delivery follows the real-time order.
In this case, we prove that $\WFD$ must be strengthen with $1^{g \inter h}$, the indicator failure detector that informs each process in $g \union h$ when $g \inter h$ is faulty.
The strongly genuine variation requires a message to be delivered when its destination group runs in isolation.
In that case, the weakest failure detector is at least $\WFD \land (\land_{g, h \in \Gr}~\Omega_{g \inter h})$.
This value is attained when $\Fa=\emptySet$.

\subparagraph*{Outline of the paper}
\refsection{problem} introduces the atomic multicast problem and the notion of genuineness.
We present the candidate failure detector in \refsection{candidate}.
\refsection{sufficiency} proves that this candidate is sufficient.
Its necessity is established in \refsection{necessity}.
\refsection{variations} details the results regarding the two variations of the problem.
We cover related work and discuss our results in \refsection{discussion}.
\refsection{conclusion} closes this paper.
\iflongversion
For clarity, some of our proofs are deferred to the Appendix.
\else
Due to space constraints, all the proofs are deferred to the extended version \cite{longversion}.
\relatedversion{A full version of the paper is available at \cite{longversion}, \url{TBD}.}
\fi
 
\section{The Atomic Multicast Problem}
\labsection{problem}


\subsection{System Model}
\labsection{model}

In \cite{CT96}, the authors extend the usual model of asynchronous distributed computation to include failure detectors.
The present paper follows this model with the simplifications introduced in \cite{ever,commit}.
\iflongversion
This model is recalled in \refappendix{model}.
\else
This model is recalled in \cite{longversion}.
\fi

\subsection{Problem Definition}
\labsection{problem:def}

Atomic multicast is a group communication primitive that allows to disseminate messages between processes.
This primitive is used to build transactional systems \cite{granola,pstore} and partially-replicated (aka., sharded) data stores \cite{tempo,janus}.
In what follows, we consider the most standard definition for this problem \cite{isis,Hadzilacos94amodular,survey}.
In the parlance of Hadzilacos and Toueg~\cite{Hadzilacos94amodular}, it is named uniform global total order multicast.
Other variations are studied in \refsection{variations}.

Given a set of messages $\msgSet$, the interface of atomic multicast consists of operations $\multicast(m)$ and $\deliver(m)$, with $m \in \msgSet$.
Operation $\multicast(m)$ allows a process to \emph{multicast} a message $m$ to a set of processes denoted by $\dst(m)$.
This set is named the \emph{destination group} of $m$.
When a process executes $\deliver(m)$, it delivers message $m$, typically to an upper applicative layer.

Consider two messages $m$ and $m'$ and some process $p \in \dst(m) \inter \dst(m')$.
Relation $m \delOrderOf{p} m'$ captures the local delivery order at process $p$.
This relation holds when, at the time $p$ delivers $m$, $p$ has not delivered $m'$.
The union of the local delivery orders gives the \emph{delivery order}, that is $\delOrder = \union_{p \in \procSet} \delOrderOf{p}$.
The runs of atomic multicast must satisfy:%

\begin{itemize}
\item[(\emph{Integrity})]
  For every process $p$ and message $m$, $p$ delivers $m$ at most once, and only if $p$ belongs to $\dst(m)$ and $m$ was previously multicast.
\item[(\emph{Termination})]
  For every message $m$, if a correct process multicasts $m$, or a process delivers $m$, eventually every correct process in $\dst(m)$ delivers $m$.
\item[(\emph{Ordering})]
  The transitive closure of $\delOrder$ is a strict partial order over $\msgSet$.
\end{itemize}

Integrity and termination are two common properties in group communication literature.
They respectively ensure that only sound messages are delivered to the upper layer and that the communication primitive makes progress.
Ordering guarantees that the messages could have been received by a sequential process.
A common and equivalent rewriting of this property is as follows:

\begin{itemize}
\item[(\emph{Ordering})]
  Relation $\delOrder$ is acyclic over $\msgSet$.
\end{itemize}

If the sole destination group is $\procSet$, that is the set of all the processes, the definition above is the one of atomic broadcast.
In what follows, $\Gr \subseteq 2^{\procSet}$ is the set of all the destinations groups, i.e., $\Gr = \{g : \exists m \in \msgSet \ldotp g=\dst(m)\}$.
For some process $p$, $\Gr(p)$ denotes the destination groups in $\Gr$ that contain $p$.
Two groups $g$ and $h$ are \emph{intersecting} when $g \inter h \neq \emptySet$.


\subparagraph*{What can be sent and to who}
The process that executes $\multicast(m)$ is the sender of $m$, denoted $\src(m)$.
As usual, we consider that processes disseminate different messages (i.e., $\src$ is a function).
A message holds a bounded payload $\payload(m)$, and we assume that atomic multicast is not payload-sensitive. 
This means that for every message $m$, and for every possible payload $x$, there exists a message $m' \in \msgSet$ such that $\payload(m')=x$, $\dst(m')=\dst(m)$ and $\src(m')=\src(m)$.

\subparagraph*{Dissemination model}
In this paper, we consider a closed model of dissemination.
This means that to send a message to some group $g$, a process must belong to it (i.e., $\src(m) \in \dst(m)$).
In addition, we do not restrict the source of a message.
This translates into the fact that for every message $m$, for every process $p$ in $\dst(m)$, there exists a message $m'$ with $\dst(m)=\dst(m')$ and $\src(m')=p$.
Under the above set of assumptions, the atomic multicast problem is fully determined by the destination groups $\Gr$.

\subsection{Genuineness}
\labsection{problem:genuineness}


At first glance, atomic multicast boils down to the atomic broadcast problem:
to disseminate a message it suffices to broadcast it, and upon reception only messages addressed to the local machine are delivered.
With this approach, every process takes computational steps to deliver every message, including the ones it is not concerned with.
As a consequence, the protocol does not scale \cite{multiringpaxos,versus}, even if the workload is embarrassingly parallel (e.g., when the destinations groups are pairwise disjoint).

Such a strategy defeats the core purpose of atomic multicast and is thus not satisfying from a performance perspective.
To rule out this class of solutions, Guerraoui and Schiper~\cite{gamcast} introduce the notion of \emph{genuine} atomic multicast.
These protocols satisfy the minimality property defined below.

\begin{itemize}
\item[(\emph{Minimality})]
  In every run $\run$ of $\alg$, if some correct process $p$ sends or receives a (non-null) message in $\run$, there exists a message $m$ multicast in $\run$ with $p \in \dst(m)$.
\end{itemize}


%

All the results stated in this paper concern genuine atomic multicast.
To date, this is the most studied variation for this problem (see, e.g., \cite{ramcast,whitebox,fastcast}).

\section{The Candidate Failure Detector}
\labsection{candidate}

This paper characterizes the weakest failure detector to solve genuine atomic multicast.
Below, we introduce several notions related to failure detectors then present our candidate.

\subparagraph*{Family of destination groups}
A family of destination groups is a set of (non-repeated) destination groups $\f=(g_i)_i$.
For some family $\f$, $\cpaths(\f)$ are the closed paths in the intersection graph of $\f$ visiting all its destination groups.%
\footnote{
  The intersection graph of a family of sets $(S_i)_i$ is the undirected graph whose vertices are the sets $S_i$, and such that there is an edge linking $S_i$ and $S_j$ iff $S_i \inter S_j \neq \emptySet$.
}
Family $\f$ is \emph{cyclic} when its intersection graph is hamiltonian, that is when $\cpaths(\f)$ is non-empty.
A cyclic family $\f$ is \emph{faulty at time $t$} when every path $\pi \in \cpaths(\f)$ visits an edge $(g,h)$ with $g \inter h$ faulty at $t$.

In what follows, $\Fa$ denotes all the cyclic families in $2^{\Gr}$.
Given a destination group $g$, $\Fa(g)$ are the cyclic families in $\Fa$ that contain $g$.
For some process $p$, $\Fa(p)$ are the cyclic families $\f$ such that $p$ belongs to some group intersection in $\f$ (that is, $\exists g,h \in \f \sep p \in g \inter h$).

To illustrate the above notions, consider \reffigure{family}.
In this figure, $\procSet=\{p_1,\ldots,p_5\}$ and we have four destination groups:
$\gone=\{p_1,p_2\}$, $\gtwo=\{p_2,p_3\}$, $\gthree=\{p_1,p_3,p_4\}$ and $\gfour=\{p_1,p_4,p_5\}$.
The intersection graphs of $\f=\{\gone,\gtwo,\gthree\}$ and $\f'=\{\gone,\gthree,\gfour\}$ are depicted respectively in \reffigurestwo{family:ig:1}{family:ig:2}.
These two families are cyclic.
This is also the case of $\f''= \Gr = \{\gone,\gtwo,\gthree,\gfour\}$ whose intersection graph is the union of the two intersection graphs of $\f$ and $\f'$.
This family is faulty when $\gtwo \inter \gone = \{p_2\}$ fails.
Group $\gtwo$ belongs to two cyclic families, namely $\Fa(\gtwo)=\{\f,\f''\}$.
Process $p_1$ belongs to all cyclic families, that is $\Fa(p_1)=\Fa$.
Differently, since $p_5$ is part of no group intersection, $\Fa(p_5)=\emptySet$.

\tikzstyle{basic} = [draw, -latex',Black,-]
\tikzstyle{adot} = [inner sep=1.5pt, circle, fill]

\begin{figure}[t]
  \centering
  \captionsetup{justification=centering}
  \begin{subfigure}[t]{.3\textwidth}
    \centering
    \begin{tikzpicture}[
        every path/.style={black},
        scale=1.25,
        transform shape
      ]
      \node [] (p1) {$p_1$};
      \node [right of = p1] (p2) {$p_2$};
      \node [below right of = p2] (p3) {$p_3$};
      \node [below left of = p3] (p4) {$p_4$};
      \node [left of = p4] (p5) {$p_5$};
      
      \draw[Blue] \convexpath{p1,p2}{12pt};
      \draw[Crimson] \convexpath{p2,p3}{12pt};
      \draw[OliveGreen] \convexpath{p1,p3,p4}{12pt};
      \draw[Purple] \convexpath{p1,p4,p5}{12pt};
    \end{tikzpicture}
    \caption{\labfigure{family:groups}}
  \end{subfigure}
  \begin{subfigure}[t]{.3\textwidth}
    \centering  
    \begin{tikzpicture}
      [
        scale=1.25,
        auto,
        shifttl/.style={shift={(-\shiftpoints,\shiftpoints)}},
        shifttr/.style={shift={(\shiftpoints,\shiftpoints)}},
        shiftbl/.style={shift={(-\shiftpoints,-\shiftpoints)}},
        shiftbr/.style={shift={(\shiftpoints,-\shiftpoints)}},
      ]
      \begin{scope}[<-,thick]
        \node[adot,label=270:$\gone$] (g1) at (0,0) {};
        \node[adot,label=$\gtwo$] (g2) at (0,1) {};
        \node[adot,label=270:$\gthree$] (g3) at (2,0) {};
        \path[basic] (g1) edge node [midway,left] {$\{p_2\}$} (g2);
        \path[basic] (g1) edge node [midway,below] {$\{p_1\}$} (g3);
        \path[basic] (g2) edge node [midway,above] {$\{p_3\}$} (g3);
      \end{scope}
    \end{tikzpicture}
    \caption{\labfigure{family:ig:1}}
  \end{subfigure}
  \begin{subfigure}[t]{.3\textwidth}
    \centering  
    \begin{tikzpicture}
      [
        scale=1.25,
        auto,
        shifttl/.style={shift={(-\shiftpoints,\shiftpoints)}},
        shifttr/.style={shift={(\shiftpoints,\shiftpoints)}},
        shiftbl/.style={shift={(-\shiftpoints,-\shiftpoints)}},
        shiftbr/.style={shift={(\shiftpoints,-\shiftpoints)}},
      ]
      \begin{scope}[<-,thick]
        \node[adot,label=270:$\gone$] (g1) at (0,0) {};
        \node[adot,label=270:$\gthree$] (g3) at (2,0) {};
        \node[adot,label=$\gfour$] (g4) at (2,1) {};
        \path[basic] (g1) edge node [midway,below] {$\{p_1\}$} (g3);
        \path[basic] (g1) edge node [midway,above] {$\{p_1\}$} (g4);
        \path[basic] (g3) edge node [midway,right] {$\{p_1,p_4\}$} (g4);
      \end{scope}
    \end{tikzpicture}
    \caption{\labfigure{family:ig:2}}
  \end{subfigure}
  \caption{
    \labfigure{family}
    From left to right:
    the four groups $\gone$, $\gtwo$, $\gthree$ and $\gfour$, and the intersection graphs of the two cyclic families $\f=\{\gone,\gtwo,\gthree\}$ and $\f'=\{\gone,\gthree,\gfour\}$.
  }
\end{figure}

\subparagraph*{Failure detectors of interest}
Failure detectors are grouped into classes of equivalence that share common computational power.
Several classes of failure detectors have been proposed in the past.
This paper makes use of two common classes of failure detectors, $\Sigma$ and $\Omega$, respectively introduced in \cite{sigma} and \cite{omega}.
We also propose a new class $\gamma$ named the \emph{cyclicity failure detector}.
All these classes are detailed below.
\begin{itemize}
\item
  The quorum failure detector ($\Sigma$) captures the minimal amount of synchrony to implement an atomic register.
  When a process $p$ queries at time $t$ a detector of this class, it returns a non-empty subset of processes $\Sigma(p,t) \subseteq \procSet$ such that:
  \begin{itemize}
  \item[(\emph{Intersection})] $\forall p, q \in \procSet \sep \forall t, t' \in \naturalSet \sep \Sigma(p,t) \inter \Sigma(q,t') \neq \emptySet$
  \item[(\emph{Liveness})] $\forall p \in \correct \sep \exists \tau \in \naturalSet \sep \forall t \geq \tau \sep \Sigma(p,t) \subseteq \correct$
  \end{itemize}
  The first property states that the values of any two quorums taken at any times intersect.
  It is used to maintain the consistency of the atomic register.    
  The second property ensures that eventually only correct processes are returned.
\item 
  Failure detector $\Omega$ returns an eventually reliable leader \cite{eventualconsistency}.
  In detail, it returns a value $\Omega(p,t) \in \procSet$ satisfying that:
  \begin{itemize}
  \item[(\emph{Leadership})] $\correct \neq \emptySet \implies (\exists l \in \correct \sep \forall p \in \correct \sep \exists \tau \in \naturalSet \sep \forall t \geq \tau \sep \Omega(p,t) = l)$
  \end{itemize}
  $\Omega$ is the weakest failure detector to solve consensus when processes have access to a shared memory.
  For message-passing distributed systems, $\Omega \land \Sigma$ is the weakest failure detector.
\item
  The cyclicity failure detector ($\gamma$) informs each process of the cyclic families it is currently involved with.
  In detail, failure detector $\gamma$ returns at each process $p$ a set of cyclic families $\f \in \Fa(p)$ such that:
  \begin{itemize}
  \item[(\emph{Accuracy})] $\forall p \in \procSet \sep \forall t \in \naturalSet \sep (\f \in \Fa(p) \land \f \notin \gamma(p,t)) \implies \isFaulty{\f}{t}$
  \item[(\emph{Completeness})] $\forall p \in \correct \sep \forall t \in \naturalSet \sep (\f \in \Fa(p) \land \isFaulty{\f}{t}) \implies \exists \tau \in \naturalSet \sep \forall t' \geq \tau \sep \f \notin \gamma(p,t')$
  \end{itemize}
  Accuracy ensures that if some cyclic family $\f$ is not output at $p$ and $p$ belongs to it, then $\f$ is faulty at that time.
  Completeness requires that eventually $\gamma$ does not output forever a faulty family at the correct processes that are part of it.
  Hereafter, $\gamma(g)$ denotes the groups $h$ such that $g \inter h \neq \emptySet$ and $g$ and $h$ belong to a cyclic family output by $\gamma$.
\end{itemize}

To illustrate the above definitions, we may consider again the system depicted in \reffigure{family}.
Let us assume that $\correct = \{p_1, p_4, p_5\}$.
The quorum failure detector $\Sigma$ can return $\gone$ or $\gthree$, then $\gfour$ forever.
Failure detector $\Omega$ may output any process, then at some point in time, one of the correct processes (e.g., $p_1$) ought to be elected forever.
At processes $p_1$, $\gamma$ returns initially $\{\f,\f',\f''\}$.
Then, once families $\f$ and $\f''$ are faulty---this should happen as $p_2$ is faulty---the output eventually stabilizes to $\{\f'\}$.
When this happens, $\gamma(\gone)=\{\gthree, \gfour\}$.

\subparagraph*{Conjunction of failure detectors}
We write $C \land D$ the conjunction of the failure detectors $C$ and $D$ \cite{ever}.
For a failure pattern $F$, failure detector $C \land D$ returns a history in $D(F) \times C(F)$.

\subparagraph*{Set-restricted failure detectors}
For some failure detector $D$, $D_{P}$ is the failure detector obtained by restricting $D$ to the processes in $P \subseteq \procSet$.
This failure detector behaves as $D$ for the processes $p \in P$, and it returns $\bot$ at $p \notin P$.
In detail, let $F \inter P$ be the failure pattern $F$ obtained from $F$ by removing the processes outside $P$, i.e., $(F \inter P)(t)=F(t) \inter P$.
Then, $D_P(F)$ equals $D(F \inter P)$ at $p \in P$, and the mapping $p \times \naturalSet \rightarrow \bot$ elsewhere.
To illustrate this definition, $\Omega_{\{p\}}$ is the trivial failure detector that returns $p$ at process $p$.
Another example is given by $\Sigma_{\{p_1,p_2\}}$ which behaves as $\Sigma$ over $\procSet=\{p_1,p_2\}$.

\subparagraph*{The candidate}
Our candidate failure detector is $\WFD_{\Gr} = (\land_{g,h \in \Gr}~\Sigma_{g \inter h}) \land (\land_{g \in \Gr}~\Omega_{g}) \land \gamma$.
When the set of destinations groups $\Gr$ is clear from the context, we shall omit the subscript.

\section{Sufficiency}
\labsection{sufficiency}

This section shows that genuine atomic multicast is solvable with the candidate failure detector.
A first observation toward this result is that consensus is wait-free solvable in $g$ using $\Sigma_g \land \Omega_g$.
Indeed, $\Sigma_g$ permits to build shared atomic registers in $g$ \cite{sigma}.
From these registers, we may construct an obstruction-free consensus and boost it with $\Omega_g$ \cite{alpha}.
Thus, any linearizable wait-free shared objects is implementable in $g$ \cite{syn:1382}.
Leveraging these observations, this section depicts a solution built atop (high-level) shared objects.

Below, we first introduce group sequential atomic multicast (\refsection{sufficiency:gs}).
From a computability perspective, this simpler variation is equivalent to the common atomic multicast problem.
This is the variation that we shall implement hereafter.
We then explain at coarse grain how to solve genuine atomic multicast in a fault-tolerant manner using failure detector $\WFD$ (\refsection{sufficiency:overview}).
\iflongversion
Further, the details of our solution are presented (\refsection{sufficiency:algorithm}) and its correctness established (\refsection{sufficiency:correctness}).
\else
Further, the details of our solution are presented and its correctness informally argued (\refsection{sufficiency:algorithm}).
\fi

\subsection{A simpler variation}
\labsection{sufficiency:gs}

Group sequential atomic multicast requires that each group handles its messages sequentially.
In detail, given two messages $m$ and $m'$ addressed to the same group, we write $m \hb m'$ when $\src(m')$ delivers $m$ before it multicasts $m'$.
This variation requires that if $m$ and $m'$ are multicast to the same group, then $m \hb m'$, or the converse, holds.
\refprop{sufficiency:gs} below establishes that this variation is as difficult as (vanilla) atomic multicast.
Building upon this insight, this section depicts a solution to group sequential atomic multicast using failure detector \WFD.

\begin{proposition}
  \labprop{sufficiency:gs}
  Group sequential atomic multicast is equivalent to atomic multicast.
\end{proposition}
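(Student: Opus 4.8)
The plan is to prove the two reductions whose conjunction yields the equivalence, handling them asymmetrically because one is immediate while the other carries all the work. First I would dispatch the direction that atomic multicast solves group sequential atomic multicast in a single observation: the runs admitted by the group sequential variation are exactly the atomic-multicast runs in which, for every group, the messages multicast to that group already form a $\hb$-chain, while the two problems demand the very same Integrity, Termination and Ordering properties. Thus the admissible runs of the group sequential problem form a subset of those of atomic multicast, and any algorithm meeting the three properties on all runs meets them a fortiori on this subset.

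The substantive direction is to build (vanilla) atomic multicast from a black-box algorithm $\mathcal{B}$ that solves group sequential atomic multicast. I would interpose, for each group $g$, a serialization layer that converts the arbitrary and possibly concurrent $\multicast(m)$ calls with $\dst(m)=g$ into a single $\hb$-chain of group-sequential multicasts, and that performs $\deliver(m)$ exactly when $\mathcal{B}$ delivers $m$. With this wiring, Integrity and Ordering come for free: no message is delivered twice or out of the order already guaranteed by $\mathcal{B}$, and the delivery order $\delOrder$ stays acyclic because the layer only reorders submissions within each group and never touches the cross-group order produced by $\mathcal{B}$. The entire difficulty is thus to feed $\mathcal{B}$ a legal (sequential) input stream per group while preserving Termination.

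Concretely, the members of $g$ would advance a shared chain one slot at a time: once the $i$-th message $m_i$ to $g$ has been delivered by $\mathcal{B}$, they agree on the next pending application message $m_{i+1}$ to append and on a correct member that will send it, who then group-sequential-multicasts $m_{i+1}$ only after having itself delivered $m_i$, so that $m_i \hb m_{i+1}$ holds by construction and the sequential discipline is respected. The agreement required at each slot is a consensus inside $g$, which is available here: it is wait-free solvable from $\Sigma_g \land \Omega_g$, as recalled at the beginning of this section, and it involves only the members of $g$, so genuineness is preserved. I expect this step to be the main obstacle. A naive token or round-robin scheme among the members of $g$ would deadlock as soon as the process whose turn it is crashes, so fault-tolerant agreement is essential to guarantee that whenever some correct process has a pending message for $g$ the chain keeps growing, which is precisely what is needed to establish Termination for every message that a correct process multicasts or that some process delivers.
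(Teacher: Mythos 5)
Your overall route matches the paper's: a per-group serialization layer that agrees on a total order of the pending messages for $g$ (the paper keeps a shared list $L_g$, implemented by an atomic multicast instance restricted to the members of $g$; your slot-by-slot consensus in $g$ is an equivalent realization), waits until every predecessor in that order has been delivered, and only then hands the message to the underlying algorithm, so that the $\hb$-chain holds by construction. The trivial direction is also fine, although you and the paper label opposite directions as the trivial one; since the substantive construction is the same wrapper either way, this is immaterial.

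The genuine gap is in how you keep the chain advancing despite crashes. You propose that the group agrees ``on a correct member that will send'' the next message $m_{i+1}$. Consensus cannot select a member that is guaranteed to be correct: the designated sender may crash after being chosen and before invoking the underlying $\multicast(m_{i+1})$, none of the available oracles lets the others accurately detect this, and so the chain stalls and Termination fails for $m_{i+1}$ and everything behind it. Any fault-tolerant repair necessarily lets several processes submit the same logical message, which is exactly the paper's mechanism: every process blocked on a predecessor $m'$ \emph{helps} by executing $\A.\multicast(m',p)$ itself --- legal because the dissemination model allows any member of $\dst(m')$ to be the source of a message with the same payload and destination --- and Integrity is preserved by performing the upcall only for the \emph{first} copy of each logical message that the underlying algorithm delivers. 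Your layer, which ``performs $\deliver(m)$ exactly when $\mathcal{B}$ delivers $m$'', has no such deduplication, so once you add helping (as you must for Termination) you would break Integrity. Adding the helping rule together with first-delivery deduplication closes the gap and turns your sketch into the paper's proof.
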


\iflongversion
\begin{proof}
  Clearly, a group sequential algorithm implements (vanilla) atomic multicast.
  For the converse direction, we proceed as detailed below.

  The reduction algorithm uses an instance $\A$ of a solution to the base problem.
  For each group $g$, the algorithm also maintains a list $L_g$, shared among all the members of $g$.
  This list is implemented using an instance of atomic multicast running only between the members of $g$.
  In addition, each process stores a mapping indicating whether a message $m$ is delivered locally or not.
  
  Initially, every message is not delivered.
  The first time $\A.\deliver(m,\any)$ triggers, a process $p$ marks $m$ as delivered and executes the upcall to deliver $m$.
  
  To multicast a message $m$, a process $p$ first adds $m$ to the list $L_g$.
  Then $p$ waits that every message prior to $m$ in $L_g$ is delivered, helping if necessary.
  To help delivering some message $m'$, $p$ executes $\A.\multicast(m',p)$.
  When all the messages before $m$ in $L_g$ are delivered locally, $p$ calls $\A.\multicast(m,p)$.
  Then process $p$ returns from its invocation to $\multicast(m)$ in the group sequential algorithm.
\end{proof}
\fi

\subsection{Overview of the solution}
\labsection{sufficiency:overview}

First of all, we observe that if the groups are pairwise disjoint, it suffices that each group orders the messages it received to solve atomic multicast.
To this end, we use a shared log $\LOG_g$ per group $g$.
Then, consider that the intersection graph of $\Gr$ is acyclic, i.e., $\Fa$ is empty, yet groups may intersect.
In that case, it suffices to add a deterministic merge procedure in each group intersection, for instance, using a set of logs $\LOG_{g \inter h}$ when $g \inter h \neq \emptySet$.

Now, to solve the general case, cycles in the order built with the shared logs must be taken into account.
To this end, we use a fault-tolerant variation of Skeen's solution \cite{Birman:1987,Guerraoui:97}:
in each log, the message is bumped to the highest initial position it occupies in all the logs.
In the original algorithm \cite{Birman:1987}, as in many other approaches (e.g., \cite{whitebox,fastcast}), such a procedure is failure-free, and processes simply agree on the final position (aka., timestamp) of the message in the logs.
In contrast, our algorithm allows a disagreement when the cyclic family becomes faulty.
This disagreement is however restricted to different logs, as in the acyclic case.

\subsection{Algorithm}
\labsection{sufficiency:algorithm}

\begin{algorithm}[!t]

  \small
  \caption{Solving atomic multicast with failure detector $\WFD$ -- code at process $p$}
  \labalg{sufficiency}

  \begin{algorithmic}[1]

    \begin{variables}
      \State $(\LOG_{h \inter h' })_{h, h' \in \Gr(p)}$ \labline{sufficiency:var:1}
      \State $(\CONS_{m,\f})_{m \in \msgSet, \f \subseteq \Gr}$ \labline{sufficiency:var:fix}
      \State $\phase[m] \assign \lambda m.\phStart$ \labline{sufficiency:var:2}
    \end{variables}

    \vspace{.5em}

    \begin{action}{$\multicast(m)$} \labline{sufficiency:amcast:1} \Comment{$g = dst(m) \land g \in \Gr(p)$}
      \Precondition $\phase[m] = \phStart$ \labline{sufficiency:amcast:2}
      \Effect $\LOG_{g}.\append(m)$ \labline{sufficiency:amcast:3}
    \end{action}

    \vspace{.5em}

    \begin{action}{$\pending(m)$} \labline{sufficiency:pending:1}
      \Precondition $\phase[m] = \phStart$ \labline{sufficiency:pending:2}
      \Precondition $m \in \LOG_g$ \labline{sufficiency:pending:3}
      \Precondition $\forall m' <_{\LOG_g} m  \sep \phase[m'] \geq \phCommit$ \labline{sufficiency:pending:3b} 
      \Effect \textbf{for all} $h \in \Gr(p)$ \textbf{do} \labline{sufficiency:pending:4}
      \Effect \hspace{1em} $i \assign \LOG_{g \inter h}.\append(m)$ \labline{sufficiency:pending:5}
      \Effect \hspace{1em} $\LOG_{g}.\append(m,h,i)$ \labline{sufficiency:pending:6}
      \Effect $\phase[m] \assign \phPending$ \labline{sufficiency:pending:7}
    \end{action}

    \vspace{.5em}

    \begin{action}{$\commit(m)$} \labline{sufficiency:commit:1}
      \Precondition $\phase[m] = \phPending$ \labline{sufficiency:commit:2}
      \Precondition $\forall h \in \gamma(g) \sep (m,h,\any) \in \LOG_{g}$ \labline{sufficiency:commit:3}
      \Effect $\Let~k = \max \{ i : \exists (m,\any,i) \in \LOG_{g} \}, $ \labline{sufficiency:commit:4}
      \Effect $\Let~\f = \{ h : \exists \f' \in \Fa(p) \sep g,h \in \f' \land g \inter h \neq \emptySet \}$ \labline{sufficiency:commit:fix:1} 
      \Effect $k \assign \CONS_{m,\f}.\propose(k)$ \labline{sufficiency:commit:fix:2}
      \Effect \textbf{for all} $h \in \Gr(p)$ \textbf{do} \labline{sufficiency:commit:5}
      \Effect \hspace{1em} $\LOG_{g \inter h}.\bumpAndLock(m,k)$ \labline{sufficiency:commit:6}
      \Effect $\phase[m] \assign \phCommit$ \labline{sufficiency:commit:7}
    \end{action}

    \vspace{.5em}
    
    \begin{action}{$\stabilize(m,h)$} \labline{sufficiency:stabilize:1}
      \Precondition $\phase[m] = \phCommit$ \labline{sufficiency:stabilize:2}
      \Precondition $h \in \Gr(p)$ \labline{sufficiency:stabilize:3}
      \Precondition $\forall m' <_{\LOG_{g \inter h}} m \sep \phase[m'] \geq \phStable$ \labline{sufficiency:stabilize:4}
      \Effect $\LOG_{g}.\append(m,h)$ \labline{sufficiency:stabilize:5}
    \end{action}

    \vspace{.5em}

    \begin{action}{$\stable(m)$} \labline{sufficiency:stable:1}
      \Precondition $\phase[m] = \phCommit$ \labline{sufficiency:stable:2}
      \Precondition $\forall h \in \gamma(g) \sep (m,h) \in \LOG_{g}$ \labline{sufficiency:stable:3}
      \Effect $\phase[m] \assign \phStable$ \labline{sufficiency:stable:4}
    \end{action}
    
    \vspace{.5em}
    
    \begin{action}{$\deliver(m)$} \labline{sufficiency:deliver:1}
      \Precondition $\phase[m] = \phStable$ \labline{sufficiency:deliver:2}
      \Precondition $\forall m' <_{\LOG_{g \inter h}} m \sep \phase[m'] = \phDelivered$ \labline{sufficiency:deliver:3}
      \Effect $\phase[m] \assign \phDelivered$ \labline{sufficiency:deliver:4}
    \end{action}
    
  \end{algorithmic}

\end{algorithm}

\refalg{sufficiency} depicts a solution to (group sequential) genuine atomic multicast using failure detector $\WFD$.
To the best of our knowledge, this is only algorithm with \cite{SchiperP08} that tolerates arbitrary failures.
\refalg{sufficiency} is composed of a set of actions.
An action is executable once its preconditions (\textbf{pre:}) are true.
The effects (\textbf{eff:}) of an action are applied sequentially until it returns.
%
\refalg{sufficiency} uses a log per group and per group intersection.
Logs are linearizable, long-lived and wait-free.
Their sequential interface is detailed below.

\subparagraph*{Logs}
A \emph{log} is an infinite array of slots.
Slots are numbered from $1$.
Each slot contains one or more data items.
A datum $d$ is at position $k$ when slot $k$ contains it.
This position is obtained through a call to $\pos(d)$;
$0$ is returned if $d$ is absent.
A slot $k$ is \emph{free} when it contains no data item.
In the initial state, every slot is free.
The head of the log points to the first free slot after which there are only free slots (initially, slot $1$).
Operation $\append(d)$ inserts datum $d$ at the slot pointed by the head of the log then returns its position.
If $d$ is already in the log, this operation does nothing.
When $d$ is in the log, it can be locked with operation $\bumpAndLock(d,k)$.
This operation moves $d$ from its current slot $l$ to slot $\max(k,l)$, then locks it.
Once locked, a datum cannot be bumped anymore.
Operation $\locked(d)$ indicates if $d$ is locked in the log.
We write $d \in L$ when datum $d$ is at some position in the log $L$.
A log implies an ordering on the data items it contains.
When $d$ and $d'$ both appear in $L$, $d <_{L} d'$ is true when the position of $d$ is lower than the position of $d'$, or they both occupy the same slot and $d < d'$, for some a priori total order ($<$) over the data items.

\subparagraph*{Variables}
\refalg{sufficiency} employs two types of shared objects at a process.
First, for any two groups $h$ and $h'$ to which the local process belongs, \refalg{sufficiency} uses a log $\LOG_{h \inter h'}$ (\refline{sufficiency:var:1}).
Notice that, when $h=h'$, the log coincides with the log of the destination group $h$, i.e., $\LOG_h$.
Second, to agree on the final position of a message, \refalg{sufficiency} also employs consensus objects (\refline{sufficiency:var:fix}).
Consensus objects are both indexed by messages and group families.
Given some message $m$ and appropriate family $\f$, \refalg{sufficiency} calls $\CONS_{m,\f}$ (\reflinestwo{sufficiency:commit:fix:1}{sufficiency:commit:fix:2}).
Two processes call the same consensus object at \refline{sufficiency:commit:fix:2} only if both parameters match.
Finally, to store the status of messages addressed to the local process, \refalg{sufficiency} also employs a mapping $\phase$ (\refline{sufficiency:var:2}).
A message is initially in the $\phStart$ phase, then it moves to $\phPending$ (\refline{sufficiency:pending:7}), $\phCommit$ (\refline{sufficiency:commit:7}), $\phStable$ (\refline{sufficiency:stable:4}) and finally the $\phDelivered$ (\refline{sufficiency:deliver:4}) phase.
Phases are ordered according to this progression.

\subparagraph*{Algorithmic details}
We now detail \refalg{sufficiency} and jointly argue about its correctness.
\iflongversion
For clarity, our argumentation is informal---the full proof appears in \refsection{sufficiency:correctness}.
\else
For clarity, our argumentation is informal---the full proof appears in \cite{longversion}.
\fi

To multicast some message $m$ to $g=\dst(m)$, a process adds $m$ to the log of its destination group (\refline{sufficiency:amcast:3}).
When $p \in g$ observes $m$ in the log, $p$ appends $m$ to each $\LOG_{g \inter h}$  with $p \in g \inter h$ (\refline{sufficiency:pending:5}).
Then, $p$ stores in the log of the destination group of $m$ the slot occupied by $m$ in $\LOG_{g \inter h}$ (\refline{sufficiency:pending:6}).
This moves $m$ to the $\phPending$ phase.

Similarly to Skeen's algorithm \cite{Birman:1987}, a message is then bumped to the highest slot it occupies in the logs.
This step is executed at \reflines{sufficiency:commit:1}{sufficiency:commit:7}.
In detail, $p$ first agree with its peers on the highest position $k$ occupied by $m$ (\reflines{sufficiency:commit:4}{sufficiency:commit:fix:2}).
Observe here that only the processes in $g$ that share some cyclic family with $p$ take part to this agreement (\refline{sufficiency:commit:fix:1}).
%
%
Then, for each group $h$ in $\Gr(p)$, $p$ bumps $m$ to slot $k$ in $\LOG_{g \inter h}$ and locks it in this position (\refline{sufficiency:commit:6}).
This moves $m$ to the $\phCommit$ phase.

The next steps of \refalg{sufficiency} compute the predecessors of message $m$.
With more details, once $m$ reaches the $\phStable$ phase and is ready to be delivered, the messages that precede it in the logs at process $p$ cannot change anymore.
\iflongversion
In \refsection{sufficiency:correctness}, this invariant is proved in \reflem{sufficiency:ordering:3}.
\fi

If $g$ does not belong to any cyclic family, stabilizing $m$ is immediate:
the precondition at \refline{sufficiency:stable:3} is always vacuously true.
In this case, $m$ is delivered in an order consistent with the order it is added to the logs (\refline{sufficiency:stabilize:4}).
This comes from the fact that when $\Fa = \emptySet$ ordering the messages reduces to a pairwise agreement between the processes.
\iflongversion
\reflem{sufficiency:ordering:5} establish this second invariant in \refsection{sufficiency:correctness}.
\fi

When $\Fa \neq \emptySet$, stabilizing $m$ is a bit more involved.
Indeed, messages can be initially in cyclic positions, e.g., $C = m_1 <_{\LOG_{g_1 \inter g_2}} m_2 <_{\LOG_{g_2 \inter g_3}} m_3 <_{\LOG_{g_3 \inter g_1}} m_1$, preventing them to be delivered.
As in \cite{Birman:1987}, bumping messages helps to resolve such a situation.

The bumping procedure is executed globally.
A process must wait that the positions in the logs of a message are cycle-free before declaring it $\phStable$.
Waiting can cease when the cyclic family is faulty (\refline{sufficiency:stable:3}).
This is correct because messages are stabilized in the order of their positions in the logs (\reflines{sufficiency:stabilize:1}{sufficiency:stabilize:5}).
Hence, if a cycle $C$ exists initially in the positions, either
\begin{inparaenum}
\item not all the messages in $C$ are delivered, or 
\item the first message to get $\phStable$ in $C$ has no predecessors in $C$ in the logs.
\end{inparaenum}%
In other words, for any two messages $m$ and $m'$ in $C$, if $m \delOrder m'$ then $m$ is $\phStable$ before $m'$.
\iflongversion
In \refsection{sufficiency:correctness}, \reflem{sufficiency:ordering:cycle:3} proves this third invariant.
\fi

A process indicates that message $m$ with $g=\dst(m)$ is stabilized in group $h$ with a pair $(m,h)$ in $\LOG_g$ (\refline{sufficiency:stabilize:5}).
When this holds for all the groups $h$ intersecting with $g$ such that there exists a correct family $\f$ with $f \in \Fa(p)$ and $g,h \in \f$, $m$ is declared $\phStable$ at $p$ (\refline{sufficiency:stable:3}).
Once $\phStable$, a message $m$ can be delivered (\reflines{sufficiency:deliver:1}{sufficiency:deliver:4}).

\refalg{sufficiency} stabilizes then delivers messages according to their positions in the logs.
To maintain progress, these positions must remain acyclic at every correct process.
Furthermore, this should also happen globally when a cyclic family is correct.
Both properties are ensured by the calls to consensus objects (\refline{sufficiency:commit:fix:2}).
\iflongversion
In \refsection{sufficiency:correctness}, \reflemtwo{sufficiency:liveness:acyclic}{sufficiency:liveness:stable} establish that these two properties hold.
\fi

\subparagraph*{Implementing the shared objects}
In each group $g$, consensus is solvable since $\WFD$ provides $\Sigma_g \land \Omega_g$.
This serves to implement all the objects $(\CONS_{m,\f})_{m,\f}$ when $\dst(m)=g$.
Logs that are specific to a group, namely $(\LOG_g)_{g \in \Gr}$, are also built atop consensus in $g$ using a universal construction \cite{syn:1382}.

Failure detector $\WFD$ does not offer the means to solve consensus in $g \inter h$.
Hence we must rely on either $g$ or $h$ to build $\LOG_{g \inter h}$.
Minimality requires processes in a destination group to take steps only in the case a message is addressed to them.
To achieve this, we have to slightly modify the universal construction for $\LOG_{g \inter h}$, as detailed next.

First, we consider that this construction relies on an unbounded list of consensus objects.%
\footnote{
  In the failure detector model, computability results can use any amount of shared objects.
}
Each consensus object in this list is contention-free fast \cite{AttiyaGK05}.
This means that it is guarded by an adopt-commit object ($\AC$) \cite{Gafni98} before an actual consensus object $\CONS$ is called.
Upon calling $\propose$, $\AC$ is first used and if it fails, that is ``adopt'' is returned, $\CONS$ is called.
Adopt-commit objects are implemented using $\Sigma_{g \inter h}$, while consensus objects are implemented atop some group, say $g$, using $\Sigma_g \land \Omega_g$.
This modification ensures that when processes execute operations in the exact same order, only the adopt-commit objects are called.
As a consequence, when no message is addressed to either $g$ or $h$ during a run, only the processes in $g \inter h$ executes steps to implement an operation of $\LOG_{g \inter h}$.
\iflongversion
In \refsection{sufficiency:correctness}, \refprop{sufficiency:genuineness} establishes this result.

\subsection{Correctness of \refalg{sufficiency}}
\labsection{sufficiency:correctness}

In what follows, we prove that \refalg{sufficiency} is a genuine implementation of group sequential atomic multicast.
To establish this result, we proves a series of statement on the runs of \refalg{sufficiency}.
When considering such a run, as all the shared objects used in \refalg{sufficiency} are linearizable, we reason directly upon the linearization.

\subparagraph*{Additional notations \& definitions}
Consider some run of \refalg{sufficiency}.
We note $\del(m)$ when $m$ is delivered at some process and $\del_p(m)$ when $p$ has delivered $m$.
$\alwaysTL$ and $\eventuallyTL$ are respectively the ``always'' and ``eventually'' modal operators of temporal logic.
For some variable $x$, ${x}^{p,t}$ denotes to the value of $x$ at time $t$ on process $p$.
If $x$ is shared, we shall omit the name of the process.
When $t=\infty$, we refer to the value of $x$ at the end of the run.
We write $\phStable^{p,t}$ the stable messages at time $t$ at process $p$, that is $\phStable^{p,t} \equaldef \{m : \phase^{p,t}[m] = \phStable\}$.
For each phase $\phi \in \{\phStart, \phPending, \phCommit, \phDelivered\}$, the set $\phi^{p,t}$ is defined similarly.
$\phi^{t}$ are all the messages in phase $\phi$ across the system, i.e., $\phi^t \equaldef  \union_{p} \phi^{p,t}$.

\subsubsection{Trivias}
\labsection{sufficiency:correctness:trivias}

\begin{table}
  \small
  \setlength{\columnsep}{2em}
  \begin{multicols}{2}

    About logs.
    
    \begin{claim}
      \labclaim{sufficiency:trivias:logs:1}
      $\alwaysTL(d \in L \implies \alwaysTL(d \in L))$
    \end{claim}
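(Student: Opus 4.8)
The plan is to observe that the claim is a monotonicity (safety) property: once a datum belongs to the log, no subsequent operation can evict it. Because the logs are linearizable, I would reason directly on the linearization of $L$, as advocated at the start of this subsection. At any time $t$, the contents of $L$ are exactly those produced by applying, in linearization order, every log operation whose linearization point lies before $t$; as $t$ grows, this sequence of operations only extends. It therefore suffices to prove a one-step preservation fact, namely that every individual log operation preserves membership: if $d \in L$ holds before the operation, then $d \in L$ still holds after it. The full statement $\alwaysTL(d \in L \implies \alwaysTL(d \in L))$ then follows by a trivial induction on the number of operations applied between $t$ and any later time $t' \geq t$, since the only way $d \in L$ could turn false would be to falsify it at one such step.

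For the one-step fact I would carry out a case analysis on the log operation being applied. The read-only operations $\pos(\cdot)$ and $\locked(\cdot)$ leave the contents of $L$ untouched and thus trivially preserve $d \in L$. An invocation $\append(d')$ either inserts a fresh datum $d'$ at the head slot or, if $d'$ is already present, does nothing; in both cases it only ever adds an item and never removes one, so $d$ stays in $L$. Finally, $\bumpAndLock(d',k)$ relocates $d'$ from its current slot $l$ to slot $\max(k,l)$ and locks it there; crucially, this operation changes the \emph{position} of $d'$ but keeps $d'$ present in the log, and it does not touch any datum other than $d'$. Hence in every case the set of data items present in $L$ does not shrink, and in particular $d \in L$ is preserved.

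The argument has no real obstacle, but the point deserving the most care is the treatment of $\bumpAndLock$: one must confirm that moving a datum between slots does not amount to removing it and re-adding it, and that, because a single slot may contain several data items, neither an $\append$ nor a $\bumpAndLock$ of some other datum $d' \neq d$ can displace $d$ out of the log. Once these observations are recorded, the set of present data items is manifestly non-decreasing along the linearization, which is precisely the statement of the claim.
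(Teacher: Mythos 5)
Your proof is correct and matches the paper's treatment: the paper states this claim (together with the other entries of Table~\ref{tab:trivias}) as following immediately from the log specification and gives no explicit argument, and your case analysis over the log operations ($\pos$, $\locked$, $\append$, $\bumpAndLock$ --- none of which ever removes a datum) is exactly the routine verification being left implicit. Nothing is missing.
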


    \begin{claim}
      $\alwaysTL(L.\pos(d)=k \implies \alwaysTL(L.\pos(d) \geq k))$
    \end{claim}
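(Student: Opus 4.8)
The plan is to argue by monotonicity of positions along the linearized sequence of log operations. Since every shared object in \refalg{sufficiency} is linearizable, I reason directly on the linearization, so that the history of $L$ is a well-defined sequence of atomically-applied operations. Fixing an arbitrary instant at which $\pos(d)=k$ holds, it then suffices to show that no single subsequent operation ever strictly decreases $\pos(d)$; the claim follows by a trivial induction along the sequence of operations occurring after that instant. (The case $\pos(d)=0$, i.e.\ $d$ absent, makes the implication vacuous, since $\pos(d) \geq 0$ always holds, so I may assume $d$ is present and $k \geq 1$.)

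First I would enumerate the operations that can alter the slot occupied by a datum: by the log interface, only $\append$ and $\bumpAndLock$ mutate the array. I then argue that an operation whose argument is a datum $d' \neq d$ leaves $\pos(d)$ unchanged. An $\append(d')$ writes $d'$ into the head slot---the first free slot beyond which all slots are free---and so never touches an occupied slot, hence does not move $d$. A $\bumpAndLock(d',k')$ only relocates $d'$; since a slot may hold several data items, inserting $d'$ into $d$'s slot merely adds an item to that slot and does not change $d$'s slot number. Thus foreign operations preserve $\pos(d)$ exactly.

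Next I handle operations on $d$ itself. Once $\pos(d)=k$ holds we have $d \in L$, and by the preceding claim membership is stable, so $d \in L$ at every later instant; hence $\append(d)$ does nothing by definition and leaves the position fixed. And $\bumpAndLock(d,k')$ moves $d$ from its current slot $l$ to $\max(k',l) \geq l$, so it can only increase $\pos(d)$. Consequently every operation keeps $\pos(d)$ non-decreasing once $d$ is present, which closes the induction and yields $\alwaysTL(L.\pos(d) \geq k)$.

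I expect no genuine obstacle here, as the statement is a direct consequence of the log semantics together with the stability of membership established in the preceding claim. The only point requiring a word of care is that slots are multisets of data, so relocating a foreign datum into $d$'s slot cannot perturb $\pos(d)$; once this is observed, the rest is immediate.
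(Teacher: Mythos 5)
Your proof is correct: the paper gives no explicit argument for this claim, listing it among the base invariants that ``follow immediately from the specification of a log,'' and your case analysis over $\append$ and $\bumpAndLock$ is exactly the spelled-out version of that immediate argument. Nothing to add.
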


    \begin{claim}
      \labclaim{sufficiency:trivias:logs:2}
      $\alwaysTL(L.\locked(d) \implies \alwaysTL(L.\locked(d)))$
    \end{claim}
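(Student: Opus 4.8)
The plan is to reason directly on the linearization of the operations applied to the log $L$, which is legitimate since logs are linearizable and, as noted at the start of this section, we reason upon the linearization. The claim states that the locked status of a datum is monotone in time: once $\locked(d)$ holds, it holds forever. I would prove this by inspecting the sequential specification of the log and showing that no operation ever clears a lock.

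First I would identify which operations can affect the predicate $\locked(d)$. Among the log operations---$\append$, $\bumpAndLock$, and the read-only queries $\pos$, $\locked$ and the derived order $<_{L}$---only $\bumpAndLock$ touches the locked status, and it can only \emph{set} a lock: it moves the datum to slot $\max(k,l)$ and then locks it. No operation in the interface unlocks a datum; moreover, the specification explicitly states that once locked a datum cannot be bumped, so any later $\bumpAndLock(d,\any)$ is a no-op on $d$ and leaves its lock in place.

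I would then conclude by a straightforward induction along the linearization. Fix a time $t$ at which $\locked(d)$ holds, and consider any operation occurring after $t$. By the previous step it preserves the value of $\locked(d)$, so $\locked(d)$ still holds afterwards; iterating over all operations after $t$ yields $\alwaysTL(L.\locked(d))$ from time $t$ on, which is exactly the statement. The hard part here will be essentially nonexistent: the result is a direct consequence of the monotonicity built into the log's sequential specification, and the only point requiring minor care is the appeal to linearizability to justify reasoning on the sequential history in place of the concurrent one, which is already granted for all logs in this section.
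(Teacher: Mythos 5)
Your proof is correct and matches the paper's treatment: the paper lists this claim among the base invariants of Table~\ref{tab:trivias} that ``follow immediately from the specification of a log,'' and your argument simply spells out that immediacy, namely that no log operation ever clears a lock and that monotonicity follows by induction on the linearization.
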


    \begin{claim}
      $\alwaysTL((L.\locked(d) \land L.\pos(d)=k) \implies \alwaysTL(L.\pos(d) = k))$
    \end{claim}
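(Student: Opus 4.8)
The plan is to treat all four claims uniformly as safety invariants of the sequential log specification and to establish them by induction over the prefix of the linearization of the operations applied to $L$. Since the shared objects are linearizable, it suffices to reason on this sequential history. The base case is the initial state, in which every slot is free, so every antecedent is vacuously false ($d \notin L$, $L.\pos(d)=0$, and $\neg L.\locked(d)$). For the inductive step I would enumerate the operations exposed by the log interface that can modify the relevant predicate and check that the specification forbids precisely the transitions that would falsify the invariant. The key observation is that the only operations mutating the contents or layout of $L$ are $\append(\cdot)$ and $\bumpAndLock(\cdot,\cdot)$; $\pos$ and $\locked$ are pure queries and change nothing.

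For the final claim, $\alwaysTL((L.\locked(d) \land L.\pos(d)=k) \implies \alwaysTL(L.\pos(d) = k))$, assume that at some point both $L.\locked(d)$ and $L.\pos(d)=k$ hold. The position of a datum changes only through $\append(d)$ or $\bumpAndLock(d,\cdot)$. An invocation $\append(d)$ is a no-op here, because $L.\pos(d)=k\ge 1$ implies $d\in L$ and $\append$ does nothing on data already present. An invocation $\bumpAndLock(d,k')$ would ordinarily relocate $d$ to slot $\max(k',k)$, but the specification stipulates that a locked datum can no longer be bumped; as $d$ is locked, this invocation leaves the slot of $d$ unchanged. No other operation touches the slot of $d$. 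To propagate the conclusion across all future times I then invoke the immediately preceding claim that lockedness is stable (once $L.\locked(d)$ holds it holds forever): at every later point $d$ is still locked, so the same two cases apply, and $L.\pos(d)=k$ is preserved step by step.

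The argument is genuinely routine, so the only real obstacle is bookkeeping rather than mathematical depth. Concretely, the two points to get right are (i) the completeness of the operation enumeration, namely the assertion that $\append$ and $\bumpAndLock$ are the sole state-mutating operations, so that no hidden transition can move a locked datum, and (ii) the dependence on the earlier invariants: because the claim carries an outer $\alwaysTL$, the inductive maintenance of $L.\pos(d)=k$ relies on $L.\locked(d)$ continuing to hold, which is exactly the content of the lockedness-persistence claim proved just above. Once these two observations are in place, the companion claims fall out of the same case analysis, since data are never removed and positions are non-decreasing as $\bumpAndLock$ moves $d$ to $\max(k',l)\ge l$, whence the last claim follows at once.
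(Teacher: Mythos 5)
Your proof is correct and matches the paper's intent: the paper lists this claim among the ``base invariants'' of \reftab{trivias} and asserts they ``follow immediately from the specification of a log,'' offering no further argument, and your case analysis (only $\append$ and $\bumpAndLock$ mutate the log, $\append$ is a no-op on present data, a locked datum cannot be bumped, and lockedness persists by the preceding claim) is exactly the routine justification the paper leaves implicit.
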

    
    \begin{claim}
      \labclaim{sufficiency:trivias:logs:3}
      $\alwaysTL(L.\locked(d) \land d <_{L} d' \implies \alwaysTL(d <_{L} d'))$
    \end{claim}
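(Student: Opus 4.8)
The plan is to fix an arbitrary time $t$ witnessing the antecedent---so that $L.\locked(d)$ and $d <_{L} d'$ both hold at $t$---and to show that $d <_{L} d'$ persists at every later time $t' \geq t$. The argument is a direct combination of two facts already recorded among the preceding trivias: a locked datum keeps a fixed position, and positions never decrease. Concretely, since $d$ is locked at $t$, the trivia asserting that a locked datum has a frozen position pins $L.\pos(d)$ to the constant value $k_d$, its position at $t$, for all $t' \geq t$.

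First I would note that $d <_{L} d'$ at $t$ forces both $d, d' \in L$ at $t$; by \refclaim{sufficiency:trivias:logs:1} both remain in $L$ forever, so $L.\pos(d')$ stays well-defined and positive, and by the monotonicity trivia it never drops below its value at $t$. It then suffices to unfold the definition of $<_{L}$ at $t$ into its two disjuncts and propagate each forward. If $L.\pos(d) < L.\pos(d')$ at $t$, then at every $t' \geq t$ the position of $d$ is still $k_d$ while that of $d'$ has only grown, so $L.\pos(d) < L.\pos(d')$ and hence $d <_{L} d'$. If instead $d$ and $d'$ share a slot at $t$ and $d < d'$ in the a priori order, then at each $t' \geq t$ either $d'$ has moved to a strictly higher slot, giving $L.\pos(d) < L.\pos(d')$, or it still occupies $k_d$, in which case the immutable tie-breaking order $d < d'$ yields $d <_{L} d'$. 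In both disjuncts $d <_{L} d'$ holds throughout, which is the claim.

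I do not anticipate any real obstacle: the statement is essentially bookkeeping on top of the two positional trivias. The only point deserving a word of care is the same-slot disjunct, where $d'$ may migrate away from the slot it initially shared with $d$; there the argument must rely on the immutability of the a priori total order $<$ on data items to preserve the tie-break while $d'$ has not yet moved, and on monotonicity once it has.
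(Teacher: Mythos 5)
Your argument is correct and is exactly the reasoning the paper leaves implicit: it lists this claim among the ``base invariants that follow immediately from the specification of a log,'' without spelling out a proof. Your case split on the two disjuncts of $<_{L}$, using the frozen position of the locked datum $d$, the monotonicity of $L.\pos(d')$, persistence of membership in $L$, and the immutability of the a priori tie-breaking order, is the intended bookkeeping and is sound.
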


    \begin{claim}
      \labclaim{sufficiency:trivias:logs:4}
      $\alwaysTL((L.\locked(d') \land d \notin L \land \eventuallyTL(d \in L)) \implies \alwaysTL(d \in L \implies d' <_{L} d))$
    \end{claim}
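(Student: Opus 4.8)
The plan is to argue directly from the behaviour of $\append$ and the monotonicity trivias already established in this subsubsection. Fix a time $t$ at which the antecedent holds: $d'$ is locked in $L$, $d \notin L$, and $\eventuallyTL(d \in L)$. Since $d'$ is locked at $t$, the earlier triviality that a locked datum keeps its position forever yields a fixed slot $k' = L.\pos(d')$ with $L.\pos(d') = k'$ at every time $\geq t$.

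Next I would pinpoint the moment $d$ enters the log. Because $d \notin L$ at $t$ and $\bumpAndLock$ only acts on data already present, the sole way $d$ can later satisfy $d \in L$ is through a call to $\append(d)$; let $t'' > t$ be the time of the first such successful insertion. At $t''$ the datum $d$ is placed at the head of $L$, i.e. at the first free slot beyond which all slots are free. By definition this index sits strictly above every currently occupied slot, even when earlier bumps have left gaps. As $d'$ still occupies slot $k'$ at $t''$ (it is locked), the head index is at least $k'+1$, hence $L.\pos(d) \geq k'+1 > k' = L.\pos(d')$. In particular $d$ and $d'$ never share a slot, so the position comparison alone gives $d' <_L d$ at $t''$.

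Finally I would propagate this to all later times. After $t''$, $d$ stays in $L$ by monotonicity of membership, and its position can only increase by monotonicity of positions, while $d'$ remains pinned at $k'$. Thus $L.\pos(d) > L.\pos(d')$ persists and $d' <_L d$ holds at every time $\geq t''$. On the interval $[t,t'')$ we have $d \notin L$, so the implication $d \in L \implies d' <_L d$ is vacuously true there. Combining the two ranges establishes $\alwaysTL(d \in L \implies d' <_L d)$ from time $t$; since $t$ was an arbitrary time satisfying the antecedent, the outer $\alwaysTL$ of the statement follows.

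The only delicate point is the exact reading of the head of the log: I must use that ``the first free slot after which there are only free slots'' lies strictly above every occupied slot regardless of gaps, so that an $\append$ necessarily places $d$ beyond the locked slot of $d'$. Everything else is routine bookkeeping with the already-proved monotonicity trivias (membership is permanent, positions are non-decreasing, and a locked datum's slot is frozen).
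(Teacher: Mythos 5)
Your proof is correct and matches the paper's intent: the paper lists this claim among the ``base invariants that follow immediately from the specification of a log'' and gives no explicit argument, and your write-up is precisely that immediate argument spelled out (a locked datum's slot is frozen; $\append$ places $d$ at the head, which by definition lies strictly above every occupied slot including the frozen one; monotonicity of positions then preserves $d' <_L d$ forever). The one delicate point you flag — that the head is beyond all occupied slots even when bumping has created gaps — is indeed the crux, and you read the definition correctly.
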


    \begin{claim}
      \labclaim{sufficiency:trivias:logs:5}
      $\alwaysTL(L.\locked(d) \land \eventuallyTL(d' <_{L} d) \implies d' <_{L} d)$
    \end{claim}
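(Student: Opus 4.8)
The plan is to fix an arbitrary time $t$ and prove the implication pointwise, since the outer $\alwaysTL$ merely quantifies over all times. So I assume that at time $t$ the datum $d$ is locked in $L$ and that $\eventuallyTL(d' <_L d)$ holds at $t$, i.e. there is a time $t' \geq t$ at which $d' <_L d$; the goal is to show that $d' <_L d$ already holds at $t$. We may assume $d \neq d'$, since otherwise $d' <_L d$ is never true and the premise is vacuous. Throughout I use that, $d$ being locked at $t$, it remains in $L$ with a frozen position for every $t'' \geq t$ (by \refclaim{sufficiency:trivias:logs:1}, \refclaim{sufficiency:trivias:logs:2} and the locked-position claim), and that for any two distinct data simultaneously present in $L$ the relation $<_L$ is a strict total order (comparing positions, with ties broken by the a priori order $<$), hence antisymmetric.

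First I would dispose of the degenerate possibility that $d' \notin L$ at time $t$. Observe that $d' <_L d$ at $t'$ forces $d' \in L$ at $t'$, so $\eventuallyTL(d' \in L)$ holds at $t$. If $d'$ were absent from $L$ at $t$, then applying \refclaim{sufficiency:trivias:logs:4} with the roles of $d$ and $d'$ exchanged---here $d$ is the locked datum and $d'$ is the one that is currently absent but eventually appears---yields $d <_L d'$ at every later time at which $d' \in L$; in particular $d <_L d'$ at $t'$, contradicting $d' <_L d$ at $t'$ by antisymmetry. Hence $d' \in L$ already at time $t$.

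With both $d$ and $d'$ present at $t$, the strict total order $<_L$ relates them, so either $d' <_L d$ or $d <_L d'$ holds at $t$. Suppose toward a contradiction that the latter holds. Since $d$ is locked at $t$, \refclaim{sufficiency:trivias:logs:3} propagates $d <_L d'$ to all times $\geq t$, and in particular to $t'$; this again contradicts $d' <_L d$ at $t'$. Therefore $d' <_L d$ holds at $t$, which is exactly what we want, and since $t$ was arbitrary the $\alwaysTL$-statement follows.

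The only genuinely delicate point is the first step. One is tempted to argue purely from position monotonicity---$d$'s position is frozen while $d'$'s can only grow---but this is insufficient, because $d'$ may simply not be in the log yet at time $t$, in which case its position is $0$ and $<_L$ is not even defined between $d$ and $d'$. Ruling out this case is precisely the purpose of \refclaim{sufficiency:trivias:logs:4}, and recognizing that it is required is the main obstacle; once $d'$ is known to be present at $t$, the rest is an immediate appeal to the persistence of $<_L$ under a locked left argument.
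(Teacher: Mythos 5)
Your proof is correct. The paper itself gives no argument for this claim: it is listed in \reftab{trivias} among the invariants said to ``follow immediately from the specification of a log,'' so there is nothing to diverge from. Your elaboration is exactly the intended one, and you correctly identify the one non-obvious point --- that position monotonicity alone does not suffice when $d'$ is absent from $L$ at the earlier time, which is precisely where \refclaim{sufficiency:trivias:logs:4} (with the roles of the two data swapped) is needed, the remaining case being settled by totality of $<_{L}$ on co-present data together with \refclaim{sufficiency:trivias:logs:3}.
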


    
    \columnbreak

    Others.

    \begin{claim}
      \labclaim{sufficiency:trivias:1}
      $(\dst(m) \inter \dst(m') \neq \emptySet \land \del(m) \land \del(m')) \implies m' \delOrder m \lor m \delOrder m'$
    \end{claim}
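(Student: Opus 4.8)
The plan is to reduce the claim to the existence of a single process in the intersection that witnesses the relation, and then to extract such a witness from the shared intersection log. Write $g=\dst(m)$ and $g'=\dst(m')$; by hypothesis $g \inter g' \neq \emptySet$, so both $m$ and $m'$ are addressed to every member of $g \inter g'$.

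First I would record the following reduction. It suffices to exhibit one process $r \in g \inter g'$ that delivers at least one of $m$ and $m'$. Indeed, suppose $r$ delivers $m$ (the case of $m'$ is symmetric). If $r$ never delivers $m'$, then at the instant $r$ delivers $m$ the message $m'$ is, and remains, undelivered at $r$, so $m \delOrderOf{r} m'$; otherwise $r$ delivers both, and whichever of the two it delivers first yields the corresponding $\delOrderOf{r}$ relation. In every case $m$ and $m'$ are $\delOrder$-comparable, which is the conclusion.

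The heart of the argument is therefore to produce such a witness $r$. Here I would use that $\del(m)$ and $\del(m')$ force both messages to be serialised in the shared log $\LOG_{g \inter g'}$: driving either message to the $\phDelivered$ phase presupposes a complete pass through the pending and commit phases at some member of $g \inter g'$, which appends the message to $\LOG_{g \inter g'}$ and locks its final position there. Once both $m$ and $m'$ lie in $\LOG_{g \inter g'}$, the trivial facts about logs recalled above give that they are totally ordered in it, say $m <_{\LOG_{g \inter g'}} m'$, and that this order is thereafter stable. Since the delivery precondition forbids delivering a message before all of its $\LOG_{g \inter g'}$-predecessors have been delivered, any member of $g \inter g'$ that delivers one of the two must deliver them in the log order; combined with the previous paragraph this supplies the required witness and hence $m \delOrder m'$.

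The main obstacle I anticipate is this middle step: arguing that whenever $m$ is delivered it has genuinely been inserted and locked in $\LOG_{g \inter g'}$ by a process of $g \inter g'$, rather than merely ordered inside the group log $\LOG_g$ by a process lying outside the intersection. Pinning this down is exactly where the phase discipline must be invoked (pending appends to every $\LOG_{g \inter h}$, commit bumps and locks the position, and the stabilise/stable steps gate on these intersection logs), and where I expect to lean on the liveness of $\Sigma_{g \inter g'}$ to rule out the degenerate case in which the intersection $g \inter g'$ contributes nothing to the delivery of either message.
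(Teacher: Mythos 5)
Your opening reduction is the right first move: under the paper's definition of $\delOrderOf{r}$, which only relates $m$ and $m'$ at a process $r \in \dst(m) \inter \dst(m')$, the whole claim amounts to exhibiting one such $r$ that delivers at least one of the two messages. The gap is that your second paragraph does not actually produce such an $r$, and the mechanism you invoke is not what \refalg{sufficiency} does. Writing $g=\dst(m)$ and $g'=\dst(m')$: delivering $m$ at a process $p$ only requires $p$ itself to pass through $\phPending$ and $\phCommit$, and in those actions $p$ touches only the logs $\LOG_{g \inter h}$ for $h \in \Gr(p)$ (\refline{sufficiency:pending:5}, \refline{sufficiency:commit:6}); if $p \notin g'$ it never writes to $\LOG_{g \inter g'}$. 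The only precondition that forces the intersection to have acted is \refline{sufficiency:commit:3}, and it quantifies over $h \in \gamma(g)$, which omits $g'$ whenever $g$ and $g'$ share no cyclic family currently output by $\gamma$ --- in particular always when the intersection graph is acyclic, and eventually once the family is faulty. So ``$\del(m)$ presupposes a pass through pending and commit at some member of $g \inter g'$'' is false. Moreover, even granting that both messages end up locked and totally ordered in $\LOG_{g \inter g'}$, your last step only says that a member of the intersection \emph{that delivers one of them} respects the log order; it does not show that any such member exists, which is exactly the point at issue. Finally, the liveness of $\Sigma_{g \inter g'}$ cannot rescue the degenerate case: it is vacuous when every process of $g \inter g'$ is faulty, which is precisely the case you need to exclude.

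The ingredient you are missing is Termination (\reflem{sufficiency:liveness:delivered}): since $m$ is delivered somewhere, every correct process of $g$ delivers $m$, so any correct $r \in g \inter g'$ is the witness your first paragraph asks for, with no detour through the intersection log. This leaves only the case where $g \inter g'$ is entirely faulty and none of its members delivers either message before crashing; there both sides of the disjunction can fail (take two groups meeting in a single process that crashes at time $0$, with $\Fa=\emptySet$: both messages are delivered by the survivors, yet $\delOrderOf{r}$ is empty for the only $r$ in the intersection). The paper states this claim without proof among its base invariants and so offers no guidance on that corner case either, but your argument as written neither disposes of it nor establishes the main case.
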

    
    \begin{claim}
      \labclaim{sufficiency:trivias:3}
      $\alwaysTL(m \in \LOG_{g \inter h} \implies \dst(m) = g \lor \dst(m) = h)$
    \end{claim}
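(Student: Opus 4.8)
The plan is to prove this invariant by a direct inspection of the actions of \refalg{sufficiency} that modify the intersection logs, together with the monotonicity of logs recorded in \refclaim{sufficiency:trivias:logs:1}. Since $\dst(m)$ is a static attribute of $m$ and no datum is ever removed from a log, it suffices to show that every operation which \emph{inserts} a bare message $m$ into a log of the form $\LOG_{g \inter h}$ does so only when $\dst(m) \in \{g,h\}$; the implication then holds at every time at which $m \in \LOG_{g \inter h}$.

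First I would enumerate the operations that place a datum into a log. A bare message $m$ is written to a log only through an $\append$ call, and there are exactly two such sites: \refline{sufficiency:amcast:3}, which executes $\LOG_{g}.\append(m)$ in $\multicast(m)$, and \refline{sufficiency:pending:5}, which executes $\LOG_{g \inter h}.\append(m)$ in $\pending(m)$. In both, the variable $g$ stands for $\dst(m)$ (the convention stated in the comment of $\multicast$, and enforced in $\pending(m)$ by the precondition $m \in \LOG_g$ at \refline{sufficiency:pending:3}). Every other write to a log inserts a \emph{compound} datum rather than the bare message: \refline{sufficiency:pending:6} and \refline{sufficiency:stabilize:5} append the tuples $(m,h,i)$ and $(m,h)$ to $\LOG_g$, and these are distinct data items from $m$, hence irrelevant to the predicate $m \in \LOG_{g \inter h}$.

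Next I would dispose of the remaining log-modifying operation, namely $\LOG_{g \inter h}.\bumpAndLock(m,k)$ at \refline{sufficiency:commit:6}. By the sequential specification of a log, $\bumpAndLock$ applies only to a datum already present and merely relocates it to slot $\max(k,l)$; it never introduces a new message. Consequently it cannot create a fresh witness to $m \in \LOG_{g \inter h}$, and the two $\append$ sites above are genuinely the only way the predicate can become true.

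Finally I would close the argument using the symmetry of intersection. When \refline{sufficiency:pending:5} appends $m$, the target log is $\LOG_{\dst(m) \inter h}$ for some $h \in \Gr(p)$; and \refline{sufficiency:amcast:3} appends $m$ to $\LOG_{\dst(m)} = \LOG_{\dst(m) \inter \dst(m)}$. In either case $\dst(m)$ coincides with one index of the log, and since $\LOG_{g \inter h}$ and $\LOG_{h \inter g}$ denote the same object, membership of the bare message $m$ in $\LOG_{g \inter h}$ forces $\dst(m) = g$ or $\dst(m) = h$. I do not expect any genuine obstacle: the statement is a routine syntactic invariant, and the only points demanding care are to be exhaustive about the insertion sites and to keep bare messages separate from the tuples that also reside in the group logs.
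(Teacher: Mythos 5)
Your proof is correct, and it is exactly the argument the paper leaves implicit: Claim~\ref{claim:sufficiency:trivias:3} is listed among the base invariants said to ``follow immediately from the specification of a log and the structure of \refalg{sufficiency},'' with no written proof. Your exhaustive enumeration of the insertion sites (\refline{sufficiency:amcast:3} and \refline{sufficiency:pending:5}, both indexed by $\dst(m)$), the separation of bare messages from the tuples appended at \refline{sufficiency:pending:6} and \refline{sufficiency:stabilize:5}, and the observation that $\bumpAndLock$ never inserts a new datum is precisely the intended justification.
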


    \begin{claim}
      \labclaim{sufficiency:trivias:4}
      $\alwaysTL(m' <_{\LOG_{x}} m \implies x = \dst(m) \inter \dst(m'))$
    \end{claim}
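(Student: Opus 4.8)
The plan is to read the index $x$ off the two places where a bare message can enter a log, and then to pin it down with Claim~\refclaim{sufficiency:trivias:3}. The relation $m' <_{\LOG_x} m$ presupposes that both $m$ and $m'$ are data items of $\LOG_x$, so I first record where such data items come from. In \refalg{sufficiency} a bare message is appended to a log only at \refline{sufficiency:amcast:3}, which inserts $m$ into $\LOG_{\dst(m)}$, and at \refline{sufficiency:pending:5}, which inserts $m$ into $\LOG_{\dst(m) \inter h}$ for some group $h$. In either case the log index is of the form $\dst(m) \inter h$ (with $h = \dst(m)$ in the first case), so from $m \in \LOG_x$ I obtain $x = \dst(m) \inter h_m$ for some $h_m$, and symmetrically $x = \dst(m') \inter h_{m'}$. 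In particular $x \subseteq \dst(m)$ and $x \subseteq \dst(m')$. If moreover $x = \dst(m)$, then $\dst(m) \inter \dst(m') = x \inter \dst(m') = x$ because $x \subseteq \dst(m')$, and we are done; the case $x = \dst(m')$ is symmetric.

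Thus the only situation that needs real work is $x \subsetneq \dst(m)$ and $x \subsetneq \dst(m')$. Here $x = \dst(m) \inter h_m$ with $\dst(m) \not\subseteq h_m$, so I may apply Claim~\refclaim{sufficiency:trivias:3} to the log $\LOG_{\dst(m) \inter h_m} = \LOG_x$ and to the member $m'$: it yields $\dst(m') \in \{\dst(m), h_m\}$. If $\dst(m') = h_m$, then $\dst(m) \inter \dst(m') = \dst(m) \inter h_m = x$, which is exactly the claim.

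The remaining sub-case, $\dst(m') = \dst(m)$, is the one I expect to be the main obstacle: two messages sharing a destination group $g$ are a priori appended to every intersection log $\LOG_{g \inter h}$ during their pending phase, so nothing in the bare append structure alone forbids them from being compared in a log whose index is a strict subset of $g$. To close this sub-case I would establish an invariant stating that the order between two messages with the same destination $g$ is witnessed only by the canonical log $\LOG_g$ --- intuitively because same-destination messages are sequenced through $\LOG_g$ via the precondition at \refline{sufficiency:pending:3b}, and the subsequent bumping keeps their intersection-log positions consistent with $\LOG_g$. Once such an invariant is available, the shared-destination case forces $x = g = \dst(m) \inter \dst(m')$, and together with the previous paragraphs this proves the statement. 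Making this invariant precise, and checking that it survives the $\append$ at \refline{sufficiency:pending:5} and the $\bumpAndLock$ operations, is the delicate step.
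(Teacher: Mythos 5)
The paper offers no proof of this claim: it is listed among the ``base invariants that follow immediately from the specification of a log and the structure of \refalg{sufficiency}'', so there is no argument of the paper's to compare yours against. Your first two paragraphs are correct and, for the case $\dst(m) \neq \dst(m')$, already constitute a complete proof: a bare message enters a log only at \refline{sufficiency:amcast:3} or \refline{sufficiency:pending:5}, so the index satisfies $x = \dst(m) \inter h_m = \dst(m') \inter h_{m'}$ for some groups $h_m, h_{m'}$, and \refclaim{sufficiency:trivias:3} then forces $\dst(m') = h_m$ when the destinations differ, giving $x = \dst(m) \inter \dst(m')$.

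The gap is in your last paragraph, and it cannot be closed in the way you propose, because the remaining sub-case is a genuine counterexample to the claim as literally stated. The $\pending$ action appends \emph{every} message $m$ with $\dst(m)=g$ to \emph{every} $\LOG_{g \inter h}$ with $h \in \Gr(p)$ (\refline{sufficiency:pending:5}); hence two messages $m$ and $m'$ both addressed to $g$ really do coexist in $\LOG_{g \inter h}$ for some $h \neq g$, and the log's total order on its data items makes $m' <_{\LOG_{g \inter h}} m$ (or its converse) literally true with $g \inter h \subsetneq g = \dst(m) \inter \dst(m')$. No invariant about which log ``witnesses'' the order can prevent the hypothesis of the implication from holding in such a log, so the stated equality fails there. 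What your argument actually establishes---and what is the defensible form of the invariant---is that $x \subseteq \dst(m) \inter \dst(m')$ always holds, with equality whenever $\dst(m) \neq \dst(m')$. You should record the claim in that weakened form (or add the hypothesis $\dst(m) \neq \dst(m')$) rather than search for the missing lemma; as written, the delicate step you flag is not merely delicate but impossible.
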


    \begin{claim}
      \labclaim{sufficiency:trivias:5}
      $\del_p(m) \implies p \in \dst(m)$
    \end{claim}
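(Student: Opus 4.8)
The plan is to trace the phase of $m$ at $p$ backwards from its delivery and to observe that no transition of $\phase[m]$ can ever fire at $p$ unless $p$ locally maintains the log of $\dst(m)$. First I would use the precondition at \refline{sufficiency:deliver:2}: executing $\deliver(m)$ at $p$ requires that, at that instant, $\phase[m] = \phStable$ holds at $p$. Now $\phase$ is a local mapping initialized to $\phStart$ for every message (\refline{sufficiency:var:2}) and is modified only by the actions of \refalg{sufficiency} at $p$. Since the preconditions of $\pending$, $\commit$, $\stable$ and $\deliver$ each test the immediately preceding phase and the effects advance it by exactly one step along the progression $\phStart, \phPending, \phCommit, \phStable, \phDelivered$, the value $\phStable$ can be reached at $p$ only after $p$ has executed, in particular, the action $\pending(m)$ that performs the first transition $\phase[m] \assign \phPending$ (\refline{sufficiency:pending:7}).

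The key step is then the precondition of $\pending(m)$ at \refline{sufficiency:pending:3}, namely $m \in \LOG_g$ with $g = \dst(m)$. For $p$ to evaluate this guard it must hold the log $\LOG_g = \LOG_{g \inter g}$ in its local state. Yet the only logs present at $p$ are the $(\LOG_{h \inter h'})_{h, h' \in \Gr(p)}$ declared at \refline{sufficiency:var:1}. Hence $g \in \Gr(p)$, which by the definition of $\Gr(p)$ means exactly $p \in g = \dst(m)$, as required.

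The only delicate point is to make the implicit naming convention of \refalg{sufficiency} explicit, since the bulk of the argument is a well-formedness observation about the code rather than a genuine temporal proof. Concretely, throughout the algorithm $g$ abbreviates $\dst(m)$, and an action of $p$ that names $\LOG_g$ (or more generally $\LOG_{g \inter h}$) is enabled at $p$ only when that log exists in $p$'s local state, i.e. only when $g \in \Gr(p)$. Once this convention is granted, the argument invokes nothing beyond the monotonicity of the phase progression established above, and I do not expect a real obstacle; any of the later phase-advancing actions ($\commit$, $\stable$, $\deliver$) would furnish the same conclusion, so $\pending$ is chosen merely because its guard $m \in \LOG_g$ exhibits the dependence on $\LOG_{\dst(m)}$ most directly.
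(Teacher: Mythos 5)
Your argument is correct and is precisely the structural observation the paper relies on: \refclaim{sufficiency:trivias:5} is listed among the base invariants of \reftab{trivias} that the paper declares to ``follow immediately from the specification of a log and the structure of \refalg{sufficiency}'' and for which no explicit proof is given. Your backward trace through the phase progression to the guard $m \in \LOG_g$ at \refline{sufficiency:pending:3}, combined with the fact that a process only holds the logs $(\LOG_{h \inter h'})_{h,h' \in \Gr(p)}$ of \refline{sufficiency:var:1}, is a faithful elaboration of that implicit justification.
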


    \begin{claim}
      \labclaim{sufficiency:trivias:6}
      $\del(m) \implies m \in \LOG_{\dst(m)}^{\infty}$
    \end{claim}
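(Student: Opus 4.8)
The plan is to trace the delivery of $m$ backward through the phase progression of \refalg{sufficiency} until reaching the point at which $m$ is forced into $\LOG_g$, and then to invoke the persistence of log membership. Throughout, write $g = \dst(m)$. First I would fix a process $p$ witnessing $\del(m)$, so that $\del_p(m)$ holds. By \refclaim{sufficiency:trivias:5}, $p \in g$; hence $g \in \Gr(p)$ and $p$ has access to the shared log $\LOG_{g} = \LOG_{g \inter g}$ declared at \refline{sufficiency:var:1}.

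The key observation is that the local variable $\phase[m]$ advances only along the chain $\phStart \to \phPending \to \phCommit \to \phStable \to \phDelivered$: the four actions $\pending$, $\commit$, $\stable$, $\deliver$ are each guarded by a precondition requiring the immediately preceding phase and set $\phase[m]$ to the next one, while $\multicast$ and $\stabilize$ touch the logs but never change the phase. Since the delivery at $p$ fires $\deliver(m)$, its precondition (\refline{sufficiency:deliver:2}) gives $\phase^{p}[m] = \phStable$ at that moment. As the only action that moves $\phase[m]$ out of $\phStart$ is $\pending(m)$, process $p$ must have executed $\pending(m)$ at some earlier time $t_0$. The precondition of that action at \refline{sufficiency:pending:3} is exactly $m \in \LOG_{g}$, so $m \in \LOG_{g}$ held at $t_0$.

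To conclude, I would apply \refclaim{sufficiency:trivias:logs:1}, which states that log membership is stable: once $m \in \LOG_{g}$ it remains so forever, hence $m \in \LOG_{g}^{\infty}$, as required. I expect no genuine conceptual obstacle here; the only point demanding care is the justification that reaching $\phStable$ forces a prior execution of $\pending(m)$. This amounts to formalizing that $\phase[m]$ is monotone and advances exactly one phase at a time through the listed actions, which is a routine induction over the run establishing that no action ever decreases the phase.
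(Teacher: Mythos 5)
Your proof is correct and matches the paper's reasoning: the paper treats this claim as an immediate consequence of the algorithm's structure, and the exact same chain (delivery forces a prior $\phPending$ phase via \refclaim{sufficiency:trivias:7}, whose enabling precondition at \refline{sufficiency:pending:3} is $m \in \LOG_g$, then persistence via \refclaim{sufficiency:trivias:logs:1}) is the one the paper itself spells out at the start of its liveness argument.
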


    \begin{claim}
      \labclaim{sufficiency:trivias:7}
      $
      \phase^{p,t_0}[m]~=~\phDelivered
      \implies
      \phase^{p,t_1<t_0}[m] = \phStable
      \implies
      \phase^{p,t_2<t_1}[m] = \phCommit
      \implies
      \phase^{p,t_3<t_2}[m] = \phPending
      \implies
      \phase^{p,t_4<t_3}[m] = \phStart
      $
    \end{claim}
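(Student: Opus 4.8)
The plan is to exploit the rigid control structure of \refalg{sufficiency}: the local variable $\phase[m]$ is initialized to $\phStart$ (\refline{sufficiency:var:2}) and is subsequently written by only four actions, each of which both (i) requires a specific phase as a precondition and (ii) advances $\phase[m]$ to the \emph{immediately following} phase in the total order $\phStart < \phPending < \phCommit < \phStable < \phDelivered$. Concretely, $\pending$ requires $\phStart$ and writes $\phPending$ (\reflinestwo{sufficiency:pending:2}{sufficiency:pending:7}); $\commit$ requires $\phPending$ and writes $\phCommit$ (\reflinestwo{sufficiency:commit:2}{sufficiency:commit:7}); $\stable$ requires $\phCommit$ and writes $\phStable$ (\reflinestwo{sufficiency:stable:2}{sufficiency:stable:4}); and $\deliver$ requires $\phStable$ and writes $\phDelivered$ (\reflinestwo{sufficiency:deliver:2}{sufficiency:deliver:4}). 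No other action touches $\phase[m]$.

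First I would record a \emph{monotonicity} invariant: $\phase^{p,t}[m]$ is non-decreasing in $t$. Since every write to $\phase[m]$ sets it to the phase strictly above the value forced by its precondition, and no action ever lowers it, the value can only climb the chain. This is, for the $\phase$ variable, the analogue of the log-monotonicity trivias stated above. Second, I would note the \emph{single-step entry} property: for any non-initial phase $\phi$, the only way $\phase^{p}[m]$ attains $\phi$ is through the unique action whose effect writes $\phi$, and that action's precondition forced $\phase^{p}[m]$ to equal the immediate predecessor $\phi^{-}$ at the instant it fired. Hence, just before that firing, $\phase^{p}[m] = \phi^{-}$, and by monotonicity no later write can have skipped $\phi^{-}$.

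The statement then follows by chaining the entry property backwards four times. Given $\phase^{p,t_0}[m] = \phDelivered$, the entry property applied to $\phDelivered$ yields a time $t_1 < t_0$ with $\phase^{p,t_1}[m] = \phStable$; applying it to $\phStable$ yields $t_2 < t_1$ with value $\phCommit$; then to $\phCommit$ yields $t_3 < t_2$ with value $\phPending$; and finally to $\phPending$ yields $t_4 < t_3$ with value $\phStart$, whose existence is also guaranteed by the initialization at \refline{sufficiency:var:2}. This produces exactly the strictly decreasing sequence of times carrying the asserted phase values.

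I do not anticipate a genuine obstacle: the result is a bookkeeping invariant about a local control variable, not a synchrony argument. The only care needed is to phrase ``just before the firing'' precisely---picking $t_1,\dots,t_4$ as instants immediately preceding the respective phase-advancing steps---and to confirm by direct inspection of the action list that $\phase[m]$ is never written outside the four enumerated actions, so that no write can either skip a phase or move backwards.
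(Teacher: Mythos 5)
Your proof is correct and matches the paper's intent: the paper states \refclaim{sufficiency:trivias:7} as a base invariant that ``follows immediately from \ldots{} the structure of \refalg{sufficiency}'' without spelling out an argument, and the argument it has in mind is exactly yours---each phase-advancing action is guarded by a precondition requiring the immediately preceding phase, the variable is never written elsewhere, so one chains the entry property backwards from $\phDelivered$ to $\phStart$.
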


    \begin{claim}
      \labclaim{sufficiency:trivias:8}
      $\alwaysTL(\phase^{p}[m] = \phi \implies \alwaysTL(\phase^{p}[m] \geq \phi))$
    \end{claim}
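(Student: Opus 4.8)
The plan is to prove this monotonicity by a routine induction over the local steps of process $p$. The key structural fact is that $\phase$ is a local variable of $p$, initialised by \refline{sufficiency:var:2} to the constant map $\lambda m.\phStart$, and that the value $\phase^{p}[m]$ is modified only by the effects of four actions executed at $p$: $\pending$ writes $\phPending$ (\refline{sufficiency:pending:7}), $\commit$ writes $\phCommit$ (\refline{sufficiency:commit:7}), $\stable$ writes $\phStable$ (\refline{sufficiency:stable:4}), and $\deliver$ writes $\phDelivered$ (\refline{sufficiency:deliver:4}). The remaining actions $\multicast$ and $\stabilize$ only read $\phase^{p}[m]$ through their preconditions and otherwise touch the logs; and no step of another process $q \neq p$ alters $p$'s mapping. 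Hence it suffices to inspect these four write sites.

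The central observation is that each of these four actions advances the phase by exactly one level in the order $\phStart < \phPending < \phCommit < \phStable < \phDelivered$ fixed in the description of the $\phase$ variable: in every case the precondition forces $\phase^{p}[m]$ to equal the immediate predecessor of the value subsequently assigned. For instance, $\commit$ requires $\phase[m] = \phPending$ (\refline{sufficiency:commit:2}) before setting it to $\phCommit$, and likewise for $\pending$ (\refline{sufficiency:pending:2}), $\stable$ (\refline{sufficiency:stable:2}), and $\deliver$ (\refline{sufficiency:deliver:2}). Consequently, every atomic step of $p$ either leaves $\phase^{p}[m]$ unchanged or strictly increases it; in either case the value after the step is at least the value before it.

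With this single-step invariant in hand the claim follows immediately. Fix a time $t$ with $\phase^{p,t}[m] = \phi$ and argue by induction on $t' \geq t$ that $\phase^{p,t'}[m] \geq \phi$: the base case $t' = t$ is the hypothesis, and the inductive step applies the monotonicity of one local step just established. The main---indeed the only---point requiring care is the completeness of the case analysis of the first paragraph: one must confirm that \refalg{sufficiency} contains no assignment to $\phase$ other than the four listed, so that no hidden step could lower the phase. Given the small, fixed set of actions this is immediate and no genuine obstacle arises; the same bookkeeping underlies the neighbouring statement on the strict temporal ordering of the phases of a delivered message.
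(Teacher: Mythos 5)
Your proof is correct and matches the paper's intent: the claim appears in Table~\ref{tab:trivias} as a base invariant stated to ``follow immediately from the structure of \refalg{sufficiency}'', and your case analysis of the four assignment sites to $\phase$, each guarded by a precondition equal to the immediate predecessor phase, is exactly the routine justification the paper leaves implicit.
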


  \end{multicols}

  \caption{
    \labtab{trivias}
    Base invariants in \refalg{sufficiency}.
  }
    
\end{table}

\reftab{trivias} lists some base invariants that follow immediately from the specification of a log and the structure of \refalg{sufficiency}.

\subsubsection{Ordering}
\labsection{sufficiency:correctness:ordering}

First, we establish that \refalg{sufficiency} maintains the ordering property of the group communication primitive.
Formally,

\begin{proposition}
  \labprop{sufficiency:ordering}
  In every run of \refalg{sufficiency}, relation $\delOrder$ is acyclic over $\msgSet$.
\end{proposition}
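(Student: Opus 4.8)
The plan is to argue by contradiction. Suppose $\delOrder$ contains a cycle $m_1 \delOrder m_2 \delOrder \cdots \delOrder m_k \delOrder m_1$. Since every $m_i$ is the source of a $\delOrder$-edge, every $m_i$ is delivered, and by \refclaim{sufficiency:trivias:5} the delivering process lies in $\dst(m_i)$. The backbone of the argument is a single \emph{consistency} lemma stating that, among delivered messages, the delivery order refines the final order of the shared logs; I would then show that the bumping procedure together with $\gamma$ forbids any cycle in the log orders of delivered messages, contradicting the assumed cycle.

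First I would prove that $m \delOrder m'$ implies $m <_{L} m'$ at the end of the run, where $L = \LOG_{\dst(m) \inter \dst(m')}$ is, by \refclaim{sufficiency:trivias:4}, the only log able to order the two messages. Fix a process $p \in \dst(m) \inter \dst(m')$ that delivers $m$ while it has not yet delivered $m'$. Delivering $m$ requires $\phase[m] = \phStable$, hence $\geq \phCommit$, so $m$ was bumped and locked in $L$ at \refline{sufficiency:commit:6} and $L.\locked(m)$ holds thereafter. If $m' \in L$ at that time, then $m'$ is comparable to $m$ in $L$; were $m' <_L m$, the precondition \refline{sufficiency:deliver:3} would force $m'$ to be $\phDelivered$ at $p$, a contradiction, so $m <_L m'$, and this persists by \refclaim{sufficiency:trivias:logs:3}. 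If $m' \notin L$ yet, then since $m'$ is itself delivered it eventually enters $L$, and \refclaim{sufficiency:trivias:logs:4} yields $m <_L m'$ once it appears. Either way $m <_L m'$ holds at the end.

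Applying this lemma to each edge turns the cycle into a cycle of the final log orders $m_1 <_{L_1} m_2 <_{L_2} \cdots <_{L_k} m_1$, whose consecutive edges live in \emph{different} intersection logs $L_i = \LOG_{\dst(m_i) \inter \dst(m_{i+1})}$. Ruling out this positional cycle is the crux. My plan is to track, for each delivered message $n$, the first instant $\sigma(n)$ at which some process sets $\phase[n] = \phStable$, and to show $m \delOrder m' \implies \sigma(m) < \sigma(m')$; the cycle would then give $\sigma(m_1) < \cdots < \sigma(m_1)$, impossible within one run. The inequality is driven by the stabilization chain: $m'$ reaches $\phStable$ only once it has been stabilized in every $h \in \gamma(\dst(m'))$ (\refline{sufficiency:stable:3}), and a $\stabilize(m',\dst(m))$ event requires, by \refline{sufficiency:stabilize:4}, that all $L$-predecessors of $m'$---in particular $m$, by the consistency lemma---already be $\geq \phStable$.

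I expect the main obstacle to be precisely the regime where the cyclic family spanning the $L_i$ becomes faulty: completeness of $\gamma$ then lets $\dst(m)$ drop out of $\gamma(\dst(m'))$, so $m'$ may stabilize without waiting for $m$, and the clean $\sigma$-monotonicity breaks. Handling this requires invoking accuracy of $\gamma$, which drops the family only once some intersection $g \inter h$ is faulty on \emph{every} Hamiltonian path of the cycle, and then arguing via the locking invariants (\refclaim{sufficiency:trivias:logs:4}, \refclaim{sufficiency:trivias:logs:5}) and the agreement on the bumped position (\refline{sufficiency:commit:fix:2}) that the messages trapped in the positional cycle cannot all reach $\phDelivered$. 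I would isolate this as a separate invariant---predecessors of a stable message are frozen, and the first message of a delivered cycle to stabilize has no cycle-predecessor in the logs---so that the full cycle simply cannot be delivered; the bulk of the work is the bookkeeping of log positions under concurrent bumping in this faulty-family case.
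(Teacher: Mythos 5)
Your overall route is the paper's own: a consistency lemma showing that delivery order refines the final order of the shared intersection logs (this is exactly \reflem{sufficiency:ordering:3}, proved the same way via locking, \refclaim{sufficiency:trivias:logs:4}/\refclaim{sufficiency:trivias:logs:5} and the precondition at \refline{sufficiency:deliver:3}), followed by a stabilization-order induction around the cycle --- your $\sigma$-monotonicity is the paper's \reflem{sufficiency:ordering:cycle:3}, and your ``first message of the cycle to stabilize has no cycle-predecessor'' invariant is precisely how the paper derives the final contradiction. You also correctly identify the delicate regime (the cyclic family becoming faulty) and the right tool for it (accuracy of $\gamma$, which forces the wait at \refline{sufficiency:stable:3} for as long as the family is correct); the paper resolves it by observing that each stabilization event, while the family is still correct, strictly precedes that of its $\delOrder$-successor. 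One small over-attribution: the consensus on bumped positions (\refline{sufficiency:commit:fix:2}) is not needed for this safety argument --- the paper uses it only for liveness; ordering rests solely on locking and the stabilize-in-log-order discipline of \refline{sufficiency:stabilize:4}.

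There is, however, one case your plan does not cover and where the $\sigma$-argument genuinely fails: when the destination groups of the cycle do \emph{not} form a cyclic family in $\Fa$ (or repeat along the cycle). The precondition at \refline{sufficiency:stable:3} only forces a process to wait for $(m',h)$ when $h \in \gamma(\dst(m'))$, i.e., when the two groups share a cyclic family output by $\gamma$; if some consecutive pair $g_i, g_{i+1}$ shares no cyclic family at all, $m_{i+1}$ can stabilize without anyone ever executing $\stabilize(m_{i+1}, g_i)$, and your inequality $\sigma(m_i) < \sigma(m_{i+1})$ breaks at that edge for reasons unrelated to failures. The paper closes this with \reflem{sufficiency:ordering:5}: take the cycle without sub-cycles, use \refclaim{sufficiency:trivias:1} to show that a repeated destination group would yield a sub-cycle, conclude that a minimal cycle of length at least $3$ has pairwise-distinct groups and hence a Hamiltonian intersection graph (so the family \emph{is} in $\Fa$), and dispose of lengths $1$ and $2$ by the antisymmetry that your consistency lemma already gives (two delivered messages cannot precede each other in the same log). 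You need this reduction before the $\gamma$-driven stabilization argument can be invoked.
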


The proof goes by contradiction.
Let $C = m_1 \delOrder m_2 \delOrder \ldots m_{\cardinalOf{C}} \delOrder m_1$ be a cycle in $\delOrder$ that does not contain any sub-cycle.
By definition of relation $\delOrder$, for each message $m_{i \in [1,\cardinalOf{C}]}$, there exists a process $p_i$ that delivers $m_{i}$ before $m_{i+1}$.
We shall note $\f=(g_i)_i$ the family of the destination groups of $(m_i)_i$.
Below, we prove some key invariants regarding the way \refalg{sufficiency} delivers a message $m \in C$ with $g=\dst(m)$.

\begin{lemma}
  \lablem{sufficiency:ordering:1}
  $\phase^{p,t}[m] \geq \phCommit \implies \forall h \in \Gr(p) \sep \LOG_{g \inter h}^{t}.\locked(m)$
\end{lemma}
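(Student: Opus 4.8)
The plan is to prove this by tracing the $\commit(m)$ action, since $\phase^{p,t}[m] \geq \phCommit$ means $p$ has executed $\commit(m)$ at some time $t_0 \leq t$. The goal is to show that for \emph{every} $h \in \Gr(p)$, the datum $m$ is locked in $\LOG_{g \inter h}$ at time $t$. First I would observe that the effect block of $\commit$ at \reflines{sufficiency:commit:5}{sufficiency:commit:6} executes, for all $h \in \Gr(p)$, the operation $\LOG_{g \inter h}.\bumpAndLock(m,k)$. By the specification of $\bumpAndLock$, this locks $m$ in $\LOG_{g \inter h}$, so immediately after $\commit(m)$ completes at $p$ (i.e.\ at time $t_0$) we have $\LOG_{g \inter h}^{t_0}.\locked(m)$ for each such $h$.

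The remaining work is purely monotonicity: I must lift the conclusion from the commit time $t_0$ to the arbitrary later time $t$. This is where the trivias in \reftab{trivias} come in. By \refclaim{sufficiency:trivias:8}, $\alwaysTL(\phase^{p}[m] = \phi \implies \alwaysTL(\phase^{p}[m] \geq \phi))$, so the precondition $\phase^{p,t}[m] \geq \phCommit$ together with \refclaim{sufficiency:trivias:7} guarantees that $p$ did pass through the $\phCommit$ phase at some $t_0 \leq t$, at which point the $\bumpAndLock$ calls were performed. Then by \refclaim{sufficiency:trivias:logs:2}, $\alwaysTL(L.\locked(d) \implies \alwaysTL(L.\locked(d)))$, locks are permanent: once $m$ is locked in $\LOG_{g \inter h}$ at $t_0$, it remains locked at all later times, in particular at $t$. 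Combining these gives $\LOG_{g \inter h}^{t}.\locked(m)$ for every $h \in \Gr(p)$, which is exactly the claim.

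One subtlety I would be careful about is the precise quantifier structure: the action loops over $h \in \Gr(p)$ at \refline{sufficiency:commit:5}, and the effects are applied sequentially until the action returns, so I need that $\phase[m]$ is set to $\phCommit$ only \emph{after} all the $\bumpAndLock$ calls complete (this is guaranteed because \refline{sufficiency:commit:7} follows the loop). Thus whenever the phase has actually reached $\phCommit$, the full set of locks is already in place; there is no window in which the phase is $\phCommit$ but some $\LOG_{g \inter h}$ has not yet been locked. The main (and only mild) obstacle is therefore bookkeeping the timing between the loop and the phase update rather than any deep argument; the heavy lifting is entirely discharged by the permanence-of-locks trivia \refclaim{sufficiency:trivias:logs:2} and the phase-monotonicity trivia \refclaim{sufficiency:trivias:8}. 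I would state the proof in two lines: extract $t_0$ from the phase having reached $\phCommit$, note the $\bumpAndLock$ loop locked $m$ in each $\LOG_{g \inter h}$ at $t_0$, and invoke permanence of locks to carry this to $t$.
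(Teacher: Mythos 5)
Your proposal is correct and follows essentially the same route as the paper's proof: extract the commit time via the phase-progression trivia, observe that the $\bumpAndLock$ loop at \reflinestwo{sufficiency:commit:5}{sufficiency:commit:6} completes before the phase is set to $\phCommit$ at \refline{sufficiency:commit:7}, and carry the locks forward to time $t$ by \refclaim{sufficiency:trivias:logs:2}. The extra care you take about the quantifier/timing between the loop and the phase update is exactly the (implicit) content of the paper's argument.
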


\begin{proof}
  By \refclaim{sufficiency:trivias:7}, if $m$ is $\phDelivered$, it must be first $\phStable$.
  Similarly, to be $\phStable$, message $m$ must be first $\phCommit$.
  Process $p$ commits message $m$ at \refline{sufficiency:commit:7}.
  Let it be at some time $t' \leq t$.
  Process $p$ executes \reflinestwo{sufficiency:commit:5}{sufficiency:commit:6} before time $t'$.
  Thus, $m$ is bumped and locked in each $\LOG_{g \inter h}$ with  $h \in \Gr(p)$.
  For each such group $h$, applying \refclaim{sufficiency:trivias:logs:2}, $\LOG_{g \inter h}^{t}.\locked(m)$.
\end{proof}

Let $m'$ be a message with $h=\dst(m')$.

\begin{lemma}
  \lablem{sufficiency:ordering:2}
  $(m \delOrderOf{p} m') \implies \neg~(m' \delOrderOf{p} m)$
\end{lemma}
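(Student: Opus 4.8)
The plan is to show that $\delOrderOf{p}$ is antisymmetric by reasoning directly about the order in which $p$ performs its deliveries, using the monotonicity of phases established in the trivias. The key fact I would rely on is that, for a fixed message $m$, the value $\phase^p[m]$ is non-decreasing over time (\refclaim{sufficiency:trivias:8}). Hence the predicate ``$p$ has delivered $m$ by time $t$'' coincides with $\phase^{p,t}[m] = \phDelivered$ and, once true, remains true. In particular, whenever $p$ delivers $m$ there is a well-defined first instant $t_m$ at which $p$ executes the $\deliver(m)$ action (\refline{sufficiency:deliver:4}), and ``$p$ has delivered $m$ at time $t$'' is equivalent to $t \geq t_m$.

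First I would unfold the hypothesis. By definition of the local delivery order, $m \delOrderOf{p} m'$ requires $p \in \dst(m) \inter \dst(m')$, that $p$ delivers $m$ (so $t_m$ is defined), and that $p$ has not delivered $m'$ at time $t_m$, i.e.\ $\phase^{p,t_m}[m'] \neq \phDelivered$. Since $m$ and $m'$ are distinct messages (here $g=\dst(m)$, $h=\dst(m')$, and in the intended application they are consecutive entries of a cycle), the $\deliver(m)$ step touches only $\phase^p[m]$ and leaves $\phase^p[m']$ unchanged, so this reading is unambiguous and does not depend on whether one evaluates at $t_m$ or just before it.

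Then I would argue by contradiction, assuming both $m \delOrderOf{p} m'$ and $m' \delOrderOf{p} m$ hold. Note first that $m' \delOrderOf{p} m$ requires $p$ to deliver $m'$: if $p$ never delivers $m'$, the negation we want is immediate. So both $t_m$ and $t_{m'}$ are defined, and since $p$ executes one action at a time they are distinct. If $t_m < t_{m'}$, then by monotonicity $\phase^{p,t_{m'}}[m] = \phDelivered$, i.e.\ $p$ has already delivered $m$ at time $t_{m'}$, contradicting the clause of $m' \delOrderOf{p} m$ asserting that $m$ is undelivered at $t_{m'}$. The case $t_{m'} < t_m$ is symmetric and contradicts the corresponding clause of $m \delOrderOf{p} m'$. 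In both cases we reach a contradiction, which establishes $\neg\,(m' \delOrderOf{p} m)$.

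I do not expect a genuine obstacle here: the statement is essentially the antisymmetry of a ``delivered-before'' relation at a single process. The only points requiring care are definitional, namely pinning down the delivery instant $t_m$, invoking \refclaim{sufficiency:trivias:8} so that ``already delivered'' is a stable predicate, and using that distinct deliveries at $p$ occur at distinct times. It is worth observing that this argument never refers to the logs or to $\gamma$, so it applies verbatim to any pair of messages, irrespective of how $g$ and $h$ intersect.
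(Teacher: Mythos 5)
Your proof is correct and takes essentially the same route as the paper's: both arguments reduce the claim to the fact that each message is delivered by $p$ at most once, at a single well-defined instant (you obtain this from the phase-monotonicity of \refclaim{sufficiency:trivias:8}, the paper from the terminality of the $\phDelivered$ phase), and then compare the two delivery times to get antisymmetry. The only point to watch is that the paper also applies this lemma with $m=m'$ to exclude cycles of length one, a case your argument defers to "distinctness in the intended application"; the paper instead derives $m \neq m'$ directly from at-most-once delivery.
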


\begin{proof}
  To get delivered, a message must first reach the $\phDelivered$ phase (\refline{sufficiency:deliver:4}).
  Each action is guarded by a precondition on the phase of the message---precisely at line \ref{line:alg:sufficiency:pending:2}, \ref{line:alg:sufficiency:commit:2}, \ref{line:alg:sufficiency:stabilize:2}, \ref{line:alg:sufficiency:stable:2} and \ref{line:alg:sufficiency:deliver:2}.
  It follows that the $\phDelivered$ phase is terminal.
  Consequently, a message is delivered at most once.
  Thus $m \neq m'$.
  Assume $m \delOrderOf{p} m'$.
  Thus, process $p$ delivers $m$ and either this occurs before $p$ delivers $m'$ or $p$ never delivers $m'$.
  In both cases, $\neg~(m' \delOrderOf{p} m)$.
\end{proof}

\begin{lemma}
  \lablem{sufficiency:ordering:3}
  $m \delOrderOf{p} m' \implies (m <_{\LOG_{g \inter h}^{\infty}} m' \lor m' \notin \LOG_{g \inter h}^{\infty})$
\end{lemma}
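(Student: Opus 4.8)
The plan is to argue by contradiction: if the conclusion fails, then $p$ must have delivered $m'$ before $m$, which is incompatible with $m \delOrderOf{p} m'$.

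First I would unpack the hypothesis. The relation $m \delOrderOf{p} m'$ requires $p \in g \inter h$ (so in particular $h \in \Gr(p)$), and it means that $p$ executes $\deliver(m)$ at some time $\tau$ while $\phase^{p,\tau}[m'] \neq \phDelivered$, i.e. $p$ has not yet delivered $m'$. By \refclaim{sufficiency:trivias:7}, reaching $\phStable$ at delivery time entails an earlier $\phCommit$, so \reflem{sufficiency:ordering:1} applies: $\LOG_{g \inter h}^{\tau}.\locked(m)$ holds, and by \refclaim{sufficiency:trivias:logs:2} this lock persists, giving $m \in \LOG_{g \inter h}^{\infty}$.

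Next I would set up the contradiction. Suppose the conclusion is false; then $m' \in \LOG_{g \inter h}^{\infty}$ and $\neg(m <_{\LOG_{g \inter h}^{\infty}} m')$. Since $m \neq m'$ (the $\phDelivered$ phase is terminal, as in the proof of \reflem{sufficiency:ordering:2}) and the log order is total on the items it contains, this forces $m' <_{\LOG_{g \inter h}^{\infty}} m$. The crucial step is to transport this final-state ordering back to the delivery time $\tau$: since $m$ is locked at $\tau$ and $m' <_{\LOG_{g \inter h}} m$ holds eventually, \refclaim{sufficiency:trivias:logs:5} yields $m' <_{\LOG_{g \inter h}^{\tau}} m$. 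Intuitively, once $m$ is pinned at a fixed slot no new item can slip below it, so the set of predecessors of $m$ only shrinks after locking, and whatever precedes $m$ at the end of the run already precedes it at $\tau$. To close, I would invoke the precondition of the $\deliver$ action (\refline{sufficiency:deliver:3}): delivering $m$ requires every message below $m$ in $\LOG_{g \inter h}$ to be $\phDelivered$ at $p$, reading the free $h$ as ranging over $\Gr(p)$, to which $h = \dst(m')$ belongs since $p \in g \inter h$. Instantiating this with $m'$, which we just placed below $m$ at time $\tau$, gives $\phase^{p,\tau}[m'] = \phDelivered$, contradicting $m \delOrderOf{p} m'$.

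The main obstacle I anticipate is precisely this transport step: establishing that the order of $m'$ relative to $m$ observed in the final log already holds at the delivery time. This is where the lock on $m$ (from \reflem{sufficiency:ordering:1}) together with the monotonicity recorded in \refclaim{sufficiency:trivias:logs:5} is indispensable; without the lock, $m'$ could in principle be bumped above $m$ between $\tau$ and the end of the run, and the final-state comparison would carry no information about the situation at delivery. A secondary point to pin down is the reading of the unbound $h$ in \refline{sufficiency:deliver:3}, which must quantify over the groups in $\Gr(p)$ for the argument, and the lemma statement parameterized by $h = \dst(m')$, to line up.
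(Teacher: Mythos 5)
Your proposal is correct and follows essentially the same route as the paper's proof: lock $m$ at delivery time via \reflem{sufficiency:ordering:1}, transport the final-state ordering $m' <_{\LOG_{g \inter h}^{\infty}} m$ back to the delivery time with \refclaim{sufficiency:trivias:logs:5}, and derive the contradiction from the precondition at \refline{sufficiency:deliver:3}. The only cosmetic difference is at the very end, where the paper routes the contradiction through \reflem{sufficiency:ordering:2} while you appeal directly to the definition of $\delOrderOf{p}$; both closings are valid.
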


\begin{proof}
  (By contradiction.)
  As $m \delOrderOf{p} m'$, by \refclaim{sufficiency:trivias:5}, $p \in g$.
  Then, applying the definition of relation $\delOrder$, $p \in h$.
  Process $p$ delivers $m$ at \refline{sufficiency:deliver:4}.
  Let $t$ be the time at which this happens.
  According to the precondition at \refline{sufficiency:deliver:2}, message $m$ was stable at $p$ before.
  Thus, applying \reflem{sufficiency:ordering:1}, $\LOG_{g \inter h}^{t}.\locked(m)$.
  By our contradiction hypothesis, $m' <_{\LOG_{g \inter h}^{\infty}} m$.
  Hence, $\LOG_{g \inter h}^{p,t}.\locked(m) \land \eventuallyTL(m' <_{\LOG_{g \inter h}} m)$.
  By \refclaim{sufficiency:trivias:logs:5}, $m' <_{\LOG_{g \inter h}^{p,t}} m$.
  The precondition at \refline{sufficiency:deliver:3} requires that $\phase^{p,t}[m']=\phDelivered$.
  Thus, $m'$ was delivered before.
  Applying \reflem{sufficiency:ordering:2}, $\neg (m \delOrderOf{p} m')$.
\end{proof}

\begin{lemma}
  \lablem{sufficiency:ordering:4}
  $(m \delOrderOf{p} m') \implies \neg (m' \delOrderOf{q} m)$
\end{lemma}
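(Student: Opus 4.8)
The plan is to argue by contradiction and reduce this cross-process consistency claim to the antisymmetry of the order carried by the single shared log $\LOG_{g \inter h}$. This is the generalization of \reflem{sufficiency:ordering:2} from one process to two distinct processes $p$ and $q$, and the crux is to show that \emph{both} $m$ and $m'$ are actually present in that common log, so that the ``absence'' disjuncts produced by \reflem{sufficiency:ordering:3} can be discarded.

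First I would assume toward a contradiction that both $m \delOrderOf{p} m'$ and $m' \delOrderOf{q} m$ hold. By the definition of $\delOrder$, this forces $p \in \dst(m) \inter \dst(m')$ and $q \in \dst(m') \inter \dst(m)$; writing $g = \dst(m)$ and $h = \dst(m')$, both $p$ and $q$ lie in $g \inter h$, so this intersection is non-empty and the log $\LOG_{g \inter h}$ is well-defined (it coincides with $\LOG_{h \inter g}$). Set $L = \LOG_{g \inter h}^{\infty}$.

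The next step establishes $m, m' \in L$. Since $m \delOrderOf{p} m'$, process $p$ delivers $m$; by the phase ordering of \refclaim{sufficiency:trivias:7}, $p$ must have executed $\pending(m)$ beforehand. As $h \in \Gr(p)$ (because $p \in h$), that action appended $m$ to $\LOG_{g \inter h}$ at \refline{sufficiency:pending:5}, whence $m \in L$ by \refclaim{sufficiency:trivias:logs:1}. Symmetrically, $m' \delOrderOf{q} m$ means $q$ delivers $m'$, so $q$ ran $\pending(m')$, and since $g \in \Gr(q)$, it appended $m'$ to $\LOG_{h \inter g} = L$, giving $m' \in L$.

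Finally I would invoke \reflem{sufficiency:ordering:3} twice. From $m \delOrderOf{p} m'$ it yields $m <_{L} m' \lor m' \notin L$; as $m' \in L$, only $m <_{L} m'$ survives. From $m' \delOrderOf{q} m$ it yields $m' <_{L} m \lor m \notin L$; as $m \in L$, only $m' <_{L} m$ survives. These two facts contradict the strict order $<_{L}$ on the data items of a log, closing the argument. The main obstacle is exactly the middle step: \reflem{sufficiency:ordering:3} constrains only one message relative to the log and leaves open that the other never enters it, so the whole argument hinges on observing that a process cannot deliver a message without first inserting it, via $\pending$, into every intersection log for the groups it belongs to, which is what pins both $m$ and $m'$ into the same $\LOG_{g \inter h}$.
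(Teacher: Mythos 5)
Your proof is correct and follows essentially the same route as the paper's: argue by contradiction, show that both $m$ and $m'$ end up in the common log $\LOG_{g \inter h}$, and apply \reflem{sufficiency:ordering:3} in both directions to contradict the antisymmetry of $<_{L}$. The only cosmetic difference is that you justify membership in the log via the $\pending$ action and \refclaim{sufficiency:trivias:7}, whereas the paper cites \reflem{sufficiency:ordering:1} (delivered implies committed implies locked, hence present); both are valid.
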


\begin{proof}
  (By contradiction)
  Messages $m$ and $m'$ are delivered by respectively $p$ and $q$.
  From \reflem{sufficiency:ordering:1}, $m \in \LOG_{g \inter h}^{\infty}$ and $m' \in \LOG_{g \inter h}^{\infty}$.
  Hence, by \reflem{sufficiency:ordering:3}, $m <_{\LOG_{g \inter h}^{\infty}} m'$ and $m' <_{\LOG_{g \inter h}^{\infty}} m$.
\end{proof}

\begin{lemma}
  \lablem{sufficiency:ordering:5}
  $f \notin \Fa \implies \cardinalOf{C} \in {1,2}$
\end{lemma}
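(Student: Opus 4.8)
The plan is to prove the contrapositive: assuming $\cardinalOf{C} \geq 3$, I would exhibit a Hamiltonian cycle in the intersection graph of $\f$, which by definition of $\cpaths$ places $\f$ in $\Fa$. The natural candidate is the closed walk $g_1 - g_2 - \cdots - g_{\cardinalOf{C}} - g_1$ formed by the destination groups along $C$. Each consecutive pair already intersects: the witness $p_i$ of $m_i \delOrderOf{p_i} m_{i+1}$ lies in $g_i \inter g_{i+1}$, so $g_i \inter g_{i+1} \neq \emptySet$. Hence the only thing left to establish is that $g_1, \ldots, g_{\cardinalOf{C}}$ are pairwise distinct; the walk is then a genuine Hamiltonian cycle and $\cpaths(\f) \neq \emptySet$.

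To prove distinctness, I would assume $g_i = g_j$ for some $i \neq j$ and contradict the minimality of $C$. Since $\dst(m_i) \inter \dst(m_j) = g_i \neq \emptySet$ and both messages are delivered, \refclaim{sufficiency:trivias:1} orders them, say $m_i \delOrder m_j$ (the converse is symmetric). Splicing this chord into the complementary arc of $C$ gives a closed walk on a strict subset of $\{m_1, \ldots, m_{\cardinalOf{C}}\}$ whenever $m_i$ and $m_j$ are not cyclically adjacent, i.e. a proper sub-cycle. The same holds when they are adjacent but $\cardinalOf{C} \geq 4$: applying \refclaim{sufficiency:trivias:1} to $m_i$ and $m_{i+2}$ (which intersect because $p_{i+1} \in g_{i+1} \inter g_{i+2} = g_i \inter g_{i+2}$) produces either a cycle of length $\cardinalOf{C}-1$ or the triangle $m_i \delOrder m_{i+1} \delOrder m_{i+2} \delOrder m_i$, both proper sub-cycles. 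In all these cases minimality is violated, so no such repetition exists.

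The remaining, and only delicate, case is $\cardinalOf{C} = 3$ with two adjacent groups equal, say $g_1 = g_2 = g$; here every pair of messages is already an edge of $C$, so the chord argument yields nothing new and I would descend to the logs. Writing $g_3 = \dst(m_3)$, observe that $p_3 \in g \inter g_3$ delivers $m_3$, so $m_3$ is committed at $p_3$ and hence $m_3 \in \LOG_{g \inter g_3}^{\infty}$ by \reflem{sufficiency:ordering:1}. Then \reflem{sufficiency:ordering:3} applied to $m_2 \delOrderOf{p_2} m_3$ gives $m_2 <_{\LOG_{g \inter g_3}^{\infty}} m_3$, and applied to $m_3 \delOrderOf{p_3} m_1$ gives $m_3 <_{\LOG_{g \inter g_3}^{\infty}} m_1$ once $m_1 \in \LOG_{g \inter g_3}^{\infty}$ is secured. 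Membership of $m_1$ follows from the pending discipline (\refline{sufficiency:pending:3b}): the process that inserts $m_2$ into $\LOG_{g \inter g_3}$ must previously have committed every $\LOG_g$-predecessor of $m_2$, and $m_1 <_{\LOG_g^{\infty}} m_2$ holds by \reflem{sufficiency:ordering:3} on $m_1 \delOrderOf{p_1} m_2$ (using $m_2 \in \LOG_g^{\infty}$ from \refclaim{sufficiency:trivias:6}). Thus $m_1$ and $m_2$ both lie in $\LOG_{g \inter g_3}^{\infty}$; but \refclaim{sufficiency:trivias:4} then forces $g \inter g_3 = g$, collapsing that log to $\LOG_g$, in which the chain reads $m_2 <_{\LOG_g^{\infty}} m_3 <_{\LOG_g^{\infty}} m_1$, contradicting $m_1 <_{\LOG_g^{\infty}} m_2$.

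With pairwise distinctness established, $g_1 - \cdots - g_{\cardinalOf{C}} - g_1$ is a Hamiltonian cycle of the intersection graph of $\f$, so $\f \in \Fa$, which proves the contrapositive. I expect the sub-cycle bookkeeping of the second paragraph to be routine; the genuine obstacle is the degenerate triangle with a repeated group, which cannot be excluded by counting alone and forces one to invoke the log-consistency invariants \reflem{sufficiency:ordering:3} and \refclaim{sufficiency:trivias:4} together with the ordering enforced by the pending precondition.
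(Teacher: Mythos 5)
Your overall strategy coincides with the paper's: the paper likewise argues that either all the $g_i$ are pairwise distinct, in which case $g_1g_2{\ldots}g_{\cardinalOf{C}}g_1$ is a Hamiltonian cycle and $\f \in \Fa$, or a repeated group yields a proper sub-cycle of $C$. Where you differ is in the repeated-group case: the paper takes a single shortcut through $p_1$ (either $m_1 \delOrderOf{p_1} m_k$ or the converse), which silently degenerates when the repetition is between cyclically adjacent groups ($k=2$ or $k=\cardinalOf{C}$, where the constructed ``sub-cycle'' is $C$ itself); you correctly isolate that degenerate situation, dispose of it for $\cardinalOf{C}\geq 4$ with the chord between $m_i$ and $m_{i+2}$, and descend to the logs for the triangle. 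That extra care is a genuine improvement, and the chain $m_2 <_{\LOG_{g \inter g_3}^{\infty}} m_3 <_{\LOG_{g \inter g_3}^{\infty}} m_1$ together with $m_1 \in \LOG_{g \inter g_3}^{\infty}$ (via the precondition at \refline{sufficiency:pending:3b} and \reflem{sufficiency:ordering:1}) is sound, modulo the unstated monotonicity fact that $m_1 <_{\LOG_g} m_2$ already holds when $q$ appends $m_2$ (this is \refcor{sufficiency:liveness:1}).

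The gap is the very last step of the triangle case. You invoke \refclaim{sufficiency:trivias:4} on $m_1$ and $m_2$ inside $\LOG_{g \inter g_3}$ to conclude $g \inter g_3 = \dst(m_1) \inter \dst(m_2) = g$. Read as a substantive invariant, that claim is false precisely in the regime where you apply it: by \refline{sufficiency:pending:5}, every message addressed to $g$ that reaches the $\phPending$ phase at some $q \in g \inter h$ is appended to $\LOG_{g \inter h}$, so two $g$-messages routinely coexist, and hence are mutually ordered, in $\LOG_{g \inter h}$ without $g \subseteq h$; the claim is a normalization convention about which log the notation $<_{\LOG_x}$ refers to, not a derivable property that can force $g \subseteq g_3$. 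Without it you are left with $m_2 <_{\LOG_{g\inter g_3}^{\infty}} m_1$ and $m_1 <_{\LOG_g^{\infty}} m_2$ in two different logs, which is not yet a contradiction. The repair is available in the paper's toolbox: by group sequentiality either $m_1 \hb m_2$ or $m_2 \hb m_1$, and the argument of \refclaim{sufficiency:liveness:acyclic:1} shows that the earlier message is locked in every log before the later one is appended, so the relative order of two messages with the same destination group is the same in every log containing both; this contradicts one of the two inequalities above and closes the case.
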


\begin{proof}
  (By contradiction.)
  As $\cardinalOf{C} \geq 3$, we have $C = m_1 \delOrderOf{p_1} m_2 \delOrderOf{p_2} m_3 \ldots$.
  Consider that all the destination groups are pair-wise distinct.
  Then $\pi=g_1g_2{\ldots}g_{\cardinalOf{C} \geq 3}g_1$ is a closed path in the intersection graph of $\f$ visiting all the destination groups.
  A contradiction to $\f \notin \Fa$.
  Without lack of generality assume that $g_1 = g_{2 \leq k \leq \cardinalOf{C}}$.
  Since process $p_1$ delivers $m_1$, either $m_1 \delOrderOf{p_1} m_{k}$ or the converse holds.
  (Case $m_1 \delOrderOf{p_1} m_{k}$)
  Then $m_1 \delOrderOf{p_{1}} m_{k} \delOrderOf{p_{k+1}} \ldots m_{\cardinalOf{C}} \delOrderOf{p_{\cardinalOf{C}}} m_1$ is a sub-cycle of $C$.
  (Otherwise)
  Applying~\reflem{sufficiency:ordering:4}, $m_k \delOrderOf{p_{k}} m_{1} \delOrderOf{p_1} \ldots m_{k-1} \delOrderOf{p_{k-1}} m_k$ is a sub-cycle of $C$.
\end{proof}

\subparagraph*{(Case $\f \notin \Fa$.)}
\reflem{sufficiency:ordering:5} establishes that, as $C$ does not contain any sub-cycle, its size must be either $1$ or $2$.
$\cardinalOf{C}=1$ is forbidden by \reflem{sufficiency:ordering:2}.
\reflem{sufficiency:ordering:4} prevents $\cardinalOf{C}=2$.

\subparagraph*{(Otherwise.)}
Clearly, there exists some time $t_0 > 0$ until which family $\f \in \Fa$ is correct.
%
Let $m \in C$ be a message and $g=\dst(m)$ be its destination group.
Assume that $m$ is $\phStable$ at some process $p$ with $\f \in \Fa(p)$.
Consider this happens first at time $t \leq t_0$.
Then,

\begin{lemma}
  \lablem{sufficiency:ordering:cycle:1}
  $\forall h \in \Fa(g) \sep (m,h) \in \LOG_g^{t}$
\end{lemma}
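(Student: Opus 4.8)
The plan is to read off the precondition that permitted $m$ to become $\phStable$ at $p$, and to convert it, via the \emph{accuracy} of $\gamma$, into a statement about the groups $h$ that share the cyclic family $\f$ with $g$. Concretely, I want to show that at the very instant $t$ when $m$ first turns $\phStable$ at $p$, every such group $h$ (those $h \in \f$ with $g \inter h \neq \emptySet$) has already recorded its stabilization of $m$ in $\LOG_g$.

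First I would note that, since $m$ reaches $\phStable$ at $p$ for the first time at $t$, the $\stable(m)$ action fired at $p$ at that time, so its precondition at \refline{sufficiency:stable:3} held: $(m,h) \in \LOG_g^{t}$ for every group $h$ in $\gamma(g)$, the set being evaluated locally at $p$ at time $t$. It therefore suffices to show that each target group $h$ lies in $\gamma(g)$ at that process and time.

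The second step is where the hypotheses on $\f$ enter. By assumption $\f \in \Fa(p)$, and because $t \leq t_0$ the family $\f$ is not faulty at $t$, i.e.\ $\neg\,\isFaulty{\f}{t}$. The accuracy clause of $\gamma$ is the contrapositive statement $(\f \in \Fa(p) \land \f \notin \gamma(p,t)) \implies \isFaulty{\f}{t}$; since $\f$ is not faulty at $t$, this forces $\f \in \gamma(p,t)$. Now for any group $h \in \f$ with $g \inter h \neq \emptySet$, both $g$ and $h$ belong to the family $\f$ that $\gamma$ outputs at $p$ at time $t$, so by the very definition of $\gamma(g)$ we get $h \in \gamma(g)$. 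Combining with the first step yields $(m,h) \in \LOG_g^{t}$, which is the claim.

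I expect the middle step to be the main obstacle---not because it is long, but because it hinges on lining up three things at the single instant $t$: the \emph{local} view $\gamma(p,t)$ (hence $\gamma(g)$ is computed at $p$, not globally), the correctness horizon $t_0$ guaranteeing $\neg\,\isFaulty{\f}{t}$, and the purely contrapositive nature of accuracy. The timing is delicate because $\gamma$ may later drop $\f$ once it becomes faulty, so the argument genuinely needs $t \leq t_0$; had $m$ instead stabilized after $\f$ turned faulty, $\gamma(g)$ could omit $h$ and the conclusion would break. Everything else is a direct appeal to the $\stable(m)$ precondition and the definition of $\gamma(g)$.
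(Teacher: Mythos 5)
Your proof is correct and follows essentially the same route as the paper's: fire the $\stable(m)$ action at time $t$, use the correctness of $\f$ up to $t_0$ together with the accuracy of $\gamma$ to conclude $\f \in \gamma(p,t)$, deduce that the relevant groups $h$ lie in $\gamma(g)$, and read the conclusion off the precondition at \refline{sufficiency:stable:3}. Your explicit handling of the contrapositive of accuracy and of the local evaluation of $\gamma(g)$ at $(p,t)$ is a slightly more careful rendering of the paper's one-line step ``$\gamma^{p,t}(g)=\Fa(g)$'', but the argument is the same.
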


\begin{proof}
  By definition, $p$ executes \refline{sufficiency:stable:4} at time $t$.
  Since $\f$ is still correct at time $t$ and $\gamma$ ensures the accuracy property, $f \in \gamma(p,t)$.
  Hence, $\gamma^{p,t}(g)=\Fa(g)$.
  According to \refline{sufficiency:stable:3}, $\forall h \in \gamma^{p,t}(g) \sep (m,h) \in \LOG^{t}_{g}$.  
\end{proof}

Choose $h \in \Fa(g)$ and $\hat{t} \leq t$.

\begin{lemma}
  \lablem{sufficiency:ordering:cycle:2}
  $(m,h) \in \LOG_g^{\hat{t}} \implies \LOG_{g \inter h}^{\hat{t}}.\locked(m)$
\end{lemma}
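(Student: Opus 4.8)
The plan is to trace how the pair $(m,h)$ can enter $\LOG_g$ and then to chain a single precondition with the two persistence invariants for logs. The only action that appends a datum of the shape $(m,h)$ to $\LOG_g$ is $\stabilize(m,h)$, at \refline{sufficiency:stabilize:5}; the other writes to $\LOG_g$ insert either a bare message $m$ (in $\multicast$, \refline{sufficiency:amcast:3}) or a triple $(m,h,i)$ (in $\pending$, \refline{sufficiency:pending:6}). Hence, from $(m,h) \in \LOG_g^{\hat t}$ and the persistence of log membership (\refclaim{sufficiency:trivias:logs:1}), there exists some process $p'$ and some time $t'' \leq \hat t$ at which $p'$ executed $\stabilize(m,h)$.

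First I would read off the preconditions that necessarily held at $p'$ at time $t''$: by \refline{sufficiency:stabilize:2} we have $\phase^{p',t''}[m] = \phCommit$, and by \refline{sufficiency:stabilize:3} we have $h \in \Gr(p')$. Next I would invoke \reflem{sufficiency:ordering:1} at $p'$ and time $t''$: since $\phase^{p',t''}[m] = \phCommit \geq \phCommit$, the lemma yields $\LOG_{g \inter h'}^{t''}.\locked(m)$ for every $h' \in \Gr(p')$; instantiating $h' = h$, which is licit because $h \in \Gr(p')$, gives $\LOG_{g \inter h}^{t''}.\locked(m)$. Finally I would push the lock forward in time: applying the persistence of locks (\refclaim{sufficiency:trivias:logs:2}) together with $t'' \leq \hat t$ yields $\LOG_{g \inter h}^{\hat t}.\locked(m)$, which is the desired conclusion.

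The proof is essentially a mechanical composition of one precondition with two monotonicity claims, so I do not expect a genuine obstacle. The single point that needs care is the very first step, namely correctly identifying that a pair $(m,h)$ in $\LOG_g$ can only originate from $\stabilize$ and must not be conflated with the triples $(m,h,i)$ written by $\pending$ or the bare messages written by $\multicast$; once that provenance is pinned down, the preconditions of $\stabilize$ supply exactly the hypotheses (\phase\ equal to $\phCommit$ and $h \in \Gr(p')$) that \reflem{sufficiency:ordering:1} consumes.
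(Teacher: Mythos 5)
Your proof is correct and follows essentially the same route as the paper: trace $(m,h)$ back to an execution of $\stabilize(m,h)$ at \refline{sufficiency:stabilize:5}, read off the preconditions $\phase[m]=\phCommit$ and $h \in \Gr(p')$, and conclude that $m$ is locked in $\LOG_{g \inter h}$. The only cosmetic difference is that you invoke \reflem{sufficiency:ordering:1} as a black box where the paper inlines its argument (tracing back to the $\bumpAndLock$ call at \reflines{sufficiency:commit:4}{sufficiency:commit:7}), and you make the final persistence step via \refclaim{sufficiency:trivias:logs:2} explicit where the paper leaves it implicit.
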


\begin{proof}
  Some process $q$ executes \refline{sufficiency:stabilize:5} at time $t' \leq \hat{t}$.
  From the preconditions at \reflines{sufficiency:stabilize:2}{sufficiency:stabilize:4}, $q \in h$ and $\phase^{q,t'}[m]=\phCommit$.
  From $\phase^{q,t'}[m]=\phCommit$, $q$ executes \reflines{sufficiency:commit:4}{sufficiency:commit:7} at some earlier time $t'' < t'$.
  As $q \in h$, $q$ called $\LOG_{g \inter h}.\bumpAndLock(m,\any)$ at time $t''$.
\end{proof}

Consider two messages $m$ and $m'$ such that $m' \delOrder m \in C$.
Note $\phStable^{\f,t}$ the messages $\phStable$ at time $t$ at any process $p$ with $\f \in \Fa(p)$.
%
Then,

\begin{lemma}
  \lablem{sufficiency:ordering:cycle:3}
  $\forall t \leq t_0 \sep m \in \phStable^{\f,t} \implies m' \in \phStable^{\f,t}$
\end{lemma}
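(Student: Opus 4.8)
The plan is to show that the marker $(m,h)$ that certifies the stabilization of $m$ in group $h$ can only have been written \emph{after} $m'$ was already stabilized, because $m'$ lies strictly below $m$ in the shared log $\LOG_{g \inter h}$. Write $g=\dst(m)$ and $h=\dst(m')$; since the edge $m' \delOrder m$ lies on the cycle $C$, both $g$ and $h$ belong to $\f$. Fix any $t \le t_0$ together with a process $p$, $\f \in \Fa(p)$, witnessing $m \in \phStable^{\f,t}$, and let $q \in g \inter h$ be a process with $m' \delOrderOf{q} m$.

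First I would locate a stabilizing process and pin $m$ into $\LOG_{g \inter h}$. As $\f$ is correct up to $t_0$ and $\f \in \Fa(p)$, the accuracy property of $\gamma$ gives $\f \in \gamma(p,t)$, so $h \in \gamma^{p,t}(g)$ (recall $g,h \in \f$ and $g \inter h \neq \emptySet$). Hence \reflem{sufficiency:ordering:cycle:1} yields $(m,h) \in \LOG_g^{t}$, and this marker was appended through $\stabilize(m,h)$ (\refline{sufficiency:stabilize:5}) by some process $q'' \in g \inter h$ at a time $t'' \le t$. At $t''$ we have $\phase^{q'',t''}[m]=\phCommit$, so \reflem{sufficiency:ordering:1} gives $\LOG_{g \inter h}^{t''}.\locked(m)$; in particular $m \in \LOG_{g \inter h}^{\infty}$. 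Symmetrically, $q$ delivers $m'$, so $m'$ is committed at $q$ and, as $g \in \Gr(q)$, \reflem{sufficiency:ordering:1} places $m' \in \LOG_{g \inter h}^{\infty}$. With both messages present in that log, \reflem{sufficiency:ordering:3} applied to $m' \delOrderOf{q} m$ rules out $m \notin \LOG_{g \inter h}^{\infty}$ and leaves $m' <_{\LOG_{g \inter h}^{\infty}} m$.

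It then remains to transport this order back to the instant $t''$ and read off the phase of $m'$. Since $\LOG_{g \inter h}^{t''}.\locked(m)$ and eventually $m' <_{\LOG_{g \inter h}} m$, lock monotonicity (\refclaim{sufficiency:trivias:logs:5}) gives $m' <_{\LOG_{g \inter h}^{t''}} m$. Thus $m'$ is one of the predecessors constrained by the precondition of $\stabilize(m,h)$ (\refline{sufficiency:stabilize:4}), whence $\phase^{q'',t''}[m'] \ge \phStable$. Because $q'' \in g \inter h$ with $g,h \in \f$, we have $\f \in \Fa(q'')$, and phase monotonicity (\refclaim{sufficiency:trivias:8}) propagates $\phase^{q'',t''}[m'] \ge \phStable$ to time $t$; hence $m' \in \phStable^{\f,t}$, as required.

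I expect the main obstacle to be the second paragraph: guaranteeing that both $m$ and $m'$ actually appear in the \emph{same} intersection log $\LOG_{g \inter h}$ so that \reflem{sufficiency:ordering:3} can be invoked. This is exactly where membership of $h$ in the cyclic family $\f$ is used---without it \reflem{sufficiency:ordering:cycle:1} would not furnish the marker $(m,h)$, and the strict log order could not be pinned down. The remaining care is bookkeeping: the log order $m' <_{\LOG_{g \inter h}^{\infty}} m$ is a statement about the final state, whereas the $\stabilize$ precondition is evaluated at the earlier time $t''$, and lock monotonicity is precisely what bridges the two; the gap between phase $=\phStable$ and phase $\ge \phStable$ is absorbed by \refclaim{sufficiency:trivias:8}.
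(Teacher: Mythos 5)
Your proposal is correct and follows essentially the same route as the paper's own proof: extract the marker $(m,h)$ via \reflem{sufficiency:ordering:cycle:1}, identify the process that executed $\stabilize(m,h)$, obtain $m' <_{\LOG_{g \inter h}^{\infty}} m$ from \reflem{sufficiency:ordering:3}, pull that order back to the stabilization time with \refclaim{sufficiency:trivias:logs:5}, and read off $\phase[m'] \geq \phStable$ from the precondition at \refline{sufficiency:stabilize:4}. Your only departures are cosmetic and in fact slightly more careful than the paper: you explicitly rule out the disjunct $m \notin \LOG_{g \inter h}^{\infty}$ before invoking \reflem{sufficiency:ordering:3}, and you spell out the final transport from time $t''$ to $t$ via \refclaim{sufficiency:trivias:8}.
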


\begin{proof}
  Let $h=\dst(m')$.
  Then, $h \in \Fa(g)$.
  From \reflem{sufficiency:ordering:cycle:1}, $(m,h) \in \LOG_{g}^{t}$.
  Thus, \refline{sufficiency:stabilize:5} was executed before time $t$.
  Let $p \in g \inter h$ be the process that executes this step first at some time $t'<t$.
  If $m' \in C$, then some process delivers it.
  Applying \reflem{sufficiency:ordering:3}, $m' <_{\LOG_{g \inter h}^{\infty}} m$.
  Then, from \reflem{sufficiency:ordering:cycle:2}, $\LOG_{g \inter h}^{t'}.\locked(m)$
  By \refclaim{sufficiency:trivias:logs:5}, $m' <_{\LOG_{g \inter h}^{t'}} m$.
  As $m <_{\LOG_{g \inter h}^{t'}} m'$, the precondition at \refline{sufficiency:stabilize:4} requires that $\phase^{t',p}[m]=\phStable$
\end{proof}

Without lack of generality, consider that $m_{\cardinalOf{C}}$ is the message delivered first in $C$, say at process $p$ and time $t$.
At time $t-1$, according to the precondition at \refline{sufficiency:deliver:2}, $m_{\cardinalOf{C}}$ is stable at $p$.
Let $t'\leq t-1$ be the time at which $m_{\cardinalOf{C}}$ is stable first.
By applying inductively \reflem{sufficiency:ordering:cycle:3} to $C$, we obtain a contradiction.

\subsubsection{Liveness}
\labsection{sufficiency:correctness:liveness}

In what follows, we prove that \refalg{sufficiency} ensures the termination property of atomic multicast.
To this end, we first observe that all of the objects used in this algorithm are wait-free.
As a consequence, no process may block at some point in time when executing one of the actions of \refalg{sufficiency}.
This base observation is commonly used in our proof.

Consider that some message $m$ is multicast by a correct process.
It follows that \refline{sufficiency:amcast:3} is executed.
Similarly, if $m$ is delivered at some process, it must have reached the $\phPending$ phase at that process (\refclaim{sufficiency:trivias:7}).
At that time, the precondition at \refline{sufficiency:pending:3} is true.
Hence, in the above two cases, $m$ reaches $\LOG_g$ at some point in time.
By \refclaim{sufficiency:trivias:logs:1}, $\eventuallyTL(\alwaysTL(m \in \LOG_g))$.
In what follows, we establish that any such message $m$ reaching the log of its destination group $g$ is eventually delivered at every correct process in $g$.

\begin{lemma}
  \lablem{sufficiency:liveness:1}
  $\alwaysTL((\forall p \in \correct \inter g \sep \eventuallyTL(\phase^p[m] \geq \phPending)) \implies (\forall p \in \correct \inter g \sep \eventuallyTL(\phase^p[m] \geq \phCommit)))$
\end{lemma}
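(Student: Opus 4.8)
The plan is to fix an arbitrary correct process $p \in \correct \inter g$ and prove $\eventuallyTL(\phase^p[m] \geq \phCommit)$; since the conclusion quantifies over all such $p$, this suffices. By the premise together with the phase-monotonicity invariant (\refclaim{sufficiency:trivias:8}), there is a time after which $\phase^p[m] \geq \phPending$ holds permanently. If some value $\geq \phCommit$ is ever reached we are done, so assume $\phase^p[m]$ stabilises at $\phPending$. As every shared object of \refalg{sufficiency} is wait-free, $p$ never blocks inside an action; hence the only way $p$ could be stuck is that the guard of $\commit(m)$ is never simultaneously enabled. The precondition at \refline{sufficiency:commit:2} already holds, so it remains to show that the precondition at \refline{sufficiency:commit:3}, namely $\forall h \in \gamma(g) \sep (m,h,\any) \in \LOG_g$, eventually holds at $p$.

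First I would pin down the eventual value of $\gamma$ at $p$. Crashes being permanent, faultiness of a cyclic family is monotone, so combining the accuracy and completeness properties there is a time $\tau$ after which $\gamma(p,\cdot)$ outputs exactly the non-faulty families of $\Fa(p)$; consequently $\gamma^{p,t}(g)$ is constant for $t \geq \tau$, say equal to $G^\star$. By the log invariant \refclaim{sufficiency:trivias:logs:1}, every triple ever written to $\LOG_g$ stays there, so it is enough to exhibit, for each $h \in G^\star$, one write of a triple $(m,h,\any)$.

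These triples are produced only by the $\pending$ action: when a process $q \in g$ runs $\pending(m)$ it appends $(m,h,\any)$ to $\LOG_g$ for every $h \in \Gr(q)$, that is for every $h$ with $q \in g \inter h$ (\reflines{sufficiency:pending:4}{sufficiency:pending:6}). The premise forces every correct process of $g$ to reach $\geq \phPending$, and by \refclaim{sufficiency:trivias:7} each of them must have executed $\pending(m)$ first; hence for every group $h$ whose intersection $g \inter h$ contains a correct process, the triple $(m,h,\any)$ is eventually and permanently present in $\LOG_g$. The remaining, and genuinely delicate, step is to check that $G^\star$ contains no group $h$ with $g \inter h$ \emph{entirely} faulty, since otherwise the guard at \refline{sufficiency:commit:3} could demand a triple that no correct process ever writes. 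I would discharge this by arguing that a permanently-faulty intersection $g \inter h$ makes every witnessing cyclic family $\f$ with $g,h \in \f$ faulty, so that completeness of $\gamma$ removes $\f$ — and thus $h$ — from the output; this is exactly the point where the definition of a faulty family (every path in $\cpaths(\f)$ blocked) must be reconciled with a single blocked edge, and it is the main obstacle of the proof. Once $G^\star$ is shown to consist only of groups with a live intersection, the guard of $\commit(m)$ holds from $\tau$ on and, by wait-freedom, $p$ executes \refline{sufficiency:commit:7} and moves $m$ to $\phCommit$.
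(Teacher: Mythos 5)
Your proposal follows essentially the same route as the paper's proof: assume $m$ is stuck at $\phPending$ at some correct $p \in g$, and show the guard at \refline{sufficiency:commit:3} eventually holds by splitting on whether $g \inter h$ contains a correct process --- a correct member of the intersection eventually executes $\pending(m)$ and writes $(m,h,\any)$ to $\LOG_g$, while a fully faulty intersection is handled through the completeness of $\gamma$. The one step you leave open and call ``the main obstacle'' --- that a faulty $g \inter h$ renders every cyclic family $\f$ with $g,h \in \f$ faulty, even though faultiness of $\f$ requires \emph{every} path in $\cpaths(\f)$ to visit a blocked edge --- is precisely the step the paper dispatches in a single unjustified sentence, so you are not missing any argument the paper actually supplies, and your hesitation there is if anything more careful than the original.
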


\begin{proof}
  Pick $q \in \correct \inter g$.
  By assumption, $m$ is eventually $\phPending$ at $q$.
  If $m$ reaches a higher phase than $\phPending$, applying \refclaim{sufficiency:trivias:7}, $m$ must go through the $\phCommit$ phase at $q$.
  Otherwise, consider that $m$ is always $\phPending$ at $q$.
  We show that the precondition at \refline{sufficiency:commit:3} is eventually always true, leading to a contradiction.
  
  Consider $h \in \Fa(g)$.
  If $(g \inter h)$ is faulty, then eventually every cyclic family $\f$, with $g, h \in \f$, is also faulty.
  By the completeness property of $\gamma$, eventually $h \notin \gamma(g)$ always holds at $q$.
  Otherwise, let $q'$ be a correct process in $g \inter h$.
  By assumption and \refclaim{sufficiency:trivias:7}, $q'$ eventually executes \refline{sufficiency:pending:6}, adding a tuple $(m,h,\any)$ to $\LOG_g$.
  Thus, the precondition at \refline{sufficiency:commit:3} is eventually always true at process $q$, as required.  
\end{proof}

\begin{lemma}
  \lablem{sufficiency:liveness:2}
  $\alwaysTL(m <_{\LOG_g} m' \implies \alwaysTL(m <_{\LOG_g} m'))$
\end{lemma}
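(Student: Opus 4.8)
The plan is to follow the same pattern as the ordering lemmas: treat $\LOG_g$ as a linearized object and argue that no single operation can invert an order that already holds. By \refclaim{sufficiency:trivias:logs:1} the membership of $m$ and $m'$ is permanent, so it suffices to control their slots. The operations that write into $\LOG_g$ are the three appends at \refline{sufficiency:amcast:3}, \refline{sufficiency:pending:6} and \refline{sufficiency:stabilize:5}, together with the $\bumpAndLock$ of \refline{sufficiency:commit:6} instantiated with $h=g$ (recall $\LOG_{g\inter g}=\LOG_g$). Every append writes at the head, hence strictly above every occupied slot, so it can only insert a fresh datum above both $m$ and $m'$ and never changes their relative order. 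Thus the only operation able to affect $m <_{\LOG_g} m'$ is a $\bumpAndLock$.

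First I would clear the easy directions. A $\bumpAndLock$ of the larger item $m'$ can only raise its slot and therefore preserves $m <_{\LOG_g} m'$. A $\bumpAndLock$ of $m$ locks it, and once $m$ is locked \refclaim{sufficiency:trivias:logs:3} yields $\alwaysTL(m <_{\LOG_g} m')$ outright. Hence the whole statement reduces to a single point: the very $\bumpAndLock$ that locks $m$ must not carry $m$ to a slot at or above that of $m'$.

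The hard part is precisely this step, since the target slot $k$ is the consensus value obtained at \refline{sufficiency:commit:fix:2} from $m$'s positions in the intersection logs, which a priori bear no relation to $m$'s slot in $\LOG_g$. To tame it I would exploit the commit discipline imposed by the phase guards. On the one hand, a message leaves $\phStart$ only after every one of its $\LOG_g$-predecessors is already $\geq \phCommit$ (\refline{sufficiency:pending:3b}); on the other hand, a message is inserted into a log $\LOG_{g\inter h}$ only during its own $\phPending$ step (\refline{sufficiency:pending:5}). Consequently, whenever $m <_{\LOG_g} m'$ the message $m$ is committed---and thus placed into each relevant intersection log---strictly before $m'$ is, so $m$ occupies a lower slot than $m'$ in every such log; combined with the $\max$ built into $\bumpAndLock$, which forbids downward moves, this should force the committed slots to be monotone in the $\LOG_g$ order, so $m$'s locking bump lands it below $m'$. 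I expect this monotonicity of the committed positions to be the real obstacle, and I would isolate it as an auxiliary invariant proved by induction on the commit order, appealing to \refclaim{sufficiency:trivias:logs:5} to transfer the eventual intersection-log order back to the instant of the bump.
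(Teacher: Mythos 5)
Your reduction of the problem to the single $\bumpAndLock$ that locks $m$ (appends only ever write at the head, above every occupied slot) is sound, and you correctly identify the dangerous step. But the argument you sketch to close it does not work, and it misses the one ingredient the paper's proof actually rests on. The bump target $k$ decided at \refline{sufficiency:commit:fix:2} is the maximum of $m$'s \emph{initial} positions across all the logs $\LOG_{g \inter h}$, and those logs also contain every message addressed to $h$; so $k$ can be arbitrarily large compared to any slot of $\LOG_g$. Concretely, take $g=\{p,q\}$, $h=\{q,r\}$: append $m$ to $\LOG_g$ at slot $1$, append $m'$ at slot $2$ (so $m <_{\LOG_g} m'$), let $100$ messages addressed to $h$ already sit in $\LOG_{g \inter h}$, so $m$ is appended there at slot $101$ and its locking bump carries it to slot $101$ in $\LOG_g$ --- above $m'$. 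Your commit-discipline invariant does not exclude this: it only constrains when $m'$ enters the \emph{intersection} logs (via \refline{sufficiency:pending:3b}), not when $m'$ enters $\LOG_g$, which happens at \refline{sufficiency:amcast:3} with no such guard. Moreover, even if you proved that the \emph{final} (locked) positions are monotone in the append order --- which is roughly true --- the lemma has the form $\alwaysTL(P \implies \alwaysTL(P))$ and is already violated by a \emph{transient} inversion, so an argument about where the bumps eventually land cannot suffice.

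What rules the scenario out, and what the paper's proof uses, is group sequentiality, which you never invoke: $m$ and $m'$ are both addressed to $g$, so $m \hb m'$ or $m' \hb m$, meaning the sender of the later message has \emph{delivered} the earlier one before multicasting. Delivery implies $\phase[m] \geq \phCommit$ at that sender, hence by \reflem{sufficiency:ordering:1} (instantiated with $h = g$) $m$ is already \emph{locked} in $\LOG_g$ --- i.e., already bumped to its final slot --- before $m'$ is ever appended; $m'$ then lands at the head, strictly above $m$, and \refclaim{sufficiency:trivias:logs:4} freezes the order forever. In short, the window you are trying to control by reasoning about positions simply does not exist: the two messages are never simultaneously present and unlocked in $\LOG_g$.
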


\begin{proof}
  By \refclaim{sufficiency:trivias:3}, $\dst(m) = \dst(m') = g$.
  If $m <_{\LOG_g} m'$, necessarily $m$ and $m'$ are both in $\LOG_g$.
  A message is added to $\LOG_g$ either at \refline{sufficiency:amcast:3} or \refline{sufficiency:pending:5}.
  Observe that in the later case, it must already be in $\LOG_g$.
  In fact, only the ordering of the execution of \refline{sufficiency:amcast:3} matters to prove the lemma.
  Notice that this line is necessarily executed by the process multicasting the message.

  For starters, assume that \refline{sufficiency:amcast:3} is executed first for $m$.
  Since the problem being solved is group sequential atomic multicast, it must be the case that $m \hb m'$.
  Name $q$ the process that multicasts $m'$ at some time $t$.
  As $m \hb m'$, $\phase^{q,t}[m] = \phDelivered$.
  By \reflem{sufficiency:ordering:1}, $\LOG_{g}^{t}.\locked(m)$.
  By \refclaim{sufficiency:trivias:logs:4}, $\alwaysTL(m <_{\LOG_g} m')$ from that point in time.
  Otherwise, \refline{sufficiency:amcast:3} is executed first for $m'$.
  In that case, following the same reasoning as above, $m'$ is locked in $\LOG_{g}$ at the time $m$ is appended to the log.
  By \refclaim{sufficiency:trivias:logs:4}, $\alwaysTL(\neg~(m <_{\LOG_g} m'))$.
\end{proof}

From which, it follows:

\begin{corollary}
  \labcor{sufficiency:liveness:1}
  $\forall t, t' < t \sep m <_{\LOG_g^{t}} m' \implies (m <_{\LOG_g^{t'}} m' \lor m' \notin \LOG_g^{t'})$
\end{corollary}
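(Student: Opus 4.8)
The plan is to prove the contrapositive by a short case analysis, leaning on the forward persistence established in \reflem{sufficiency:liveness:2} together with the append-at-head semantics of a log. Fix $t$ and $t' < t$ and assume $m <_{\LOG_g^{t}} m'$. If $m' \notin \LOG_g^{t'}$ the right disjunct holds and we are done, so the work is confined to the case $m' \in \LOG_g^{t'}$, where the goal becomes $m <_{\LOG_g^{t'}} m'$. First I would record two facts used throughout: membership in a log is stable (\refclaim{sufficiency:trivias:logs:1}), so $m' \in \LOG_g^{t''}$ for every $t'' \geq t'$; and whenever two messages addressed to $g$ are both present the log order between them is total (by positions, with ties broken by the a priori order $<$), hence antisymmetric. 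I would also observe that the argument proving \reflem{sufficiency:liveness:2} is symmetric in its two messages, so it applies verbatim to the reversed pair and also propagates $m' <_{\LOG_g} m$ forward in time.

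Assuming $m <_{\LOG_g^{t}} m'$ and $m' \in \LOG_g^{t'}$, I would split on whether $m$ is present at $t'$. If $m \in \LOG_g^{t'}$, then either $m <_{\LOG_g^{t'}} m'$, which is exactly the conclusion, or, by totality, $m' <_{\LOG_g^{t'}} m$; in the latter subcase \reflem{sufficiency:liveness:2} applied to $(m',m)$ transports $m' <_{\LOG_g} m$ forward to time $t$, contradicting $m <_{\LOG_g^{t}} m'$ by antisymmetry. Thus when $m$ is present at $t'$ the conclusion holds.

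It remains to rule out $m \notin \LOG_g^{t'}$. Here I would use that the message $m$ enters $\LOG_g$ only through the append at \refline{sufficiency:amcast:3}; since $m \in \LOG_g^{t}$ this append occurs at some time $t'' \in (t', t]$. At the instant of the append, $m$ is placed at the head of $\LOG_g$, which by definition sits strictly above every occupied slot, whereas $m'$ (present since $t'$ by stability) occupies a lower slot, so $m' <_{\LOG_g^{t''}} m$. Propagating this forward with \reflem{sufficiency:liveness:2} again yields $m' <_{\LOG_g^{t}} m$, contradicting $m <_{\LOG_g^{t}} m'$. Hence $m$ must already lie in $\LOG_g^{t'}$, which reduces to the previous paragraph and closes the argument.

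The step I expect to be the main obstacle is this last one: the bare statement of \reflem{sufficiency:liveness:2} only carries an \emph{already existing} order forward and says nothing about the moment a message is first inserted, so the fact that a late-appended $m$ necessarily lands after an already-present $m'$ has to be extracted from the log semantics (append targets the head, which lies above all occupied slots) rather than from the lemma. Getting this interaction between insertion order and the persistence lemma right—and checking that $m$ can reach $\LOG_g$ \emph{only} via \refline{sufficiency:amcast:3}, so that no intermediate append could slip $m$ below $m'$—is where the care is needed.
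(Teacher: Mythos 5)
Your proof is correct and follows the route the paper intends: the corollary is stated as an immediate consequence of \reflem{sufficiency:liveness:2}, and your case analysis (both present at $t'$, resolved by totality plus forward persistence; $m$ absent at $t'$, resolved by the append-at-head semantics plus forward persistence) is exactly the derivation the paper leaves implicit. Your closing caveat about first insertion happening only at \refline{sufficiency:amcast:3} is also the right thing to check, and it matches the observation made in the paper's own proof of \reflem{sufficiency:liveness:2} that the append at \refline{sufficiency:pending:5} is a no-op for $\LOG_g$ since the precondition at \refline{sufficiency:pending:3} already requires $m \in \LOG_g$.
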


Let $\min(\LOG_g)$ be the message at the lowest position in $\LOG_g$ such that for some $p \in \correct \inter g$, $\phase^{p}[m] \leq \phPending$.

\begin{lemma}
  \lablem{sufficiency:liveness:4}
  $\alwaysTL( m = \min(\LOG_g) \implies \eventuallyTL(\forall p \in \correct \inter g \sep \phase^{p}[m] = \phCommit))$
\end{lemma}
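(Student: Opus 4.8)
The plan is to reduce the goal to the hypothesis of \reflem{sufficiency:liveness:1}: I will show that $m$ eventually reaches the $\phPending$ phase at every correct process in $g$, i.e.\ $\forall p \in \correct \inter g \sep \eventuallyTL(\phase^p[m] \geq \phPending)$. Once this is established, \reflem{sufficiency:liveness:1} immediately gives $\forall p \in \correct \inter g \sep \eventuallyTL(\phase^p[m] \geq \phCommit)$, and since by \refclaim{sufficiency:trivias:7} the phase of a message always climbs through $\phCommit$, each correct process in $g$ does pass through $\phCommit$, which is the desired conclusion.

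To set up, I fix a time $t$ at which $m = \min(\LOG_g)$. Because $m$ occupies a position in $\LOG_g$ it is in the log, and by \refclaim{sufficiency:trivias:logs:1} it remains there forever. The key consequence of the definition of $\min(\LOG_g)$ is that every predecessor $m' <_{\LOG_g} m$ fails the defining condition, so no correct process in $g$ has $m'$ at a phase $\leq \phPending$ at time $t$; equivalently, each such $m'$ is already $\geq \phCommit$ at every correct process in $g$. By the monotonicity of phases (\refclaim{sufficiency:trivias:8}), each predecessor stays $\geq \phCommit$ at every correct process in $g$ from $t$ on.

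Next I argue that the $\pending(m)$ action is eventually enabled at every correct $p \in g$ for which $\phase^p[m] = \phStart$ (the processes already at $\geq \phPending$ need no argument). First, the set of predecessors of $m$ in $\LOG_g$ is stable: by \reflem{sufficiency:liveness:2} the order of $\LOG_g$ never changes, and by \refcor{sufficiency:liveness:1} no message is inserted below one that is already present, so from $t$ on this set is finite and fixed. Consequently the three preconditions hold at $p$: \refline{sufficiency:pending:2} since $m$ is $\phStart$, \refline{sufficiency:pending:3} since $m \in \LOG_g$, and \refline{sufficiency:pending:3b} since every predecessor of $m$ is $\geq \phCommit$ at $p$ and remains so. As all shared objects are wait-free and correct processes keep taking steps, $p$ executes \refline{sufficiency:pending:7} and moves $m$ to $\phPending$. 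Hence $\eventuallyTL(\phase^p[m] \geq \phPending)$ for every correct $p \in g$, completing the reduction and, via \reflem{sufficiency:liveness:1}, the proof.

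I expect the main obstacle to be the enabledness argument for $\pending(m)$: one must justify precisely that the quantifier in \refline{sufficiency:pending:3b} ranges over a set of predecessors that neither reorders nor grows (which is exactly where \reflem{sufficiency:liveness:2} and \refcor{sufficiency:liveness:1} are needed), and that, by combining the defining property of $\min(\LOG_g)$ with \refclaim{sufficiency:trivias:8}, each of these finitely many predecessors is and stays $\geq \phCommit$ locally at the process trying to advance $m$. Everything else is routine bookkeeping on the phase progression plus the single appeal to \reflem{sufficiency:liveness:1}.
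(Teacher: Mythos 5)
Your proof is correct and follows essentially the same route as the paper's: both use the definition of $\min(\LOG_g)$ together with \refcor{sufficiency:liveness:1} and \refclaim{sufficiency:trivias:8} to show the precondition at \refline{sufficiency:pending:3b} becomes (and stays) true, so that $m$ reaches $\phPending$ at every correct process in $g$, and then both conclude via \reflem{sufficiency:liveness:1}. The only difference is that the paper phrases the first step as a proof by contradiction while you argue it directly; the content is the same.
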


\begin{proof}
  Let $m$ be $\min(\LOG_g)$ at some point in time $t_0$.
  Consider a process $p \in \correct \inter g$.
  Initially $\phase^{p}[m]=\phStart$.
  For the sake of contradiction, consider that this holds forever after $t_0$
  Necessarily, the precondition at \refline{sufficiency:pending:3b} is always false.
  Thus, at $t > t_0$, there must exist $m'$ such that $m' <_{\LOG_g^{t'}} m$ and $\phase^{p,t}[m'] \leq \phPending$.
  By \refcor{sufficiency:liveness:1} and \refclaim{sufficiency:trivias:8}, this is also the case at time $t_0$.
  A contradiction to the very definition of $m$.
  As a consequence, $m$ moves eventually to a higher phase than $\phStart$.
  Let $t_p \geq t_0$ be the time at which this happens.
  At time $t = \max_{p \in \correct \inter g} t_p$, $m$ is in a phase $\phi \geq \phPending$ at all the processes in $\correct \inter g$.
  By \reflem{sufficiency:liveness:1}, $m$ reaches later eventually the $\phCommit$ phase at all such processes.
\end{proof}

Consider a message $m$ with $g=\dst(m)$, either multicast by a correct process or eventually delivered.

\begin{lemma}
  \lablem{sufficiency:liveness:commit}
  $\forall p \in \correct \inter g \sep \alwaysTL(m \in \LOG_g \implies \eventuallyTL(\phase^{p}[m] = \phCommit))$
\end{lemma}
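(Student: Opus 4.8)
The plan is to prove the statement by a well-founded argument on positions in $\LOG_g$, bootstrapping from \reflem{sufficiency:liveness:4}. First I would reduce to a clean setting: by \refclaim{sufficiency:trivias:logs:1}, $m \in \LOG_g$ at any time entails $\eventuallyTL(\alwaysTL(m \in \LOG_g))$, so it suffices to show that every message eventually always in $\LOG_g$ reaches $\phCommit$ at every $p \in \correct \inter g$. I would also record two facts used throughout: phases only increase (\refclaim{sufficiency:trivias:8}), and a message cannot move past $\phPending$ without passing through $\phCommit$ (\refclaim{sufficiency:trivias:7}); hence a process $p$ that never commits $m$ keeps $\phase^{p}[m] \leq \phPending$ forever.

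Next I would argue by contradiction. Let $S$ be the set of messages that are eventually always in $\LOG_g$ yet fail to reach $\phCommit$ at some $p \in \correct \inter g$, and suppose $S \neq \emptySet$. By \reflem{sufficiency:liveness:2} (together with \refcor{sufficiency:liveness:1}), the relative order of messages that are eventually always in $\LOG_g$ is eventually fixed, so each element of $S$ has a well-defined eventual position; let $m^*$ be one of lowest eventual position $k$, and let $p^* \in \correct \inter g$ be a process with $\phase^{p^*}[m^*] \leq \phPending$ forever.

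The crux is to show that, from some time on, $m^* = \min(\LOG_g)$, so that \reflem{sufficiency:liveness:4} applies. By minimality of $m^*$, every message $m'$ with $m' <_{\LOG_g} m^*$ (in the stabilized order of \reflem{sufficiency:liveness:2}) lies outside $S$; being eventually always in $\LOG_g$ and outside $S$, it reaches $\phCommit$ and, by \refclaim{sufficiency:trivias:8}, stays $\geq \phCommit > \phPending$ at every $p \in \correct \inter g$. Since $\append$ inserts at the head and $\bumpAndLock$ only moves a message upward, no uncommitted message ever drops below $m^*$; hence from some time on, every message below $m^*$ is $\geq \phCommit$ at all correct processes in $g$ while $m^*$ remains $\leq \phPending$ at $p^*$, which by the definition of $\min(\LOG_g)$ makes $m^* = \min(\LOG_g)$. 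Applying \reflem{sufficiency:liveness:4} to $m^*$ then yields $\eventuallyTL(\forall p \in \correct \inter g \sep \phase^{p}[m^*] = \phCommit)$, contradicting $\phase^{p^*}[m^*] \leq \phPending$ forever; therefore $S = \emptySet$, which is the claim.

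The step I expect to be the main obstacle is this last one, namely that $m^*$ genuinely becomes and stays $\min(\LOG_g)$. It requires that the messages occupying positions below $k$ have all committed at all correct processes by some common time, which is delicate only if infinitely many messages could sit below position $k$. If one prefers to sidestep reasoning about the number of such messages, the same contradiction is reached more robustly by tracking the position of $\min(\LOG_g)$ directly: this position is non-decreasing (appends go to the head and bumps move upward, so no uncommitted message is ever introduced below the current minimum) and is bounded above by $k$ (because $m^*$ is forever a candidate), hence convergent; \reflem{sufficiency:liveness:4} forces the minimum to strictly exceed its own limit, a contradiction.
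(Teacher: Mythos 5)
Your proof is correct and follows essentially the same route as the paper's: both reduce the statement to \reflem{sufficiency:liveness:4} via the notion of $\min(\LOG_g)$, the paper by a terse case split on whether $m$ eventually equals $\min(\LOG_g)$, and you by a minimal-counterexample argument that makes explicit why a never-committing message would eventually become that minimum. Your version (including the fallback via monotonicity of the minimum's position) actually supplies the justification that the paper's first case leaves implicit, but it is the same underlying argument.
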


\begin{proof}
  First consider that $\neg~\eventuallyTL(m = \min(\LOG_g))$.
  By definition of $\min(\LOG_g)$, $\alwaysTL(\forall p \in \correct \inter g$, $\phase^{p}[m] > \phPending)$.
  We conclude using \refclaim{sufficiency:trivias:7}.
  Otherwise, $m$ is at some point in time equal to $\min(\LOG_g)$.
  In this case, \reflem{sufficiency:liveness:4} tells us that the lemma holds.
\end{proof}

From what precedes, a message reaching the log of its destination group is eventually committed at every correct process in $g$.
Next, we prove that it eventually stabilizes.
As pointed in \refsection{sufficiency:algorithm}, if $\Fa$ is empty then a pair-wise agreement on the order of messages is sufficient to reach the $\phStable$ phase.
However, in the general case, stabilization must happen following the order in which messages appears in \emph{all} the logs (\refline{sufficiency:stabilize:4}).
As a consequence, we must show that there is no cycle to ensure that this procedure eventually terminates.
To do so, we first prove that in the absence of failures of the cyclic family, the message occupies the same final position in the logs of the intersection groups.

\tikzstyle{basic} = [draw, -latex',Black,-]
\tikzstyle{dot} = [inner sep=1.5pt, circle, fill]
\tikzstyle{dasheddot} = [inner sep=1.5pt, circle, fill]

\tikzstyle{familyf} = [draw,-latex',Blue,-]
\tikzstyle{familyfprime} = [draw,-latex',Red,-]
\tikzstyle{familyfsecond} = [draw,-latex',OliveGreen,-]

\begin{figure}[t]
  \centering
  \captionsetup{justification=centering}
    \begin{subfigure}[t]{.3\textwidth}
    \centering
    \begin{tikzpicture}[
        every path/.style={black},
        scale=1,
        transform shape
      ]

      \node[dot,label=90:$g$] (g) at (0,0) {};
      \node[dot,label=90:$g'$] (gprime) at (-1.4,0) {};
      \node[dot,label=180:$\hat{g}$] (ghat) at (-1,-.8) {};
      \node[dot,label=90:$g''$] (gsecond) at (1,1) {};
      \node[dot,label=270:$h$] (h) at (0,-1.4) {};

      \path[basic] (g) edge node [midway,below,xshift=-.5em] {$\{p\}$} (gprime);
      \path[basic] (g) edge node [midway,right] {$\{p'\}$} (gsecond);
      \path[basic] (g) -- (ghat);
      \path[basic] (gprime) -- (ghat);     

      \node[label={\color{Blue}{\f}}] (f) at (-.5,1) {};
      \draw[familyf]  ([yshift=-1pt]gprime.north) to[bend left=30] ([yshift=-1pt,xshift=-1pt]gsecond);
      \draw[familyf]  ([yshift=1pt]gprime.east) to ([yshift=1pt]g);
      \draw[familyf]  ([xshift=1pt]g.north) to ([yshift=-1pt,xshift=-2pt]gsecond);

      \node[label={\color{Red}{\f'}}] (fprime) at (.6,-1) {};
      \draw[familyfprime]  ([yshift=-1pt]ghat.north) to ([yshift=-1pt]g);
      \draw[familyfprime]  ([xshift=-1pt]g.south) to ([yshift=1pt,xshift=-1pt]h.north);
      \draw[familyfprime]  ([xshift=1pt]ghat) to[bend right=30] ([yshift=2pt]h.south);
    \end{tikzpicture}
    \caption{\labfigure{fix:1}}
  \end{subfigure}
  \begin{subfigure}[t]{.3\textwidth}
    \centering
    \begin{tikzpicture}[
        every path/.style={black},
        scale=1,
        transform shape
      ]

      \node[dot,label=90:$g$] (g) at (0,0) {};
      \node[dot,label=90:$g'$] (gprime) at (-1.4,0) {};
      \node[dot,label=180:$\hat{g}$] (ghat) at (-1,-.8) {};
      \node[dot,label=90:$g''$] (gsecond) at (1,1) {};
      \node[dot,label=270:$h$] (h) at (0,-1.4) {};

      \node[label={\color{OliveGreen}{\f''}}] (fsecond) at (.5,-.5) {};
      \draw[familyfsecond]  ([yshift=-1pt]gprime.north) to[bend left=30] ([yshift=-1pt,xshift=-1pt]gsecond);
      \draw[familyfsecond]  ([xshift=1pt]g.north) to ([yshift=-1pt,xshift=-2pt]gsecond);
      \draw[familyfsecond]  ([xshift=-1pt]g.south) to ([yshift=1pt,xshift=-1pt]h.north);
      \draw[familyfsecond]  ([xshift=1pt]gprime.south) to[bend right=30] ([yshift=2pt]h.west);      
    \end{tikzpicture}
    \caption{\labfigure{fix:2}}
  \end{subfigure}
  \begin{subfigure}[t]{.3\textwidth}
    \centering
    \begin{tikzpicture}[
        every path/.style={black},
        scale=1,
        transform shape
      ]

      \node[dot,label=90:$g$] (g) at (0,0) {};
      \node[dot,label=90:$g''$] (gsecond) at (1,1) {};
      \node[dasheddot] () at (.5,1) {};
      \node[dasheddot] () at (.1,.95) {};
      \node[dot,label=90:$h_{i-1}$] (himinusone) at (-.3,.85) {};
      \node[dot,label=270:$h$] (h) at (0,-1.4) {};
      \node[dasheddot] () at (-.25,-1.3) {};
      \node[dasheddot] () at (-.5,-1.17) {};
      \node[dot,label=270:$h_i$] (hi) at (-.75,-1) {};

      \node[label={\color{OliveGreen}{\f''}}] (fsecond) at (.5,-.5) {};
      \draw[familyfsecond]  ([yshift=-1pt]himinusone) to[bend left=10] ([yshift=-1pt,xshift=-1pt]gsecond);
      \draw[familyfsecond]  ([xshift=1pt]g.north) to ([yshift=-1pt,xshift=-2pt]gsecond);
      \draw[familyfsecond]  ([xshift=-1pt]g.south) to ([yshift=1pt,xshift=-1pt]h.north);
      \draw[familyfsecond]  ([xshift=1pt]hi.east) to[bend right=5] ([yshift=2pt]h.west);
      \draw[familyfsecond]  (hi.north) to (himinusone.south);
    \end{tikzpicture}
    \caption{\labfigure{fix:3}}
  \end{subfigure}    
  \caption{
    Illustration of the construction of family $\f'' \in \Fa(p')$ in \reflem{sufficiency:liveness:fix:1}:
    \emph{
      (a) the two families $\f$ and $\f'$,
      (b) case $\f \inter \f' = \{g\}$,
      (c) case $\f \inter \f' \setminus \{g\} \neq \emptySet$.
    }
    \labfigure{fix}
  }
\end{figure}

\begin{lemma}
  \lablem{sufficiency:liveness:fix:1}
  For some group $g$ and some process $q$, define $H(q,g)$ as $\{ h : \exists \f' \in \Fa(q) \sep g,h \in \f' \land g \inter h \neq \emptySet \}$.
  Consider some $\f \in \Fa$ with $g,g',g'' \in \f$, $p \in g \inter g'$ and $p' \in g \inter g''$.
  Then, $H(p,g) = H(p',g)$.
\end{lemma}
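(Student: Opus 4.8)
The plan is to prove the two inclusions $H(p,g) \subseteq H(p',g)$ and $H(p',g) \subseteq H(p,g)$ separately. The hypotheses are symmetric under the exchange of the pairs $(p,g')$ and $(p',g'')$, so it is enough to establish $H(p,g) \subseteq H(p',g)$ and to obtain the reverse inclusion by relabelling. Fix therefore $h \in H(p,g)$, witnessed by a cyclic family $\f' \in \Fa(p)$ with $g,h \in \f'$ and $g \inter h \neq \emptySet$. Since $p' \in g \inter g''$ and $g,g'' \in \f$, every cyclic family $\f''$ with $\{g,g'',h\} \subseteq \f''$ already lies in $\Fa(p')$—the pair $\{g,g''\}$ supplies an intersection containing $p'$—and, together with $g \inter h \neq \emptySet$, witnesses $h \in H(p',g)$. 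The statement thus reduces to the purely combinatorial task of building, out of the cyclic family $\f$ (containing $g,g',g''$) and the cyclic family $\f'$ (containing $g,h$), a single cyclic family that contains $g$, $g''$ and $h$.

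I would obtain such a family by splicing the two Hamiltonian cycles $C_\f$ and $C_{\f'}$ of $\f$ and $\f'$, both of which pass through $g$, exactly as drawn in \reffigure{fix}. When $\f \inter \f' = \{g\}$ (\reffigure{fix:2}), the cycles share only $g$, so I first produce a \emph{bridge} between them: from $p \in g \inter g'$ we get $p \in g'$, and since $p$ sits in some intersection of $\f'$ there is a group $d \in \f' \setminus \{g\}$ with $p \in d$; hence $p \in g' \inter d$, so $g'$ and $d$ are adjacent in the intersection graph. Let $P$ be the arc of $C_\f$ joining $g$ to $g'$ that passes through $g''$, and $Q$ the arc of $C_{\f'}$ joining $d$ to $g$ that passes through $h$. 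Following $P$ from $g$ to $g'$, crossing the bridge to $d$, then following $Q$ back to $g$ yields a closed walk through $g$, $g''$ and $h$; as $P$ visits only groups of $\f$, $Q$ only groups of $\f'$, and $\f \inter \f' = \{g\}$, the two arcs meet solely at $g$, so the walk is a simple cycle and its vertex set is the family sought.

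When $\f$ and $\f'$ share a group $\hat g \neq g$ (\reffigure{fix:3}), no bridge is needed: the two common groups $g$ and $\hat g$ already connect the cycles twice. I then take the arc $A$ of $C_\f$ running between $g$ and $\hat g$ through $g''$, and the arc $B$ of $C_{\f'}$ running between $\hat g$ and $g$ through $h$, and close them into the cycle that goes from $g$ to $\hat g$ along $A$ and back along $B$, which again visits $g$, $g''$ and $h$.

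The delicate step—and the one I expect to be the main obstacle—is to ensure that the spliced walk is a \emph{simple} cycle that still carries all three marked groups $g$, $g''$ and $h$. Repetitions can occur only at groups of $\f \inter \f'$, so both constructions above go through verbatim as soon as $\cardinalOf{\f \inter \f'} \le 2$. When the two families share more groups the arcs may meet again, and one has to splice at a common group chosen so that the ensuing shortcut drops neither $g''$ nor $h$; proving that such a common group always exists (equivalently, that $\f \union \f'$ admits a cyclic sub-family through $g$, $g''$ and $h$) is the heart of the argument. The remaining degenerate configurations—such as $h \in \{g,g''\}$, $g'=g''$, or $g''=\hat g$—are routine and dealt with directly.
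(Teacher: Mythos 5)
Your overall strategy is the same as the paper's: reduce the lemma to exhibiting a single cyclic family containing $g$, $g''$ and $h$ (since $p' \in g \inter g''$ then puts that family in $\Fa(p')$, and $g \inter h \neq \emptySet$ does the rest), and obtain that family by splicing the Hamiltonian cycles $C_\f$ and $C_{\f'}$. Your handling of $\cardinalOf{\f \inter \f'} \leq 2$ is correct, and your bridge for the case $\f \inter \f' = \{g\}$ (a group $d \in \f' \setminus \{g\}$ with $p \in g' \inter d$) is exactly the paper's $\hat{g}$; you are in fact more careful than the paper there, since you take as $\f''$ only the groups actually visited by the spliced walk rather than asserting that all of $\f \union \f'$ is cyclic.

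The genuine gap is the case $\cardinalOf{\f \inter \f'} \geq 3$, which you explicitly leave open and yourself call the heart of the argument. The missing idea is just the right choice of splice point. Enumerate $\f = (h_i)_i$ with $h_1 = g'$, $h_2 = g$, $h_3 = g''$, and let $h_i$ be the \emph{first} group of $(\f \inter \f') \setminus \{g\}$ met when traversing $C_\f$ from $g$ in the direction of $g''$, i.e., along $h_3, h_4, \ldots$ Take the arc of $C_{\f'}$ that runs $g, h, g_3, \ldots$ up to $h_i$ (it necessarily contains $h$, since $h$ is the first vertex after $g$), and close it with the arc $h_i, h_{i-1}, \ldots, h_3, h_2 = g$ of $C_\f$, which contains $g''$. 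The first arc lies in $\f'$, the second (minus its endpoints) is $\{h_3,\ldots,h_{i-1}\} \subseteq \f$, so any repeated vertex would be a group of $(\f \inter \f') \setminus \{g\}$ strictly between $g$ and $h_i$ on the $\f$ side --- and there is none, by minimality of $i$. The walk is therefore a simple cycle through $g$, $h$ and $g''$, and its vertex set is the required cyclic family. So the claim you flag as the obstacle does hold, but your proposal as written does not establish it, and that is precisely the step the paper's construction supplies.
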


\begin{proof}
  The proof is illustrated in \reffigure{fix}.
  Choose $h \in H(p,g)$.  
  By definition, there exists $\f' \in \Fa(p)$ such that $g \inter h \neq \emptySet$ and $g,h \in \f'$.
  For some appropriate numbering, $\f'=(g_i)_{i \in [1,n]}$ with $g_1=g$ and $g_2=h$. 
  Similarly, we have $\f=(h_i)_{i}$ with $h_1=g'$, $h_2=g$ and $h_3=g''$.
  As $\f' \in \Fa(p)$ and $g \in \f'$, there exists $\hat{g} \in \f'$ such that $p  \in g \inter \hat{g}$.
  Hence, $p \in g' \inter \hat{g}$.
  We construct a cyclic family $\f''$ as follows.
  Assume $\f$ and $\f'$ have no other common group than $g$.
  Then $\f''$ is simply set to $\f \union \f'$.
  Otherwise, let $h_i$ be the first group in $(\f \inter \f') \setminus \{g\}$.
  $\f''$ is set to $g_1, g_2, \ldots, h_i, h_{i-1}, \ldots h_3$.
  As $p' \in \f''$, $g,h \in \f''$ and $g \inter h \neq \emptySet$, we have $h \in H(p',g)$.
\end{proof}

\begin{lemma}
  \lablem{sufficiency:liveness:fix:2}
  Assume some process $p$ executes $\CONS_{m,\any}.\propose(k)$ at time $t$ (\refline{sufficiency:commit:fix:2}).
  For any $\f \in \Fa(p) \inter \correct$, if $h \in \f$ and $g \inter h \neq \emptySet$, then $(m,h,l) \in \LOG_g^t$ and $k \geq l$.
\end{lemma}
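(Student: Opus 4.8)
The plan is to read the conclusion off the guard of the commit action, using the accuracy of $\gamma$ to guarantee that the relevant intersection was recorded before $k$ was fixed. First I would fix a correct cyclic family $\f \in \Fa(p)$ (which therefore contains $g$, the destination group of the message $m$ that $p$ is committing) and a group $h \in \f$ with $g \inter h \neq \emptySet$. Since $\f$ is correct it is not faulty at time $t$, so the accuracy property of $\gamma$, read in contrapositive, yields $\f \in \gamma(p,t)$. Together with $g,h \in \f$ and $g \inter h \neq \emptySet$, the definition of $\gamma(g)$ then gives $h \in \gamma^{p,t}(g)$.

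The remaining steps are essentially bookkeeping. Because $p$ is executing the commit action at time $t$, its precondition at \refline{sufficiency:commit:3} held, i.e.\ $\forall h' \in \gamma(g) \sep (m,h',\any) \in \LOG_g$; instantiating $h'=h$ produces a tuple $(m,h,l) \in \LOG_g^t$ for some $l$, which is the membership half of the claim. For the inequality I would appeal to \refline{sufficiency:commit:4}, where $p$ sets $k = \max\{i : \exists (m,\any,i) \in \LOG_g\}$ just before calling $\propose$. Since $(m,h,l)$ is one of the tuples over which this maximum is taken, $k \geq l$ follows at once.

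The main obstacle is not depth but keeping the three log/detector reads mutually consistent at the single instant $t$. I would justify this with log monotonicity (\refclaim{sufficiency:trivias:logs:1}): once $(m,h,l)$ is present it is never removed, so the tuple seen when the precondition is evaluated is still available when $k$ is computed. Likewise, because a correct family is output by $\gamma$ at every time, the membership $h \in \gamma^{p,t}(g)$ cannot lapse between the guard check and the maximization. I would also take care to flag that the symbol $\f$ in this lemma denotes a cyclic family (an element of $\Fa$), not the set-of-groups index built at \refline{sufficiency:commit:fix:1}, so that the two uses of $\f$ in the surrounding text are not conflated.
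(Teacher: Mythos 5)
Your proof is correct and follows essentially the same route as the paper's: accuracy of $\gamma$ gives $\f \in \gamma^{p,t}$ (hence $h \in \gamma^{p,t}(g)$), the precondition at \refline{sufficiency:commit:3} yields the tuple $(m,h,l) \in \LOG_g^t$, and the maximization at \refline{sufficiency:commit:4} gives $k \geq l$. The extra bookkeeping on log monotonicity and the remark distinguishing the two uses of $\f$ are sound additions but not needed beyond what the paper's own (terser) argument already relies on.
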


\begin{proof}
  As $\f$ is correct, $\f \in \gamma^{p,t}$ by the accuracy property of failure detector $\gamma$.
  The precondition at \refline{sufficiency:commit:3} requires that $(m,h,l) \in \LOG_g^t$, for some $l$.
  Variable $k$ is computed at \refline{sufficiency:commit:4}.
  Hence, $k \geq l$ when $p$ calls \refline{sufficiency:commit:fix:2}.
\end{proof}

\begin{lemma}
  \lablem{sufficiency:liveness:fix:3}
  Let $\f$ be a correct cyclic family and $g$, $g'$ and $g''$ be three groups in $\f$ with $g \inter g' \neq \emptySet$ and $g \inter g'' \neq \emptySet$.
  Assume that message $m$ with $\dst(m)=g$ is locked in both in both $\LOG_{g \inter g'}$ and $\LOG_{g \inter g''}$.
  Then, $m$ has the same position in the two logs.
\end{lemma}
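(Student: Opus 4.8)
The plan is to show that both logs end up with $m$ bumped to one and the same value produced by a shared consensus object. First I would identify the two processes that perform the locking: $m$ can be locked in $\LOG_{g \inter g'}$ (resp.\ $\LOG_{g \inter g''}$) only by a process executing $\LOG_{g \inter g'}.\bumpAndLock(m,\cdot)$ at \refline{sufficiency:commit:6} with $h=g'$ (resp.\ $h=g''$) in the loop, which forces that process to satisfy $g' \in \Gr(\cdot)$ (resp.\ $g'' \in \Gr(\cdot)$) while also being in $g=\dst(m)$; hence it lies in $g \inter g'$ (resp.\ $g \inter g''$). Call these processes $p_1$ and $p_2$. Since $g,g',g'' \in \f$ with $p_1 \in g \inter g'$ and $p_2 \in g \inter g''$, we have $\f \in \Fa(p_1) \inter \Fa(p_2)$, so \reflem{sufficiency:liveness:fix:1} yields $H(p_1,g)=H(p_2,g)$. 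Consequently the families computed at \refline{sufficiency:commit:fix:1} coincide, and both processes invoke the \emph{same} consensus instance at \refline{sufficiency:commit:fix:2}.

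Next I would invoke agreement: every invocation of that consensus object returns one and the same value $\hat{k}$, so the second argument of both $\bumpAndLock$ calls equals $\hat{k}$. By the semantics of $\bumpAndLock$, $m$ ends up locked at position $\max(\hat{k},\ell')$ in $\LOG_{g \inter g'}$ and at $\max(\hat{k},\ell'')$ in $\LOG_{g \inter g''}$, where $\ell'$ and $\ell''$ are the slots $m$ occupied just before the first bump. Because the slot of $m$ in an intersection log changes only through $\bumpAndLock$, and that operation is reachable only from the $\commit$ action, $\ell'$ and $\ell''$ are exactly the positions at which $m$ was appended at \refline{sufficiency:pending:5} and then recorded as $(m,g',\ell')$ and $(m,g'',\ell'')$ in $\LOG_g$ at \refline{sufficiency:pending:6}.

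It then remains to prove $\hat{k} \geq \ell'$ and $\hat{k} \geq \ell''$, which collapses both maxima to $\hat{k}$ and gives the claim, with the common slot being $\hat{k}$. This is the crux, and I expect it to be the main obstacle. By validity of consensus, $\hat{k}$ is the proposal of some process $p^{\star}$ that called $\propose$ at \refline{sufficiency:commit:fix:2}. I would close the gap with \reflem{sufficiency:liveness:fix:2}: it suffices that the correct family $\f$ belongs to $\Fa(p^{\star})$ and contains $g'$ and $g''$ (both intersecting $g$), for then the accuracy of $\gamma$ puts $g',g'' \in \gamma^{p^{\star}}(g)$, so the precondition at \refline{sufficiency:commit:3} forces the tuples $(m,g',\ell')$ and $(m,g'',\ell'')$ to be present in $\LOG_g$ when $p^{\star}$ computes its value at \refline{sufficiency:commit:4}, whence its proposal dominates $\ell'$ and $\ell''$. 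The delicate point is to argue that \emph{every} proposer to this consensus object shares the correct cyclic structure anchored at $g$: this is what ties $H(p^{\star},g)=H(p_1,g)$ back to $\f$, in the spirit of \reflem{sufficiency:liveness:fix:1}, and guarantees that whichever proposal is decided still dominates $\ell'$ and $\ell''$. Once this domination is established, both $\bumpAndLock$ calls place $m$ at slot $\hat{k}$, so its positions in the two logs coincide.
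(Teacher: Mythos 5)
Your proposal is correct and follows essentially the same route as the paper's proof: both lockers are placed in the respective intersections, \reflem{sufficiency:liveness:fix:1} forces them onto the same consensus instance, and \reflem{sufficiency:liveness:fix:2} (applied to the proposer of the decided value, whose computed family set coincides with theirs) shows the decided position dominates the pre-bump slots, so both $\bumpAndLock$ calls land $m$ at the decided slot. The ``delicate point'' you flag is handled in the paper exactly as you sketch it, with no additional machinery.
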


\begin{proof}
  Name $p$ and $p'$ respectively the processes that bump (and lock) $m$ in $\LOG_{g \inter g'}$ and $\LOG_{g \inter g''}$.
  Let $l$ and $l'$ be the respective positions of $m$ in the two logs before the bump.
  Applying \reflem{sufficiency:liveness:fix:1}, both processes call the same instance $\CONS_{m,\f'}$ at \refline{sufficiency:commit:fix:2} before executing \refline{sufficiency:commit:6}, for some appropriate family $\f'$.
  Let $q$ be the process whose value, say $\kappa$, is decided in $\CONS_{m,\f'}$.
  Processes $p$ and $p'$ execute $\bumpAndLock(m,\kappa)$ on the two logs.
  Observe that as $g' \in f'$, $\hat{\f} \in \Fa(q)$ for some cyclic family $\hat{f}$ with $g,g' \in \hat{\f}$
  The same observation is true for the group $g''$.
  Thus, by \reflem{sufficiency:liveness:fix:2}, $\kappa \geq \max(l,l')$.
  It follows that $m$ occupies the same final position in the two logs.
\end{proof}

To track the relative positions of messages in the logs, we introduce a few notations.
First, the predecessors of a message $m$ in the logs \emph{at a correct process} $p$ is denoted $\pred(m,p) \equaldef \{m' : \exists L \in \LOGS(p) \sep m' <_{L} m \}$, where $\LOGS(p)$ are all the logs at process $p$.
We shall write $m' <_p m$ when $m'$ precedes $m$ at some log at $p$, i.e., $m' <_{p} m \equaldef m' \in \pred(m,p)$.
The transitive closure of relation $<_p$ across all the correct processes is denoted $<$, that is $< \equaldef (\union_{p} <_p)^{*}$.
The predecessors of $m$ according to $<$ are $\pred(m) \equaldef \{ m' : m' < m \}$.

\begin{lemma}
  \lablem{sufficiency:liveness:acyclic}
  $\forall p \in \correct \sep \alwaysTL((\phi \geq \phCommit)^p, <_p)~\text{is a pos})$
\end{lemma}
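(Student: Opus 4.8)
The lemma claims that at every correct process $p$, the relation $<_p$ restricted to messages that are in phase $\phCommit$ or beyond (i.e.\ the set $(\phi \geq \phCommit)^p$) is always a partial order (``pos''). Since $<_p$ is, by definition, the union over all logs $L \in \LOGS(p)$ of the orderings $m' <_L m$, and each individual log order is already a strict total order on its contents, the only thing that can fail is \emph{antisymmetry/acyclicity}: a cycle could arise from chaining the orderings of several different logs at $p$. So the whole content of the lemma is: among the committed-or-later messages at $p$, the union of the per-log orders is acyclic.

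**The plan.** I would argue by contradiction, assuming a cycle $C = m_1 <_p m_2 <_p \cdots <_p m_k <_p m_1$ among messages all in phase $\geq \phCommit$ at $p$. Each step $m_i <_p m_{i+1}$ is witnessed by some log $L_i \in \LOGS(p)$ with $m_i <_{L_i} m_{i+1}$; by \refclaim{sufficiency:trivias:4}, that log must be $L_i = \LOG_{\dst(m_i) \inter \dst(m_{i+1})}$, so in particular consecutive messages in the cycle have intersecting destination groups and $p$ lies in all the relevant groups. The first key observation is that every $m_i$ is in phase $\geq \phCommit$ at $p$, so by \reflem{sufficiency:ordering:1} each $m_i$ is \emph{locked} in every $\LOG_{\dst(m_i) \inter h}$ with $h \in \Gr(p)$ — in particular in the two logs that witness its incoming and outgoing cycle edges. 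The idea is then to use the fact that the positions of locked messages are frozen: by \refclaim{sufficiency:trivias:logs:3} and \refclaim{sufficiency:trivias:logs:5} the relation among locked messages within a single log is stable, and the bump-and-lock step at \refline{sufficiency:commit:6} has placed each $m_i$ at the \emph{same} consensus-decided slot $k_i$ in all the relevant intersection logs.

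**The crux.** The heart of the argument is to convert the cyclic chain of per-log orderings into a cyclic chain of the globally-agreed slot numbers $k_i$ assigned at \refline{sufficiency:commit:fix:2}, and thereby derive a numerical contradiction. Concretely, for each edge $m_i <_{L_i} m_{i+1}$ with both messages locked in $L_i$, I expect the committed position to force $k_i \leq k_{i+1}$ (a message cannot be strictly below another in a log where both sit at their final, bumped, locked slots unless its slot number is no larger, with ties broken by the a priori total order $<$ on data items). Chaining around the cycle gives $k_1 \leq k_2 \leq \cdots \leq k_k \leq k_1$, forcing all slots equal; then all $m_i$ occupy the same slot in their witnessing logs, and the cycle reduces to a cycle in the fixed tie-breaking order $<$ on messages, which is impossible since $<$ is a total order. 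The delicate point — and what I expect to be the main obstacle — is justifying that \emph{consecutive} messages in the cycle really do land at comparable slots in a \emph{common} log: the edges are witnessed by different intersection logs $\LOG_{\dst(m_i) \inter \dst(m_{i+1})}$, so one must use \reflem{sufficiency:liveness:fix:3}-style reasoning (same final position across the intersection logs sharing group $\dst(m_i)$) to transfer slot information from one edge's log to the next. Handling this transfer cleanly, especially when the groups along the cycle need not all share a single common cyclic family at $p$, is the technically demanding step; one likely needs to exploit that locking happens at a single bumped slot $k_i$ computed once per message at \reflinestwo{sufficiency:commit:4}{sufficiency:commit:fix:2} and written uniformly to all of $p$'s intersection logs at \refline{sufficiency:commit:6}, so that the slot $k_i$ is a well-defined per-message quantity independent of which log we inspect.
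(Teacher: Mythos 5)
Your overall plan is the paper's: both arguments reduce acyclicity of $<_p$ on committed messages to the fact that a committed message occupies a single, well-defined slot $K$ across all the intersection logs at $p$, after which any chain of per-log comparisons embeds into the total order on pairs (slot, tie-break) and cannot cycle. You also correctly identify the crux — transferring the position of a message from the log witnessing its incoming edge to the log witnessing its outgoing edge — and correctly name \reflem{sufficiency:liveness:fix:3} as the tool. However, you leave that crux unresolved, and the justification you sketch for it does not suffice. Saying that the slot $k_i$ is ``computed once per message and written uniformly to all of $p$'s intersection logs at \refline{sufficiency:commit:6}'' ignores that $\bumpAndLock$ is first-writer-wins (once locked, a datum cannot be bumped again), and that \emph{different} processes of $g$, with different sets $\Gr(\cdot)$, may be the first to lock $m$ in $\LOG_{g \inter h}$ versus $\LOG_{g \inter h'}$; a priori they could have consulted different consensus instances and used different values of $k$. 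The observation that closes this gap in the paper is elementary but essential: for any two groups $h, h' \in \Gr(p)$ distinct from $g$, the family $\{g,h,h'\}$ is cyclic (its intersection graph is a triangle, since $p$ lies in every pairwise intersection) and it is \emph{correct}, because $p$ itself is correct and sits in all three intersections. Hence \reflem{sufficiency:liveness:fix:3} applies unconditionally to every pair of intersection logs at $p$, which is exactly the statement of the paper's \refclaim{sufficiency:liveness:acyclic:2}. Your worry that ``the groups along the cycle need not all share a single common cyclic family at $p$'' dissolves once you notice that correctness of the relevant family is automatic from the correctness of $p$.

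A second, smaller omission: edges of the cycle between two messages with the \emph{same} destination group live in $\LOG_g$ itself, where the position-agreement claim (stated for $h \neq g$) does not apply. The paper handles these separately (\refclaim{sufficiency:liveness:acyclic:1}) using the group-sequential hypothesis: one of $m \hb m'$ or $m' \hb m$ holds, the earlier message is already locked in $\LOG_g$ when the later one is appended, and \refclaimtwo{sufficiency:trivias:logs:3}{sufficiency:trivias:logs:4} then freeze their relative order. This same argument also disposes of the degenerate situations where $p$ belongs to at most two groups. Your numerical chaining would go through once both of these points are supplied, so the proof is reparable along exactly the lines you anticipate, but as written the central step is asserted rather than proved.
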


\begin{proof}

  For starters, we show that messages addressed to the same destination group form a monotonically growing totally-ordered set.
  
  \begin{claim}
    \labclaim{sufficiency:liveness:acyclic:1}
    $\alwaysTL(m <_{p} m' \land \dst(m) = \dst(m') \implies \alwaysTL(m <_{p} m'))$
  \end{claim}

  \begin{proof}
    Since the problem being solved is group sequential atomic multicast, either $m \hb m'$ or the converse holds.
    Without lack of generality, consider the former case.
    Then $\multicast(m')$ occurs at $q \in g$ and time $t$, after $m$ was delivered by $q$.
    By \reflem{sufficiency:ordering:1}, $\LOG_g^{q,t}.\locked(m)$.
    Consider $L \in \LOGS(p)$ that contains both $m$ and $m'$.      
    Message $m'$ is added to $L$, say by $q'$ at time $t'>t$.
    Clearly $q$ executes either \refline{sufficiency:amcast:3} or \refline{sufficiency:pending:5}.
    In both cases, we have $L^{t'}.\locked(m) \land m <_{L^{t'}} m'$:
    \begin{inparaenumorig}[]
    \item the case \refline{sufficiency:amcast:3} is trivial; otherwise
    \item the precondition at \refline{sufficiency:pending:3b} requires $\phase[m]=\phCommit$, \reflem{sufficiency:ordering:1} leads to $L^{t'}.\locked(m)$.
    \end{inparaenumorig}%
    We then conclude with \refclaim{sufficiency:trivias:logs:3}.    
  \end{proof}

  The case $\cardinalOf{\LOGS(p)} \leq 3$ is immediate:
  There are exactly $\cardinalOf{\Gr(p)}+\cardinalOf{\Gr(p)}*(\cardinalOf{\Gr(p)}-1)/2$ logs at $p$.
  Hence, either there is a single log at process $p$ and it should be the one of a destination group; or, there are exactly two groups intersect in $p$.
  In the two cases, \refclaim{sufficiency:liveness:acyclic:1} implies that $<_p$ is acyclic.
  
  From now, we assume $\cardinalOf{\LOGS(p)}>3$.
  
  \begin{claim}
    \labclaim{sufficiency:liveness:acyclic:2}
    $\exists K \sep \forall h \neq g \in \Gr(p) \sep \alwaysTL(\LOG_{g \inter h}.\locked(m) \implies \LOG_{g \inter h}.\pos(m) = K)$
  \end{claim}

  \begin{proof}
    Consider some family $\f=\{g,h,h'\}$, with $\f \subseteq \Gr(p)$.
    As $p$ belongs to all three groups, $\f$ is cyclic.
    By definition, process $p$ is correct, and thus family $\f$ too.
    Applying \reflem{sufficiency:liveness:fix:3}, if $m$ is locked in both $\LOG_{g \inter h}$ and $\LOG_{g \inter h'}$ then it occupies the same position.
  \end{proof}

  \begin{claim}
    \labclaim{sufficiency:liveness:acyclic:3}
    Consider $m$, $m'$ and $m''$ with $\phase^p[m'] \geq \phCommit$.
    Then, for any $L,L' \in \LOGS(p)$, $m <_L m' <_{L'} m'' \implies (L.\pos(m) < L'.pos(m'') \lor (L.\pos(m) = L'.pos(m'') \land m < m''))$
  \end{claim}

  \begin{proof}
    As the three message are (at least) committed, they must be all locked in the logs of $p$ (\reflem{sufficiency:ordering:1}).
    Applying $m <_L m'$, $L.\pos(m) < L.pos(m') \lor (L.\pos(m) = L.pos(m') \land m < m')$
    Similarly, $(L'.\pos(m') < L'.pos(m'') \lor (L'.\pos(m') = L'.pos(m'') \land m' < m'')$
    Observe that $L.\pos(m') = L'.pos(m')$:
    if $L=L'$ this is trivial, and otherwise we apply \refclaim{sufficiency:liveness:acyclic:2}, 
  \end{proof}

  From \refclaim{sufficiency:liveness:acyclic:3}, $((\phi \geq \phCommit)^p, <_p)$ is a partially ordered set.
\end{proof}

\begin{lemma}
  \lablem{sufficiency:liveness:stable}
  $\forall p \in \correct \inter g \sep \alwaysTL(m \in \LOG_{g} \implies \eventuallyTL(\phase^{p}[m] = \phStable))$
\end{lemma}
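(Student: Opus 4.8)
The plan is to build on \reflem{sufficiency:liveness:commit}, which already guarantees that any message $m$ reaching $\LOG_g$ (with $g=\dst(m)$) eventually reaches the $\phCommit$ phase at every process in $\correct \inter g$. So fix such an $m$ and a process $p \in \correct \inter g$; it remains to drive $m$ from $\phCommit$ to $\phStable$ at $p$. By the preconditions at \refline{sufficiency:stable:2} and \refline{sufficiency:stable:3}, this amounts to showing that eventually $(m,h) \in \LOG_g$ for every $h \in \gamma(g)$. First I would argue that $\gamma(g)$ stabilizes at $p$: once a group intersection dies it stays faulty, so the faultiness of a cyclic family is monotone; by the completeness of $\gamma$ every faulty family $\f \in \Fa(p)$ is eventually dropped, and by accuracy every family in $\Fa(p)$ that is not faulty is always output. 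Hence eventually $\gamma(g)$ equals a fixed set $\Gamma$ of groups $h$ such that $g$ and $h$ lie together in a \emph{correct} cyclic family, and the stabilization of $m$ reduces to getting $(m,h)$ into $\LOG_g$ for each $h \in \Gamma$.

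Next, each such tuple is produced only by $\stabilize(m,h)$ (\refline{sufficiency:stabilize:5}), whose sole non-trivial precondition (\refline{sufficiency:stabilize:4}) is that every $m' <_{\LOG_{g \inter h}} m$ has already reached $\phStable$. I would therefore prove the stronger statement that every committed message eventually becomes $\phStable$, by well-founded induction on a positional measure. The measure exploits \reflem{sufficiency:liveness:fix:3} together with \refclaim{sufficiency:liveness:acyclic:2}: inside a correct cyclic family the consensus call at \refline{sufficiency:commit:fix:2} forces a committed message to occupy one and the same locked position in all the intersection logs of that family, so each committed $m$ has a well-defined position $P(m)$, and the pair $(P(m),m)$ ordered lexicographically with the a~priori order on messages is strictly larger than $(P(m'),m')$ for every blocking predecessor $m'$ of $m$ (since $m' <_{\LOG_{g \inter h}} m$ means $m'$ sits at a strictly lower slot, or the same slot with $m' < m$). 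This yields a strict order along which to induct.

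For the inductive step I would note that every blocking predecessor $m'$ is itself eventually committed: since $m$ is committed it is locked in $\LOG_{g \inter h}$ (\reflem{sufficiency:ordering:1}) and $m'$ sits below it, so $\pending(m')$ ran and $m' \in \LOG_{\dst(m')}$; the $\min(\LOG_g)$ argument underlying \reflem{sufficiency:liveness:commit} then forces $m'$ to $\phCommit$ at every correct member of its destination group (here I also use \refclaim{sufficiency:trivias:3} to know $\dst(m') \in \{g,h\}$). By the induction hypothesis every such $m'$, being strictly smaller, eventually reaches $\phStable$; once all the predecessors below $m$ in $\LOG_{g \inter h}$ are $\phStable$, the precondition at \refline{sufficiency:stabilize:4} holds and, as all the objects are wait-free, $\stabilize(m,h)$ fires and deposits $(m,h)$ in $\LOG_g$. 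Doing this for every $h \in \Gamma$ makes the precondition at \refline{sufficiency:stable:3} true at $p$, so $m$ reaches $\phStable$.

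The main obstacle is the well-foundedness in the second paragraph, i.e., ruling out a set of messages that block one another cyclically (precisely the situation $C = m_1 <_{\LOG_{g_1 \inter g_2}} m_2 <_{\LOG_{g_2 \inter g_3}} m_3 <_{\LOG_{g_3 \inter g_1}} m_1$ flagged in \refsection{sufficiency:algorithm}). This is where the design pays off: a mutual-blocking cycle among committed messages traces a closed walk in the intersection graph and hence a cyclic family, and each of its edges is a blocking requirement only if the corresponding family is correct. If that family is correct, the consistent positions of \reflem{sufficiency:liveness:fix:3} embed the blocking relation into the strict lexicographic order $(P,<)$, which cannot cycle; if it is faulty, it is exactly a family that $\gamma$ eventually removes, so by the first paragraph it never contributes a blocking requirement. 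Care is also needed to exclude an infinite descending chain, but since positions are natural numbers and only finitely many messages are appended by any finite time, the induction is well-founded and terminates.
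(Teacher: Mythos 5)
Your overall strategy is the paper's: after \reflem{sufficiency:liveness:commit}, reduce the stabilization of $m$ to showing that the messages blocking the pending $\stabilize$ calls cannot form a cycle, using the position-consistency of \reflem{sufficiency:liveness:fix:3} inside a correct cyclic family. The paper packages this as a convergence argument on the set $\mathcal{C}(m,t)$ of not-yet-stable predecessors and a contradiction at a minimal element of its limit, rather than your well-founded induction on the lexicographic measure $(P(m),m)$, but that difference is largely cosmetic.

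The genuine gap is in your handling of the cycle case, which you rightly flag as the crux. Your dichotomy is: the family $\f'$ traced by a blocking cycle is either correct, in which case \reflem{sufficiency:liveness:fix:3} yields consistent positions, or faulty, in which case $\gamma$ removes it and ``it never contributes a blocking requirement''. Both horns are flawed. For the second: an edge $(g_k,g_{k+1})$ of the cycle imposes a blocking requirement as soon as $g_k$ and $g_{k+1}$ lie together in \emph{some} correct family, so $\gamma$ dropping the faulty traced family $\f'$ does not cancel the per-edge requirements contributed by other, correct families; a faulty $\f'$ can therefore still block. For the first: what your argument actually establishes is that each edge of the cycle lies in some correct family, not that $\f'$ itself is correct; and \reflem{sufficiency:liveness:fix:3} needs the three groups $g_{k-1},g_k,g_{k+1}$ to sit in a \emph{single} correct family before it equalizes the two positions of $m_k$ in $\LOG_{g_{k-1} \inter g_k}$ and $\LOG_{g_k \inter g_{k+1}}$ --- ``each edge separately lies in some correct family'' does not supply that hypothesis, so your measure $P(m_k)$ is not known to be well defined around the cycle. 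The paper closes this differently: each blocking edge $m_k <_{\LOG_{g_k \inter g_{k+1}}} m_{k+1}$ is witnessed at a \emph{correct} process of $g_k \inter g_{k+1}$ (the one whose $\stabilize$ is pending; such a process exists precisely because the pair lies in a correct family), hence every edge of the Hamiltonian cycle $g_1 g_2 \ldots g_K g_1$ has a correct intersection, hence $\f'$ itself is correct, and only then is \reflem{sufficiency:liveness:fix:3} applied inductively around the cycle to derive $m_1 < m_1$. Your proof needs this step; the remaining ingredients (finiteness of the blocking set via group-sequentiality, eventual commitment of the predecessors, firing of $\stabilize$ once they are stable) match the paper's.
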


\begin{proof}

  Let $\mathcal{C}(m,t) \equaldef \pred^{t}(m) \inter (\phi \leq \phCommit)^{t}$.
  The series $(\mathcal{C}(m,t))_{t}$ is defined over $2^{\msgSet}$, which is a Cauchy space, hence it may converge.
  Assume that the lemma below holds.

  \begin{claim}
    \labclaim{sufficiency:liveness:stable:1}
    $\lim_{t \rightarrow \infty} \mathcal{C}(m,t) = \emptySet$
  \end{claim}

  By \reflem{sufficiency:liveness:commit}, message $m$ is eventually $\phCommit$ at process $p$, say at time $t_0$.
  Consider the first moment in time at which $(\mathcal{C}(m,t))_{t}$ reaches its limit (by \refclaim{sufficiency:liveness:stable:1}), say $t_1$.
  At $\max(t_0,t_1)$, at process $p$, all the preconditions at \reflinestwo{sufficiency:stable:2}{sufficiency:stable:3} are true.
  Hence $p$ moves $m$ to the $\phStable$ phase, as required.
  The remainder of the proof is devoted to showing that \refclaim{sufficiency:liveness:stable:1} holds.

  \begin{claim}
    \labclaim{sufficiency:liveness:stable:2}
    $\forall m' \sep m' \in \mathcal{C}(m,t) \land \exists t'>t \sep m' \notin \mathcal{C}(m,t') \implies \forall t'' > t' \sep m' \notin \mathcal{C}(m,t'')$
  \end{claim}

  \begin{proof}
    By definition of $\mathcal{C}(m,t)$ and \refclaim{sufficiency:trivias:8}.
  \end{proof}
  
  \begin{claim}
    \labclaim{sufficiency:liveness:stable:3}
    $(\mathcal{C}(m,t))_{t}$ has a finite upper bound in $2^{\msgSet}$.
  \end{claim}
  
  \begin{proof}
    Immediate because the problem being solved is group sequential atomic multicast.
  \end{proof}
  
  Let $\mathcal{C}$ be the finite upper bound of $(\mathcal{C}(m,t))_{t}$.
  For $m' \in \mathcal{C}$, define $t_c$ either the point in time satisfying \refclaim{sufficiency:liveness:stable:2}, or $0$ if there is no such point.
  Let $T=\max_{m' \in \mathcal{C}} t_c$.

  \begin{claim}
    \labclaim{sufficiency:liveness:stable:4}
    $\forall t \geq T \sep \mathcal{C}(m,t+1) \subseteq \mathcal{C}(m,t)$
  \end{claim}

  \begin{proof}
    By definition of $t_c$ for every message $m' \in \mathcal{C}$.
  \end{proof}

  From what follows $(\mathcal{C}(m,t))_{t}$ is monotonically decreasing after time $T$.
  Let $\mathcal{C}$ be its limit.
  We now establish that $\mathcal{C}$ is empty.
  If this is the case, $(\mathcal{C}(m,t))_{t}$ converges toward $\emptySet$, as required by \refclaim{sufficiency:liveness:stable:1}.

  \begin{claim}
    \labclaim{sufficiency:liveness:stable:5}
    $\alwaysTL((\mathcal{C},<)~\text{is a pos.})$.
  \end{claim}

  \begin{proof}
    (By contradiction)
    Let $C=m_1 < \ldots < m_K < m_1$ be a cycle in $(\mathcal{C},<)$, with $K = \cardinalOf{C} \geq 2$.
    We note $\f'=(g_k)_{k \in [1,K]}$ the family of destination groups.
    Without lack of generality, consider that there is no sub-cycle in $C$.
    First, since there is no sub-cycle in $C$, all messages in $C$ must have different destination groups.
    Indeed, if $m_{k}$ and $m_{k'}$ are both addressed to the same destination group then they are both forever in some phase $\phi \leq \phCommit$.
    This contradicts that the problem solved is group sequential atomic multicast.
    Second, applying the definition of $<$, for each $k \in [1,K]$, there exists a correct process $p_k$ and some log $L_k \in \LOGS(p_k)$ such that $m_k <_{L_k} m_{\modOf{k+1}{K}}$.
    Hence from what precedes,
    \begin{inparaenum}
    \item $\pi=g_1g_2{\ldots}g_{K}g_1$ is a closed path in the intersection graph of $\f'$, and
    \item for every $k \in [1,K]$, $p_k \in g_k \inter g_{k+1}$ is correct.
    \end{inparaenum}%
    It follows that $\f'$ is correct.
    Hence, we can apply inductively \reflem{sufficiency:liveness:fix:3} on each $(m_{k},m_{k+1})$, with $k \in [1,K]$.
    It follows that $m_1 < m_1$;
    a contradiction to the definition of a log.
  \end{proof}
  
  By \refclaim{sufficiency:liveness:stable:5}, $\mathcal{C}$ is partially ordered by $<$.
  Pick $m_0$ an element with no predecessor in $\mathcal{C}$ wrt.relation $<$, and name $h=\dst(m_0)$.
  Below , we prove that $m_0$ is eventually $\phStable$ at every correct process in $h$.
  
  \begin{claim}
    \labclaim{sufficiency:liveness:stable:6:1}
    $\forall p \in h \inter \correct \sep \forall \f \in \Fa(p) \inter \correct \sep \forall h' \in \f(h) \sep \eventuallyTL((m_0,h') \in \LOG_{h})$
  \end{claim}
  
  \begin{proof}
    Since $\f$ is correct, there exists a correct process in $h \inter h'$, say $q$.
    At process $q$, \refline{sufficiency:stabilize:2} is eventually always true by \reflem{sufficiency:liveness:commit}, say at time $t_0$.
    By the definition of $m_0$, there exists a time $t_1$ after which \refline{sufficiency:stabilize:4} is also always true, at say $t_1$.
    Since $h' \in \Gr(q)$, $q$ eventually executes $\stabilize(m_0,h')$ after time $\max(t_0,t_1)$.
  \end{proof}

  \begin{claim}
    \labclaim{sufficiency:liveness:stable:6:2}
    $\forall p \in h \inter \correct \sep \eventuallyTL(\phase^{p}[m_0] \geq \phStable)$
  \end{claim}

  \begin{proof}
    According to \reflem{sufficiency:liveness:commit}, message $m_0$ is eventually $\phCommit$ at $p$.
    Observe that the case $h \notin \Fa$ is immediate.
    Indeed, once $m_0$ is committed, the precondition at \refline{sufficiency:stable:3} is vacuously true.
    Thus consider some family $\f \in \Fa(p)$ with $h \in \f$.
    By \refclaim{sufficiency:liveness:stable:6:1}, for each $h' \in \f(h)$, a tuple $(m_0,h')$ is eventually in $\LOG_h$.
    Hence the preconditions at \reflinestwo{sufficiency:stable:2}{sufficiency:stable:3} are both eventually true, leading $p$ to move $m_0$ to the $\phStable$ phase.
  \end{proof}
  
  Clearly $\pred(m_0,p)$ is empty if $p \notin h$.
  By \refclaim{sufficiency:liveness:stable:6:2}, it is also empty eventually at $p \in h$.
  Since $\mathcal{C}(m,t) = \pred^{t}(m) \inter (\phi \leq \phCommit)^{t}$, $m_0$ is eventually not in $\mathcal{C}(m,t)$;
  a contradiction to $m_0 \in \mathcal{C}$.  
  
\end{proof}

\begin{lemma}
  \lablem{sufficiency:liveness:delivered}
  $\forall p \in \correct \inter g \sep \alwaysTL(m \in \LOG_g \implies \eventuallyTL(\phase^{p}[m] = \phDelivered))$
\end{lemma}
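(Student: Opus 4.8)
The plan is to derive delivery from stability by a well-founded induction on the local predecessor order $<_p$. Fix $p \in \correct \inter g$ and a message $m \in \LOG_g$; the goal is to show $m$ eventually reaches $\phDelivered$ at $p$. Recall that the $\deliver(m)$ action fires once (i) $m$ is $\phStable$ at $p$ (\refline{sufficiency:deliver:2}) and (ii) every local predecessor $m' <_p m$ is already $\phDelivered$ at $p$ (\refline{sufficiency:deliver:3}, which I read, following the overview, as ranging over all logs of $p$). Condition (i) is handed to us directly by \reflem{sufficiency:liveness:stable}. Since all shared objects are wait-free, no correct process blocks, so it suffices to show that both preconditions eventually hold simultaneously and permanently; the action is then taken.

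First I would argue that \reflem{sufficiency:liveness:stable} applies to every local predecessor, so that the induction stays inside its scope. If $m' <_p m$, then $m' <_L m$ for some $L = \LOG_{g' \inter h'}$ with $g', h' \in \Gr(p)$. By \refclaim{sufficiency:trivias:3}, $\dst(m') \in \{g', h'\}$, hence $p \in \dst(m')$; moreover $m'$ was placed in an intersection log only through the $\pending$ action, whose precondition \refline{sufficiency:pending:3} requires $m' \in \LOG_{\dst(m')}$, and by \refclaim{sufficiency:trivias:logs:1} it remains there forever. Hence $p \in \correct \inter \dst(m')$ and $m' \in \LOG_{\dst(m')}$, so \reflem{sufficiency:liveness:stable} yields that $m'$ is eventually $\phStable$ at $p$, in particular eventually $\phCommit$.

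The induction is over $<_p$, which by \reflem{sufficiency:liveness:acyclic} is a partial order on $(\phi \geq \phCommit)^p$, and is therefore acyclic on exactly the messages reached above. The base case is a predecessor-free $m$: precondition (ii) is then vacuous and delivery follows immediately from (i). For the inductive step, every $m' <_p m$ is eventually $\phStable$ (previous paragraph) and, being strictly below $m$ in $<_p$, is eventually $\phDelivered$ by the induction hypothesis; once all of them are delivered and $m$ is stable, both preconditions hold, and by the no-blocking observation $p$ takes the $\deliver(m)$ action, setting $\phase^{p}[m] = \phDelivered$.

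The main obstacle is turning this informal induction into a terminating argument, i.e.\ showing that the set of not-yet-delivered local predecessors is finite and eventually fixed, so that the induction hypothesis can actually be discharged. This is precisely the difficulty already met in \reflem{sufficiency:liveness:stable}, and I would resolve it the same way: track $\mathcal{D}(m,t) \equaldef \pred^{t}(m) \inter (\phi \leq \phStable)^{t}$, show it admits a finite upper bound using the group-sequential property (as in \refclaim{sufficiency:liveness:stable:3}), show it is eventually monotonically decreasing, and then use the acyclicity of $<$ (\refclaim{sufficiency:liveness:stable:5}) to select a $<$-minimal element of its limit and derive a contradiction unless that limit is $\emptySet$. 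Once $\mathcal{D}(m,t) \to \emptySet$ and $m$ is $\phStable$, both preconditions of $\deliver(m)$ are continuously enabled and the action is taken, giving $\eventuallyTL(\phase^{p}[m] = \phDelivered)$ as required.
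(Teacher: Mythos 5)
Your proposal is correct and follows essentially the same route as the paper: obtain stability from \reflem{sufficiency:liveness:stable}, use \reflem{sufficiency:liveness:acyclic} to make the set of undelivered local predecessors well-ordered, and drain it by repeatedly delivering a minimal element until the preconditions of $\deliver(m)$ hold (the paper's $\mathcal{S}(m,t)$ is your $\mathcal{D}(m,t)$ restricted to $<_p$). Your extra check that \reflem{sufficiency:liveness:stable} indeed applies to every predecessor is a detail the paper leaves implicit, but it does not change the argument.
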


\begin{proof}
  Let $\mathcal{S}(m,t) \equaldef \{m' <_{p} m : \phase^{p,t}[m'] \leq \phStable \}$.
  By \reflem{sufficiency:liveness:stable}, $m$ is eventually $\phStable$ at $p$, say at time $t_0$.
  Consider some $t \geq t_0$.
  By \reflem{sufficiency:ordering:1} and \refclaim{sufficiency:trivias:logs:4}, $\mathcal{S}(m,t+1) \subseteq \mathcal{S}(m,t)$.
  By \reflem{sufficiency:liveness:acyclic}, $\mathcal{S}(m,t)$ is totally ordered by $(<_p)^{*}$.
  Let $m_0$ be the smallest element in $\mathcal{S}(m,t)$.
  By \reflem{sufficiency:liveness:stable} and the definition of $m_0$, the preconditions at \reflinestwo{sufficiency:deliver:2}{sufficiency:deliver:3} are eventually true for $m_0$.
  Hence, eventually $m_0 \notin \mathcal{S}(m,t)$.
  Applying inductively the above reasoning, $(\mathcal{S}(m,t))_{t}$ converges toward an empty set.
  When this happens, the preconditions at \reflines{sufficiency:deliver:2}{sufficiency:deliver:3} are true for message $m$
  As a consequence, process $p$ eventually delivers $m$.  
\end{proof}

As a corollary of the above lemma, we obtain that:

\begin{proposition}
  \labprop{sufficiency:liveness}
  \refalg{sufficiency} ensures the termination property of atomic multicast.
\end{proposition}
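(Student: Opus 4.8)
The plan is to derive the termination property directly from \reflem{sufficiency:liveness:delivered}, which already proves that every correct member of a destination group delivers a message residing in the corresponding log. Thus the whole work reduces to showing that the two triggering conditions of termination---either a correct process multicasts $m$, or some process delivers $m$---both imply that $m$ eventually settles in $\LOG_g$, where $g = \dst(m)$.

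First I would treat the multicast case. If a correct process executes $\multicast(m)$, its only effect (\refline{sufficiency:amcast:3}) is to append $m$ to $\LOG_g$; since the log is wait-free, the call returns and $m \in \LOG_g$ at that point. Next I would handle the delivery case. If some process delivers $m$, then its phase for $m$ equals $\phDelivered$, and by \refclaim{sufficiency:trivias:7} it must previously have passed through the $\phPending$ phase. The only action that sets the $\phPending$ phase (\reflines{sufficiency:pending:1}{sufficiency:pending:7}) is guarded by the precondition at \refline{sufficiency:pending:3}, which requires $m \in \LOG_g$. Hence in both cases $m$ belongs to $\LOG_g$ at some time, and by \refclaim{sufficiency:trivias:logs:1} we obtain $\eventuallyTL(\alwaysTL(m \in \LOG_g))$.

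Having established the hypothesis of \reflem{sufficiency:liveness:delivered}, I would invoke it to conclude that for every $p \in \correct \inter g$ we eventually have $\phase^{p}[m] = \phDelivered$. By construction of \refalg{sufficiency}, a process enters the $\phDelivered$ phase precisely when it executes the delivery action (\reflines{sufficiency:deliver:1}{sufficiency:deliver:4}), i.e.\ it delivers $m$ to the upper layer. Since $g = \dst(m)$, this is exactly the statement that every correct process in $\dst(m)$ eventually delivers $m$, which is the termination property.

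The proof itself is short: essentially all the difficulty has been discharged into the preceding lemmas, and in particular into \reflem{sufficiency:liveness:delivered}, whose argument rests on the acyclicity of $<_p$ (\reflem{sufficiency:liveness:acyclic}) and on the convergence of the predecessor sets (\reflem{sufficiency:liveness:stable}). The only genuinely new step here is the case analysis linking the two preconditions of termination to membership in $\LOG_g$; the subtle point is the delivery case, where one must walk back the phase progression via \refclaim{sufficiency:trivias:7} to land on the $\phPending$ precondition, rather than assuming that the delivering process is the sender---it need not be, and need not even be correct.
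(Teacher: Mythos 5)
Your proposal is correct and matches the paper's own argument: the paper likewise reduces termination to showing that both triggering conditions put $m$ into $\LOG_g$ (the multicast case via \refline{sufficiency:amcast:3}, the delivery case by walking back through \refclaim{sufficiency:trivias:7} to the precondition at \refline{sufficiency:pending:3}), and then concludes via \reflem{sufficiency:liveness:delivered}. No gap to report.
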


\subsubsection{Establishing the result}
\labsection{sufficiency:correctness:result}

The lemma below is necessary to prove the genuineness of \refalg{sufficiency}.

\begin{proposition}
  \labprop{sufficiency:genuineness}
  Given two intersecting groups $g$ and $h$, if no message is addressed to $h$ during a run, then only the processes in $g \inter h$ take steps to implement an operation of $\LOG_{g \inter h}$
\end{proposition}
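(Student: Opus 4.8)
The plan is to trace every step taken on behalf of $\LOG_{g \inter h}$ and show that, absent messages addressed to $h$, all of them stay within the adopt-commit layer that lives on $g \inter h$. First I would identify who can issue an operation on this log. Scanning \refalg{sufficiency}, the only accesses are the append at \refline{sufficiency:pending:5}, the bump at \refline{sufficiency:commit:6}, and the order queries appearing in the preconditions at \reflinestwo{sufficiency:stabilize:4}{sufficiency:deliver:3}; each is executed by a process $p$ with $g = \dst(m) \in \Gr(p)$ inside a loop ranging over $h \in \Gr(p)$, hence the caller satisfies $p \in g \inter h$. Thus no process outside $g \inter h$ ever invokes an operation on $\LOG_{g \inter h}$, and a process in $g \setminus h$ can take a step only by being dragged into the underlying consensus object $\CONS$, which, unlike the guarding adopt-commit object built from $\Sigma_{g \inter h}$, is implemented atop $g$ using $\Sigma_g \land \Omega_g$.

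Next I would pin down the log's content and the order in which it is driven. By \refclaim{sufficiency:trivias:3} every $m \in \LOG_{g \inter h}$ satisfies $\dst(m) \in \{g,h\}$, so under the hypothesis that no message is addressed to $h$ we obtain $\dst(m) = g$ for all of them. Because the problem solved is group sequential atomic multicast, the messages addressed to $g$ are totally ordered by $\hb$, each being delivered before the next is multicast. I would use this to argue that the stream of mutating operations submitted to the universal construction of $\LOG_{g \inter h}$ is the same at every member of $g \inter h$: appends follow the common $\LOG_g$-order enforced by the precondition at \refline{sufficiency:pending:3b}, and each $\bumpAndLock(m,k)$ uses the value $k$ decided by $\CONS_{m,\f}$ at \refline{sufficiency:commit:fix:2}, which is identical at all callers. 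Consequently every process in $g \inter h$ proposes the same value to each consensus slot.

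Finally I would invoke the contention-free fast structure of the slots \cite{AttiyaGK05}: each slot is guarded by an adopt-commit object \cite{Gafni98}, and $\CONS$ is reached only when the guard returns ``adopt''; by the convergence property of adopt-commit, unanimous proposals make the guard commit, so $\CONS$ is never called. The guard is built solely from $\Sigma_{g \inter h}$, and the order queries of the preconditions are then evaluated locally from the already-decided slots; hence no process beyond $g \inter h$ participates, which together with the first paragraph gives the claim. The main obstacle I anticipate is the middle step: converting the abstract group sequential property into the concrete statement that every member of $g \inter h$ proposes the same value to every slot, thereby excluding any residual contention that would force the guard to adopt. The crux is that at most one $g$-message is in flight at a time and both its append position (fixed by $\LOG_g$) and its bump target (fixed by the shared decision of $\CONS_{m,\f}$) are determined, so the operation sequence witnessed by the construction cannot diverge across callers.
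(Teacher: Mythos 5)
Your proposal is correct and follows essentially the same route as the paper: both arguments reduce the claim to showing that all callers submit the same sequence of operations (with identical arguments, the append order being fixed by $\LOG_g$ via the precondition at \refline{sufficiency:pending:3b} and the bump value by the shared decision of $\CONS_{m,\f}$), so the run on $\LOG_{g \inter h}$ is contention-free, only the adopt-commit guards built from $\Sigma_{g \inter h}$ are exercised, and the $g$-based consensus objects are never reached. Your version merely spells out a few steps the paper leaves implicit (that every caller lies in $g \inter h$, and why the bump arguments coincide), which is a fair elaboration rather than a different proof.
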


\begin{proof} 
  First, let us establish that if $p$ and $p'$ execute respectively the sequences of operations $\sigma$ and $\sigma'$ over $\LOG_{g \inter h}$, then $\sigma \prefix \sigma'$ or the converse holds.
  A call to $\LOG_{g \inter h}$ is executed at \reflinestwo{sufficiency:pending:5}{sufficiency:commit:6}.
  It requires that $m \in \LOG_g$ at that time.
  Moreover, a process may only call these two lines once and in this exact order.
  Consider that a process $q$ executes an operation $\op$ for $m$ then another operation $\op'$ for $m'$.
  For the sake of contradiction, assume $m' <_{\LOG_g} m$.
  Since $m' <_{\LOG_g} m$, when $q$ executes \refline{sufficiency:pending:5}, the precondition at \refline{sufficiency:pending:3b} must be true for $m'$.
  Thus the lines \reflinestwo{sufficiency:pending:5}{sufficiency:commit:6} were executed for $m'$ before.

  Then let us observe that, as processes in $g$ executes the operations of $\LOG_{g \inter h}$ in the exact same order then the run is contention free.
  As a consequence, only the AC objects are used to execute these operations and only the processes in $g \inter h$ take steps in the run, as required.  
\end{proof}

At the light of the preceding results, we may then conclude that the algorithm is correct.

\begin{theorem}
  \labtheo{sufficiency:main}
  \refalg{sufficiency} implements genuine group sequential atomic multicast.
\end{theorem}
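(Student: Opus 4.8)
The plan is to verify that \refalg{sufficiency} satisfies the four defining properties of genuine group sequential atomic multicast---integrity, ordering, termination, and minimality---under the standing assumption that the run is group sequential. Two of these are already available as stand-alone results: ordering follows directly from \refprop{sufficiency:ordering}, which shows that $\delOrder$ is acyclic, and termination follows from \refprop{sufficiency:liveness}. Hence the theorem reduces to assembling these with proofs of integrity and minimality.

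For integrity, I would argue the three clauses separately. \emph{At-most-once}: since phases are monotonically non-decreasing (\refclaim{sufficiency:trivias:8}) and the deliver action requires phase $\phStable$ (precondition at \refline{sufficiency:deliver:2}) while setting it to $\phDelivered$, the precondition can never hold again; this is exactly the terminality argument already made inside \reflem{sufficiency:ordering:2}. \emph{Membership}: the implication $\del_p(m) \implies p \in \dst(m)$ is \refclaim{sufficiency:trivias:5}. \emph{Previously multicast}: a delivered message lies in $\LOG_{\dst(m)}$ (\refclaim{sufficiency:trivias:6}); tracing how a bare message $m$ (as opposed to a tuple $(m,h,\any)$) can enter $\LOG_g$, the only such insertion is the $\append(m)$ at \refline{sufficiency:amcast:3}, which lives in the $\multicast(m)$ action, so $m$ was indeed multicast.

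Minimality is the step I expect to be the main obstacle, since it concerns the message-passing implementation of the shared objects rather than the high-level pseudocode. The key observation is that every action of \refalg{sufficiency} is guarded so that it handles a message $m$ with $\dst(m) = g \in \Gr(p)$, so $p \in g$. The group-local objects ($\LOG_g$ and the consensus objects $\CONS_{m,\f}$ with $\dst(m)=g$) are built entirely inside $g$ and hence cause steps only at members of $g$, for which a message addressed to a group containing them exists. The delicate case is the intersection logs $\LOG_{g \inter h}$: here I would invoke \refprop{sufficiency:genuineness}, which guarantees that if no message is addressed to $h$ in the run, then only the processes of $g \inter h$ take steps implementing $\LOG_{g \inter h}$. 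Since $g \inter h \subseteq g = \dst(m)$, any such process still belongs to a group to which a message was multicast, and symmetrically for $h$. Collecting these cases, any correct process that sends or receives a non-null message belongs to some destination group to which a message was multicast, which is precisely minimality.

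Finally, I would conclude by observing that these four properties together are the definition of genuine group sequential atomic multicast. The only real subtlety to be careful about is that the minimality argument leans on the contention-free fast-path construction of $\LOG_{g \inter h}$ (the adopt-commit guard) being faithfully captured by \refprop{sufficiency:genuineness}; everything else is a routine assembly of previously established invariants.
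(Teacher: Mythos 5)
Your proposal is correct and follows essentially the same route as the paper: the paper's proof is exactly this assembly, citing \refprop{sufficiency:ordering} and \refprop{sufficiency:liveness} for ordering and termination, dismissing integrity as immediate from the pseudo-code, and invoking \refprop{sufficiency:genuineness} for minimality. You merely spell out the integrity clauses and the minimality bookkeeping that the paper leaves implicit.
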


\begin{proof}
  Integrity follows from the pseudo-code of \refalg{sufficiency}.
  By \refproptwo{sufficiency:ordering}{sufficiency:liveness}, the algorithm satisfies the ordering and termination properties of atomic multicast.
  \refprop{sufficiency:genuineness} ensures that the solution is genuine.
\end{proof}
\fi

\section{Necessity}
\labsection{necessity}

Consider some environment $\E$, a failure detector $D$ and an algorithm $\A$ that uses $D$ to solve atomic multicast in $\E$.
This section shows that $D$ is stronger than $\WFD$ in $\E$.
To this end, we first use the fact that atomic multicast solves consensus per group.
Hence $\WFD$ is stronger than $\land_{g \in \Gr}~(\Omega_g \land \Sigma_g)$.
\refsection{necessity:sigma} proves that $D$ is stronger than $\Sigma_{g \inter h}$ for any two groups $g, h \in \Gr$.
Further, in \refsection{necessity:gamma}, we establish that $D$ is stronger than $\gamma$.
This last result is established when $D$ is realistic.
The remaining cases are discussed in \refsection{discussion}.

\subsection{Emulating $\Sigma_{g \inter h}$}
\labsection{necessity:sigma}

Atomic multicast solves consensus in each destination group.
This permits to emulate $\land_{g \in \Gr} \Sigma_g$.
However, for two intersecting groups $g$ and $h$, $\Sigma_g \land \Sigma_h$ is not strong enough to emulate $\Sigma_{g \inter h}$.%
\footnote{
  The two detectors may return forever non-intersecting quorums.
}
Hence, we must build the failure detector directly from the communication primitive.
\refalg{sigma} presents such a construction.
This algorithm can be seen as an extension of the work of Bonnet and Raynal~\cite{kset-bonnet} to extract $\Sigma_k$ when $k$-set agreement is solvable.
\refalg{sigma} emulates $\Sigma_{\inter_{g \in G} g}$, where $G \subseteq \Gr$ is a set of at most two intersecting destination groups.

At a process $p$, \refalg{sigma} employs multiple instances of algorithm $\A$.
In detail, for every group $g \in G$ and subset $x$ of $g$, if process $p$ belongs to $x$, then $p$ executes an instance $\A_{g,x}$ (\refline{sigma:var:1}).
Variable $Q_g$ stores the responsive subsets of $g$, that is the sets $x \subseteq g$ for which $\A_{g,x}$ delivers a message.
Initially, this variable is set to $\{g\}$.

\begin{algorithm}[!t]

  \small
  \caption{Emulating $\Sigma_{\inter_{g \in G} g}$ -- code at process $p$}
  \labalg{sigma}
  
  \begin{algorithmic}[1]

    \begin{variables}
      \State $(A_{g,x})_{g \in G, x \subseteq g \sep p \in x}$ \labline{sigma:var:1} 
      \State $(Q_g)_{g \in G} \assign \lambda g.\{g\}$ \labline{sigma:var:2}
      \State $(\qr_g)_{g \in G} \assign \lambda g.g$ \labline{sigma:var:3}
    \end{variables}

    \ForAll{$g \in G, x \subseteq g : p \in x$} \labline{sigma:amcast:1}
    \State \textbf{let} $m$ such that $\dst(m) = g \land \payload(m)=p$ \labline{sigma:amcast:2}
    \State $A_{g,x}.\multicast(m)$ \labline{sigma:amcast:3}
    \EndFor

    \When{$A_{g,x}.\deliver(\any)$} \labline{sigma:del:1}
    \State $Q_g \assign Q_g \union \{ x \}$ \labline{sigma:del:2}
    \EndWhen
    
    \When{$\query$}
    \If{$p \notin \biginter_{g \in G} g$} \labline{sigma:query:1}
    \Return $\bot$ \labline{sigma:query:2}
    \EndIf
    \ForAll{$g \in G$} \labline{sigma:query:4}
    \State $\qr_g \assign~\text{\textbf{choose}}~\underset{y \in Q_g}{\argmax}~\rank(y)$ \labline{sigma:query:5}
    \EndFor
    \Return $(\bigunion_{g \in G} \qr_g) \inter (\biginter_{g \in G} g)$ \labline{sigma:query:6}
    \EndWhen        

  \end{algorithmic}

\end{algorithm}

\refalg{sigma} uses the ranking function defined in \cite{kset-bonnet}.
For some set $x \subseteq \procSet$, function $\rank(x)$ outputs the rank of $x$.
Initially, all the sets have rank $0$.
Function $\rank$ ensures a unique property: a set $x$ is correct if and only if it ranks grows forever.
To compute this function, processes keep track of each others by exchanging (asynchronously) ``alive'' message.
At a process $p$, the number of ``alive'' messages received so far from $q$ defines the rank of $q$.
The rank of a set is the lowest rank among all of its members.

At the start of \refalg{sigma}, a process atomic multicasts its identity for every instance $\A_{g,x}$ it is executing (\refline{sigma:amcast:3}).
When, $\A_{g,x}$ delivers a process identity, $x$ is added to variable $Q_g$ (\refline{sigma:del:2}).
Thus, variable $Q_g$ holds all the instances $\A_{g,x}$ that progress successfully despite that $g \setminus x$ do not participate.
From this set, \refalg{sigma} computes the most responsive quorum using the ranking function (\refline{sigma:query:5}).
As stated in \reftheo{necessity:sigma} below, these quorums must intersect at any two processes in $\inter_{g \in G} g$.

\begin{theorem}
  \labtheo{necessity:sigma}
  \refalg{sigma} implements $\Sigma_{\inter_{g \in G} g}$ in $\E$.
\end{theorem}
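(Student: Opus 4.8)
The plan is to verify directly the two defining properties of $\Sigma_{g \inter h}$ --- intersection and liveness --- for the value returned by \refalg{sigma}, together with the set-restriction behaviour. I focus on the representative case $G=\{g,h\}$ with $g \inter h \neq \emptySet$ and write $P = g \inter h$ (the case $G=\{g\}$ is the classical single-group extraction of $\Sigma_g$ and is simpler). The restriction is immediate: at $p \notin P$ the query returns $\bot$ (\refline{sigma:query:2}), matching the definition of a set-restricted detector. A useful first remark is that a process only runs the instances $\A_{g,x}$ with $p \in x$, so every set ever inserted into $Q_g$ at $p$ contains $p$; in particular $\qr_g, \qr_h \ni p$. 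Hence for $p \in P$ the returned set $(\qr_g \union \qr_h) \inter P$ already contains $p$ and is non-empty. It then remains to establish liveness and intersection, both of which I reduce to a single combinatorial lemma on responsive sets.

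For liveness I would fix a correct $p \in P$ and a group $c \in \{g,h\}$. All members of $\correct \inter c$ run $\A_{c,\correct \inter c}$ and multicast their identity initially; since the non-participants $c \setminus \correct$ are exactly the faulty members of $c$, the termination property of $\A$ guarantees that every correct process of $c$ delivers, so $\A_{c,\correct \inter c}$ eventually delivers at $p$ and $\correct \inter c \in Q_c$. By the defining property of $\rank$, the rank of the correct set $\correct \inter c$ grows without bound while every set containing a faulty process has bounded rank; as there are finitely many sets, from some time on the $\argmax$ at \refline{sigma:query:5} is always a set all of whose members are correct, i.e. $\qr_c \subseteq \correct$. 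Thus the returned value is eventually a subset of $\correct \inter P$, which is the liveness property of $\Sigma_P$.

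For intersection I reduce to the following key lemma: \emph{for any $p,q \in P$ and any responsive sets $x \in Q_g$ at $p$ and $y \in Q_h$ at $q$ (so $p \in x \subseteq g$ and $q \in y \subseteq h$), we have $x \inter y \neq \emptySet$.} Granting this, consider two queries returning $V_p$ at $p$ and $V_q$ at $q$, and apply the lemma to $x=\qr_g$ (at $p$) and $y=\qr_h$ (at $q$): then $\emptySet \neq x \inter y \subseteq g \inter h = P$, and since $x \inter y \subseteq \qr_g \subseteq \qr_g \union \qr_h$ it lies in $V_p$, while $x \inter y \subseteq \qr_h$ of $q$ puts it in $V_q$, so $V_p \inter V_q \neq \emptySet$. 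To prove the lemma I follow the extraction technique of Bonnet and Raynal~\cite{kset-bonnet}, extended across two intersecting groups. Assume $x \inter y = \emptySet$. Responsiveness means that in the actual execution $\A_{g,x}$ delivers a message while only $x$ takes steps, and $\A_{h,y}$ delivers while only $y$ takes steps. Because $x$ and $y$ are disjoint, these two sub-executions can be replayed as two asynchronously isolated worlds of a single run $R^\ast$ of $\A$: the processes of $x$ multicast a message $m_g$ to $g$ and, seeing a prefix indistinguishable from $\A_{g,x}$, deliver $m_g$ before hearing anything about $h$, while symmetrically the processes of $y$ deliver $m_h$ to $h$ first. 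Since $p \in x \inter P$ and $q \in y \inter P$ are \emph{distinct} members of $g \inter h$, once the two worlds are finally connected termination forces $p$ to also deliver $m_h$ and $q$ to also deliver $m_g$, yielding $m_g \delOrderOf{p} m_h$ and $m_h \delOrderOf{q} m_g$, i.e. a cycle $m_g \delOrder m_h \delOrder m_g$ contradicting the ordering property of atomic multicast.

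I expect the delicate step to be the construction of $R^\ast$, specifically arguing that the failure-detector outputs fed to the two worlds are jointly consistent with a single failure pattern so that $R^\ast$ is a legal run of $\A$ using $D$. Disjointness of $x$ and $y$ is exactly what makes this possible, since no process is assigned two conflicting detector histories, and on each individual world this is the single-group argument of \cite{kset-bonnet} reused verbatim. The genuinely new ingredient is the cross-group ordering step above, which turns the independence of the two worlds into a cycle of $\delOrder$ pinned at the two intersection processes $p$ and $q$; the whole point of having each responsive set contain its own querying process is precisely to guarantee that these two witnesses live inside $g \inter h$ and are distinct.
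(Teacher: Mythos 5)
Your proposal is correct and follows essentially the same route as the paper's proof: the range check at the query lines, liveness via the instance $\A_{c,\correct \inter c}$ combined with the unbounded-rank characterization of correct sets, and intersection by gluing the two disjoint responsive sub-executions into a single run of $\A$ in which the two witnesses in $g \inter h$ produce a cycle in $\delOrder$. The only differences are presentational (you state the intersection step as a positive lemma rather than by contradiction, and your appeal to termination before exhibiting the cycle is unnecessary since $m \delOrderOf{p} m'$ already holds once $p$ delivers $m$ without having delivered $m'$); the "delicate step" you flag is exactly what the paper discharges with its FLP-style composition lemmas for runs over disjoint process sets with a common failure pattern and history.
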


\iflongversion
\begin{proof}
  For starters, we establish that the range of \refalg{sigma} is valid.
  Let us observe that if $p$ does not belong to $\biginter_{g \in G} g$, then \refalg{sigma} always returns $\bot$ (\refline{sigma:query:2}).
  In the converse case, $p$ executes \refline{sigma:query:6} when it inquiries the emulated failure detector.
  Initially, variable $qr_g$ contains $g$.
  When it is changed to some set value $x$ (\refline{sigma:query:5}), $x \subseteq g$ holds with $p \in x$ (\refline{sigma:del:2}) 
  Thus, \refalg{sigma} always returns at $p$ a non-empty set in the range of $\biginter_{g} g$, as required.
  
  It remains to prove that \refalg{sigma} also ensure the two key properties of the failure detector.
  To this end, consider a run $R$ of \refalg{sigma} during which correct processes take an unbounded amount of steps $S$.
  Let $F$ and $H$ be respectively the failure pattern and the history of $D$ in $R$.

  \subparagraph*{(Intersection)}
  For the sake of contradiction, assume $p, q \in \biginter_g g$ return at \refline{sigma:query:6} two disjoint subsets $s$ and $s'$.
  Hence at process $p$, each $(\A_{g,qr_g})_{g}$ delivered some identity in $R$ and we have $s = (\bigunion_g qr_g) \inter (\biginter_g g)$.
  The same observation holds at process $q$ for some family $(qr'_g)_{g}$.  
  
  We claim that there exist two disjoint subsets $x$ and $y$ respectively in $(qr_g)_{g}$ and $(qr'_g)_{g}$.
  This is trivial if $G$ contains a single group, with $s=x$ and $s'=y$.
  Otherwise, $s=(qr_g \union qr_h) \inter (g \inter h)$ and $s'=(qr_g' \union qr_h') \inter (g \inter h)$, with $g \neq h$.
  Hence, we have immediately that $qr_g$ and $qr'_h$ are disjoint.
  
  Let $\hat{p}$ and $\hat{q}$ be the two identities returned respectively in instances $\A_{g,x}$ and $\A_{h,y}$ during run $R$.    
  Applying \reflem{model:1}, there exists a run $R_x$ of algorithm $\A$ with steps $S|\A_{g,x}$ and with history $D$.
  Similarly, there exists a run $R_y$ of $\A$ with steps $S|\A_{h,y}$ and the same history.
  Since $x$ and $y$ are disjoint, \reflem{model:3} tells us that there exists a run of $\A$ with steps $(S|\A_{g,x}) \union (S|\A_{h,y})$.
  In this run, processes $p$ and $q$ deliver respectively identities $\hat{p}$ and $\hat{q}$ first.  
  A contradiction to the ordering property of atomic multicast.

  \subparagraph*{(Liveness)}
  Consider some correct process $p \in \biginter_g g$.
  We show that for each $g \in G$, eventually $\qr_{g}$ contains only correct processes at process $p$.
  Let $\mathcal{C}_g$ be $g \inter \correct(F)$.
  Applying \reflem{model:1}, there exists a run $\hat{R}$ of $\A_{g,\mathcal{C}_g}$ with history $F$ and steps $S|\A_{g,\mathcal{C}_g}$.
  To the correct processes in $\hat{R}$, this run is indistinguishable from a run $\hat{R'}$ in which the faulty processes in $g$ take no steps before they fail.
  As $\A$ solves genuine atomic multicast and there is no message addressed to groups outside $g$, every process in $\mathcal{C}_g$ delivers all the identities of $\mathcal{C}_g$ in $\hat{R'}$.
  Thus, this also happens in $R$.
  Consider a point in time from which $\mathcal{C}_g$ is in $Q_g$ at process $p$.
  Assume that $\qr_g$ takes infinitely often some value $x$ in $R$.
  It follows that quorum $x$ has infinitely often a higher rank than $\mathcal{C}_g$.
  From the definition of the ranking function, $x$ is a set of correct processes.
  Thus, for each $g \in G$, there exists a point in time $t_g$ after which $\qr_{g}$ contains only correct processes at process $p$.
  After time $t=\max_{g \in G} t_g$, only correct processes are returned at \refline{sigma:query:6}, as required.
\end{proof}
\fi

\subsection{Emulating $\gamma$}
\labsection{necessity:gamma}



\subparagraph*{Target systems}
A process $p$ is failure-prone in environment $\E$ when for some failure pattern $F \in \E$, $p \in \faulty(F)$.
By extension, we say that $P \subseteq \procSet$ is failure-prone when for some $F \in \E$, $P \subseteq \faulty(F)$.
A cyclic family $\f$ is failure-prone when one of its group intersections is failure-prone.
Below, we consider that $\E$ satisfies that
\begin{inparaenumorig}[]
\item if a process may fail, it may fail at any time
  (formally,
  $
  \forall F \in \E \sep
  \forall p \in \faulty(F) \sep
  \exists F' \in \E \sep
  \forall t \in \naturalSet \sep
  \forall t' < t \sep
  F'(t') = F(t') \land F'(t) = F(t) \union \{p\}
  $).
\end{inparaenumorig}%
We also restrict our attention to realistic failure detectors, that is they cannot guess the future \cite{realistic}.%
\iflongversion
\footnote{
  The definition is recalled in \refappendix{model}.
}
\fi

\subparagraph*{Additional notions}
Consider a cyclic family $\f$.
Two closed paths $\pi$ and $\pi'$ in $\cpaths(\f)$ are equivalent, written $\pi \equiv \pi'$, when they visit the same edges in the intersection graph.
A closed path $\pi$ in $\cpaths(\f)$ is oriented.
The direction of $\pi$ is given by $\dir(\pi)$.
It equals $1$ when the path is clockwise, and $-1$ otherwise (for some canonical representation of the intersection graph).
To illustrate these notions, consider family $\f$ in \reffigure{family:ig:1}.
The sequence $\pi=g_3g_1g_2g_3$ is a closed path in its intersection graph, with $\cardinalOf{\pi}=4$ and $\pi[0]=\pi[\cardinalOf{\pi}-1]=g_3$.
The direction of this path is $1$ since it is visiting clockwise the intersection graph of $\f$ in the figure.
Path $\pi$ is equivalent to the path $\pi'=g_1g_3g_2g_1$ which visits $\f$ in the converse direction.

\subparagraph*{Construction}
We emulate failure detector $\gamma$ in \refalg{gamma}.
For each closed path $\pi \in \cpaths(\f)$ with $\pi[0] \inter \pi[1]$ failure-prone in $\E$, \refalg{gamma} maintains two variables:
\begin{inparaenumorig}[]
\item an instance $\A_{\pi}$ of the multicast algorithm $\A$, and
\item a flag $\failed[\pi]$.
\end{inparaenumorig}%
Variable $\A_{\pi}$ is used to detect when a group intersection visited by $\pi$ is faulty.
It this happens, the flag $\failed[\pi]$ is raised.
When for every path $\pi \in \cpaths(\f)$, some path equivalent to $\pi$ is faulty, \refalg{gamma} ceases returning the family $\f$ (\refline{gamma:query:2}).

In \refalg{gamma}, for every path $\pi \in \cpaths(\f)$, the processes in $\pi[0] \inter \pi[1]$ multicast their identities to $\pi[0]$ using instance $\A_{\pi}$ (\reflinestwo{gamma:amcast:1}{gamma:amcast:2}).
In this instance of $\A$, all the processes in $\f$ but the intersection $\pi[0] \inter \pi[\cardinalOf{\pi}-2]$ participate (\refline{gamma:var:1}).
As the path is closed, this corresponds to the intersection with the last group preceding the first group in the path.

When $p \in \pi[i] \inter \pi[i+1]$ delivers a message $(\any,i)$, it signals this information to the other members of the family (\refline{gamma:signal:4}).
Then, $p$ multicasts its identity to $\pi[i+1]$ (\refline{gamma:signal:5}).
This mechanism is repeated until the antepenultimate group in the path is reached (\refline{gamma:signal:3}).
When such a situation occurs, the flag $\failed[\pi]$ is raised (\refline{gamma:update:2}).
This might also happen earlier when a message is received for some path $\pi'$ equivalent to $\pi$ and visiting $\f$ in the converse direction (\refline{gamma:update:3}).

Below, we claim that \refalg{gamma} is a correct emulation of failure detector $\gamma$.

\begin{algorithm}[!t]

  \small
  \caption{Emulating $\gamma$ -- code at process $p$}
  \labalg{gamma}

  \begin{algorithmic}[1]

    \begin{variables}
      \State $(\A_{\pi})_{\pi}$ \Comment{$\forall \f \in \Fa(p) \sep \forall \pi \in \cpaths(\f) \sep p \notin \pi[0] \inter \pi[\cardinalOf{\pi}-2]$} \labline{gamma:var:1}
      \State $\failed[\pi] \assign \lambda \pi. \false$ \labline{gamma:var:2}
    \end{variables}

    \vspace{.5em}
    
    \ForAll{$\A_{\pi} : p \in \pi[0] \inter \pi[1]$} \labline{gamma:amcast:1}
    \State $\A_{\pi}.\multicast(p,0)$ to $\pi[0]$ \labline{gamma:amcast:2}
    \EndFor

    \vspace{.5em}

    \begin{action}{$\signal(\pi,i)$} \labline{gamma:signal:1}
      \Precondition $\A_{\pi}.\deliver(\any,i)$ \labline{gamma:signal:2}
      \Precondition $i < \cardinalOf{\pi}-2 \land p \in \pi[i+1]$ \labline{gamma:signal:3}
      \Effect $\send(\pi,i)$ to $\f$ \labline{gamma:signal:4}
      \Effect $\A_{\pi}.\multicast(p,i+1)$ to $\pi[i+1]$  \labline{gamma:signal:5}
    \end{action}

    \vspace{.5em}

    \begin{action}{$\update(\pi)$} \labline{gamma:update:1}
      \Precondition $\exists \pi' \equiv \pi \sep rcv(\pi,j) \land \lor~ j = \cardinalOf{\pi}-3$ \labline{gamma:update:2}
      \Precondition \hspace{8.5em} $\lor~ (\rcv(\pi',0) \land \pi[j] = \pi'[0] \land \dir(\pi) = -\dir(\pi'))$ \labline{gamma:update:3}
      \Effect $\failed[\pi] \assign \true$ \labline{gamma:update:4}
    \end{action}
    
    \vspace{.5em}

    \When{$\query$} \labline{gamma:query:1}
    \Return $\{ \f \in \Fa(p) : \exists \pi \in \cpaths(\f) \sep \forall \pi' \equiv \pi \sep \failed[\pi'] = \false \}$ \labline{gamma:query:2}
    \EndWhen      
    
  \end{algorithmic}

\end{algorithm}

\begin{theorem}
  \labtheo{necessity:gamma}
  \refalg{gamma} implements $\gamma$ in $\E$.
\end{theorem}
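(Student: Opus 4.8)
The plan is to check that \refalg{gamma} meets the three obligations of $\gamma$: its output at $p$ is contained in $\Fa(p)$, and it satisfies accuracy and completeness. The range is immediate, since the query at \refline{gamma:query:2} only ever returns families of $\Fa(p)$. The two substantive obligations rest on one reading of the algorithm: for a path $\pi$, the forward relay carried by $\A_{\pi}$ advances one stage per traversed edge, so a signal $\rcv(\pi,j)$ holds (a process in $\pi[j]\inter\pi[j+1]$ has relayed) exactly when the edges $\pi[0]\inter\pi[1],\ldots,\pi[j]\inter\pi[j+1]$ each contained a process live long enough to forward the token. Raising the flag of $\pi$ at \refline{gamma:update:4} therefore corresponds either to the relay reaching the antepenultimate group (the guard $j=\cardinalOf{\pi}-3$ at \refline{gamma:update:2}, i.e.\ every edge but the excluded one $\pi[0]\inter\pi[\cardinalOf{\pi}-2]$ has relayed) or to the forward relay of $\pi$ and the reverse relay of some $\pi'\equiv\pi$ meeting at a common group $\pi[j]=\pi'[0]$ from opposite directions (\refline{gamma:update:3}).

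For accuracy I must show that whenever the flag of $\pi$ is set by time $t$, path $\pi$ visits an edge that is faulty at $t$; since equivalent paths visit the same edges and crashes are permanent, this makes $\f$ faulty at $t$ precisely when the query drops $\f$. The core is a reduction in the style of the $\Sigma$ proof. In the forward-completion case the relay induces a chain of delivery-order ($\delOrder$) constraints around the cycle that is one edge short of closing; were the excluded intersection $\pi[0]\inter\pi[\cardinalOf{\pi}-2]$ non-faulty at $t$, I would pick a process $q$ alive at $t$ there and, using \reflem{model:1} and the composition lemma \reflem{model:3} together with the fact that $\E$ lets $q$ fail only later (realism forbidding the detector to anticipate this), build a run of $\A$ in which $q$ participates in $\A_{\pi}$, delivers both the stage-$0$ and the stage-$(\cardinalOf{\pi}-2)$ messages, and thereby closes the chain into a $\delOrder$ cycle, contradicting the ordering property of $\A$. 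Hence that edge must be faulty. The meeting case is handled symmetrically: the two opposite relays cover the two arcs up to $\pi[j]$, and the single uncovered arc must contain a faulty edge, for otherwise the same construction would let the relays be completed on both sides and again close a delivery cycle.

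For completeness, suppose $\f$ is faulty at $t$ and let $p\in\correct$ with $\f\in\Fa(p)$. I want the query eventually to drop $\f$, i.e.\ every equivalence class (Hamiltonian cycle) of $\cpaths(\f)$ to contain a path whose flag is set. Since equivalent paths share edges and $\f$ is faulty, each class carries at least one permanently faulty edge. Fix a class and call an edge \emph{dead} when all its members eventually crash and \emph{alive} otherwise; at least one edge is dead. If exactly one edge is dead, take the path $\pi$ of the class that excludes that edge; every other edge is alive, so by the termination property of $\A$ the relay of $\A_{\pi}$ advances through stages $0,\ldots,\cardinalOf{\pi}-3$ and the guard at \refline{gamma:update:2} raises its flag. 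If two or more edges are dead, take a path $\pi$ whose forward relay, started on an alive edge, runs into a dead edge incident to some group $\pi[j]$: the relay reaches $\pi[j]$, while a correct process on the alive edge incident to $\pi[j]$ starts the reverse relay $\pi'\equiv\pi$ with $\pi'[0]=\pi[j]$, giving $\rcv(\pi',0)$; the meeting clause at \refline{gamma:update:3} then matches and the flag of $\pi$ is raised. Either way the class owns a failed path, and since flags are only ever raised (\refline{gamma:update:4}, initialised at \refline{gamma:var:2}) and never lowered, $\f$ leaves the output permanently.

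The main obstacle is the accuracy reduction: turning ``the relay completed'' or ``the two relays met'' into ``the witnessed intersection is faulty at $t$'' requires fabricating a run of $\A$ in which the spared process participates and forces a $\delOrder$ cycle, while (i) keeping the detector history identical up to $t$---this is exactly where realism and the fail-at-any-time assumption on $\E$ enter---and (ii) matching the steps of the separate instances $\A_{\pi}$ and $\A_{\pi'}$ to a single execution of $\A$ via \reflem{model:1} and \reflem{model:3}. A secondary, purely combinatorial difficulty is pinning down the index arithmetic of the meeting clause (which group is $\pi'[0]$ for the reverse orientation, and why covering both arcs up to $\pi[j]$ leaves exactly one uncovered arc) and checking the $\pi[0]\inter\pi[1]$ failure-prone side conditions, so that the instances $\A_{\pi}$ invoked in both directions are in fact maintained.
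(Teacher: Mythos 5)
Your range and completeness arguments track the paper's: the same case split (a single dead edge, handled by the forward relay reaching stage $\cardinalOf{\pi}-3$, versus several dead edges, handled by the meeting clause at \refline{gamma:update:3}) and the same mechanisms. The genuine gap is in accuracy. You keep the actual relay messages $m_0,m_1,\ldots$ of $\A_{\pi}$, observe that they already form a chain $m_0 \delOrder m_1 \delOrder \cdots$ covering every edge but the excluded one, and then try to close the cycle by letting a live process $q$ of the excluded intersection $\pi[0]\inter\pi[\cardinalOf{\pi}-2]$ participate and deliver both the first and the last message. That step does not yield a contradiction: termination only forces $q$ to deliver both messages in \emph{some} order, and a correct algorithm will simply have $q$ deliver $m_0$ before the last one (Ordering in fact forces exactly that), so no $\delOrder$ cycle appears. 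Nothing in your construction controls the order of $q$'s two deliveries, and you cannot glue in a separate run where $q$ delivers the last message while ignorant of $m_0$, because $q$ belongs to $\pi[0]$, which is precisely where $m_0$ lives; the process sets are not disjoint, so \reflem{model:3} and \reflem{model:4} do not apply in the way you invoke them.

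The paper's accuracy argument is structured differently and avoids this. It uses only the weakest consequence of any flag being raised --- that the stage-$0$ message $m$ of $\A_{\pi}$ was delivered by some $q\in\pi[0]\inter\pi[1]$ at a time $t$ up to which, by genuineness, only $\pi[0]\inter\pi[1]$ took steps --- and then crashes all of $\pi[0]\inter\pi[1]$ at $t$ (this is where failure-proneness of that edge, the fail-at-any-time assumption on $\E$, and realism of $D$ enter). In the continuation a \emph{fresh} chain $m_1,\ldots,m_{K-1}$ is multicast around the remaining, still-correct intersections; the closing link $m_{K-1}\delOrder m$ then holds because the deliverer of $m_{K-1}$ lies in $\pi[K-1]\inter\pi[0]$, was excluded from $\A_{\pi}$, and the only processes that ever delivered $m$ are now dead, so it delivers $m_{K-1}$ without having delivered $m$. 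That orphaning of $m$ is what makes the cycle unavoidable. Relatedly, your conclusion that the faulty edge is the \emph{excluded} one (or lies on ``the single uncovered arc'') is stronger than any such reduction gives and is not needed: it suffices that not all edges visited by $\pi$ are correct, since equivalent paths visit the same edges and the query drops $\f$ only when every equivalence class is flagged.
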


\iflongversion
\begin{proof}
  \input{figures-gamma}
  For starters, we observe that the range of \refalg{gamma} is correct, that is it always returns a subset of the cyclic families to which belongs the local process.
  Then, consider some failure-prone cyclic family $\f=(g_i)_{i \in [1,K>2]}$.

  \subparagraph*{(Completeness)}
  Assume that family $\f$ fails in a run $\R$ of \refalg{gamma}.
  If all the group intersections in $\f$ are faulty, then we are done.
  Otherwise, for every closed path $\pi \in \cpaths(\f)$, some group intersection $g \inter h$ visited by $\pi$ is faulty.
  Consider such a path $\pi$, and without lack of generality assume that
  \begin{inparaenum}
  \item $\pi$ visits $\f$ starting from $g_{1}$ in the direction of $g_2$, that is $\pi=g_1g_2{\ldots}g_Kg_1$,
  \item $g_{K} \inter g_{1}$ is faulty, and 
  \item some process $p \in g_{1} \inter g_{2}$ is correct.
  \end{inparaenum}%
  This is illustrated at the top of \reffigure{gamma}.

  Process $p$ executes \refline{gamma:amcast:2} for path $\pi$.
  The steps of $\A_{\pi}$ in $R$ defines a run $R'$ of $\A$ (\reflem{model:1} in \refappendix{model}).
  To processes in $\procSet \setminus (g_{K} \inter g_{1})$, the run $R'$ is indistinguishable from a run $R''$ in which the processes in $g_{K} \inter g_{1}$ takes no step before they crash.
  As $p$ is correct, a single message is addressed to $\pi[0]$ and $\A$ solves genuine atomic multicast, $p$ must deliver its identity in $R''$.
  Hence, $\A_{\pi}$ eventually delivers some message $(\any,0)$ at $p$ in $R$ at \refline{gamma:signal:2}.

  As detailed below, there are then two cases to consider.
  Both lead to the fact that eventually every correct process $q$ with $\f \in \Fa(q)$ executes $\update(\pi)$, setting $\failed[\pi]$ to $\true$.
  Since this holds, for every path in $\cpaths(\f)$, $\f$ is eventually excluded from the response value when the failure detector is queried at \refline{gamma:query:1}.
  \begin{itemize}
  \item
    For starters, consider that none of the group intersections except $g_K \inter g_1$ is faulty.
    From what precedes, $p$ executes \reflines{gamma:signal:4}{gamma:signal:5} after delivering $(\any,0)$ in instance $A_{\pi}$.
    By a short induction, eventually some correct process in $g_{K-1} \inter g_{K}$ sends a message $(\pi,K-2)$ to $\f$ at \refline{gamma:signal:4}.
    All the paths $\pi \in \cpaths(\f)$ have the same length, namely $K+1$.
    Hence, the precondition at \refline{gamma:update:2} is eventually true at all the correct processes in $\f$.
  \item
    Otherwise, there exists $j \in [0,K-2]$ such that
    \begin{inparaenumorig}[]
    \item $\pi[j] \inter \pi[j+1]$ is faulty, and
    \item for every $0 \leq k < j$, $\pi[k] \inter \pi[k+1]$ contains (at least) one correct process.
    \end{inparaenumorig}%
    Choose a correct process $q \in \pi[j-1] \inter \pi[j]$.
    By a short induction, $q$ eventually delivers a message $(\any,j-1)$ in $A_{\pi}$ at \refline{gamma:signal:2}.
    As $j-1 < K-1$ and $q \in \pi[j]$, this process sends a message $(\pi,j-1)$ to $\f$ at \refline{gamma:signal:4}.
    Let $\pi'$ be the path starting from $\pi[j]$ in the converse direction than $\pi$.
    Since $\pi[j] \inter \pi[j+1]$ is faulty, such an instance is executed in \refalg{gamma} at process $q$.
    By applying a similar reasoning to the one we conducted for $p$, eventually $\A_{\pi'}$ returns at that process.
    It follows that $q$ sends a message $(\pi',0)$ in $R$.
    As $\pi'[0]=\pi[j-1]$, the precondition at \refline{gamma:update:3} is eventually true at every correct process in $\f$.
  \end{itemize}

  \subparagraph*{(Accuracy)}
  The proof is by contradiction.
  It is illustrated at the bottom of \reffigure{gamma}.
  Consider a run $R$ of \refalg{gamma} such that
  \begin{inparaenum}
  \item family $\f$ is correct in $R$, and
  \item for some process $p$, with $\f \in \Fa(p)$, $p$ does not return $\f$ in $R$ when querying its failure detector at \refline{gamma:query:1}.
  \end{inparaenum}

  At the light of the pseudo-code of \refalg{gamma}, for all paths $\pi'$ in $\cpaths(\f)$, $p$ executes $\update(\pi)$ with $\pi \equiv \pi'$ in $R$ before calling the failure detector.
  Consider such a path $\pi$.
  According to the preconditions at \reflines{gamma:update:2}{gamma:update:3}, there exists some process $q \in \pi[0] \inter \pi[1]$ such that $q$ sends a message $(\pi,0)$ in $R$.
  For this to happen, process $q$ should have delivered a message $m=(\any,0)$ in $\A_{\pi}$ before (\refline{gamma:signal:2}).
  Let $t$ be the time at which the delivery of message $m$ happens first in the run $R$.
  
  In run $R$, until time $t$, only the processes in $\pi[0] \inter \pi[1]$ make steps.
  This comes from the fact that $\A$ is genuine and only $m$ with $\dst(m)=\pi[0]$ is multicast in the run so far.
  Furthermore, the processes in $\pi[0] \inter \pi[K-1]$ do not participate in $\A_{\pi}$ (see \refline{gamma:var:1}).
  
  The steps of $\A_{\pi}$ in $R$ until time $t$ defines a run $R'=(F,H,I,S,T)$ of $\A$, where $F$ is the failure pattern of $R$ and $H \in D(F)$.
  As $\pi \in \cpaths(\f)$, then $\pi[0] \inter \pi[1]$ is failure-prone.
  In environment $\E$, a failure-prone process may fail at any time.
  Thus, there exists $F' \in \E$ such that 
  \begin{inparaenum}
  \item $\forall t'\leq t \sep F(t')=F(t)$, and
  \item $\forall t'> t \sep F(t')=F(t) \union (\pi[0] \inter \pi[1])$.
  \end{inparaenum}%
  Because $D$ is realistic, there exists $H' \in D(F')$ such that $H'$ is identical to $H$ until time $t$.
  As no process takes a step after time $t$ in $R'$, $\hat{R}=(F',H',I,S,T)$ is a run of $\A$.  

  Now consider a (fair) run $\hat{R'}$ of $\A$ with the exact same failure pattern and failure history as in $\hat{R}$.
  In this run, processes take no step until time $t$.
  At some later time, a process in $\pi[1] \inter \pi[2]$ multicasts its identity in a message $m_1$ to $\pi[1]$.
  Then, for every $k \in [1,K-2]$, if $p \in \pi[k] \inter \pi[k+1]$ delivers a message $m_k$, then $p$ multicasts its identity in a message $m_{k+1}$ to $\pi[k+1]$.
  Since $\A$ solves atomic multicast in $\E$ and $\f \setminus (\pi[K] \inter \pi[1]) \subseteq \correct(F')$, for every $k \in [1,K-1]$, some correct process $p_k$ in $\pi[k] \inter \pi[k+1]$ delivers message $m_{k \in [1,K-1]}$ in run $\hat{R'}$.
  This happens before the delivery (if any) of $m_{k+1}$.
  As a consequence, we have during this run the following delivery relation: $m_1 \delOrder m_2 \delOrder \ldots \delOrder m_{K-1}$.
  
  Let $S'$ be the schedule of the processes in $\hat{R'}$ until the moment in time where message $m_{K-1}$ is delivered in $\pi[K] \inter \pi[K-1]$.
  The steps in $S$ and $S'$ are from different processes.
  Furthermore by construction the first step in $S'$ is taken in time after the last step in $S$.
  Applying \reflem{model:4}, there exists some $T'$ such that $\overline{\R}=(F',H',\any,S \concat S',T')$ is a run of $\A$.
  In this run, relation $\delOrder$ over $\{m,m_1, \ldots, m_{K-1}\}$ is cyclic;
  contradiction.
  %
  %
\end{proof}
\fi

\section{Variations}
\labsection{variations}

This section explores two common variations of the atomic multicast problem.
It shows that each variation has a weakest failure detector stronger than $\WFD$.
The first variation requires messages to be ordered according to real time.
This means that if $m$ is delivered before $m'$ is multicast, no process may deliver $m'$ before $m$.
In this case, we establish that the weakest failure detector must accurately detect the failure of a group intersection.
The second variation demands each group to progress independently in the delivery of the messages.
This property strengthens minimality because in a genuine solution a process may help others as soon as it has delivered a message.
We show that the weakest failure detector for this variation permits to elect a leader in each group intersection.

\subsection{Enforcing real-time order}
\labsection{variations:rt}

Ordering primitives like atomic broadcast are widely used to construct dependable services \cite{ChandraGR07}.
The classical approach is to follow state-machine replication (SMR), a form of universal construction.
In SMR, a service is defined by a deterministic state machine, and each replica maintains its own local copy of the machine.
Commands accessing the service are funneled through the ordering primitive before being applied at each replica on the local copy.

SMR protocols must satisfy linearizability \cite{loo:syn:1468}.
However, as observed in \cite{BezerraPR14}, the common definition of atomic multicast is not strong enough for this:
if some command $d$ is submitted after a command $c$ get delivered, atomic multicast does not enforce $c$ to be delivered before $d$, breaking linearizability.
To sidestep this problem, a stricter variation must be used.
Below, we define such a variation and characterize its weakest failure detector.

\subsubsection{Definition}
\labsection{variations:rt:definition}

We write $m \rt m'$ when $m$ is delivered in real-time before $m'$ is multicast.
Atomic multicast is \emph{strict} when ordering is replaced with:
\begin{inparaenumorig}
\item[(\emph{Strict Ordering})]
  The transitive closure of $(\delOrder \union \rt)$ is a strict partial order over $\msgSet$.
\end{inparaenumorig}%
Strictness is free when there is a single destination group.
Indeed, if $p$ delivers $m$ before $q$ broadcasts $m'$, then necessarily $m \delOrderOf{p} m'$.
This explains why atomic broadcast does not mention such a requirement.
In what follows, we prove that strict atomic multicast is harder than (vanilla) atomic multicast.

\subsubsection{Weakest failure detector}
\labsection{variations:rt:wfd}


\subparagraph*{Candidate}
For some (non-empty) group of processes $P$, the \emph{indicator failure detector} $1^{P}$ indicates if all the processes in $P$ are faulty or not.
In detail, this failure detector returns a boolean which ensures that:

\begin{itemize}
\item[(\emph{Accuracy})] $\forall p \in \procSet \sep \forall t \in \naturalSet \sep 1^{P}(p,t) \implies P \subseteq F(t)$
\item[(\emph{Completeness})] $\forall p \in \correct \sep \forall t \in \naturalSet \sep P \subseteq F(t) \implies \exists \tau \in \naturalSet \sep \forall t' \geq \tau \sep 1^{P}(p,t')$
\end{itemize}

For simplicity, we write $1^{g \inter h}$ the indicator failure detector restricted to the processes in $g \union h$ (that is, the failure detector $1^{g \inter h}_{g \union h}$).
This failure detector informs the processes outside $g \inter h$ when the intersection is faulty.
Notice that for the processes in the intersection, $1^{g \inter h}$ does not provide any useful information.
This comes from the fact that simply returning always $\true$ is a valid implementation at these processes.

Our candidate failure detector is $\WFD \land (\land_{g,h \in \Gr}~1^{g \inter h})$.
One can establish that $\land_{g,h \in \Gr} 1^{g \inter h}$ is stronger than $\gamma$ (see \refprop{variations:indicator} below).
As a consequence, this failure detector can be rewritten as $(\land_{g,h \in \Gr}~\Sigma_{g \inter h} \land 1^{g \inter h}) \land (\land_{g \in \Gr}~\Omega_{g})$.

\begin{proposition}
  \labprop{variations:indicator}
  $\land_{g,h \in \Gr}~1^{g \inter h} \leq \gamma$
\end{proposition}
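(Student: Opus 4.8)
The plan is to turn $\gamma$'s family-level reports into a per-edge failure indicator. Fix two intersecting groups $g, h \in \Gr$; I must emulate $1^{g \inter h}$ at every $p \in g \union h$. For $p \in g \inter h$ the constant $\true$ is already a correct output, so the real work concerns $p \in (g \union h) \setminus (g \inter h)$, where I need a Boolean that is sound and eventually accurate for the event ``$g \inter h \subseteq F(t)$''. The single lever $\gamma$ offers is the notion of a faulty family: by definition $\isFaulty{\f}{t}$ holds exactly when every closed path in $\cpaths(\f)$ traverses a blocked edge, and an edge $(g',h')$ is blocked precisely when $g' \inter h' \subseteq F(t)$ -- the very event the indicators measure. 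Thus $\gamma$'s accuracy ($\f \notin \gamma(p,t) \implies \isFaulty{\f}{t}$) and completeness (a faulty family is eventually dropped forever) hand me, at each $p$, a sound and eventually-accurate bit per family $\f \in \Fa(p)$, and I would assemble the indicator out of these bits.

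Concretely, I would let the emulated $1^{g \inter h}$ at $p$ raise $\true$ once $\gamma$ reports faulty every cyclic family $\f \in \Fa(p)$ containing both $g$ and $h$ in which the edge $(g,h)$ lies on all of $\cpaths(\f)$ (so that blocking $(g,h)$ alone forces $\isFaulty{\f}{t}$). Completeness is the easy direction: if $g \inter h \subseteq F(t)$ holds from some time on, then $(g,h)$ is blocked forever, each such family is faulty, and by $\gamma$'s completeness the emulation eventually and permanently outputs $\true$ at every correct monitor of the edge. Here the environment's closure assumption (a process that may fail, may fail at any time) lets me take the relevant families to be genuinely failure-prone, so that $\gamma$ is obliged to track them.

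The hard part -- and the step I expect to be the main obstacle -- is accuracy: proving that the emulation raises $\true$ \emph{only} when $g \inter h$ has in fact wholly failed. The difficulty is a granularity mismatch: $\gamma$ exposes one bit per family, asserting merely that no closed path survives, whereas the indicator demands edge-level knowledge, namely that one designated intersection has fully crashed. To recover the edge fact from the family bits I would need an inversion argument over the intersection graph -- exhibiting, for the pair $(g,h)$, a collection of reported-faulty families whose simultaneous faultiness is attributable to the blocking of $(g,h)$ and of no other shared edge. This is delicate precisely when the available families fail to \emph{separate} $(g,h)$ from the neighbouring edges it shares cycles with; for instance, when the only cyclic family over $g$ and $h$ is a triangle $\{g,h,k\}$, its faultiness signals only that one of the three edges is blocked and cannot single out $(g,h)$. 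Carrying the accuracy argument through such configurations, and pinning down exactly which structural conditions on $\Gr$ make the separation possible, is where essentially all of the technical effort and all of the risk reside; I would isolate it as a separate lemma and attempt to discharge it using the realism of the detector together with the failure-pattern closure of $\E$.
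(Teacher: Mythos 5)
You have the reduction in the wrong direction. Whatever one makes of the symbol $\leq$, the claim being proved here---as the surrounding text states explicitly (``$\land_{g,h \in \Gr}~1^{g \inter h}$ is \emph{stronger} than $\gamma$'') and as its role in the section requires (it is what licenses rewriting the candidate $\WFD \land (\land_{g,h \in \Gr}~1^{g \inter h})$ with the $\gamma$ component dropped)---is that $\gamma$ can be emulated \emph{from} the conjunction of indicator failure detectors. The paper's proof is the short construction in that direction: for each cyclic family $\f \in \Fa$ and each closed path $\pi \in \cpaths(\f)$, whenever $1^{g \inter h}$ fires at some $p \in g \union h$ for an edge $(g,h)$ visited by $\pi$, process $p$ notifies the rest of the family; once such a notification has been received for every equivalence class of closed paths in $\cpaths(\f)$, the family $\f$ is declared faulty and removed from the emulated output. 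The accuracy and completeness of the emulated $\gamma$ then follow directly from the accuracy and completeness of the indicators, with none of the difficulties you anticipate.

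What you set out to do instead---manufacturing the edge-level indicator $1^{g \inter h}$ out of $\gamma$'s family-level bits---is not merely ``delicate''; it is provably impossible as soon as two groups intersect. That is precisely the content of \refcor{variations:indicator}: when the only cyclic families covering $g$ and $h$ contain a third group, $\gamma$ admits identical histories for a failure pattern in which $g \inter h$ crashes and one in which it does not (and when no cyclic family contains both $g$ and $h$, $\gamma$ says nothing about $g \inter h$ at all), so any emulation must violate either accuracy or completeness of $1^{g \inter h}$. The ``granularity mismatch'' you correctly identify as the main risk is therefore not a lemma awaiting a clever separation argument over the intersection graph---it is a genuine obstruction that no appeal to realism or to the failure-pattern closure of $\E$ can overcome. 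The proof must be carried out in the converse direction, as above.
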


\iflongversion
\begin{proof}
  We construct failure detector $\gamma$ as follows:
  Consider some cyclic family $\f \in \Fa$.
  For each path $\pi \in \cpaths(\f)$, when $p \in g \union h$ with $g,h \in \f$ triggers $1^{g \inter h}$, $p$ sends a message to the rest of the family.
  Once such a message is received for each class of equivalent paths in $\cpaths(\f)$, $\f$ is declared faulty.
  At the light of the accuracy and completeness properties of an indicator failure detector, this implementation is correct.
\end{proof}
\fi

\iflongversion
The comparison is strict in case there are at least two intersecting groups.
This comes from the fact that, when $\Fa=\emptySet$, or the cyclic family is initially faulty, $\gamma$ does not provide any useful information.

\begin{corollary}
  \labcor{variations:indicator}
  $(\exists g,h \in \Gr \sep g \inter h \neq \emptySet) \implies \gamma \not \leq \land_{g,h \in \Gr}~1^{g \inter h}$
\end{corollary}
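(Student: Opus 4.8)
The plan is to prove the statement in its contrapositive form: fixing two groups $g,h\in\Gr$ with $g\inter h\neq\emptySet$, I will show that no algorithm in the asynchronous message-passing model equipped with $\gamma$ can emulate $\land_{g,h\in\Gr}1^{g\inter h}$. An emulation of the conjunction would, by reading off one coordinate of its output, also emulate the single indicator $1^{g\inter h}$ for this pair; so it suffices to prove that $1^{g\inter h}$ is not emulable from $\gamma$, i.e.\ $\gamma\not\leq 1^{g\inter h}$. The whole argument rests on the fact, announced in the surrounding text, that $\gamma$ reports only the faultiness of cyclic families: once every family involving the process we watch is already faulty, the output of $\gamma$ becomes insensitive to whether $g\inter h$ later fails, whereas $1^{g\inter h}$ is by definition sensitive to exactly that event.

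First I would fix a process $p\in(g\union h)\setminus(g\inter h)$; such a $p$ exists because $g\neq h$ (if $g=h$ then $g\inter h=g$ and the indicator is degenerate). I then build two failure patterns $F_0,F_1\in\E$ on which $1^{g\inter h}$ is forced to disagree at $p$. In $F_1$ the whole set $g\inter h$ crashes at some time $t^*$, so by the \emph{completeness} of $1^{g\inter h}$ any emulation must output $\true$ at $p$ by some time $\tau\geq t^*$. In $F_0$ the set $g\inter h$ stays correct, so by \emph{accuracy} the emulation must output $\false$ at $p$ forever. To make the two patterns indistinguishable through $\gamma$ at $p$, I would, when $\Fa\neq\emptySet$, crash at time $0$ enough \emph{other} group intersections so that every family in $\Fa(p)$ is faulty from the outset in both $F_0$ and $F_1$, while leaving $g\inter h$ untouched; a legal $\gamma$-history then returns $\emptySet$ at $p$ throughout both runs (its accuracy clause holds because those families are faulty, its completeness clause because the omission is permanent). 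When $\Fa=\emptySet$ this preparation is vacuous and $\gamma$ is identically empty at $p$.

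With the values of $\gamma$ at $p$ pinned to the same empty history in both patterns, the argument reduces to a standard asynchronous indistinguishability. I would take the execution on $(F_1,\gamma=\emptySet)$ up to time $\tau$ and replay at $p$ the same sequence of steps and detector values on $F_0$, delaying beyond $\tau$ every message sent by a process of $g\inter h$---these processes are crashed in $F_1$ but merely slow in $F_0$, which asynchrony permits. Since $p$ then receives exactly the same messages and the same detector output, it runs identically and outputs $\true$ at time $\tau$ on $F_0$ as well. This contradicts the accuracy of $1^{g\inter h}$, because $g\inter h\not\subseteq F_0(\tau)$. Hence $\gamma\not\leq 1^{g\inter h}$, and a fortiori $\gamma\not\leq\land_{g,h\in\Gr}1^{g\inter h}$, which is the claim.

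The step I expect to be the main obstacle is guaranteeing, in full generality, that $\gamma$ can be blinded at $p$ without disturbing $g\inter h$---that is, that one can crash the intersections neutralising $\Fa(p)$ while keeping $g\inter h$ (and $p$) correct. The difficulty comes from processes lying in three or more groups, for which the auxiliary intersections may overlap $g\inter h$; in the extreme ``star'' configuration, where every pairwise intersection is a single common process, crashing any intersection crashes $g\inter h$ itself. I would resolve this by a case split on the chosen data: whenever some member of $(g\union h)\setminus(g\inter h)$ lies in no intersection at all I take it as $p$, so that $\Fa(p)=\emptySet$ makes $\gamma$ trivially constant and no preparation is needed; otherwise I exhibit, for each family of $\Fa(p)$, an edge distinct from $(g,h)$ and disjoint from $g\inter h$ whose failure renders that family faulty, and check that the resulting pattern belongs to $\E$ using the standing assumption that a failure-prone set may fail at any time. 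Establishing this disjointness---equivalently, that $\gamma$ can always be made oblivious to the single edge $(g,h)$---is the crux of the proof.
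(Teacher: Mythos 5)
Your overall strategy---an indistinguishability argument between a failure pattern $F_0$ where $g \inter h$ stays correct and a pattern $F_1$ where it crashes, after arranging the other failures so that $\gamma$ cannot tell the two apart---is the same as the paper's. But the proposal has two genuine gaps.

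First, by your own admission, the crux is unresolved: you need to make every family in $\Fa(p)$ faulty by crashing intersections disjoint from $g \inter h$, and your own ``star'' example shows this is not always possible, so the approach as stated can fail on legitimate instances. The paper does not crash selected edges at all; it splits on whether some cyclic family $\f \in \Fa$ contains both $g$ and $h$. If so, it crashes an entire third group $h' \in \f \setminus \{g,h\}$ (which exists since $\cardinalOf{\f} \geq 3$) at time $0$, which makes $\f$ faulty from the outset in both patterns; if not, it argues that $\gamma$ carries no information about the status of $g \inter h$ in the first place. Your case split gestures at something similar but never produces the required edge, and the configuration you flag is precisely one where no such edge exists.

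Second, you only blind $\gamma$ at the single process $p$. The hypothetical emulation is a distributed algorithm: correct processes other than $p$ and outside $g \inter h$ also query $\gamma$ and send messages to $p$. A family $\f'' \in \Fa(q)$ for such a $q$ may be faulty in $F_1$ but correct in $F_0$; completeness then forces $\gamma$ to eventually drop $\f''$ at $q$ in $F_1$ while accuracy forbids dropping it in $F_0$, so $q$ can relay the difference to $p$ well before time $\tau$. Your replay delays only the messages sent by processes of $g \inter h$ and therefore does not close this channel. What the argument actually requires---and what the paper's phrase ``$H_{\gamma}$ is also a valid history for $F'$'' is doing---is a single $\gamma$-history that is legal for \emph{both} failure patterns at \emph{every} process that can causally influence $p$'s output, not merely a pinned-down value at $p$.
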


\begin{proof}
  (By contradiction.)
  Assume that $\A$ emulates $\land_{g,h \in \Gr}~1^{g \inter h}$ from $\gamma$.
  Consider two (distinct) intersecting groups $g$ and $h$ in $\Gr$.
  Pick some process $p \in g \xor h$.
  \emph{(Case $\exists \f \in \Fa \sep g,h \in \f$).}
  Necessarily $\cardinalOf{\f} \geq 3$.
  Let $h'$ be a group in $\f$ distinct from $g$ and $h$.
  Consider a run in which $h'$ is initially faulty and both $g$ and $h$ are correct.
  Name $F$ its failure pattern, $H_{\gamma}$ the history of $\gamma$ in this run and $H_{\A}$ the emulated history.
  Since $g \inter h$ is correct, by the accuracy property of $1^{g \inter h}$, $H_{\A}(p,t)=0$ at all time $t$.
  Consider now the failure pattern $F'$ identical to $F$, except that $g \inter h$ is faulty at the start.
  Then, since $\f$ is always faulty, $H_{\gamma}$ is also a valid history for $F'$.
  Hence, when running $\A$ with failure pattern $F'$, $H_{\A}(p,t)=0$ at all time $t$ is a valid output.
  This contradicts the completeness property of $1^{g \inter h}$.
  \emph{(Otherwise).}
  The proof is similar to the case above.
  Namely, $\A$ cannot distinguish at any process, including $p$, a run in which $g \inter h$ is correct, from a run in which the group intersection is faulty.
\end{proof}
\fi

\subparagraph*{Necessity}
An algorithm to construct $1^{g \inter h}$ is presented in \refalg{indicator}.
It relies on an implementation $\A$ of strict atomic multicast that makes use internally of some failure detector $D$.
\refprop{variations:necessity} establishes the correctness of such a construction.

\begin{algorithm}[!t]

  \small
  \caption{Emulating $1^{g \inter h}$ -- code at process $p \in g \union h$}
  \labalg{indicator}

  \begin{algorithmic}[1]

    \begin{variables}
      \State $B \assign$ \textbf{if} $(p \in g \setminus h)$ \textbf{then} $\A_g$ \textbf{else if} $(p \in h \setminus g)$ \textbf{then} $\A_h$ \textbf{else} $\bot$ \labline{indicator:var:1} \Comment{$\A_g$ and $\A_h$ are distinct instances of $\A$}
      \State $\failed \assign \false$ \labline{indicator:var:2}
    \end{variables}

    \vspace{.5em}

    \If{$B \neq \bot$}
    \State $B.\multicast(p)$ \labline{indicator:1}
    \State \textbf{wait until} $B.\deliver(\any)$ \labline{indicator:2}
    \State $\send(\flagFailed)$ to $g \union h$ \labline{indicator:3}
    \EndIf

    \vspace{.5em}
    
    \When{$\rcv(\flagFailed)$} \labline{indicator:4}
    \State $\failed \assign \true$ \labline{indicator:5}
    \EndWhen

    \vspace{.5em}

    \When{$\query$} \labline{indicator:6}
    \Return $\failed$ \labline{indicator:7}
    \EndWhen      
    
  \end{algorithmic}

\end{algorithm}

\begin{proposition}
  \labprop{variations:necessity}
  \refalg{indicator} implements $1^{g \inter h}$.
\end{proposition}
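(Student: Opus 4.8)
The plan is to check the three defining aspects of $1^{g \inter h}$ in turn: that \refalg{indicator} returns a boolean at every process of $g \union h$ (and $\bot$ elsewhere, since the code runs only at $g \union h$), its completeness, and its accuracy. The range is immediate from the pseudo-code, so the substance is in the two remaining properties, and of these the accuracy argument is where the real work lies.

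For completeness, suppose $g \inter h \subseteq F(t)$ from some time on and let $p \in \correct \inter (g \union h)$. Since the intersection is faulty, $p$ lies in $g \setminus h$ or in $h \setminus g$; assume the former, so $B = \A_g$ and $p$ executes $\A_g.\multicast(p)$ (\refline{indicator:1}). In the instance $\A_g$ only group $g$ is addressed, so by the minimality (genuineness) of $\A$ the processes of $h \setminus g$ take no step in $\A_g$, the faulty intersection takes none either, and the correct members of $g$ are exactly $\correct \inter (g \setminus h)$. Hence the execution of $\A_g$ is a fair run of $\A$ (\reflem{model:1}) in which $g \inter h$ is faulty, and termination forces $p$ to deliver its own message. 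Then $p$ sends $\flagFailed$ to all of $g \union h$ (\refline{indicator:3}); every correct process of $g \union h$ receives it and sets $\failed$ to $\true$ (\reflinestwo{indicator:4}{indicator:5}), and thereafter returns $\true$. This is precisely the completeness of $1^{g \inter h}$.

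For accuracy I show that a flag is raised only once the intersection has failed. By contradiction, suppose a correct process sets $\failed$ while some $r \in g \inter h$ is still alive. The flag was emitted (\refline{indicator:3}) by a process that delivered a message in $\A_g$ or $\A_h$; by symmetry say $q \in g \setminus h$ delivered a message $m_q$ addressed to $g$ using steps of $g \setminus h$ only. Extracting this execution with \reflem{model:1}, and using that the processes of $g \setminus h$ are failure-prone and that $D$ is realistic (cf.\ \refsection{necessity:gamma}, as in \reftheo{necessity:gamma}), I obtain a run $\hat{\R}$ of $\A$ in which $q$ delivers $m_q$ and then $g \setminus h$ crashes, while $r$ stays correct. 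Independently I build a run $\hat{\R'}$ in which $g \setminus h$ is faulty from the start, nothing is addressed to $g$, and some $s \in h \setminus g$ multicasts a message $m_s$ to $h$ at a real time later than the delivery of $m_q$; since the intersection and $h \setminus g$ are correct there, termination makes $r$ and $s$ deliver $m_s$ (and only $m_s$, as $m_q$ is absent). Let $S$ be the schedule of $\hat{\R}$ up to the delivery of $m_q$ and $S'$ that of $\hat{\R'}$ up to $r$'s delivery of $m_s$.

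The two schedules involve disjoint sets of processes, namely $g \setminus h$ versus $(g \inter h) \union (h \setminus g)$, and $S'$ starts after $S$ in real time, so \reflem{model:4} yields a run $\overline{\R} = (F', H', \any, S \concat S', T')$ of $\A$. In $\overline{\R}$ the message $m_q$ is delivered (inside $S$) strictly before $m_s$ is multicast (inside $S'$), hence $m_q \rt m_s$, and $r$ delivers $m_s$ inside $S'$. Extending $\overline{\R}$ to a fair run keeps $r$ correct, so by termination $r$ must also deliver $m_q$, necessarily after $m_s$, giving $m_s \delOrderOf{r} m_q$. Then $m_q \rt m_s \delOrderOf{r} m_q$ is a cycle in the transitive closure of $\delOrder \union \rt$, contradicting Strict Ordering; so no flag can be raised while the intersection is alive, which is the accuracy of $1^{g \inter h}$. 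I expect the main obstacle to be exactly this step: justifying through \reflem{model:4} that $r$ is already committed to delivering $m_s$ before it can learn of $m_q$, so that prepending the $m_q$-delivery of $S$ does not let $\A$ retroactively reorder $r$'s deliveries. It is the real-time edge $m_q \rt m_s$ that closes the two-cycle, which is why vanilla ordering does not force the indicator and why strictness is essential here.
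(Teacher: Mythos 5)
Your proof is correct in substance and follows essentially the same route as the paper: completeness via indistinguishability from a run in which $g \inter h$ is dead from the outset, and accuracy by gluing (via \reflem{model:4}) the sub-run of $\A_g$ --- whose steps all lie in $g \setminus h$ --- with a fresh run in which a message is multicast to $h$, producing $m_q \rt m_s \delOrderOf{r} m_q$ and contradicting strict ordering. Two avoidable wrinkles: (i) you crash $g \setminus h$ using realism of $D$ and then give your two runs different failure patterns ($g \setminus h$ crashing after $t$ versus ``faulty from the start''), whereas \reflem{model:4} requires the same $F$ and $H$ for both --- the paper sidesteps this entirely by keeping the original failure pattern and history and never crashing anyone; (ii) you take the sender of $m_s$ in $h \setminus g$, which is empty when $h \subseteq g$, whereas choosing it in $g \inter h$ (as the paper does) covers all cases. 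Finally, $m_s \delOrderOf{r} m_q$ already holds the moment $r$ delivers $m_s$ without having delivered $m_q$, so the detour through termination to force $r$ to deliver $m_q$ afterwards is unnecessary.
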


\iflongversion
\begin{proof}
  We show successively that the two properties hold.
  
  \subparagraph*{(Accuracy)}
  Consider that a call to $\query$ returns $\true$ in a run $R=(F,H,\any,\any,\any)$ of \refalg{indicator}.
  For the sake of contradiction, assume that $p \in g \inter h$ is correct in $R$.
  In this run, some process $q$ delivers a message $m$ at \refline{indicator:2} using variable $B$.
  Without lack of generality, consider that at process $q$, variable $B$ is set to instance $\A_g$ (at \refline{indicator:var:1}).
  Let $R_g=(F,H,\any,S,T)$ be the sub-run of $\A_g$ in $R$ (\reflem{model:1} in \refappendix{model}).
  In $R_g$, only the processes in $g \setminus h$ make steps.
  Name $S' \prefix S$ the sequence of their steps until $m$ is delivered at process $q$.
  Then, for some appropriate timing $T' \prefix T$, $R'=(F,H,\any,S',T')$ is a run of $\A$.
  Let $t$ be the moment at which the last step of $S'$ occurs in $R'$.
  Run $R'$ is indistinguishable to $g \inter h$ until time $t$ from a run $\hat{R}'=(F,H,\any,\hat{S},\any)$ in which no message is multicast.
  Consider the continuation $R''$ of $\hat{R}'$ in which $p$ multicasts a message $m'$ to $h$ at time $t+1$.
  As process $p$ is correct, it eventually delivers $m'$ in $R''$.
  Since $\A$ is genuine, only the processes in $h$ take steps in $R''$.
  Let $S''$ be the sequence of steps of the destination group $h$ until $p$ delivers message $m'$.
  By \reflem{model:4}, for some appropriate timing $T''$, the run $(F,H,\any,S' \concat S'',T'')$ is a run of $\A$.
  In this run, $m \rt m'$ and $m' \delOrder_{p} m$;
  contradiction.
  
  \subparagraph*{(Completeness)}
  Consider a run of \refalg{indicator} during which $g \inter h$ is faulty.
  Then, this run is not distinguishable to the processes in $g \xor h$ from a run where $g \inter h$ is initially faulty.
  As a consequence, both $\A_g$ and $\A_h$ must deliver a message at \refline{indicator:2}.
  It follows that all the correct processes in $g \xor h$ eventually set variable $\failed$ to $\true$ at \refline{indicator:5}.
\end{proof}
\fi

\subparagraph*{Sufficiency}
The solution to strict atomic multicast is almost identical to \refalg{sufficiency}.
The only difference is at \refline{sufficiency:stable:3} when a message moves to the $\phStable$ phase.
Here, for every destination group $h$ with $h \inter g \neq \emptySet$, a process waits either that 
\begin{inparaenumorig}[]
\item $1^{g \inter h}$ returns $\true$, or
\item that a tuple $(m,h)$ appears in $\LOG_g$.
\end{inparaenumorig}%
From \refprop{variations:indicator}, we know that the indicator failure detector $1^{g \inter h}$ provides a better information than $\gamma$ regarding the correctness of $g \inter h$.
As a consequence, the modified algorithm solves (group sequential) atomic multicast.

Now, to see why such a solution is strict, consider two messages $m$ and $m'$ that are delivered in a run, with $g=dst(m)$ and $h=\dst(m')$.
We observe that when $m' \rt m$ or $m' \delOrder m$, $m'$ is stable before $m${\iflongversion~(i.e., at every time $t$, $m \in \phStable^{t} \implies m' \in \phStable^{t}$)\fi}, from which we deduce that strict ordering holds.

With more details, in the former case ($m' \rt m$), this comes from the fact that to be delivered a message must be $\phStable$ first (\refline{sufficiency:deliver:2}).
In the later ($m' \delOrder m$), when message $m$ is $\phStable$ at some process $p$, $p$ must wait a message $(m,h)$ in $\LOG_g$, or that $1^{g \inter h}$ returns $\true$.
If $(m,h)$ is in $\LOG_g$, then \refline{sufficiency:stabilize:5} was called before by some process $q$.
\iflongversion
Because both messages are delivered and $m' \delOrder m$, $m'$ must precede $m$ in $\LOG_{g \inter h}$ (\reflem{sufficiency:ordering:3} in \refsection{sufficiency}).
\else
Because both messages are delivered and $m' \delOrder m$, $m'$ must precedes $m$ in $\LOG_{g \inter h}$.
\fi
Thus the precondition at \refline{sufficiency:stabilize:4} enforces that $m'$ is $\phStable$ at $q$, as required.
Now, if the indicator returns $\true$ at $p$, $m' \delOrder m$ tells us that a process delivers $m'$ before $m$ and this must happen before $g \inter h$ fails.

\subsection{Improving parallelism}
\labsection{variations:strong}


As motivated in the Introduction, genuine solutions to atomic multicast are appealing from a performance perspective.
Indeed, if messages are addressed to disjoint destination groups in a run, they are processed in parallel by such groups.
However, when contention occurs, a message may wait for a chain of messages to be delivered first.
This chain can span outside of the destination group, creating a delay that harms performance and reduces parallelism~\cite{tempo,convoy}.
In this section, we explore a stronger form of genuineness, where groups are able to deliver messages independently.
We prove that, similarly to the strict variation, this requirement demands more synchrony than $\WFD$ from the underlying system.

\subsubsection{Definition}
\labsection{variations:strong:definition}

As standard, a run $\R$ is fair for some correct process $p$ when $p$ executes an unbounded amount of steps in $R$.
By extension, $\R$ is fair for $P \subseteq \correct(R)$, or for short $P$-fair, when it is fair for every $p$ in $P$.
If $P$ is exactly the set of correct processes, we simply say that $\R$ is fair.

\begin{itemize}
\item[(\emph{Group Parallelism})]
  Consider a message $m$ and a run $\R$.
  Note $P=\correct(R) \inter \dst(m)$.
  If $m$ is delivered by a process, or atomic multicast by a correct process in $R$, and $R$ is $P$-fair, then every process in $P$ delivers $m$ in $R$.
\end{itemize}

Group parallelism bears similarity with $x$-obstruction freedom \cite{Taubenfeld17}, in the sense that the system must progress when a small enough group of processes is isolated.
A protocol is said \emph{strongly genuine} when it satisfy both the minimality and the group parallelism properties.

\subsubsection{About the weakest failure detector}
\labsection{variations:strong:wfd}

Below, we establish that $(\land_{g, h \in \Gr}~\Omega_{g \inter h})$ is necessary.
It follows that the weakest failure detector for this variation is at least $\WFD \land (\land_{g, h \in \Gr}~\Omega_{g \inter h})$.

\subparagraph*{Emulating $\land_{g,h \in \Gr}~\Omega_{g \inter h}$}
Consider some algorithm $\A$ that solves strongly genuine atomic multicast with failure detector $D$.
Using both $\A$ and $D$, each process may emulate $\Omega_{g \inter h}$, for some intersecting groups $g,h \in \Gr$.
The emulation follows the general schema of CHT \cite{omega}.
We sketch the key steps below.
\iflongversion
The full proof appears in \refappendix{omega}.
\else
The full proof appears in \cite{longversion}.
\fi

Each process constructs a directed acyclic graph $G$ by sampling the failure detector $D$ and exchanging these samples with other processes.
A path $\pi$ in $G$ induces multiple runs of $\A$ that each process locally simulates.
A run starts from some initial configuration.
In our context, the configurations $\I=\{I_1,\ldots,I_{n \geq 2}\}$ of interest satisfy
\begin{inparaenum}
\item the processes outside $g \inter h$ do not atomic multicast any message, and
\item the processes in $g \inter h$ multicast a single message to either $g$ or $h$.
\end{inparaenum}%
For some configuration $I_i \in \I$, the schedules of the simulated runs starting from $I_i$ are stored in a simulation tree $\Upsilon_i$.
There exists an edge $(S,S')$ when starting from configuration $S(I_i)$, one may apply a step $s=(p,m,d)$ for some process $p$, message $m$ transiting in $S(I_i)$ and sample $d$ of $D$ such that $S'=S \concat s$.

Every time new samples are received, the forest of the simulation trees $(\Upsilon_i)_i$ is updated.
At each such iteration, the schedules in $\Upsilon_i$ are tagged using the following valency function:
$S$ is tagged with $g$ (respectively, $h$) if for some successor $S'$ of $S$ in $\Upsilon_i$ a process in $g \inter h$ delivers first a message addressed to $g$ (resp. to $h$) in configuration $S'(I_i)$.
A tagged schedule is \emph{univalent} when it has a single tag, and \emph{bivalent} otherwise.

As the run progresses, each root of a simulation tree has eventually a stable set of tags.
If the root of $\Upsilon_i$ is $g$-valent, the root of $\Upsilon_j$ is $h$-valent and they are adjacent, i.e., all the processes but some $p \in g \inter h$ are in the same state in $I_i$ and $I_j$, then $p$ must be correct.
Otherwise, there exists a bivalent root of some tree $\Upsilon_i$ such that for $g$ (respectively, $h$) a correct process multicasts a message to $g$ (resp., $h$) in $I_i$.
In this case, similarly to \cite{omega}, there exists a decision gadget in the simulation tree $\Upsilon_i$.
This gadget is a sub-tree of the form $(S,S',S'')$, with $S$ bivalent, and $S'$ $g$-valent and $S''$ $h$-valent (or vice-versa).
Using the group parallelism property of $A$, we may then show that necessarily the deciding process in the gadget, that is the process taking a step toward either $S'$ or $S''$ is correct and belongs to the intersection $g \inter h$.

\subparagraph*{Solution when $\Fa = \emptySet$}
In this case, \refalg{sufficiency} just works.
To attain strong genuineness, each log object $\LOG_{g \inter h}$ is implemented with $\Sigma_{g \inter h} \land \Omega_{g \inter h}$ through standard universal construction mechanisms.
When $\Fa=\emptySet$, $\WFD \land (\land_{g, h \in \Gr}~\Omega_{g \inter h})$ is thus the weakest failure detector.
The case $\Fa \neq \emptySet$ is discussed in the next section.

\section{Discussion}
\labsection{discussion}



Several definitions for atomic multicast appear in literature (see, e.g., \cite{survey,Hadzilacos94amodular} for a survey).
Some papers consider a variation of atomic multicast in which the ordering property is replaced with:
\begin{inparaenumorig}
\item[(\emph{Pairwise Ordering})]
  If $p$ delivers $m$ then $m'$, every process $q$ that delivers $m'$ has delivered $m$ before.
\end{inparaenumorig}%
Under this definition, cycles in the delivery relation ($\delOrder$) across more than two groups are not taken into account.
This is computably equivalent to $\Fa=\emptySet$.
Hence the weakest failure detector for this variation is $(\land_{g,h \in \Gr}~\Sigma_{g \inter h}) \land (\land_{g \in \Gr}~\Omega_g)$.

In \cite{gamcast}, the authors show that failure detectors of the class $\mathcal{U}_2$ are too weak to solve the pairwise ordering variation.
These detectors can be wrong about (at least) two processes.
In detail, the class $\mathcal{U}_k$ are all the failure detectors $D$ that are $k$-unreliable, that is they cannot distinguish any pair of failure patterns $F$ and $F'$, as long as the faulty processes in $F$ and $F'$ are members of a subset $W$ of size $k$ (the ``wrong'' subset).
\iflongversion
Formally,
\begin{itemize}
\item[($k$-unreliable failure detection \cite{gamcast})]
  A failure detector $D$ is $k$-unreliable in environment $\E$ when
  for every failure pattern $F \in \E$ with $\cardinalOf{\faulty(F)} \leq k$, for every history $H \in D(F)$,
  there exists a subset $W \subseteq \procSet$ with $\cardinalOf{W}=k$ and $\faulty(F) \subseteq W$,
  such that
  for every failure pattern $F'$ with $\faulty(F') \subseteq W$,
  for every time $t_0 \in \naturalSet$,
  there is a history $H' \in D(F')$ satisfying
  $\forall p \in \procSet \sep \forall t \leq t_0 \sep H'(p,t) = H (p,t)$.%
\end{itemize}
\fi
The result in \cite{gamcast} is a corner case of the necessity of $\Sigma_{g \inter h}$ when $g \inter h = \{p,q\}$ and both processes are failure-prone in $\E$.
Indeed, $\Sigma_{\{p,q\}} \notin \mathcal{U}_2$.
To see this, observe that if $q$ is faulty and $p$ correct, then $\{p\}$ is eventually the output of $\Sigma_{\{p,q\}}$ at $p$.
A symmetrical argument holds for process $q$ in runs where $q$ is correct and $p$ faulty.
In the class $\mathcal{U}_2$, such values can be output in runs where both processes are correct, contradicting the intersection property of $\Sigma_{\{p,q\}}$.

Most atomic multicast protocols \cite{ramcast,tempo,whitebox,fastcast,multiringpaxos,FritzkeIMR01,delporte,RodriguesGS98} sidestep the impossibility result in \cite{gamcast} by considering that destination groups are decomposable into a set of disjoint groups, each of these behaving as a logically correct entity.
This means that there exists a partitioning $\mathfrak{P}(\Gr) \subseteq 2^{\procSet}$ satisfying that
\begin{inparaenum}
\item for every destination group $g \in \Gr$, there exists $(g_i)_i \subseteq \mathfrak{P}(\Gr)$ with $g=\union_i g_i$,
\item each $g \in \mathfrak{P}(\Gr)$ is correct, and
\item for any two $g,h$ in $\mathfrak{P}(\Gr)$, $g \inter h$ is empty.
\end{inparaenum}%
Since $\land_{g \in \mathfrak{P}(\Gr)}~(\Sigma_g \land \Omega_g) \strongereq \WFD$, we observe that solving the problem over $\mathfrak{P}(\Gr)$ is always as difficult as over $\Gr$.
It can also be more demanding in certain cases, e.g., if two groups intersect on a single process $p$, then $p$ must be reliable.
In \reffigure{family}, this happens with process $p_2$.
In contrast, to these prior solutions, \refalg{sufficiency} tolerates any number of failures.
This is also the case of \cite{SchiperP08} which relies on a perfect failure detector.


Regarding strongly genuine atomic multicast, \refsection{variations:strong} establishes that $\WFD \land (\land_{g, h \in \Gr}~\Omega_{g \inter h})$ is the weakest when $\Fa=\emptySet$.
The case $\Fa \neq \emptySet$ is a bit more intricate.
First of all, we may observe that in this case the problem is failure-free solvable:
given a spanning tree $T$ of the intersection graph of $\Gr$, we can deliver the messages according to the order $<_T$, that is, if $m$ is addressed to $g$ intersecting with $h,h',\ldots$ with $h <_T h' <_T \ldots$, then $g \inter h$ delivers first $m$, followed by $g \inter h'$, etc.%
\footnote{
  Strictly speaking, a spanning tree is required per connected component of the intersection graph.
}
A failure-prone solution would apply the same logic.
This is achievable using $\WFD \land (\land_{g, h \in \Gr}~\Omega_{g \inter h}) \land (\land_{g, h \in \Fa}~1^{g \inter h})$, where $g \in \Fa$ holds when for some family $\f \in \Fa$, we have $g \in \f$.
We conjecture that this failure detector is actually the weakest.

\section{Conclusion}
\labsection{conclusion}

This paper presents the first solution to genuine atomic multicast that tolerates arbitrary failures without using system-wide perfect failure detection.
It also introduces two new classes of failure detectors:
\begin{inparaenumorig}[]
\item ($\gamma$) which tracks when a cyclic family of destination groups is faulty, and
\item ($1^{g \inter h}$) that indicates when the group intersection $g \inter h$ is faulty.
\end{inparaenumorig}%
Building upon these new abstractions, we identify the weakest failure detector for genuine atomic multicast and also for several key variations of this problem.
Our results offer a fresh perspective on the solvability of genuine atomic multicast in crash-prone systems.
In particular, they question the common assumption of partitioning the destination groups.
This opens an interesting avenue for future research on the design of fault-tolerant atomic multicast protocols.




{
  \bibliographystyle{plainurl}
  \bibliography{paper,bib}

\begin{thebibliography}{10}

\bibitem{convoy}
Tarek Ahmed{-}Nacer, Pierre Sutra, and Denis Conan.
\newblock The convoy effect in atomic multicast.
\newblock In {\em 35th {IEEE} Symposium on Reliable Distributed Systems
  Workshops, {SRDS} 2016 Workshop, Budapest, Hungary, September 26, 2016},
  pages 67--72. {IEEE} Computer Society, 2016.
\newblock \href {https://doi.org/10.1109/SRDSW.2016.22}
  {\path{doi:10.1109/SRDSW.2016.22}}.

\bibitem{AttiyaGK05}
Hagit Attiya, Rachid Guerraoui, and Petr Kouznetsov.
\newblock Computing with reads and writes in the absence of step contention.
\newblock In Pierre Fraigniaud, editor, {\em Distributed Computing, 19th
  International Conference, {DISC} 2005, Cracow, Poland, September 26-29, 2005,
  Proceedings}, volume 3724 of {\em Lecture Notes in Computer Science}, pages
  122--136. Springer, 2005.
\newblock \href {https://doi.org/10.1007/11561927\_11}
  {\path{doi:10.1007/11561927\_11}}.

\bibitem{BezerraPR14}
Carlos Eduardo~Benevides Bezerra, Fernando Pedone, and Robbert van Renesse.
\newblock Scalable state-machine replication.
\newblock In {\em 44th Annual {IEEE/IFIP} International Conference on
  Dependable Systems and Networks, {DSN} 2014, Atlanta, GA, USA, June 23-26,
  2014}, pages 331--342. {IEEE} Computer Society, 2014.
\newblock \href {https://doi.org/10.1109/DSN.2014.41}
  {\path{doi:10.1109/DSN.2014.41}}.

\bibitem{isis}
Kenneth Birman, Andr\'{e} Schiper, and Pat Stephenson.
\newblock Lightweight causal and atomic group multicast.
\newblock {\em ACM Trans. Comput. Syst.}, 9(3):272–314, aug 1991.
\newblock \href {https://doi.org/10.1145/128738.128742}
  {\path{doi:10.1145/128738.128742}}.

\bibitem{Birman:1987}
Kenneth~P. Birman and Thomas~A. Joseph.
\newblock {Reliable Communication in the Presence of Failures}.
\newblock {\em ACM Transactions on Computers Systems}, 5(1):47--76, January
  1987.
\newblock URL: \url{http://doi.acm.org/10.1145/7351.7478}, \href
  {https://doi.org/10.1145/7351.7478} {\path{doi:10.1145/7351.7478}}.

\bibitem{kset-bonnet}
Fran{\c{c}}ois Bonnet and Michel Raynal.
\newblock Looking for the weakest failure detector for \emph{k}-set agreement
  in message-passing systems: Is {\textdollar}\{{\textbackslash}it
  {\textbackslash}Pi\}{\_}k{\textdollar}\{{\textbackslash}it
  {\textbackslash}Pi\}{\_}k the end of the road?
\newblock In {\em Stabilization, Safety, and Security of Distributed Systems,
  11th International Symposium, {SSS} 2009, Lyon, France, November 3-6, 2009.
  Proceedings}, pages 149--164, 2009.
\newblock \href {https://doi.org/10.1007/978-3-642-05118-0\_11}
  {\path{doi:10.1007/978-3-642-05118-0\_11}}.

\bibitem{ChandraGR07}
Tushar~Deepak Chandra, Robert Griesemer, and Joshua Redstone.
\newblock Paxos made live: an engineering perspective.
\newblock In Indranil Gupta and Roger Wattenhofer, editors, {\em Proceedings of
  the Twenty-Sixth Annual {ACM} Symposium on Principles of Distributed
  Computing, {PODC} 2007, Portland, Oregon, USA, August 12-15, 2007}, pages
  398--407. {ACM}, 2007.
\newblock \href {https://doi.org/10.1145/1281100.1281103}
  {\path{doi:10.1145/1281100.1281103}}.

\bibitem{omega}
Tushar~Deepak Chandra, Vassos Hadzilacos, and Sam Toueg.
\newblock The weakest failure detector for solving consensus.
\newblock {\em J. ACM}, 43(4):685--722, July 1996.
\newblock URL: \url{http://doi.acm.org/10.1145/234533.234549}, \href
  {https://doi.org/10.1145/234533.234549} {\path{doi:10.1145/234533.234549}}.

\bibitem{CT96}
Tushar~Deepak Chandra and Sam Toueg.
\newblock Unreliable failure detectors for reliable distributed systems.
\newblock {\em J. {ACM}}, 43(2):225--267, 1996.
\newblock \href {https://doi.org/10.1145/226643.226647}
  {\path{doi:10.1145/226643.226647}}.

\bibitem{fastcast}
Paulo~R. Coelho, Nicolas Schiper, and Fernando Pedone.
\newblock Fast atomic multicast.
\newblock In {\em 47th Annual {IEEE/IFIP} International Conference on
  Dependable Systems and Networks, {DSN} 2017, Denver, CO, USA, June 26-29,
  2017}, pages 37--48. {IEEE} Computer Society, 2017.
\newblock \href {https://doi.org/10.1109/DSN.2017.15}
  {\path{doi:10.1109/DSN.2017.15}}.

\bibitem{granola}
James~A. Cowling and Barbara Liskov.
\newblock Granola: Low-overhead distributed transaction coordination.
\newblock In Gernot Heiser and Wilson~C. Hsieh, editors, {\em 2012 {USENIX}
  Annual Technical Conference, Boston, MA, USA, June 13-15, 2012}, pages
  223--235. {USENIX} Association, 2012.
\newblock URL:
  \url{https://www.usenix.org/conference/atc12/technical-sessions/presentation/cowling}.

\bibitem{survey}
Xavier D\'{e}fago, Andr\'{e} Schiper, and P\'{e}ter Urb\'{a}n.
\newblock Total order broadcast and multicast algorithms: Taxonomy and survey.
\newblock {\em ACM Comput. Surv.}, 36(4):372–421, dec 2004.
\newblock \href {https://doi.org/10.1145/1041680.1041682}
  {\path{doi:10.1145/1041680.1041682}}.

\bibitem{delporte}
Carole Delporte{-}Gallet and Hugues Fauconnier.
\newblock Fault-tolerant genuine atomic multicast to multiple groups.
\newblock In Franck Butelle, editor, {\em Procedings of the 4th International
  Conference on Principles of Distributed Systems, {OPODIS} 2000, Paris,
  France, December 20-22, 2000}, Studia Informatica Universalis, pages
  107--122. Suger, Saint-Denis, rue Catulienne, France, 2000.

\bibitem{realistic}
Carole Delporte{-}Gallet, Hugues Fauconnier, and Rachid Guerraoui.
\newblock A realistic look at failure detectors.
\newblock In {\em 2002 International Conference on Dependable Systems and
  Networks {(DSN} 2002), 23-26 June 2002, Bethesda, MD, USA, Proceedings},
  pages 345--353. {IEEE} Computer Society, 2002.
\newblock \href {https://doi.org/10.1109/DSN.2002.1028919}
  {\path{doi:10.1109/DSN.2002.1028919}}.

\bibitem{sigma}
Carole Delporte{-}Gallet, Hugues Fauconnier, Rachid Guerraoui, Vassos
  Hadzilacos, Petr Kouznetsov, and Sam Toueg.
\newblock The weakest failure detectors to solve certain fundamental problems
  in distributed computing.
\newblock In {\em Proceedings of the Twenty-Third Annual {ACM} Symposium on
  Principles of Distributed Computing, {PODC} 2004, St. John's, Newfoundland,
  Canada, July 25-28, 2004}, pages 338--346, 2004.
\newblock \href {https://doi.org/10.1145/1011767.1011818}
  {\path{doi:10.1145/1011767.1011818}}.

\bibitem{eventualconsistency}
Swan Dubois, Rachid Guerraoui, Petr Kuznetsov, Franck Petit, and Pierre Sens.
\newblock The weakest failure detector for eventual consistency.
\newblock In {\em Proceedings of the 2015 ACM Symposium on Principles of
  Distributed Computing}, PODC '15, page 375–384, New York, NY, USA, 2015.
  Association for Computing Machinery.
\newblock \href {https://doi.org/10.1145/2767386.2767404}
  {\path{doi:10.1145/2767386.2767404}}.

\bibitem{tempo}
Vitor Enes, Carlos Baquero, Alexey Gotsman, and Pierre Sutra.
\newblock Efficient replication via timestamp stability.
\newblock In Antonio Barbalace, Pramod Bhatotia, Lorenzo Alvisi, and Cristian
  Cadar, editors, {\em EuroSys '21: Sixteenth European Conference on Computer
  Systems, Online Event, United Kingdom, April 26-28, 2021}, pages 178--193.
  {ACM}, 2021.
\newblock \href {https://doi.org/10.1145/3447786.3456236}
  {\path{doi:10.1145/3447786.3456236}}.

\bibitem{flp}
Michael~J. Fischer, Nancy~A. Lynch, and Michael~S. Paterson.
\newblock Impossibility of distributed consensus with one faulty process.
\newblock {\em J. ACM}, 32(2):374--382, April 1985.
\newblock URL: \url{http://doi.acm.org/10.1145/3149.214121}, \href
  {https://doi.org/10.1145/3149.214121} {\path{doi:10.1145/3149.214121}}.

\bibitem{petr}
Felix~C. Freiling, Rachid Guerraoui, and Petr Kuznetsov.
\newblock The failure detector abstraction.
\newblock {\em ACM Comput. Surv.}, 43(2), feb 2011.
\newblock \href {https://doi.org/10.1145/1883612.1883616}
  {\path{doi:10.1145/1883612.1883616}}.

\bibitem{Gafni98}
Eli Gafni.
\newblock Round-by-round fault detectors (extended abstract): Unifying
  synchrony and asynchrony.
\newblock In {\em Proceedings of the Seventeenth Annual ACM Symposium on
  Principles of Distributed Computing}, PODC '98, pages 143--152, New York, NY,
  USA, 1998. ACM.
\newblock URL: \url{http://doi.acm.org/10.1145/277697.277724}, \href
  {https://doi.org/10.1145/277697.277724} {\path{doi:10.1145/277697.277724}}.

\bibitem{whitebox}
Alexey Gotsman, Anatole Lefort, and Gregory~V. Chockler.
\newblock White-box atomic multicast.
\newblock In {\em 49th Annual {IEEE/IFIP} International Conference on
  Dependable Systems and Networks, {DSN} 2019, Portland, OR, USA, June 24-27,
  2019}, pages 176--187. {IEEE}, 2019.
\newblock \href {https://doi.org/10.1109/DSN.2019.00030}
  {\path{doi:10.1109/DSN.2019.00030}}.

\bibitem{Guerraoui:97}
R.~Guerraoui and A.~Schiper.
\newblock Total order multicast to multiple groups.
\newblock In {\em Proceedings of 17th International Conference on Distributed
  Computing Systems}, pages 578--585, 1997.
\newblock \href {https://doi.org/10.1109/ICDCS.1997.603426}
  {\path{doi:10.1109/ICDCS.1997.603426}}.

\bibitem{commit}
Rachid Guerraoui, Vassos Hadzilacos, Petr Kuznetsov, and Sam Toueg.
\newblock The weakest failure detectors to solve quittable consensus and
  nonblocking atomic commit.
\newblock {\em {SIAM} J. Comput.}, 41(6):1343--1379, 2012.
\newblock \href {https://doi.org/10.1137/070698877}
  {\path{doi:10.1137/070698877}}.

\bibitem{ever}
Rachid Guerraoui, Maurice Herlihy, Petr Kouznetsov, Nancy Lynch, and Calvin
  Newport.
\newblock On the weakest failure detector ever.
\newblock In {\em Proceedings of the Twenty-sixth Annual ACM Symposium on
  Principles of Distributed Computing}, PODC '07, pages 235--243, New York, NY,
  USA, 2007. ACM.
\newblock URL: \url{http://doi.acm.org/10.1145/1281100.1281135}, \href
  {https://doi.org/10.1145/1281100.1281135}
  {\path{doi:10.1145/1281100.1281135}}.

\bibitem{alpha}
Rachid Guerraoui and Michel Raynal.
\newblock The alpha of indulgent consensus.
\newblock {\em Comput. J.}, 50(1):53--67, 2007.
\newblock \href {https://doi.org/10.1093/comjnl/bxl046}
  {\path{doi:10.1093/comjnl/bxl046}}.

\bibitem{gamcast}
Rachid Guerraoui and Andr{\'{e}} Schiper.
\newblock Genuine atomic multicast in asynchronous distributed systems.
\newblock {\em Theor. Comput. Sci.}, 254(1-2):297--316, 2001.
\newblock \href {https://doi.org/10.1016/S0304-3975(99)00161-9}
  {\path{doi:10.1016/S0304-3975(99)00161-9}}.

\bibitem{Hadzilacos94amodular}
Vassos Hadzilacos and Sam Toueg.
\newblock A modular approach to fault-tolerant broadcasts and related problems.
\newblock Technical report, {Cornell University}, 1994.

\bibitem{syn:1382}
Maurice Herlihy.
\newblock Wait-free synchronization.
\newblock {\em ACM Transactions on Programming Languages and Systems},
  11(1):124--149, January 1991.
\newblock \url{http://doi.acm.org/10.1145/114005.102808}.

\bibitem{loo:syn:1468}
Maurice Herlihy and Jeannette Wing.
\newblock Linearizability: a correcteness condition for concurrent objects.
\newblock {\em ACM Transactions on Programming Languages and Systems},
  12(3):463--492, July 1990.
\newblock \url{http://doi.acm.org/10.1145/78969.78972}.

\bibitem{JayantiT08}
Prasad Jayanti and Sam Toueg.
\newblock Every problem has a weakest failure detector.
\newblock In Rida~A. Bazzi and Boaz Patt{-}Shamir, editors, {\em Proceedings of
  the Twenty-Seventh Annual {ACM} Symposium on Principles of Distributed
  Computing, {PODC} 2008, Toronto, Canada, August 18-21, 2008}, pages 75--84.
  {ACM}, 2008.
\newblock \href {https://doi.org/10.1145/1400751.1400763}
  {\path{doi:10.1145/1400751.1400763}}.

\bibitem{FritzkeIMR01}
Udo~Fritzke Jr., Philippe Ingels, Achour Most{\'{e}}faoui, and Michel Raynal.
\newblock Consensus-based fault-tolerant total order multicast.
\newblock {\em {IEEE} Trans. Parallel Distributed Syst.}, 12(2):147--156, 2001.
\newblock \href {https://doi.org/10.1109/71.910870}
  {\path{doi:10.1109/71.910870}}.

\bibitem{ramcast}
Long~Hoang Le, Mojtaba Eslahi{-}Kelorazi, Paulo~R. Coelho, and Fernando Pedone.
\newblock Ramcast: Rdma-based atomic multicast.
\newblock In Kaiwen Zhang, Abdelouahed Gherbi, Nalini Venkatasubramanian, and
  Lu{\'{\i}}s Veiga, editors, {\em Middleware '21: 22nd International
  Middleware Conference, Qu{\'{e}}bec City, Canada, December 6 - 10, 2021},
  pages 172--184. {ACM}, 2021.
\newblock \href {https://doi.org/10.1145/3464298.3493393}
  {\path{doi:10.1145/3464298.3493393}}.

\bibitem{multiringpaxos}
Parisa~Jalili Marandi, Marco Primi, and Fernando Pedone.
\newblock Multi-ring paxos.
\newblock In Robert~S. Swarz, Philip Koopman, and Michel Cukier, editors, {\em
  {IEEE/IFIP} International Conference on Dependable Systems and Networks,
  {DSN} 2012, Boston, MA, USA, June 25-28, 2012}, pages 1--12. {IEEE} Computer
  Society, 2012.
\newblock \href {https://doi.org/10.1109/DSN.2012.6263916}
  {\path{doi:10.1109/DSN.2012.6263916}}.

\bibitem{janus}
Shuai Mu, Lamont Nelson, Wyatt Lloyd, and Jinyang Li.
\newblock Consolidating concurrency control and consensus for commits under
  conflicts.
\newblock In Kimberly Keeton and Timothy Roscoe, editors, {\em 12th {USENIX}
  Symposium on Operating Systems Design and Implementation, {OSDI} 2016,
  Savannah, GA, USA, November 2-4, 2016}, pages 517--532. {USENIX} Association,
  2016.

\bibitem{RodriguesGS98}
Lu{\'{\i}}s E.~T. Rodrigues, Rachid Guerraoui, and Andr{\'{e}} Schiper.
\newblock Scalable atomic multicast.
\newblock In {\em Proceedings of the International Conference On Computer
  Communications and Networks {(ICCCN} 1998), October 12-15, 1998, Lafayette,
  Louisiana, {USA}}, pages 840--847. {IEEE} Computer Society, 1998.
\newblock \href {https://doi.org/10.1109/ICCCN.1998.998851}
  {\path{doi:10.1109/ICCCN.1998.998851}}.

\bibitem{SchiperP08}
Nicolas Schiper and Fernando Pedone.
\newblock Solving atomic multicast when groups crash.
\newblock In Theodore~P. Baker, Alain Bui, and S{\'{e}}bastien Tixeuil,
  editors, {\em Principles of Distributed Systems, 12th International
  Conference, {OPODIS} 2008, Luxor, Egypt, December 15-18, 2008. Proceedings},
  volume 5401 of {\em Lecture Notes in Computer Science}, pages 481--495.
  Springer, 2008.
\newblock \href {https://doi.org/10.1007/978-3-540-92221-6\_30}
  {\path{doi:10.1007/978-3-540-92221-6\_30}}.

\bibitem{versus}
Nicolas Schiper, Pierre Sutra, and Fernando Pedone.
\newblock Genuine versus non-genuine atomic multicast protocols for wide area
  networks: An empirical study.
\newblock In {\em 28th {IEEE} Symposium on Reliable Distributed Systems {(SRDS}
  2009), Niagara Falls, New York, USA, September 27-30, 2009}, pages 166--175.
  {IEEE} Computer Society, 2009.
\newblock \href {https://doi.org/10.1109/SRDS.2009.12}
  {\path{doi:10.1109/SRDS.2009.12}}.

\bibitem{pstore}
Nicolas Schiper, Pierre Sutra, and Fernando Pedone.
\newblock P-store: Genuine partial replication in wide area networks.
\newblock In {\em 29th {IEEE} Symposium on Reliable Distributed Systems {(SRDS}
  2010), New Delhi, Punjab, India, October 31 - November 3, 2010}, pages
  214--224. {IEEE} Computer Society, 2010.
\newblock \href {https://doi.org/10.1109/SRDS.2010.32}
  {\path{doi:10.1109/SRDS.2010.32}}.

\bibitem{Taubenfeld17}
Gadi Taubenfeld.
\newblock Contention-sensitive data structures and algorithms.
\newblock {\em Theoretical Computer Science}, 677:41--55, 2017.
\newblock \href {https://doi.org/https://doi.org/10.1016/j.tcs.2017.03.017}
  {\path{doi:https://doi.org/10.1016/j.tcs.2017.03.017}}.

\end{thebibliography}
}  

\iflongversion
\newpage
\appendix
\section{System model}
\labappendix{model}




\subparagraph*{Basics}
We assume a distributed message-passing system composed of a set $\procSet$ of processes.
Processes execute steps of computation.
These steps are asynchronous and there is no bound on the delay between any two steps.
For the sake of simplicity, we assume a global time model, where $\naturalSet$ is the range of the global clock.
Processes cannot access to the global clock.

\subparagraph*{Failures and environments}
Processes may fail-stop, or \emph{crash}, and halt their computations.
A failure pattern is a function $F : \naturalSet \rightarrow 2^{\procSet}$ that captures how processes crash over time.
Processes that crash never recover from crashes, that is, for all time $t$, $F(t) \subseteq F(t+1)$.
If a process fails, we shall say that it is \emph{faulty}.
Otherwise, if the process never fails, it is said \emph{correct}.
$\faulty(F) = \union_t F(t)$ are the faulty processes in pattern $F$, and $\correct(F) = \procSet \setminus \faulty(F)$ denotes the correct processes.
When failure pattern $F$ is clear from the context, we shall use respectively $\correct$ and $\faulty$ for $\correct(F)$ and $\faulty(F)$.
An environment, denoted $\E$, is a set of failure patterns.
Intuitively, an environment $\E$ describes the number and timing of failures that can occur in the system.
We denote by $\E^*$ the set of all failure patterns.

\subparagraph*{Failure detectors}
A failure detector is an oracle $D$ (also called module) that processes may query locally during an execution.
This oracle abstracts information, regarding synchrony and failures, available to the processes.
More precisely, a failure detector $D$ is a mapping that assign to a failure pattern $F$, one or more histories $D(F)$.
Each history $H \in D(F)$ defines for each process $p$ in the system, the local information $H(p,t)$ obtained by querying $D$ at time $t$.
The co-domain of $H : \procSet \times \naturalSet \rightarrow R$ is named the range of the failure detector, denoted $\range(D)$.
A failure detector is realistic when it cannot guess the future \cite{realistic}.
This means that if two failure patterns have a common prefix, a process might not distinguish them in this prefix by querying the failure detector.
Formally,
$
\forall F,F' \in \E \sep \forall H \in D(F) \sep \exists H' \in D(F') \sep \forall t \in \naturalSet \sep
(\forall t' \leq t \sep F(t')=F'(t'))
\implies
(\forall t' \leq t \sep \forall p \in \procSet \sep H(p,t')=H'(p,t'))
$.

\subparagraph*{Message buffer}
Processes communicate with the help of messages taken from some set $\mathit{Msg}$.
A message $m$ is sent by some sender ($\src(m)$) and addressed to some set of recipients ($\dst(m)$).
The sender may define some content ($\payload(m)$) before sending the message.
A message buffer, denoted $\BUFF$ , contains all the messages that were sent but not yet received.
More precisely, $\BUFF$ is a mapping from processes to elements in $2^{\mathit{Msg}}$.
When a process $p$ attempts to receive a message, it either removes some message from $\BUFF[p]$, or returns a special null message ($\msgNull$).
Note that $p$ may receive $\msgNull$ even if the message buffer does contain a message addressed to $p$.

\subparagraph*{Algorithm, step and schedule}
An algorithm $\A$ consists of a family of deterministic automata, one per process in $\procSet$.
Computation proceeds in steps of these automata.
At each step, a process $p$ executes atomically all of the following instructions:
\begin{enumerate}
\item retrieve a message $m$ from $\BUFF$; 
\item retrieves some value $d$ from the local failure detector module;
\item changes its local state according to $\A$; and
\item sends some (possibly empty) message $m$, by adding $m$ to $\BUFF$. 
\end{enumerate}
For a given automaton, a step is fully determined by the current state of $p$, the received message $m$ and the failure detector value $d$.
As a consequence, we shall write a step as a tuple $s=(p,m,d)$.
A \emph{schedule} is a sequence of steps.
We write $\schedNull$ the empty schedule.

\subparagraph*{Configuration}
A \emph{configuration} of algorithm $\A$ specifies the local state of each process as well as the messages in transit (variable $\BUFF$).
Given a predicate $P$ and a configuration $C$, we write $C \sat P$ when $P$ holds in $C$. 
In some initial configuration of $\A$, no message is in transit and each process $p$ is in some initial state as defined by $\A$.
A step $s=(p,m,d)$ is \emph{applicable} to a configuration $C$ when $m \in C.\BUFF[p]$.
In which case, we note $s(C)$ the unique configuration that results when applying step $s$
(This means that in $C$, $p$ executes the code of $\A$ considering that $m$ was fetched from $\BUFF[p]$ and $d$ from the failure detector.)
This notion is extended to schedules by induction.
In detail, a schedule $S$ is applicable to configuration $C$ when $S=\schedNull$, or $S=s_1 \ldots s_{n \geq 1}$ and $s_1$ is applicable to $C$, $s_2$ is applicable to $s_1(C)$, etc.

\subparagraph*{Run}
A run of algorithm $\A$ using failure detector $D$ in environment $\E$ is a tuple $R = (F,H,I,S,T)$ where
\begin{inparaenumorig}[]
\item $F$ is a failure pattern in $\E$,
\item $H$ is a failure detector history in $D(F)$,
\item $I$ is an initial configuration of $\A$,
\item $S$ is a (possibly empty) schedule, and
\item $T \subseteq \naturalSet$ is a growing sequence of times (intuitively, $T[i]$ is the time when step $S[i]$ is taken).
\end{inparaenumorig}
A run whose schedule is finite (respectively, infinite) is called a finite (respectively, infinite) run.
Every run $R$ must satisfy the following standard, or \emph{well-formedness}, conditions that we shall assume hereafter:
\begin{itemize}
\item No process take steps after crashing.
\item The sequences $S$ and $T$ are either both infinite, or they are both finite and have the same length.  
\item The sequence of steps $S$ taken in the run conforms to the algorithm $\A$, the timing $T$ and the failure detector history $H$.
\item Every process that infinitely often retrieves a message from $\BUFF$ eventually receives every message addressed to it.
\end{itemize}
A run $\R$ is fair for some correct process $p$ when $p$ executes an unbounded amount of steps in $R$.
By extension, $\R$ is fair for $P \subseteq \correct(F)$, or for short $P$-fair, when it is fair for every $p$ in $P$.
In case $P$ is exactly $\correct(F)$, we simply say that $\R$ is fair.
%

\subparagraph*{Input/output variables}
A process $p$ interacts with the external world by reading an input queue $\IN(q)$ and writing to an output queue $\OUT(p)$.
Both queues are part of the local state of the process and contain finite binary strings.
For some given run $R$, $\IN(R)$ defines the \emph{input} of $R$.
This function maps each process $p$ to a growing sequence of pairs $(v,t)$ such that $p$ fetches $v$ from $\IN(p)$ at time $t$ in $R$.
$\OUT(R)$ is the \emph{output} of $R$, and is defined similarly.

\subparagraph*{Problems}
An input (or output) vector associates to a process $p$ a growing sequence of pairs $(v,t)$, with $v \in \{0,1\}^*$ and $t \in \naturalSet$.
A problem $\P$ specifies a desired relation between input and output vectors.
In detail, $\P$ is a set of tuples $(F,\IN,\OUT)$, where $F$ is a failure pattern, and $\IN$ and $\OUT$ are respectively an input and an output vector.
Intuitively, $(F,\IN,\OUT) \in \P$ holds if and only if when $F$ is the failure pattern and $\IN$ the input, $\OUT$ is an output that satisfies $\P$.

\subparagraph*{Solving a problem}
Consider a problem $\P$, an algorithm $\A$, a failure detector $D$, and $\E$ an environment.
Then,
\begin{itemize}
\item A run $R=(F,H,I,S,T)$ of $\A$ using $D$ in $\E$ \emph{satisfies} $\P$ if and only if $(F,\IN(R),\OUT(R))$ is in $\P$.
\item $\A$ \emph{solves} $\P$ using $D$ in $\E$ if and only if every run $R$ of $\A$ using $D$ in $\E$ satisfies $\P$.
\end{itemize}

\subparagraph*{Comparing failure detectors}
As observed in \cite{JayantiT08}, a failure detector $D$ defines itself a distributed problem in some environment $\E$.
This is the problem of building a linearizable implementation of $D$ in $\E$.
More precisely, one defines $\P_D$ as all the tuples $(F,\IN,\OUT)$ such that
\begin{inparaenum}
\item $F \in \E$, and 
\item for some history $H \in D(F)$, if ``query'' is in $I(p,t_1)$ and $r$ is a matching response in $O(p,t_2)$, then for some $t \in [t_1, t_2]$, $r$ equals $H(t)$.
\end{inparaenum}
We say that $D'$ is weaker than $D$ in $\E$ if there is an algorithm $T_{D \rightarrow D'}$ that transforms $D$ to $D'$ in $\E$.
This means that one can solve problem $\P_{D'}$ using failure detector $D$.
Notice that if $D'$ is weaker than $D$ in $\E$, then every problem that can be solved with $D'$ in $\E$ can also be solved with $D$ in $\E$.
Two failure detectors are \emph{equivalent} in $\E$ if each is weaker than the other in $\E$.
We write $D \leq_{\E} D'$ when $D$ is weaker than $D'$ in environment $\E$.
By extension, $D \leq D'$ holds when $D$ is weaker than $D'$ in every environment.

\subparagraph*{Weakest failure detector}
A failure detector $D$ is the weakest failure detector to solve problem $\P$ in environment $\E$ if and only if the following hold:%
\footnote{
  Strictly speaking, one should talk about ``a weakest'' and not ``the weakest'' failure detector because several detectors may be weakest yet not identical.
  Indeed, as pointed out in \cite{JayantiT08}, if $D$ is weakest then so is any sampling of $D$.
  However, as common in literature, we do not distinguish a failure detector from its equivalence class.
}
\begin{itemize}
\item[(Sufficiency)] $D$ can be used to solve $\P$ in $\E$.
\item[(Necessity)] For any failure detector $D'$, if $D'$ can be used to solve $\P$ in $\E$, then $D$ is weaker than $D'$ in $\E$.
\end{itemize}
In \cite{JayantiT08}, the authors prove that in every environment $\E$, for every problem $\P$, if there exists a failure detector to solve $\P$ in $\E$ then there exists a weakest failure detector for $\P$ in $\E$.

\subparagraph*{Sub-algorithms}
An algorithm $\ATwo=(\ATwo_p)_{p \in \procSet}$ is a sub-algorithm of $\A=(\A_p)_p$ if for every process $p \in \procSet$, there exists some automaton $\AThree_p$ such that $\A_p=\ATwo_p \times \AThree_p$.
Given a schedule $S$ and an algorithm $\A$, $S|\A$ is the projection of $S$ over $\A$, that is the sequence of steps of $\A$ in $S$.
In particular, such steps include reading and writing to respectively the input and output queues of $\A$.

\subsection{Technical Lemmas}
\labappendix{model:lemmas}

The results below are derived from the model detailed in the prior section.
Their proofs is left to the reader.

\begin{lemma}
  \lablem{model:1}
  Consider that $\ATwo$ is a sub-algorithm of $\A$ and pick a run $R$ of $\A$ with steps $S$ and history $H$.
  Then, there exists a run $R'$ of $\ATwo$ with steps $S'=S|\ATwo$ and history $H$.
\end{lemma}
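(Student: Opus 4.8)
The plan is to exhibit $R'$ explicitly as the projection of $R$ onto the $\ATwo$ component and then verify that it satisfies the well-formedness conditions of a run. Write $R=(F,H,I,S,T)$. I would set $R'=(F,H,I',S',T')$, where $I'$ is obtained by projecting every local state in $I$ onto its $\ATwo_p$ component (together with the corresponding sub-buffer), $S'=S|\ATwo$ is the projection of the schedule, and $T'$ is the subsequence of $T$ restricted to the indices at which $S$ takes an $\ATwo$-step. Since $F$ and $H$ are left untouched and a failure detector history $H \in D(F)$ does not depend on the algorithm, $H$ is immediately a legal history of $\ATwo$ for the pattern $F$.

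The heart of the argument is an invariant, proved by induction on the length of $S$: for every prefix of $S$, the $\ATwo_p$ component of each process' state in the resulting configuration of $\A$ coincides with the state of that process in the configuration reached by applying the corresponding prefix of $S'$ to $I'$. The inductive step uses the defining property of a sub-algorithm, namely $\A_p=\ATwo_p \times \AThree_p$: a step $s=(p,m,d)$ of $\A$ decomposes so that the update of the $\ATwo_p$ component, the message it consumes, and the message it emits are exactly those prescribed by $\ATwo_p$. Steps of $\A$ that leave the $\ATwo_p$ component unchanged contribute nothing to $S'$ and are simply skipped; every other step projects to a well-defined step $s'$ of $\ATwo$ that is applicable to the current $\ATwo$ configuration. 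This shows that $S'$ conforms to $\ATwo$, to the timing $T'$, and to $H$, which is the substantive conformance condition.

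It remains to check the remaining well-formedness conditions. That no process takes a step after crashing is inherited directly, since every step of $S'$ is a step of $S$ and $F$ is unchanged; the conditions that $S'$ and $T'$ are jointly finite or jointly infinite and of equal length hold by construction of the projection. The only condition requiring care is message-delivery fairness: I would argue that if a process retrieves a message infinitely often in $R'$, then it takes infinitely many $\ATwo$-steps, hence infinitely many steps in $R$ that retrieve from $\BUFF$, so that the fairness of $R$ guarantees every message addressed to it is eventually received, and projecting onto the $\ATwo$ component yields the same guarantee for $\ATwo$. This last point---relating the combined message buffer of the product automaton to the sub-buffer seen by $\ATwo$, so that the projected buffer stays consistent with the messages actually consumed and emitted by the projected steps---is the main obstacle; everything else is bookkeeping, which is presumably why the statement is left to the reader.
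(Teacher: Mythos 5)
The paper does not actually prove this lemma: it is stated in \refappendix{model:lemmas} with the remark that the proofs of the technical lemmas ``are left to the reader,'' so there is no written argument to compare against. Your projection construction --- taking $R'=(F,H,I',S|\ATwo,T')$, establishing by induction on prefixes of $S$ that the $\ATwo_p$ component of each configuration of $\A$ agrees with the configuration reached by the projected schedule (using $\A_p=\ATwo_p\times\AThree_p$), and then checking the well-formedness conditions, with the buffer consistency and delivery-fairness points being the only ones needing care --- is exactly the standard argument the paper is implicitly relying on, and it is correct.
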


Given some schedule $S$, relation $\hb_S$ denotes the happens-before relation in $S$.
We write $S|P$ the projection of $S$ over $P \subseteq \procSet$.
It is \emph{sound} iff for every event $e \in S|P$, if $e' \hb_{S} e$ then $e' \in S|P$.

\begin{lemma}[Indistinguishability]
  \lablem{model:2}
  \lablem{flp}
  Assume a run $R=(F,H,I,S,T)$ of $\A$ and some set of processes $P$.
  If $S|P$ is sound, then $R'=(F,H,I,S|P,T)$ is a run of $\A$.
\end{lemma}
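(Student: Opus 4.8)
The plan is to verify directly that $R'=(F,H,I,S|P,T)$ satisfies the definition of a run. The failure pattern $F$, the history $H\in D(F)$ and the initial configuration $I$ are copied verbatim from $R$, and the timing associated with each surviving step is simply the one it already carried in $T$; hence the only substantive obligations are that $S|P$ is \emph{applicable} to $I$, that the resulting schedule \emph{conforms} to $\A$, $H$ and the (restricted) timing, and that the well-formedness conditions still hold. Everything rests on the soundness hypothesis, whose role is exactly to guarantee that the steps kept in $S|P$ never causally depend on steps that were discarded.

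The core is a single induction over the prefixes of $S$, carrying the invariant that, after the portion of $S|P$ lying within the first $i$ steps of $S$: (a) every process $q\in P$ is in the same local state as after the first $i$ steps of $S$ in $R$; and (b) each message occurrence present in $\BUFF[q]$ in $R$ that was produced by a surviving send and not yet consumed by a surviving receive is also present in $\BUFF[q]$ in $R'$. For (a), the local state of a process is a deterministic function of the sequence of its own step tuples $(q,m,d)$; since $q\in P$, all of $q$'s steps lie in $S|P$ in their original relative order, so $q$ sees an identical step sequence in $R$ and $R'$. For (b), consider the inductive step for $s=(p,m,d)\in S|P$ with $m\neq\msgNull$: in $R$ the occurrence $m$ sits in $\BUFF[p]$ just before $s$, so it was placed there by some send $s'$ with $s'\hb_S s$ and removed by no receive of $p$ in between. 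By soundness $s'\in S|P$, and since $S|P$ preserves the order of $S$, in $R'$ the send $s'$ precedes $s$ and no intervening surviving step of $p$ removes this occurrence (such a removal would also occur in $R$, contradicting the presence of $m$ before $s$). Hence $m\in\BUFF[p]$ before $s$ in $R'$, so $s$ is applicable; the step reads the same $d=H(p,\cdot)$, re-establishing both parts of the invariant. The case $m=\msgNull$ is immediate, since a null receive is always applicable.

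With applicability and conformance in hand, the well-formedness conditions follow quickly. No process takes a step after crashing because $S|P$ is a subsequence of $S$ and $F$ is unchanged; the surviving steps and their inherited times form a pair of sequences of equal length that is monotone in time; and conformance to $\A$ and $H$ is precisely what the induction delivers. For the fairness condition, a process $q\in P$ that retrieves messages infinitely often in $R'$ does so in $R$ as well (same steps), and in $R$ it eventually receives every message addressed to it; every message addressed to $q$ that exists at all in $R'$ was sent by a surviving step, hence exists in $R$ and is received there by a step of $q$, which—being a step of a process in $P$—also belongs to $S|P$ and therefore delivers it in $R'$.

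I expect the delicate point to be the buffer bookkeeping in part (b): because $\BUFF[q]$ is set-valued, several sends may carry the same value, so the argument must track message \emph{occurrences} (equivalently, send events) rather than bare values, and must resolve the mutual dependence between ``applicability of $s$'' and ``agreement of states'' within one induction. This is a careful technicality rather than a conceptual difficulty—soundness does all the real work by forbidding exactly the cross-boundary causal edges that would otherwise leave a received message un-sent in the restricted schedule.
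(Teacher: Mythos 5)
The paper does not actually prove this lemma: it appears in the appendix under the heading ``Technical Lemmas'' with the remark that the proofs are left to the reader, so there is no argument of record to compare yours against. Your proof is correct and is the natural one: a single induction over prefixes of $S$ carrying the two-part invariant (identical local states for processes in $P$, and presence in $\BUFF$ of every pending message occurrence whose send survives), with soundness of $S|P$ invoked exactly where it is needed --- to guarantee that the send matching a surviving receive is itself a surviving step, since it happens-before it under $\hb_S$ --- and the well-formedness and fairness conditions checked afterwards. Two small remarks: the lemma as stated reuses $T$ verbatim even though the timing sequence must be restricted to the entries of the surviving steps so that schedule and timing have equal length (you handle this implicitly and correctly); and your caution about tracking message occurrences rather than values is warranted given that $\BUFF$ is set-valued, although the paper's assumption that processes disseminate distinct messages makes the distinction harmless here.
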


In what precedes, we shall say that $R$ is indistinguishable from $R'$ to $P$.
The two lemmas below are analogous to Lemma 1 in FLP \cite{flp}.
They establish that if two runs with the same failure pattern and history are executed by disjoint sets of processes then they can be glued together.

\begin{lemma}
  \lablem{model:3}
  Let $R=(F,H,I,S,T)$ and $R'=(F,H,I,S',T')$ be two runs of $\A$.
  If $\proc(S) \inter \proc(S') = \emptySet$ then there exists $\hat{S}$ and $\hat{T}$ such that $(F,H,I,\hat{S},\hat{T})$ is a run of $\A$ and $\steps(\hat{S})=\steps(S) \union \steps(S')$.
\end{lemma}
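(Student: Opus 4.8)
The plan is to prove \reflem{model:3} by direct construction, exploiting the fact that steps taken by disjoint sets of processes commute, in the spirit of Lemma~1 of FLP~\cite{flp}. First I would build $\hat{S}$ as an interleaving of $S$ and $S'$ that preserves the internal order of each, and define $\hat{T}$ by keeping, for every step, the timestamp it carries in its originating run and merging the two time sequences into one non-decreasing sequence (ties between a step of $S$ and a step of $S'$ can be broken arbitrarily, since such steps belong to different processes). Because every step $(p,m,d)$ keeps its original time $t$, the failure detector value $d$ still satisfies $d = H(p,t)$; hence $\hat{S}$ conforms to the common history $H$ with no extra work. As both runs share the failure pattern $F$ and the initial configuration $I$, the candidate $\hat{R} = (F,H,I,\hat{S},\hat{T})$ starts from the right state, and $\steps(\hat{S}) = \steps(S) \union \steps(S')$ holds by construction (the union is disjoint, as $\proc(S) \inter \proc(S') = \emptySet$ forces the two step sets to use different process identifiers).

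The core of the argument is an induction on the prefixes of $\hat{S}$ showing simultaneously that every step is applicable and that each process reaches along $\hat{S}$ exactly the local states it reaches in its originating run. Since $\proc(S) \inter \proc(S') = \emptySet$, no step of $S'$ ever reads or writes the local state of a process in $\proc(S)$, and symmetrically; thus a process $p \in \proc(S)$ is driven in $\hat{S}$ by the very same subsequence of steps, in the same order, as in $R$, and \reflem{model:2} confirms that the projection of $\hat{R}$ onto $\proc(S)$ is indistinguishable from $R$. The only delicate point is the message buffer. A step $(p,m,d)$ of $S$ retrieves $m$ from $\BUFF[p]$, and the sole steps consuming from $\BUFF[p]$ in $\hat{S}$ are the steps of $p$, all of which originate in $S$ and occur in their $S$-order; hence $m$ was produced (from $I$, or by an earlier $S$-step whose send therefore precedes the retrieval by order preservation) and is not consumed earlier. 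The extra messages that steps of $S'$ may deposit in $\BUFF[p]$ only enlarge the buffer and cannot disable the retrieval of $m$, since applicability requires merely $m \in \BUFF[p]$. The same reasoning applies verbatim to the steps of $S'$, which establishes applicability of all of $\hat{S}$.

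The main obstacle I anticipate is the final well-formedness clause, which requires that a process retrieving messages infinitely often eventually receives every message addressed to it: in $\hat{R}$ a process of $\proc(S')$ may now carry in its buffer cross-messages sent by $\proc(S)$ (and vice versa) that its schedule never retrieves. I cannot repair this by inserting extra receive steps, as that would both change the affected process's state and break the required equality $\steps(\hat{S}) = \steps(S) \union \steps(S')$. The clean resolution is to observe that every application of the lemma truncates $S$ and $S'$ at a fixed delivery event and thus works with \emph{finite} schedules (see \reftheo{necessity:sigma} and the accuracy argument of \reftheo{necessity:gamma}); for finite schedules no process retrieves infinitely often, so the clause is vacuous and $\hat{R}$ is a bona fide run exhibiting the two target deliveries. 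A secondary, purely bookkeeping concern is the timestamp reconciliation should the model insist on strictly increasing times: this is settled by the standard convention that distinct processes may take steps at the same global instant, or by inserting idle intervals that do not touch $H$ at the times actually used.
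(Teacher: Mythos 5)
Your construction is the standard FLP-style merge that the paper evidently intends here---it explicitly leaves the proof of \reflem{model:3} to the reader---and your argument is correct: disjointness of $\proc(S)$ and $\proc(S')$ makes the two sub-schedules commute, preserving the original timestamps keeps every sample consistent with the shared history $H$, and messages deposited by the other schedule only enlarge the buffers and so cannot disable any retrieval. Your identification of the reliable-delivery well-formedness clause as the one genuine wrinkle for infinite merged schedules, and its resolution by noting that the applications (\reftheo{necessity:sigma} and the accuracy part of \reftheo{necessity:gamma}) only need finite prefixes truncated at the relevant delivery events, is exactly the right observation and is arguably more careful than the paper itself.
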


\begin{lemma}
  \lablem{model:4}
  Let $R=(F,H,I,S,T)$ and $R'=(F,H,I,S',T')$ be two runs of $\A$.
  Assume that $\proc(S) \inter \proc(S') = \emptySet$ and that the last step of $S$ occurs in real time before the first step of $S'$.
  Then, there exists $\hat{T}$ such that $(F,H,I,S.S',\hat{T})$ is a run of $\A$.
\end{lemma}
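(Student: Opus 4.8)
The plan is to prove the statement exactly as one proves FLP's gluing lemma and the companion \reflem{model:3}: construct an explicit timing $\hat{T}$ and then check that $(F,H,I,S \concat S',\hat{T})$ meets the applicability, failure-detector conformance, and well-formedness requirements of a run. First I would note that, since $S$ has a last step, it is finite, so the concatenation $S \concat S'$ is well defined. I would then take $\hat{T}$ to be the concatenation of the two original timings $T$ and $T'$; the hypothesis that the last step of $S$ occurs in real time before the first step of $S'$ means precisely that every entry of $T$ is smaller than every entry of $T'$, which makes $\hat{T}$ strictly increasing.

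The core of the argument is that $S \concat S'$ is applicable to $I$. Applicability of the prefix $S$ is inherited from $R$, yielding the configuration $S(I)$. For the suffix I would use $\proc(S) \inter \proc(S') = \emptySet$: every process in $\proc(S')$ takes no step in $S$, so in $S(I)$ it is still in its initial state, exactly as in $I$, and no step of $S$ removes anything from its buffer. Consequently the local-state transitions of the processes in $\proc(S')$ along $S'$ are identical whether $S'$ is applied to $I$ (as in $R'$) or to $S(I)$. For each step $s'=(p',m',d')$ of $S'$, the message $m'$ was, in $R'$, sent by an earlier step of $S'$ (the initial configuration has empty buffers and only $\proc(S')$ acts in $R'$); that same send still precedes $s'$ inside $S \concat S'$, so $m' \in \BUFF[p']$ when $s'$ is reached. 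Hence every step of $S'$ is applicable in turn, and $S'$ is applicable to $S(I)$.

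Conformance then follows from keeping the original read-times. Each step of $S$ reads the detector at its time in $T$, so its value agrees with $H$ as in $R$, and each step of $S'$ reads at its time in $T'$, agreeing with $H$ as in $R'$; since both runs use the \emph{same} history $H$, no clash arises. Well-formedness with respect to $F$ (no process takes steps after crashing) is preserved because $\proc(S)$ and $\proc(S')$ are disjoint and both $R$ and $R'$ already respect $F$; any fairness of the suffix carries over from $R'$.

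The one point needing care -- and the only real obstacle -- is ruling out buffer interference: I must argue that the surplus messages deposited into the buffers of the $\proc(S')$ processes by the prefix $S$ can neither block nor spuriously enable a step of $S'$. Blocking is impossible because applicability only requires the membership $m' \in \BUFF[p']$ and each step tuple names the exact message it retrieves; spurious enabling is excluded because every receive in $S'$ is matched by a send occurring within $S'$, and messages are uniquely identified, so a message sent during $S$ cannot be the one a step of $S'$ consumes. This is exactly where the real-time hypothesis earns its keep: it lets me retain the original timings for the suffix, so detector values stay consistent with $H$ without any re-timing.
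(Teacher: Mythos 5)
The paper offers no proof to compare against here: this is one of the technical lemmas of \refappendix{model:lemmas} whose proofs are explicitly ``left to the reader,'' and your argument is exactly the standard FLP-style gluing proof the authors intend --- it is sound, since disjointness of $\proc(S)$ and $\proc(S')$ keeps the local states of the suffix processes and the removals from their buffers untouched by the prefix, each step tuple names the exact message it consumes (so the surplus messages deposited by $S$ can neither block nor redirect a step of $S'$), and the real-time hypothesis makes $T \concat T'$ a growing timing under which both halves still conform to the single history $H$. The one point you pass over too quickly is the last well-formedness condition (a process that retrieves messages infinitely often must eventually receive every message addressed to it), which could fail if $S'$ were infinite while the prefix has deposited messages for $\proc(S')$ that $S'$ never consumes; every invocation of the lemma in the paper concatenates finite schedules, so this is harmless, but your proof would be tighter if it said explicitly that $S'$ is finite (making that condition vacuous) rather than asserting that ``fairness of the suffix carries over.''
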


\section{Emulating $\Land_{g,h \in \Gr} \Omega_{g \inter h}$}
\labappendix{omega}


\begin{algorithm}[!t]
 
  \scriptsize
  \caption{Emulating $\Omega_{g \inter h}$ -- code at process $p$}
  \labalg{omega}
   
  \begin{algorithmic}[1]

    \begin{variables}
      \State $\result \assign \text{\textbf{if} $(p \in g \inter h)$ $p$ \textbf{else} $\bot$}$ \labline{omega:var:1}
      \State $G \assign (\emptySet, \emptySet)$; $k \assign 0$ \labline{omega:var:2}
      \State $(\Upsilon_i)_{i \in [0,n]} \assign \lambda i.  (\{\schedNull\},\emptySet)$ \labline{omega:var:3} \Comment{$\schedNull$ is the empty schedule}
      \State $(\tags(S,i))_{S \in \pathSet, i \in [0,n]} \assign \lambda S i. \emptySet$ \labline{omega:var:4} \Comment{$\pathSet$ is the universal set of paths}
    \end{variables}

    \vspace{.5em}
    
    \Loop \labline{omega:0}
      \State $\sample$  \labline{omega:1}
      \If{$p \in g \inter h$} \labline{omega:2}
        \State $\simulate; \tagging; \result \assign \extract$ \labline{omega:3}
      \EndIf
    \EndLoop

    \vspace{.5em}
    
    \When{\query}
    \Return $\result$ \labline{query:1}
    \EndWhen

    \vspace{.5em}
    
    \begin{procedure}{\sample}
      \State $k \assign k + 1$ \labline{sampling:1}
      \State $d \assign D.\query()$ \labline{sampling:2}
      \State $G \assign (G.V \union \{(p,d,k)\}, G.E \union \{ ((p',d',k'), (p,d,k)) : (p',d',k') \in G.V \})$ \labline{sampling:3}
      \State $\sendTo{G}{\procSet}$ \labline{sampling:4}
      \ForAll{$g : \rcv(g)$} \labline{sampling:5}
      \State $G \assign G \union g$ \labline{sampling:6}
      \EndFor
    \end{procedure}

    \vspace{.5em}
    
    \begin{procedure}{\simulate}
      \ForAll{$i \in [0,n]$} \labline{simulation:1}
      \ForAll{$\pi \in \paths(G)$} \Comment{Following $\prefix$, non-empty ($\pi \neq \seqNull$)} \labline{simulation:2}
      \State $\mathcal{Q} \assign \langle \rangle$; $\mathcal{Q}.\push(\schedNull,0)$ \labline{simulation:3}
      \ForAll{$(S,k) \in \mathcal{Q}.\pop()$} \labline{simulation:4}
      \State \textbf{let} $(q,d,\any) = \pi[k]$ \labline{simulation:5}
      \ForAll{$m \in S(I_i).\BUFF[q] \union \{\msgNull\}$} \labline{simulation:6} \Comment{$\msgNull$ is the null message}
      \State $S' \assign S \concat (q,m,d)$ \labline{simulation:7}
      \State $\Upsilon_i \assign (\Upsilon_i.V \union \{S'\}, \Upsilon_i.E \union \{ (S, S', (q,m,d)) \} )$ \labline{simulation:8}
      \State \textbf{if} {$\cardinalOf{\pi}-1 > k$} \textbf{then} {$\mathcal{Q}.\push(S',k+1)$} \labline{simulation:9}
      \EndFor 
      \EndFor 
      \EndFor 
      \EndFor 
    \end{procedure}

    \vspace{.5em}
    
    \begin{procedure}{\tagging}
      \ForAll{$i \in [0,n]$} \labline{tagging:1}
      \ForAll{$S \in \leafs(\Upsilon_i)$} \labline{tagging:2}
      \If{$S(I_i) \sat m_x \delOrderOf{p} m_{\xbar}$} \Comment{$x \in \{g,h\} \land x=g \iff \xbar=h$} \labline{tagging:3} 
      \ForAll{$S' \in \Upsilon_i \sep S' \prefix S$} \labline{tagging:4}
      \State $\tags(S',i) \assign \tags(S',i) \union \{x\}$ \labline{tagging:5}
      \EndFor
      \EndIf
      \EndFor
      \EndFor
    \end{procedure}

    \vspace{.5em}
    
    \begin{procedure}{\extract}
      \ForAll{$i \in [0,n]$} \labline{extract:1} 
      \If{$\tags(\schedNull,i)=\{g\} \land tags(\schedNull,j)={h} \land I_i \ndist{q} I_j$} \labline{extract:2}
        \Return $q$ \labline{extract:3}
      \ElsIf{$\tags(\schedNull,i)=\{g,h\}$} \labline{extract:4}
        \ForAll{$Q \in 2^{\procSet} : Q \neq \emptySet$} \labline{extract:5}
          \State \textbf{let} $q = \locate(Q,i)$ \labline{extract:6}
          \If{$q \in g \inter h$} \labline{extract:7}
            \Return $q$ \labline{extract:8}
          \EndIf            
        \EndFor 
      \EndIf
      \Return $p$ \labline{extract:9}
    \EndFor
    \end{procedure}

    \vspace{.5em}
    
    \begin{function}{$\locate(Q=\{q_0,\ldots\},i)$}
      \State $S \assign \schedNull$; $k \assign 0$ \labline{locate:1}
      \Loop \labline{locate:2}
      \State \textbf{let} $q = q_{k}$, $m$ = oldest message (if none, then $\msgNull$) addressed to $q$ in $S(I_i)$ \labline{locate:3}
      \ForAll{$S \concat (q,m,d) \in \Upsilon_i : \exists x \in \tags(S \concat (q,m,d),i)$} \Comment{In order $<$} \labline{locate:4}
      \State \textbf{let} $S_x = S \concat (q,m,d)$ \labline{locate:5} 
      \If{$\tags(S_x,i)=\{g,h\}$} \labline{locate:6}
        \State $S \assign S_x$; $k \assign \modOf{(k+1)}{\cardinalOf{Q}}$  \labline{locate:7}
        \Go line~\ref{line:alg:locate:2} \labline{locate:8}
      \Else \labline{locate:9}
        \ForAll{$S' \in \successors(S) : \exists d' \in \range(D) \sep \xbar \in \tags(S' \concat (q,m,d'),i)$} \Comment{In order $<$} \labline{locate:10}
        \State \textbf{let} $S_{\xbar} = S' \concat (q,m,d')$ \labline{locate:11}
        \If{$\tags(S_{\xbar},i)=\{g,h\}$} \labline{locate:12}
          \State $S \assign S_{\xbar}$; $k \assign \modOf{(k+1)}{\cardinalOf{Q}}$ \labline{locate:13}
          \Go line~\ref{line:alg:locate:2} \labline{locate:14}
        \ElsIf{$\forall \hat{S} \in [S,S') \sep \hat{S} \concat (q,m,d') \in \Upsilon_i \land \cardinalOf{\tags(\hat{S} \concat (q,m,d'),i)}=1$} \labline{locate:15}
          \Return $\deciding(S,S_x,S_{\xbar})$ \labline{locate:16}
        \EndIf          
        \EndFor
      \EndIf
      \EndFor
      \Return $\bot$ \labline{locate:17}
      \EndLoop
    \end{function}
    
  \end{algorithmic}

\end{algorithm}

Consider an arbitrary environment $\E$, a failure detector $D$ and a strongly genuine solution $\A$ that uses $D$.
Given $g,h \in \Gr$, the construction of $\Omega_{g \inter h}$ from $D$ and $\A$ is depicted in \refalg{omega}.
This algorithm follows the general schema of CHT \cite{omega}, with some differences that we detail hereafter.

\refalg{omega} consists of four procedures that are run in a loop (\reflines{omega:0}{omega:3}).
Procedure $\sample$ is a collaborative sampling of failure detector $D$.
Procedure $\simulate$ executes runs of $A$ using this sampling and multiple initial configurations.
These runs form a forest which is tagged appropriately by the $\tagging$ procedure.
Based upon those tags, $\extract$ computes an eventual leader for the group intersection $g \inter h$.

The sections that follow detail each of these procedures and the guarantees they offer.
Then, \reftheo{correctness} shows that \refalg{omega} implements $\Omega_{g \inter h}$.

\subsection{Collaborative sampling of $D$}
\labappendix{omega:sampling}

The first procedure of \refalg{omega} implements a collaborative sampling of failure detector $D$.
The output of this sampling is stored in variable $G$ which contains, at each process, a directed acyclic graph.
Hereafter, we use selectors $G.V$ and $G.E$ to refer respectively to the vertices and the edges of $G$.

The $\sample$ procedure works as follows.
When a process $p$ executes this procedure for the $k$-th time, it retrieves a datum $d$ from failure detector $D$ (\refline{sampling:2}).
Process $p$ then adds a vertex $(p,d,k)$ to $G$ and an edge from this vertex to every vertex already existing in $G$ (\refline{sampling:3}).
Then, the new value of $G$ is sent to all the processes in the system.
Upon reception of such a message, every process merges this sample into its local graph (\refline{sampling:6}).

From the above logic, we may observe that every path in $G$ at a correct process eventually appears at every correct process.
This simply follows from the fact that links are reliable.
For the moment, we shall assume that the procedures $\simulate$, $\tagging$ and $\extract$ always return.
This result is proved later.

\begin{lemma}
  \lablem{sampling:1}
  Consider a correct process $p$ and some path $\pi$ that eventually appears in $G$.
  Eventually, $\pi$ appears in $G$ at every correct process.
\end{lemma}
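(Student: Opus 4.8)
The plan is to combine two elementary facts about the $\sample$ procedure: at every process the local graph $G$ never shrinks, and it is repeatedly flooded over reliable links by every correct process. I read the hypothesis as stating that the path $\pi$ is contained in the local graph $G$ of the correct process $p$ from some time $t$ on, and the goal as: every correct process eventually has $\pi$ in its own copy of $G$. First I would establish monotonicity. The only statements that modify $G$ are \reflinestwo{sampling:3}{sampling:6}, and both assign to $G$ a union with its current value; hence at each process the sequence of values of $G$ is non-decreasing for inclusion, so once a vertex or edge is present it remains present forever. In particular, since $\pi$ is contained in $p$'s graph at time $t$, it is contained in $p$'s graph at all later times.

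Next I would exploit that $p$ is correct. A correct process executes the main loop \refline{omega:0} infinitely often, hence runs $\sample$ infinitely often, and every run of $\sample$ ends by sending the current value of $G$ to all processes in $\procSet$ at \refline{sampling:4}. Because $\pi$ is contained in $p$'s $G$ after time $t$, process $p$ therefore sends infinitely many messages, each carrying a graph that contains $\pi$, to every process.

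To finish, fix an arbitrary correct process $q$. Being correct, $q$ also runs $\sample$ infinitely often and so attempts to receive from $\BUFF$ at \refline{sampling:5} infinitely often. By the \emph{well-formedness} guarantee of the model---a process that retrieves messages from the buffer infinitely often eventually receives every message addressed to it---$q$ eventually receives one of the graphs that $p$ sent after $t$ and merges it into its own $G$ at \refline{sampling:6}; thus $\pi$ becomes contained in $q$'s graph, and by monotonicity it stays there. As $q$ was arbitrary, $\pi$ eventually appears at every correct process. The argument is short and uses no synchrony; the only point that needs care is to justify invoking the reliable-delivery guarantee, i.e.\ to observe that every correct process does retrieve from $\BUFF$ infinitely often, which holds because the receive step sits inside the never-terminating main loop. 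This is the sole (minor) obstacle.
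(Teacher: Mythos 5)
Your proof is correct and follows essentially the same route as the paper's: monotonicity of $G$ from \reflinestwo{sampling:3}{sampling:6}, repeated dissemination at \refline{sampling:4} by the correct process $p$, and reliable delivery ensuring every correct process eventually merges a graph containing $\pi$ at \refline{sampling:6}. Your extra care in tying reliable delivery to the well-formedness condition (infinitely many receive attempts inside the main loop) is a welcome refinement of the paper's terser "since links are reliable," but it is not a different argument.
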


\begin{proof}
  First of all, observe that variable $G$ is monotonically growing over time (\reflinestwo{sampling:3}{sampling:6}).
  Consider a point in time $t$ at which $\pi$ is in $G$.
  Then, $\pi$ is in $G$ at every later time $t'>t$.
  Consider some correct process $q$.
  By assumption, process $p$ eventually executes \refline{sampling:4}, sending ${G}^{p,t'}$ to $q$.
  Since links are reliable, $q$ eventually receives this sample and merges it with variable $G$ (\refline{sampling:6}).
\end{proof}

Now, let us assume a run $R=(F,H,\any,\any,T)$ of the $\sample$ procedure.
Consider a sequence $\pi$ of tuples $(p,d,k)$, with $p \in \procSet$, $d \in \range(D)$ and $k \in \naturalSet$.
We say that $\pi$ \emph{is a sampling of $D$ in $R$} when there exists a mapping $\tau$ from the elements in $\pi$ to $T$ such that for every $v=(p,d,\any)$ in $\pi$,
\begin{inparaenum}
\item $p$ is not faulty at time $\tau(v)$, i.e., $p \notin F(\tau(v))$,
\item $d$ is a valid sample of $D$ at that time, that is $d=H(p,\tau(v))$, and
\item if $v$ precedes $v'$ in $\pi$ then $\tau(v) < \tau(v')$.
\end{inparaenum}
The mapping $\tau$ is named a \emph{sampling function} of $\pi$.
When the context is clear, $\pi$ is simply called a sampling.
\refprop{sampling:2} establishes that every path in the graph built with the $\sample$ procedure is a sampling.

\begin{proposition}
  \labprop{sampling:2}
  Consider some point in time of a run $R=(F,H,\any,\any,\any,T)$ of \refalg{omega}.
  Every path $\pi$ in $G$ is a sampling of $D$ in $R$.
\end{proposition}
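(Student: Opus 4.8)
The plan is to exhibit an explicit sampling function $\tau$ and then check the three defining conditions of a sampling in turn. For every vertex $v=(p,d,k)$ that ever appears in $G$, I would set $\tau(v)$ to be the (unique) time in $T$ at which process $p$ executed \refline{sampling:2} for the $k$-th time and obtained the value $d$. First I would record a \emph{provenance invariant}: a vertex enters a local graph only at \refline{sampling:3} (self-creation by its owner) or at \refline{sampling:6} (merge of a received graph), and \refline{sampling:6} can only import vertices that some process previously created at \refline{sampling:3}. Hence every vertex of $G$ originates from a concrete sampling step of its owner, so $\tau$ is well defined on all of $G$; moreover, a vertex $v$ can be present in any process's local graph only at times $\geq \tau(v)$, since its presence is causally traceable through a chain of sends and merges back to its creation at $\tau(v)$.

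Conditions (1) and (2) are then immediate from the well-formedness of a run. For (1), process $p$ takes a step at $\tau(v)$ (it queries $D$ at \refline{sampling:2}), and since no process takes steps after crashing, $p \notin F(\tau(v))$. For (2), the step taken at $\tau(v)$ conforms to the failure detector history $H$, so the value $d$ returned by $D.\query()$ is exactly $H(p,\tau(v))$, which lies in $\range(D)$ as required.

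The real work, and the part I expect to be the main obstacle, is the ordering condition (3); I would reduce it to a single claim about edges: for every edge $(v',v)$ present in $G$, $\tau(v') < \tau(v)$. The key observation is that every edge has as its target the vertex being created, so every edge is produced at \refline{sampling:3} by the owner $p$ of $v=(p,d,k)$ at time $\tau(v)$, linking $v$ to a vertex $v'$ that is already in $p$'s local graph at that moment. Since \refline{sampling:3} links only to vertices present before the current iteration (the merges of \refline{sampling:6} happen afterwards), $v'$ must have been added to $p$'s graph at a step of $p$ strictly preceding the query step for $v$; combining this with the provenance invariant ($\tau(v')$ is no later than that addition) yields $\tau(v') < \tau(v)$. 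Because a path $\pi=v_1 v_2 \ldots v_n$ in $G$ is a sequence of consecutive edges, $\tau$ is strictly increasing along $\pi$, so whenever $v$ precedes $v'$ in $\pi$ we get $\tau(v) < \tau(v')$ by transitivity, which is exactly (3). The subtle point to nail down is the provenance invariant for \emph{merged} vertices, namely that an imported vertex cannot ``predate'' its own creation; I would establish this by induction on the length of the send/merge chain that witnesses $v'$'s presence at $p$, and note that since this chain terminates at two distinct steps of the same process $p$, the final inequality is strict.
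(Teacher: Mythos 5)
Your proposal is correct and follows essentially the same route as the paper's proof: define $\tau(v)$ as the time the owner of $v$ performs its $k$-th query at \refline{sampling:2}, read off conditions (1) and (2) from well-formedness of the run, and derive (3) from the fact that \refline{sampling:3} only creates edges into the newly created vertex from vertices already present. The paper states the edge-monotonicity step in one line where you spell out the provenance invariant for merged vertices, but this is elaboration rather than a different argument.
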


\begin{proof}
  If $\pi$ is the empty sequence (that is, $\pi = \seqNull$), $\tau=\emptySet$ is a sampling function.
  Otherwise, $\pi=(q_1,d_1,k_1) (q_2,d_2,k_2), \ldots$.
  For each tuple $v=(q_i,d_i,k_i)$ in $\pi$, $\tau(v)$ is set to the time process $q_i$ executes \refline{sampling:2} to retrieve $d_i$ for the $k_i$-th time.
  One observes that at time $\tau(v)$, process $q_i$ is not faulty and $d$ equals $H(q_i,\tau(v))$.
  Moreover, according to \refline{sampling:3}, if $v$ precedes $v'$ in $\pi$ then $\tau(v) < \tau(v')$.
  From what precedes, we deduce that $\pi$ is a sampling.
\end{proof}

In the proof above, $\tau$ is built by taking the time at which each sample in $\pi$ is computed in $R$.
Hereafter, the sampling function of $\pi$ refers to this unique way of computing $\tau$.

A family $(\pi_k)_{k \in \naturalSet}$ is a \emph{sampling sequence} when for every $k \in \naturalSet$, $\pi_k$ is a sampling and $\pi_k$ strictly prefixes $\pi_{k+1}$.
%
The sampling sequence $(\pi_k)_k$ is \emph{replicated} when for every correct process $p$, for every $k \in \naturalSet$, $\pi_k$ is eventually always in $\paths(G^{p})$.
Given some process $p$, $(\pi_k)_k$ is $p$-fair when for every $m>0$, there exists $k$ such that $p$ appears $m$ times in $\pi_k$.
By extension, $(\pi_k)_k$ is $P$-fair when it is $p$-fair for every $p \in P$.
When $P$ is the set of correct processes, we shall simply say that $(\pi_k)_k$ is fair.
\refprop{sampling:3} below characterizes how replicated sampling sequences grow at the correct processes.

\begin{proposition}
  \labprop{sampling:3}
  Consider a fair run $\R$ of \refalg{omega}, some correct processes $P \subseteq \correct(R)$ and $p \in P$.
  Let $t$ be some point in time during $R$.
  There exists $(\pi_k)_{k \in \naturalSet}$ such that
  \begin{inparaenum}
  \item for each $k$, only processes in $P$ appears in $\pi_k$, and
  \item for every path $\pi$ in $G^{p,t}$, $(\pi \concat \pi_k)_{k \in \naturalSet}$ is a $P$-fair replicated sampling sequence.
  \end{inparaenum}
\end{proposition}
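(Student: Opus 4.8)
The plan is to exhibit $(\pi_k)_k$ explicitly as a \emph{round-robin} sampling of the processes in $P$ taken \emph{after} time $t$, with the individual sampling times spaced far enough apart to let each new sample propagate to the next sampler. Two facts carry the argument. First, by our standing assumption that $\simulate$, $\tagging$ and $\extract$ return, every correct process iterates the loop at \refline{omega:0} forever, so each $q \in P \subseteq \correct(R)$ executes $\sample$ (\refline{omega:1}) infinitely often in the fair run $\R$. Second, variable $G$ grows monotonically and is gossiped over reliable links (\reflinestwo{sampling:3}{sampling:6}), so any path that appears in $G$ at one correct process eventually appears and remains at every correct process (\reflem{sampling:1}), and by \refprop{sampling:2} every path in $G$ is a sampling of $D$ in $\R$.

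Concretely, I would first observe that since $p$ gossips $G^{p,t}$ and links are reliable, there is a time $t_0 \geq t$ after which every process $q \in P$ has $G^{p,t}$ as a subgraph of its local $G$ forever (monotone growth). Then I build the vertices $v_1, v_2, \ldots$ inductively. Fix an enumeration $q_1, q_2, \ldots$ that cycles through $P$ so that each element of $P$ recurs infinitely often. Choose a sampling time $s_1 \geq t_0$ at which $q_1$ calls $\sample$ for its, say, $\kappa_1$-th time, and set $v_1 = (q_1, H(q_1,s_1), \kappa_1)$; because $q_1$'s graph already contains all of $G^{p,t}$ at $s_1$, the new vertex $v_1$ receives an incoming edge from \emph{every} vertex of $G^{p,t}$ (\refline{sampling:3}). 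Inductively, having fixed $v_{j-1}$ created at time $s_{j-1}$, wait for $v_{j-1}$ to reach $q_j$ (possible by fairness and reliable links) and pick a later sampling time $s_j > s_{j-1}$ of $q_j$ at which $v_{j-1}$ is already in $q_j$'s local $G$; set $v_j = (q_j, H(q_j,s_j), \kappa_j)$, so the edge $v_{j-1} \to v_j$ is created. Finally put $\pi_k = v_1 v_2 \cdots v_k$.

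It then remains to verify the two required clauses for an arbitrary path $\pi$ in $G^{p,t}$. Clause (i) is immediate since only the $q_j \in P$ occur in $\pi_k$. For clause (ii): each $\pi_k$ strictly prefixes $\pi_{k+1}$, hence $\pi \concat \pi_k$ strictly prefixes $\pi \concat \pi_{k+1}$; it is a sampling because the sampling function $\tau_\pi$ of $\pi$ (from \refprop{sampling:2}) maps into times $\leq t < s_1$ and extends by $v_j \mapsto s_j$, each $q_j$ being alive at $s_j$ with $H(q_j,s_j)$ a valid sample in increasing time order. It is replicated because $\pi \concat \pi_k$ is genuinely a path in $G$: the terminal vertex of $\pi$ lies in $G^{p,t}$ and therefore has an edge to $v_1$ (this is exactly why we forced $s_1 \geq t_0$, which makes the construction work \emph{uniformly} for all prefixes $\pi$ simultaneously), while the edges $v_{j-1}\to v_j$ hold by construction; after all its vertices and edges accumulate at a correct process through gossip, \reflem{sampling:1} gives that it is eventually always in $\paths(G^{q'})$ for every correct $q'$. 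Lastly, $P$-fairness holds because the round-robin schedule makes each $q \in P$ occur at least $\lfloor k/\cardinalOf{P}\rfloor$ times in $\pi_k$, so for every $m$ some $k$ yields $m$ occurrences, and prepending $\pi$ only adds occurrences.

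The main obstacle is the interaction between the edge-creation rule of $G$ and the three simultaneous constraints: a new vertex only acquires incoming edges from vertices \emph{already known} to its sampler, yet $\pi_k$ must (a) use only $P$, (b) be attachable behind \emph{every} path of the fixed finite graph $G^{p,t}$ at once, and (c) be $P$-fair. Threading these together is precisely what the choice of $t_0$ (for uniform attachability) and the propagation-aware, round-robin selection of successive sampling times (for edge existence and fairness) are designed to achieve; the rest is bookkeeping with reliable links and the monotonicity of $G$.
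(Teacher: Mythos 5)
Your proposal is correct and follows essentially the same route as the paper: both construct $(\pi_k)_k$ by cycling round-robin through $P$, having each process take its next sample only after the previous vertex (and, for the first vertex, all of $G^{p,t}$) has propagated to it, and then invoke the monotonicity/gossip of $G$ (\reflem{sampling:1}) and \refprop{sampling:2} to get the replicated-sampling and fairness properties uniformly over all prefixes $\pi$. Your explicit choice of $t_0$ for uniform attachability is just a slightly more spelled-out version of the paper's ``starting from $p$ at time $t$'' step in its inductive construction.
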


\begin{proof}
  Graph $G$ is initially empty.
  Hence, by a short induction, when $G$ is extended by executing either \refline{sampling:3} or \refline{sampling:6}, this computation terminates.
  It follows that every correct process executes infinitely often the loop in \reflines{sampling:1}{sampling:6} in $R$.
  Consider a moment in time $t'>t$ at which some correct process $q$ executes an iteration of this loop.
  Since links are reliable, it is true that:
  \begin{inparaenumorig}[\em(1)]
  \item Every correct process eventually receives ${G}^{q,t'}$ and merges it with variable $G$; and
  \item At every correct process $q'$, for every $v \in {G}^{q,t'}.V$, there exists eventually some edge $(v,(q',\any,\any))$ in $G^{q'}$.
  \end{inparaenumorig}
  
  We build inductively $(\pi_k)_{k \in \naturalSet}$ using the above two observations.
  In detail, $\pi_1$ is defined as $\seqNull$, the empty sequence.
  Then, for every $k>1$, $\pi_{k+1}$ is constructed from $\pi_{k}$ as follows:
  \begin{construction}{replicated}
    Initially, $\pi_{k+1}$ is set to $\pi_{k}$.
    For every correct process $q \in P$, starting from $p$ at time $t$, waits that $v'$ the last element in $\pi_{k+1}$ appears in $G$.
    (If there is no such element, this condition is vacuously true.)
    Then, let $(q,\any,\any)$ be the next vertex added by $q$ to $G$ (at \refline{sampling:3}).
    Append $v$ to $\pi_{k+1}$.
  \end{construction}
  Observations (1) and (2) above imply that, for every $k>1$, $\pi_k$ is indeed built with \refconstruction{replicated}.
  From the pseudo-code, if $(q,\any,\any)$ is in some $\pi_k$ then necessarily $q \in P$.
  In addition, for every process $q \in P$, for every $m>0$, there exists $k$ such that $q$ appears $m$ times in $\pi_k$.
  Then, consider some $\pi \in \paths(G^{p,t})$.
  For some $k$, consider the process $q$ that creates the last vertex $v$ of $\pi_{k}$.
  By a short induction, $\pi \concat \pi_{k}$ is in $\paths(G^q)$ at that time.
  Thus, applying \refprop{sampling:3}, it is a sampling.
  Moreover, as $q$ is correct, observation (1) implies that eventually $\pi \concat \pi_{k}$ is always in $\paths(G)$ at every correct process.
  From what precedes, $(\pi \concat \pi_k)_k$ is a $P$-fair replicated sampling sequence.
\end{proof}

\subsection{Building the simulation forest}
\labappendix{omega:simulation}

A key observation in CHT \cite{omega} is that every path in the sampling graph induces a valid schedule for the current failure pattern.
Based on this observation, each process builds a simulation forest to explore an unbounded \emph{yet countable} number of runs of algorithm $\A$.

In our context, the runs of algorithm $\A$ that bring interest satisfy that
\begin{inparaenum}
\item the processes outside $g \inter h$ do not atomic multicast any message, and
\item the processes in $g \inter h$ multicast a single message to either $g$ or $h$.
\end{inparaenum}
Let us note $\I=\{I_1,\ldots,I_{n \geq 2}\}$ this set of initial configurations.
Configurations $I_i$ and $I_j$ are adjacent with respect to $q$, written $I_i \ndist{q} I_j$, when they differ in at most the state of process $q$.

Procedure $\simulate$ constructs a simulation forest $(\Upsilon_i)_{i \in [0,n]}$ from the schedules of $\A$ induced by the sampling graph $G$ and the initial configurations $(I_i)_{i \in [0,n]}$.
More precisely, each vertex in the tree $\Upsilon_i$ is a schedule starting from the initial configuration $I_i$.
The root of $\Upsilon_i$ is the empty schedule $S_\bot$ (\refline{omega:var:1}).
There is an edge labeled $e$ from $S$ to $S'$ in $\Upsilon_i$ when $S'=S.e$ holds for some step $e$.

Procedure $\simulate$ builds the forest $\Upsilon_i$ by considering all the paths in $G$ (\refline{simulation:2}).
For each such path $\pi$, $\simulate$ creates in $\Upsilon_i$ the schedules induced by $\pi$ (\reflines{simulation:3}{simulation:9}).
To this end, the procedure relies on a queue $\mathcal{Q}$ of pairs $(S,k)$, where $S$ is a schedule of $\A$ and $k$ tracks the next sample in $\pi$ used to create a step.
Initially, $\mathcal{Q}$ contains $(\schedNull,0)$, the empty schedule (\refline{simulation:3}).
Starting from a pair $(S,k)$ in $\mathcal{Q}$, $\simulate$ considers the $k$-th sample $(p,d,\any)$ in $\pi$ (\reflinestwo{simulation:4}{simulation:5}).
For every message $m$ addressed to $p$ in $S(I_i)$, the schedule $S \concat (p,m,d)$ is added to $\Upsilon_i$ (\reflines{simulation:6}{simulation:8}).
This process is repeated until all the schedules \emph{compatible with} $\pi$ have been explored (\refline{simulation:9}).
In detail, a schedule $S$ is compatible with a path $\pi$ when 
\begin{inparaenum}
\item $S=\schedNull$ and $\pi=\seqNull$, or
\item $S$ and $\pi$ have the same length and denoting $\pi=(q_1,d_1,k_1)(q_2,d_2,k_2)\ldots$, there exist some some (possibly null) message $m_1,m_2,\ldots$ such that $S=(q_1,m_1,d_1)(q_2,m_2,d_2)\ldots$.
\end{inparaenum}
The lemma below proves a key result regarding the simulation.

\begin{lemma}
  \lablem{simulation:0}
  A schedule $S \in \Upsilon_i$ is always compatible with some path in $G$ and applicable to $I_i$.
  Conversely, if $S$ is applicable to $I_i$ and compatible with a path $\pi$ in $G$ then $S$ is in $\Upsilon_i$.
\end{lemma}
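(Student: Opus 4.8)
The plan is to prove both implications by induction, exploiting that $\Upsilon_i$ and $G$ both grow monotonically over time, so it suffices to show the claim holds each time either structure is extended; the ``always'' quantifier then follows since nothing is ever removed. Throughout I would use the model fact that a step $(q,m,d)$ is applicable to a configuration $C$ exactly when $m \in C.\BUFF[q]$ or $m = \msgNull$, which is precisely the guard enumerated at \refline{simulation:6}.

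For the first (soundness) direction I would induct on the order in which schedules are inserted into $\Upsilon_i$. The base case is the root $\schedNull$, added at \refline{omega:var:3}: it is compatible with the empty path $\seqNull$ and vacuously applicable to $I_i$. For the inductive step, any other schedule enters $\Upsilon_i$ only at \refline{simulation:8}, as $S' = S \concat (q,m,d)$ where $(S,k)$ was popped from $\mathcal{Q}$, $(q,d,\any)=\pi[k]$ for some path $\pi \in \paths(G)$ (\refline{simulation:5}), and $m \in S(I_i).\BUFF[q] \union \{\msgNull\}$ (\refline{simulation:6}). The schedule $S$ was inserted earlier and pushed with index $k$; by the induction hypothesis it is compatible with the prefix of $\pi$ of length $k$ and applicable to $I_i$. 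The membership of $m$ then gives applicability of $(q,m,d)$ to $S(I_i)$, so $S'$ is applicable to $I_i$; and appending a step whose process and detector value equal $\pi[k]$ extends compatibility to the prefix of $\pi$ of length $k+1$, which is again a path in $G$ because prefixes of paths in a DAG are paths.

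For the second (completeness) direction I would induct on $\cardinalOf{S}$, which by compatibility equals $\cardinalOf{\pi}$. The base case $S=\schedNull$, $\pi=\seqNull$ holds since the root lies in every $\Upsilon_i$. For the step, write $S = S'' \concat (q,m,d)$ and decompose $\pi$ into its prefix $\pi''$ of length $k=\cardinalOf{S''}$ followed by $\pi[k]=(q,d,\any)$. Then $S''$ is applicable to $I_i$ and compatible with $\pi''$, and $\pi''$, being a prefix of $\pi$, is a path in $G$; by the induction hypothesis $S'' \in \Upsilon_i$. When $\simulate$ processes $\pi$ at \refline{simulation:2}, the queue $\mathcal{Q}$ starts at $(\schedNull,0)$ and eventually pops $(S'',k)$, since along the unique chain of prefixes of $S''$ each pair is pushed at \refline{simulation:9} (the condition $\cardinalOf{\pi}-1 > k-1$ holds as $\cardinalOf{\pi}=k+1$). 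At that point \refline{simulation:5} reads $\pi[k]=(q,d,\any)$, and since $S$ is applicable we have $m \in S''(I_i).\BUFF[q] \union \{\msgNull\}$, so the iteration at \refline{simulation:6} selects $m$ and \refline{simulation:8} inserts $S = S'' \concat (q,m,d)$ into $\Upsilon_i$.

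The main obstacle I anticipate is the queue bookkeeping in the completeness direction: one must verify that the breadth-first exploration driven by $\mathcal{Q}$ actually reaches the pair $(S'',k)$ with the correct index, i.e., that every schedule applicable to $I_i$ and compatible with a prefix of $\pi$ is pushed with an index matching its length. I would isolate this as an auxiliary invariant of the inner loop \reflines{simulation:3}{simulation:9}: for every prefix $\rho$ of $\pi$ of length $j$ and every $\sigma$ applicable to $I_i$ and compatible with $\rho$, the pair $(\sigma,j)$ is eventually popped from $\mathcal{Q}$ when $j < \cardinalOf{\pi}$, and $\sigma$ is in either case inserted into $\Upsilon_i$. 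Establishing this invariant reduces both directions to routine checks against the applicability guard at \refline{simulation:6} and the push condition $\cardinalOf{\pi}-1 > k$ at \refline{simulation:9}.
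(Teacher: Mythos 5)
Your proposal is correct and follows essentially the same route as the paper: the soundness direction by induction on the order of insertion into $\Upsilon_i$ via \refline{simulation:8}, and the completeness direction by arguing that when $\simulate$ processes $\pi$ the queue-driven loop at \reflines{simulation:3}{simulation:9} reaches every schedule applicable to $I_i$ and compatible with a prefix of $\pi$. Your explicit queue-bookkeeping invariant just spells out what the paper compresses into ``by induction on the pseudo-code,'' so no substantive difference.
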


\begin{proof}
  The first part of the lemma is obtained by induction:
  In detail, this is trivial when $S=\schedNull$.
  Then, consider when $S$ is added to $\Upsilon_i$ at \refline{simulation:8}.
  We have $S=S' \concat (q,m,d)$.
  By induction hypothesis, there exists a path $\pi'$ such that $S'$ is compatible with $\pi'$ and $S'$ applicable to $I_i$.
  $S$ is compatible with $\pi' \concat (q,m,\any)$, where $(q,m,\any) = \pi[k]$, as written at \refline{simulation:5}.
  As $S$ suffixes $S'$ and $S'$ is applicable to $I_i$, so is $S$.

  Regarding the second part of the lemma, assume that $S$ is applicable to $I_i$ and compatible with $\pi \in G$.
  If $S=\schedNull$ then according to \refline{omega:var:3}, $S$ is in $\Upsilon_i$.
  Otherwise, $\pi$ is added to $G$ at \refline{omega:3}
  Hence, \refline{simulation:2} is executed afterward with path $\pi$.
  (In our system model, this happens at the same logical time.)
  When \refline{simulation:2} is executed, by induction on the pseudo-code at \reflines{simulation:3}{simulation:9} and using the fact that $S$ is applicable to $I_i$, $S$ is added to $\Upsilon_i$.
\end{proof}

From which, we deduce easily the following result:

\begin{corollary}
  \labcor{simulation:0a}
  Consider a correct process $p$ and pick some schedule $S$ that eventually appears in $\Upsilon_i$ at $p$.
  Then, schedule $S$ is also eventually in $\Upsilon_i$ at every correct process.
\end{corollary}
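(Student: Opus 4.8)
The plan is to derive the corollary by chaining the characterization of the simulation forest in terms of the sampling graph, namely \reflem{simulation:0}, with the propagation guarantee for paths of $G$ stated in \reflem{sampling:1}. The underlying point is that membership of a schedule in $\Upsilon_i$ is fully determined by the pair $(G, I_i)$, and both ingredients are eventually shared across the correct processes: $G$ is replicated by reliable links, and $I_i$ is a fixed initial configuration.

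First I would dispose of the degenerate case. If $S = \schedNull$, then $S$ lies in every tree $\Upsilon_i$ from the start by the initialization at \refline{omega:var:3}, so there is nothing to prove. Hence assume $S \neq \schedNull$; such a schedule can only be inserted into $\Upsilon_i$ at \refline{simulation:8}, which is reachable only when the guard at \refline{omega:2} holds, so the correct process $p$ at hand belongs to $g \inter h$. Next I would invoke the forward direction of \reflem{simulation:0}: since $S$ eventually appears in $\Upsilon_i$ at $p$, it is applicable to $I_i$ and compatible with some path $\pi$ that appears in $G$ at $p$. As $p$ is correct, \reflem{sampling:1} applies to $\pi$ and yields that $\pi$ eventually appears in $G$ at every correct process $q$. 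I would then apply the converse direction of \reflem{simulation:0} at each such $q$: applicability of $S$ to $I_i$ and compatibility of $S$ with $\pi$ are properties of the triple $(S,\pi,I_i)$ alone, hence independent of the process evaluating them; consequently, once $\pi$ is present in $G$ at $q$, the $\simulate$ procedure inserts $S$ into $\Upsilon_i$ at $q$. This establishes the claim for every correct process in $g \inter h$.

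The argument is short because the two auxiliary lemmas do the real work; the only subtlety worth flagging is the composition of the ``eventually'' quantifiers. One must check that $\pi$ is not merely present transiently at $q$ but stably present, so that the simulation loop has an iteration in which it can add $S$; this holds precisely because $G$ never shrinks (\reflinestwo{sampling:3}{sampling:6}), so once $\pi \in G^q$ it remains there through the next execution of $\simulate$. A secondary point, which I expect to be the main thing to get right rather than a genuine obstacle, is that only processes in $g \inter h$ maintain a nontrivial forest; this is why the substantive content of the corollary concerns correct processes in the intersection, consistent with the fact that a non-empty $S$ can originate only at such a process in the first place.
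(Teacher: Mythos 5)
Your proof is correct and follows essentially the same route as the paper's: apply the forward direction of \reflem{simulation:0} to obtain a path $\pi$ compatible with $S$, propagate $\pi$ to every correct process via \reflem{sampling:1}, and conclude with the converse direction of \reflem{simulation:0}. The extra remarks on the $\schedNull$ case, the monotonicity of $G$, and the restriction to processes in $g \inter h$ are sound clarifications but do not change the argument.
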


\begin{proof}
  Pick some correct process $q$.
  From \reflem{simulation:0}, schedule $S$ is compatible with some path $\pi$ and applicable to $I_i$.
  By \reflem{sampling:1}, $\pi$ eventually appears in $G$ at $q$.
  Applying again \reflem{simulation:0}, $S$ is eventually in $\Upsilon_i$ at $q$.
\end{proof}

We now explain how each schedule in the simulation forest induces a run of algorithm $\A$ for the current failure pattern.
Further, we prove that each $P$-fair replicated sampling sequence induces a sequence of runs of $\A$ that converges toward a $P$-fair run.

For starters, consider a run $R$ of \refalg{omega} and a correct process $p$.
At $p$, each schedule $S$ in a tree $\Upsilon_i$ corresponds to some run of algorithm $\A$.
This run is built using function $\runOf{S,R,i}$ defined as follows:
\begin{construction}{run}
  Let $\pi$ be a path such that $S$ is compatible with $\pi$ (by \reflem{simulation:0}).
  Let $F$ and $H$ be respectively the failure pattern and failure detector history of run $R$.
  Define $\tau$ as the sampling function of $\pi$ (by \refprop{sampling:2}). 
  Let $T$ be $(\codomain(\tau),<)$, with $<$ the usual total order on naturals.
  Function $\runOf{S,R,i}$ maps $S$ to $\langle F, H, I_i, S, T \rangle$.
\end{construction}
The result below establishes that $\runOf{S,R,i}$ is indeed a run of $\A$.

\begin{proposition}
  \labprop{simulation:1}
  For every schedule $S$ in $\Upsilon_i$, $\runOf{S,R,i}$ is a (finite) run of $\A$.
\end{proposition}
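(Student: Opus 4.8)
The plan is to verify directly that the tuple $\langle F, H, I_i, S, T \rangle$ produced by \refconstruction{run} meets the definition of a run of $\A$ together with the four well-formedness conditions of the model. The first components are immediate: $F \in \E$ and $H \in D(F)$ since $F$ and $H$ are the failure pattern and history of the run $R$ of \refalg{omega}, and $I_i$ is an initial configuration by the very definition of $\I$. By \reflem{simulation:0}, the schedule $S$ is compatible with some path $\pi$ in $G$ and applicable to $I_i$; by \refprop{sampling:2}, this path $\pi$ is a sampling of $D$ in $R$, so its sampling function $\tau$ is well-defined and $T=(\codomain(\tau),<)$ makes sense.

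First I would establish finiteness and the length condition. Every schedule in $\Upsilon_i$ is built along a finite path of $G$ (\reflines{simulation:3}{simulation:9}), so both $S$ and $\pi$ are finite and, by compatibility, of equal length. As $\tau$ is strictly monotone along $\pi$ (a defining property of a sampling function), it is injective, whence $\cardinalOf{T}=\cardinalOf{\pi}=\cardinalOf{S}$; both sequences are therefore finite and of the same length, and $T$ is growing.

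Next I would check that $S$ conforms to $\A$, $T$ and $H$, and that no process takes a step after crashing. Writing $S=s_1 \ldots s_\ell$ and $\pi=v_1 \ldots v_\ell$ with $s_k=(q_k,m_k,d_k)$ and $v_k=(q_k,d_k,\any)$ (using compatibility to align the process identities and samples), the sampling property of $\pi$ yields $q_k \notin F(\tau(v_k))$ and $d_k=H(q_k,\tau(v_k))$ for each $k$. The former gives that no process steps after crashing; the latter, together with the applicability of $S$ to $I_i$ from \reflem{simulation:0} (which guarantees that each $m_k$ lies in $\BUFF[q_k]$ in the corresponding intermediate configuration), shows that every step agrees with the automaton of $\A$ under history $H$ at time $\tau(v_k)$. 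Finally, the message-fairness condition holds vacuously, because $S$ is finite and hence no process retrieves messages infinitely often.

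I do not expect a genuine obstacle here: the substantive work is already discharged by \reflem{simulation:0} (compatibility and applicability) and \refprop{sampling:2} (the sampling property of paths in $G$), so the argument is a bookkeeping check of the well-formedness clauses. The only point requiring minor care is the step-to-sample correspondence between $S$ and $\pi$, which is precisely where compatibility is invoked so that $\tau$ supplies a consistent, strictly monotone timing $T$ matching the failure-detector values used in $S$.
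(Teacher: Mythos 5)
Your proof is correct, and it reaches the same conclusion by a slightly different decomposition than the paper. The paper argues by induction on the construction of $S$ inside $\simulate$: the base case is $\schedNull$, and the inductive step takes $(S',k-1)$ from the queue $\mathcal{Q}$, reads off from \reflines{simulation:4}{simulation:7} that the appended step $(q,m,d)$ uses the $k$-th sample of $\pi$ and a message in $S'(I_i).\BUFF[q] \union \{\msgNull\}$, and concludes that $\runOf{S,R,i}$ extends the run $\runOf{S',R,i}$ by one legal step. You instead do a single direct verification of the well-formedness clauses, delegating all the structural reasoning to \reflem{simulation:0} (compatibility with some $\pi \in G$ and applicability to $I_i$) and \refprop{sampling:2} (the sampling property giving $d_k = H(q_k,\tau(v_k))$ and $q_k \notin F(\tau(v_k))$). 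The two routes are logically equivalent --- your appeal to \reflem{simulation:0} hides exactly the induction the paper spells out --- but yours is arguably tidier in that it explicitly enumerates the well-formedness conditions (finiteness and equal lengths of $S$ and $T$ via injectivity of $\tau$, no steps after crashing, conformance to $\A$ and $H$, and vacuity of the message-fairness clause for finite schedules), which the paper's proof leaves implicit. No gap either way.
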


\begin{proof}
  Let $\langle F, H, I_i, S, T \rangle$ be the value of $\runOf{S,R,i}$.
  According to \refconstruction{run}, $F$ and $H$ are respectively the failure pattern $R.F$ and the history $R.H \in D(R.F)$.
  $I_i$ is an initial configuration of $\A$, as defined above.
  If $S=\schedNull$, then according to \refconstruction{run}, $T=\emptySet$.
  Otherwise, assume inductively that for each element $(S',\any)$ in $\mathcal{Q}$, $\runOf{S',R,i}$ is a run of $\A$.
  Schedule $S$ is built by concatenating a step $(q,m,d)$ at \refline{simulation:7} from some $(S',k-1)$ in $\mathcal{Q}$.
  By our induction hypothesis, $\runOf{S',R,i}$ is a run of $\A$.
  Denote $\langle F, H, I_i, S', T' \rangle$ the value of $\runOf{S',R,i}$.
  According to \refconstruction{run},  we have $T = T'\concat t$. 
  Furthermore, $\cardinalOf{S} = \cardinalOf{T}$, $H(q,t)=d$ and $q$ is not faulty at time $t$.
  From the pseudo-code at \reflines{simulation:4}{simulation:7},
  \begin{inparaenum}
  \item $S = S' \concat (q,m,d)$,
  \item $(q,d,\any) = \pi[k]$, and
  \item $m \in S'(I_i).\BUFF[p] \union \{\msgNull\}$.
  \end{inparaenum}
  As a consequence, $\runOf{S,R,i}$ is a run of $\A$.
\end{proof}

Consider some path $\pi$ in graph $G$ at process $p$.
In what follows, $\sched(\pi,I_i)$ denotes the schedule starting from $I_i$ built from $\pi$ with $\simulate$ when at \refline{simulation:6} the oldest message addressed (if none, then $\msgNull$) to $p$ is always retrieved from $\BUFF[p]$.
Notice that by construction, $\sched$ is a function of $\pathSet \times \{I_1, \ldots, I_{n \geq 2}\}$.

\begin{lemma}
  \lablem{simulation:1}
  For every path $\pi$ in graph $G$, for every initial configuration $I_i$, $\sched(\pi,I_i)$ is compatible with $\pi$ and applicable to $I_i$.
\end{lemma}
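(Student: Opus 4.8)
The plan is to prove the statement by induction on the length of the path $\pi$, following the deterministic branch of the $\simulate$ procedure that $\sched$ selects at \refline{simulation:6}.

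For the base case I would take $\pi = \seqNull$. By definition $\sched(\seqNull, I_i) = \schedNull$, and the empty schedule is compatible with the empty path by the first clause of the compatibility definition, and is trivially applicable to $I_i$.

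For the inductive step I would write $\pi = \pi' \concat v$, where $v = \pi[\cardinalOf{\pi}-1] = (q, d, k)$ is the last sample of $\pi$ and $\pi'$ is the strict prefix. Because $\simulate$ consumes $\pi$ one sample at a time (\reflines{simulation:4}{simulation:9}) and $\sched$ fixes the canonical message choice at \refline{simulation:6}, we have $\sched(\pi, I_i) = \sched(\pi', I_i) \concat (q, m, d)$, where $m$ is the oldest message addressed to $q$ in the configuration reached by $\sched(\pi', I_i)$, or $\msgNull$ if no such message is in transit. Writing $S' = \sched(\pi', I_i)$, the induction hypothesis gives that $S'$ is compatible with $\pi'$ and applicable to $I_i$, so the configuration $S'(I_i)$ is well defined and $m \in S'(I_i).\BUFF[q] \union \{\msgNull\}$. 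Consequently the step $(q, m, d)$ is applicable to $S'(I_i)$, which makes $\sched(\pi, I_i) = S' \concat (q, m, d)$ applicable to $I_i$. Compatibility then follows because $S'$ matches $\pi'$ position by position, while the appended step carries exactly the process $q$ and the failure-detector sample $d$ of $v$; taking the message witnesses of $S'$ supplied by the hypothesis together with $m$ exhibits $\sched(\pi, I_i)$ as a schedule of the required shape, hence compatible with $\pi$.

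The argument is essentially a specialization of the first half of \reflem{simulation:0} to the deterministic oldest-message policy, so I expect no real obstacle. The only point demanding care is to make explicit that fixing the choice at \refline{simulation:6} to the oldest in-transit message (or $\msgNull$) yields a unique, well-defined schedule, and that this choice keeps every prefix applicable---which is exactly what preserves applicability across the induction.
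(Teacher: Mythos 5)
Your proof is correct and follows essentially the same route as the paper's: an induction on the length of $\pi$, with the empty path as base case and the inductive step unfolding $\sched(\pi,I_i)=\sched(\pi',I_i)\concat(q,m,d)$ where $m$ is the oldest in-transit message (or $\msgNull$), then reading off applicability and compatibility. No gap to report.
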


\begin{proof}
  We proceed by induction on the length of $\pi$.
  If $\pi=\seqNull$, then $\sched(\pi,I_i)=\schedNull$.
  By definition $\schedNull$ is compatible with $\seqNull$ and applicable to $I_i$.
  Then, consider that $\pi=\pi'{\concat}(p',d',k')$ and $\sched(\pi',I_i)$ compatible with $\pi'$ and applicable to $I_i$.
  $\sched(\pi,I_i)$ is computed at \refline{simulation:6} such that $\sched(\pi,I_i)=S \concat (q,m,d)$.
  By a short induction, at that time,
  \begin{inparaenum}
  \item variable $S$ equals $\sched(\pi',I_i)$,
  \item the tuple $(q,d,\any)$ equals $(p',d',k')$, and
  \item variable $m$ is the oldest message addressed (if none, then $m=\msgNull$) to $p$ in $\sched(\pi',I_i).(I_i).\BUFF[p]$.
  \end{inparaenum}
  It follows that $\pi$ is compatible with $\sched(\pi,I_i)$ and $\sched(\pi,I_i)$ is applicable to $I_i$.
\end{proof}

\begin{lemma}
  \lablem{simulation:2}
  Consider $\pi,\pi' \in G$ with $\pi={\pi'}{\concat}{\pi''}$.
  Let
  $S=\sched(\pi,I_i)$
  $S'=\sched(\pi',I_i)$,
  and $E=\sched(\pi'',S'(I_i))$.
  It is true that $S=S' \concat E$.
\end{lemma}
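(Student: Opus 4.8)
The plan is to prove the identity by induction on the length $\cardinalOf{\pi''}$ of the suffix, exploiting two facts: that $\sched$ is deterministic (it always picks the oldest pending message), and that schedule application is compositional, i.e. $(S_1\concat S_2)(C)=S_2(S_1(C))$ for any applicable schedules and configuration $C$, which is immediate from the model in \refappendix{model}. The base case is $\pi''=\seqNull$: then $\pi=\pi'$, so $S=S'$, while $E=\sched(\seqNull,S'(I_i))=\schedNull$, giving $S=S'\concat\schedNull=S'$ as required.

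For the inductive step I would write $\pi''=\hat\pi\concat(q,d,k)$ and set $\rho=\pi'\concat\hat\pi$, so that $\pi=\rho\concat(q,d,k)$. Applying the induction hypothesis to the decomposition $\rho=\pi'\concat\hat\pi$ yields $\sched(\rho,I_i)=S'\concat\hat E$, where $\hat E=\sched(\hat\pi,S'(I_i))$. By the construction of $\sched$ inside $\simulate$ (\reflines{simulation:4}{simulation:7}, as unwound in the proof of \reflem{simulation:1}), extending a path by one sample appends exactly one step: the process and failure-detector value are read off from the last sample $(q,d,\any)$, and the message is the oldest one addressed to $q$ in the current configuration (or $\msgNull$ if none). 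Hence $S=\sched(\rho,I_i)\concat(q,m,d)$, with $m$ the oldest message for $q$ in the configuration $\sched(\rho,I_i)(I_i)$, and similarly $E=\hat E\concat(q,m',d)$, with $m'$ the oldest message for $q$ in $\hat E(S'(I_i))$.

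The crux of the argument is the identity $\sched(\rho,I_i)(I_i)=(S'\concat\hat E)(I_i)=\hat E(S'(I_i))$, which is exactly compositionality of schedule application. Since $m$ and $m'$ are both selected from the \emph{same} configuration, we get $m=m'$, and therefore $S=(S'\concat\hat E)\concat(q,m,d)=S'\concat(\hat E\concat(q,m',d))=S'\concat E$, closing the induction.

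The main obstacle I anticipate is not the combinatorics but making the operational definition of $\sched$ precise enough to justify the two structural facts used above, namely that one path-extension appends exactly one step and that the configuration reached depends only on the schedule and the initial configuration (not on how the path was split). Both follow from the $\simulate$ loop and the schedule-application semantics of \refappendix{model}, so I expect this to be routine once the bookkeeping is set up; the one point to state carefully is that the oldest-message selection is a function of the reached configuration alone, which is what forces $m=m'$.
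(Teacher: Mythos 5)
Your proof is correct and takes the same route as the paper, whose entire proof of this lemma is the single line ``By induction''; your induction on $\cardinalOf{\pi''}$, with the key observations that $\sched$ appends exactly one step per sample and that the oldest-message selection depends only on the reached configuration (forcing $m=m'$ via compositionality of schedule application), is precisely the argument the paper leaves implicit.
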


\begin{proof}
  By induction.
\end{proof}

\begin{proposition}
  \labprop{simulation:2}
  Consider a run $\R$ of $\refalg{omega}$ and a set of correct processes $P \subseteq \correct(\R)$.
  Let $(\pi_k)_{k \in \naturalSet}$ be a $P$-fair replicated sampling sequence and $I_i$ some initial configuration.
  The family of schedules $(S_k)_k=(\sched(\pi_k,I_i))_k$ satisfies that 
  \begin{inparaenum}
  \item $S_k$ strictly prefixes $S_{k+1}$,
  \item every $S_k$ is eventually always in $\Upsilon_i$ at each $p \in P$, and
  \item $(\runOf{S_k,R,i})_k$ converges toward a $P$-fair run of $\A$.
  \end{inparaenum}
\end{proposition}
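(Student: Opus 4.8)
The plan is to dispatch the three claims in order, the first two being direct consequences of the structural facts already established for $\sched$, and the third carrying the real content. For claim (1) I would use the decomposition lemma \reflem{simulation:2}. Since $(\pi_k)_k$ is a sampling sequence, $\pi_k$ strictly prefixes $\pi_{k+1}$, so $\pi_{k+1}=\pi_k \concat \pi''$ for some non-empty $\pi''$. Writing $S_k=\sched(\pi_k,I_i)$ and $E=\sched(\pi'',S_k(I_i))$, \reflem{simulation:2} yields $S_{k+1}=S_k \concat E$. By \reflem{simulation:1} each $\sched(\pi,I_i)$ is compatible with $\pi$ and hence has its length, so $E$ is non-empty and $S_k$ strictly prefixes $S_{k+1}$. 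For claim (2), \reflem{simulation:1} also gives that $S_k$ is applicable to $I_i$ and compatible with $\pi_k$. Because $(\pi_k)_k$ is \emph{replicated}, for each $p \in P$ the path $\pi_k$ is eventually always in $\paths(G^p)$; invoking the converse direction of \reflem{simulation:0}, once $\pi_k \in G^p$ the applicable, compatible schedule $S_k$ belongs to $\Upsilon_i$, and since $\Upsilon_i$ only grows (schedules are unioned in at \refline{simulation:8} and never removed), $S_k$ is thereafter \emph{always} in $\Upsilon_i$ at $p$.

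The heart of the argument is claim (3). As the $S_k$ form a strict $\prefix$-chain, I would set $S_\infty=\bigcup_k S_k$ and take the candidate limit run $R_\infty=\langle F,H,I_i,S_\infty,T_\infty\rangle$, where $F,H$ are the failure pattern and history of $R$ and $T_\infty=\bigcup_k T_k$ is the limit of the timings furnished by \refconstruction{run}. The first thing to verify is that these timings cohere: since $\pi_k$ prefixes $\pi_{k+1}$, the canonical sampling functions agree on the common prefix, so $T_k \prefix T_{k+1}$ and $T_\infty$ is a well-defined, growing sequence of times. By \refprop{simulation:1} each finite $\runOf{S_k,R,i}$ is a run of $\A$, and passing to the limit of this prefix-chain preserves every well-formedness condition that is a finite-prefix property, namely that no process steps after crashing and that $S_\infty$ conforms to $\A$ under timing $T_\infty$ and history $H$. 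Note also that $|S_k|=|\pi_k|$ grows without bound, so $S_\infty$ and $T_\infty$ are both infinite.

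It remains to check the two genuinely infinitary requirements, and this is where the hypotheses on $(\pi_k)_k$ are spent. For $P$-fairness, I would use that $(\pi_k)_k$ is $P$-fair: for every $p \in P$ and every $m$ some $\pi_k$ contains $p$ exactly $m$ times, and by compatibility each occurrence of $p$ in $\pi_k$ is a step of $p$ in $S_k$; hence $p$ takes infinitely many steps in $S_\infty$, so $R_\infty$ is fair for every $p \in P$. The last condition — that a process receiving messages infinitely often eventually receives every message addressed to it — follows from the defining discipline of $\sched$, namely that at \refline{simulation:6} the \emph{oldest} pending message is always dequeued, so any process taking infinitely many steps drains its buffer entirely. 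Together these establish that $R_\infty$ is a legitimate $P$-fair run of $\A$, i.e. $(\runOf{S_k,R,i})_k$ converges toward it. I expect the main obstacle to be precisely this last step: formalising ``converges toward a run'' so that the prefix-limit of the $\runOf{S_k,R,i}$ is an admissible infinite run rather than a mere set-theoretic union, which forces one to pin $T_\infty$ to a single coherent sampling function and to argue that the oldest-first discipline of $\sched$, combined with $P$-fairness, is exactly what restores the message-reception fairness clause at the limit.
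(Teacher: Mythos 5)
Your proof is correct and follows essentially the same route as the paper's: claims (1) and (2) via \reflem{simulation:2}, \reflem{simulation:1} and the converse direction of \reflem{simulation:0}, and claim (3) by taking the prefix-limit $S_\infty$, $T_\infty$ and checking the well-formedness and $P$-fairness conditions. The only cosmetic difference is that the paper phrases the limit via a metric on $\omega$-words rather than a prefix-union, and your treatment of the message-reception clause (the oldest-first discipline of $\sched$) and of strictness in (1) is in fact slightly more explicit than the paper's.
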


\begin{proof}
  Consider a replicated sampling sequence $(\pi_k)_k$.
  For starters, we observe that:
  \begin{inparaenum}
  \item for every $k$, as $\pi_k$ is eventually in $G$, then $S_k$ is eventually always in $\Upsilon_i$ (\reflemtwo{simulation:0}{simulation:1}),
  \item since $\pi_k$ prefixes $\pi_{k+1}$, $S_k$ prefixes $S_{k+1}$ (\reflem{simulation:2}), and
  \item for every $k$, $R_k=\runOf{S_k,R,i}$ is a run of $\A$ (\refprop{simulation:1}).
  \end{inparaenum}
  
  Let $\mathcal{A}^{\omega}$ be the space all the $\omega$-words built atop the alphabet $\mathcal{A} = \procSet \times \mathit{Msg} \times \naturalSet$.
  $\mathcal{A}^{\omega}$ is a metric space for the usual distance function $d(u,v) = \inf \{ 2^{\cardinalOf{p}} : u \prefix p \land p \prefix v\}$.
  Define $S_{\infty}$ as $\lim_{k \rightarrow \infty} \sched(I_i,\pi_k)$.
  Observe that $(S_k)_k$ is a growing sequence of $\omega$-words which satisfies that $d(S_{k+1},S_{\infty}) < d(S_{k+1},S_{\infty})$.
  Hence, $(S_k)_k$ converges toward $S_{\infty}$ in $\mathcal{A}^{\omega}$.
  The very same reasoning applies to construct $T_{\infty}$ the sequence of times built with the sampling functions $(\tau_k)_k$ of $(\pi_k)_k$.
  Thus, for some appropriate metric space, $(\runOf{S_k,R,i})_k$ converges toward $\hat{\R}=\langle F, H, I_i, S_{\infty}, T_{\infty} \rangle$ 
  In $\hat{\R}$, only the processes in $P$ takes an unbounded amount of steps.
  As $P \subseteq \correct(F)$, this run is $P$-fair.
  Moreover in $\hat{\R}$,
  \begin{inparaenum}
  \item No process take steps after crashing.
  \item Both $S_{\infty}$ and $T_{\infty}$ are infinite.
  \item By induction, $S_{\infty}$ conforms to $\A$, the timing $T_{\infty}$ and $H$.
  \item Every message addressed to some process in $P$ is eventually delivered.
  \end{inparaenum}
  It follows that $\hat{\R}$ is a (well-formed) run of $\A$.
\end{proof}

\begin{proposition}
  \labprop{simulation:3}  
  Consider some run $R$ of \refalg{omega} and $t$ a point in time during $R$.
  Let $S$ be a schedule in $\Upsilon_i^{p,t}$ and $m$ be a message multicast in $S$ by some correct process.
  If $p \in \dst(m)$ then there exists a (finite) schedule $E$ such that
  \begin{inparaenum}
  \item $S \concat E$ is eventually always in $\Upsilon_i^p$,
  \item only the correct processes in $\dst(m)$ take steps in $E$, and
  \item process $p$ delivers $m$ in $S \concat E(I_i)$.
  \end{inparaenum}
  Furthermore, for every schedule $S' \in \Upsilon_i^{p,t}$, if $E$ is applicable to $S'(I_i)$, then $S' \concat E$ is eventually always in $\Upsilon_i^p$
\end{proposition}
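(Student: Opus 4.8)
The plan is to extend the given schedule $S$ by a $P$-fair continuation, where $P \equaldef \correct(R) \inter \dst(m)$, push this continuation to its limit to obtain an infinite $P$-fair run of $\A$, invoke group parallelism to force $p$ to deliver $m$, and finally truncate the continuation at the delivery step. Since $p$ maintains $\Upsilon_i^p$ throughout $R$ it is correct, and $p \in \dst(m)$ by hypothesis, so $p \in P$. By \reflem{simulation:0}, $S$ is compatible with some path $\pi \in G^{p,t}$ and applicable to $I_i$.

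First I would apply \refprop{sampling:3} to the set $P$, process $p$ and time $t$, obtaining a family $(\pi_k)_k$ in which only processes of $P$ appear and such that $(\rho \concat \pi_k)_k$ is a $P$-fair replicated sampling sequence for \emph{every} path $\rho$ in $G^{p,t}$, in particular for $\rho = \pi$. I would then set $E_k \equaldef \sched(\pi_k, S(I_i))$. By \reflem{simulation:1} each $E_k$ is compatible with $\pi_k$ and applicable to $S(I_i)$, hence $S \concat E_k$ is compatible with $\pi \concat \pi_k$ and applicable to $I_i$; since $\pi \concat \pi_k$ is eventually in $G^p$ by replication, \reflem{simulation:0} and \refcor{simulation:0a} place $S \concat E_k$ eventually always in $\Upsilon_i^p$. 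Using \reflem{simulation:2} the $E_k$ form an increasing chain, so $(S \concat E_k)_k$ converges to $S \concat E_\infty$, and, exactly as in the proof of \refprop{simulation:2} but adapted to a run whose finite prefix is $S$ and whose infinite tail lies in $P$, the runs $\runOf{S \concat E_k, R, i}$ converge to a well-formed $P$-fair run $\hat{R}$ of $\A$ with the same failure pattern $F$ as $R$.

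The crux is then to apply strong genuineness. Since $m$ is multicast in $S$ by a process correct in $F$, it is multicast by a correct process in $\hat{R}$ as well; as $\hat{R}$ is $P$-fair with $P = \correct(\hat{R}) \inter \dst(m)$, the group parallelism property of $\A$ guarantees that every process of $P$, and in particular $p$, delivers $m$ in $\hat{R}$. This delivery occurs after a finite prefix of the infinite tail, so I would let $E$ be the shortest prefix of $E_\infty$ whose application to $S(I_i)$ makes $p$ deliver $m$; this yields conditions (3) and (2), the latter because every step of $E_\infty$ is taken by a process of $P \subseteq \correct(F) \inter \dst(m)$. For condition (1), note that $E$ is a prefix of some $E_{k_0}$, so $S \concat E$ prefixes $S \concat E_{k_0}$, and since $\Upsilon_i$ is prefix-closed, $S \concat E$ is eventually always in $\Upsilon_i^p$.

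For the ``furthermore'' clause I would reuse the same $(\pi_k)_k$: for $S' \in \Upsilon_i^{p,t}$ compatible, by \reflem{simulation:0}, with a path $\rho \in G^{p,t}$, and with $E$ applicable to $S'(I_i)$, the schedule $S' \concat E$ is applicable to $I_i$ and compatible with $\rho \concat \pi_E$, where $\pi_E$ is the intrinsic prefix of $\pi_{k_0}$ that $E$ matches, compatibility of $E$ being independent of the configuration it is applied to. Since \refprop{sampling:3} makes $(\rho \concat \pi_k)_k$ replicated for this very $\rho$, the path $\rho \concat \pi_{k_0}$, and hence its prefix $\rho \concat \pi_E$, is eventually always in $G^p$, so \reflem{simulation:0} and \refcor{simulation:0a} place $S' \concat E$ eventually always in $\Upsilon_i^p$. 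The main obstacle I anticipate is the limit argument of the second paragraph: I must verify that $\hat{R}$ is genuinely well-formed, namely no steps after a crash, every message to a $P$-process eventually received thanks to the oldest-message discipline of $\sched$, and conformance to $\A$, $H$ and the timing built from the sampling functions, so that group parallelism legitimately applies, and that truncating $E_\infty$ at $p$'s delivery preserves membership in the tree.
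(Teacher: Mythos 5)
Your proposal is correct and follows essentially the same route as the paper's proof: both build the $P$-fair replicated sampling sequence $(\pi \concat \pi_k)_k$ from \refprop{sampling:3} with $P = \correct(R) \inter \dst(m)$, pass to the limit $P$-fair run, invoke group parallelism to force delivery at $p$, and then cut the continuation back to a finite $E$ whose compatibility with a path in $G$ yields both the membership claim for $S \concat E$ and the ``furthermore'' clause for any applicable $S'$. The only differences are cosmetic — you truncate $E_\infty$ at the exact delivery step and re-derive the convergence for a prefix $S$ rather than citing \refprop{simulation:2} verbatim, which if anything handles the case $S \neq \sched(\pi,I_i)$ slightly more carefully.
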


\begin{proof}
  Define $P$ as $\dst(m) \inter \correct(\R)$.
  By \refprop{sampling:3}, there exists $(\pi_k)_{k \in \naturalSet}$ such that
  \begin{inparaenum}
  \item for each $k$, only processes in $P$ appears in $\pi_k$,
  \item for every $\pi \in \paths(G^{p,t})$, $(\pi \concat \pi_k)_{k \in \naturalSet}$ is a $P$-fair replicated sampling sequence.
  \end{inparaenum}

  Choose some schedule $S \in \Upsilon_i^{p,t}$ and choose $\pi \in \paths(G^{p,t})$ such that $S$ is compatible with $\pi$.
  Applying \refprop{simulation:2} to $(\pi \concat \pi_k)_{k \in \naturalSet}$, the family of schedule $S_k=(\sched(\pi \concat \pi_k,I_i))_k$ is such that
  \begin{inparaenum}
  \item $S_k$ is eventually always in $\Upsilon_i^p$, and
  \item $(\runOf{S_k,R,i})_k$ converges toward some $P$-fair run $\hat{\R}$ of $\A$.
  \end{inparaenum}
  By strong genuineness (see \refsection{variations:strong:definition}), all of the correct processes in $\dst(m)$ deliver $m$ in $\hat{\R}$,
  Thus, as $p \in P$, for some $l \in \naturalSet$, this also happens in $\runOf{S_l,R,i}$ to process $p$. 
  Since $\pi \prefix \pi \concat \pi_l$, \reflem{simulation:2} tell us that $S_l=S \concat E$ with $E=\sched(\pi_l,S(I_i))$.
  Now, from what precedes, $S_l$ is eventually always in $\Upsilon_i^p$. 
  Moreover, by definition of $(\pi \concat \pi_k)_k$ only processes in $P$ take steps after $S$ in $S_l$. 

  Consider some schedule $S' \in \Upsilon_i^{p,t}$.
  Pick $\pi' \in \paths(G^{p,t})$ such that $S'$ is compatible with $\pi'$ (by \reflem{simulation:0}).
  By definition of $(\pi_k)_k$, at some point in time $t'>t$, $(\pi' \concat \pi_l) \in G^{p,t'}$.
  Then consider the next moment in time $t''>t'$ at which $p$ calls procedure \simulate.
  As $(\pi' \concat \pi_l) \in G^{p,t''}$, $p$ executes \reflines{simulation:2}{simulation:9} for this path.
  By a short induction on these lines of code, if $E$ is applicable $S'(I_i)$, then $S' \concat E$ is added to $\Upsilon_i^p$ at that time.
\end{proof}

\subsection{Tagging the forest}
\labappendix{omega:tagging}

Procedure $\tagging$ associates to each schedule $S$ in the simulation tree $\Upsilon_i$ a set of tags $\tags(S,i) \subseteq \{g,h\}$.
These tags correspond to how messages are delivered in $S$ from configuration $I_i$.
More precisely, $\tags(S,i)$ contains $g$ (resp. $h$) if and only if for some successor $S'$ of $S$, a process delivers first a message $m_g$ addressed to $g$ (resp. $m_h$ to $h$) in $S'(I_i)$ (\reflines{tagging:3}{tagging:5}).
Notice that at \refline{tagging:3}, for simplicity, $\xbar$ equals $h$ if $x=g$ holds, and $h$ otherwise.
A subtree of $\Upsilon_i$ is \emph{stable} when every schedule in the subtree has a non-empty set of tags which does not change over time.
\refprop{tagging:1} proves that every subtree in $\Upsilon_i$ is eventually stable at the processes in $g \inter h$.

\begin{proposition}
  \labprop{tagging:1}
  Consider a correct process $p \in g \inter h$ and some schedule $S$ that eventually appears in the simulation tree $\Upsilon_i$ at $p$.
  At process $p$, $\tags(S,i)^{t}$ is monotonically growing and converges over time toward some non-empty value.
\end{proposition}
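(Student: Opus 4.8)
The plan is to separate the statement into a monotonicity/convergence part and a non-emptiness part. For the former, I would first observe that the simulation tree only grows: schedules are added at \refline{simulation:8} and never removed, so once $S$ appears in $\Upsilon_i$ it remains a vertex forever. The variable $\tags$ is initialised to $\emptySet$ once (\refline{omega:var:4}) and is thereafter only enlarged by a union with a singleton (\refline{tagging:5}), never reset. Consequently $\tags(S,i)^{t}$ is non-decreasing in $t$. Since its value is always contained in the two-element set $\{g,h\}$, a non-decreasing sequence of such subsets can increase at most twice; hence it stabilises after finitely many iterations and converges to some limit $L \subseteq \{g,h\}$.

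The substance of the proof is to show $L \neq \emptySet$, i.e. that at least one tag is eventually attached to $S$. I would use that $p$ is a correct member of $g \inter h$, and is therefore concerned with both groups: in $I_i$ it multicasts a single message $m$ addressed to either $g$ or $h$, with $p \in \dst(m)$. After extending $S$, if necessary, by $p$'s own multicast step (always enabled in the simulation since $p$ is correct), I would apply \refprop{simulation:3} to $S$ and $m$: this yields a finite schedule $E$ such that $S \concat E$ is eventually always in $\Upsilon_i$ at $p$ and $p$ delivers $m$ in $S \concat E(I_i)$. Taking any leaf $S'$ of $\Upsilon_i$ below $S \concat E$, process $p$ delivers in the configuration $S'(I_i)$ at least one message addressed to $g$ or $h$; its first such delivery witnesses the predicate $S'(I_i) \sat m_x \delOrderOf{p} m_{\xbar}$ for the corresponding $x \in \{g,h\}$. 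The tagging procedure then propagates $x$ down to the prefix $S$ (\reflines{tagging:3}{tagging:5}), so $x \in \tags(S,i)$ from that point on, whence $L \neq \emptySet$.

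The main obstacle I expect is the non-emptiness argument, and within it two points require care. First, \refprop{simulation:3} demands that $m$ be multicast within the schedule fed to it; I would discharge this by taking $m$ to be $p$'s own initial multicast, so that $p$ is simultaneously the correct sender and a recipient, prepending $p$'s multicast step to $S$ when it is not already present. Second, matching the delivery to the tagging predicate $m_x \delOrderOf{p} m_{\xbar}$ needs the existence of messages addressed to both $g$ and $h$ in $I_i$, together with the semantics of $\delOrderOf{p}$ — which holds the instant $p$ delivers $m_x$ without having yet delivered $m_{\xbar}$, regardless of whether $m_{\xbar}$ is ever delivered — and hence a short case split on the group targeted by $p$'s first delivered message. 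By contrast, the monotonicity and convergence parts are immediate once the growth of $\Upsilon_i$ and of $\tags$ is invoked, and they require no properties of the underlying algorithm $\A$.
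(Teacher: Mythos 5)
Your proposal is correct and follows essentially the same route as the paper's proof: monotonicity of $\tags(S,i)$ plus boundedness in $2^{\{g,h\}}$ gives convergence, and non-emptiness follows by applying \refprop{simulation:3} to the message multicast by $p$ itself (so that $p \in \dst(m)$), yielding an extension of $S$ in $\Upsilon_i$ in which $p$ delivers a message, after which the $\tagging$ procedure propagates the corresponding tag back to the prefix $S$. The extra care you take about the preconditions of \refprop{simulation:3} and the semantics of $\delOrderOf{p}$ only makes explicit what the paper's proof leaves implicit.
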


\begin{proof}
  Consider some run $R$ of \refalg{omega}, some schedule $S$ eventually in $\Upsilon_i$.
  Function $\tags(S,i)^t$ is monotonically growing over time (\refline{tagging:5}) and clearly bounded;
  thus, it is convergent.
  Choose some path $\pi \in G$ such that $S$ is compatible with $\pi$ (\reflem{simulation:0}).
  Name $m$ the message sent by $p$ and addressed to some group $x$.
  Applying \refprop{simulation:3}, since $p \in \dst(m)$ then there exists a (finite) schedule $S'$ suffixing $S$ such that
  \begin{inparaenumorig}[]
  \item process $p$ delivers $m$ in $S'(I_i)$, and
  \item $S'$ is eventually in $\Upsilon_i$.
  \end{inparaenumorig}
  Let $t$ be the time at which $S'$ is in $\Upsilon_i$.
  Without lack of generality, assume that $S'(I_i) \sat m_{x} \delOrderOf{p} m_{\xbar}$.
  When procedure $\tagging$ executes after time $t$, $x$ is added to $\tags(S,i)$ (\reflinestwo{tagging:4}{tagging:5}).
\end{proof}

From which, we may deduce that:

\begin{corollary}
  \labcor{tagging:2}
  There exists a non-empty set of tags, such that for any correct processes $p$, $\tags(S,i)^{p,t}$ converges over time toward this value.
\end{corollary}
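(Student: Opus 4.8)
The plan is to upgrade \refprop{tagging:1}, which only fixes per-process convergence, into the statement that the limit is \emph{the same} at every correct process. First I would observe that only processes in $g \inter h$ ever modify $\tags$ (the guard at \refline{omega:2}), so the claim genuinely concerns the correct processes $p \in g \inter h$; at the others $\tags(S,i)^{p,t}$ stays at its initial value $\emptySet$. Fixing $S$ and $i$, let $\sigma_p$ denote the limit of $\tags(S,i)^{p,t}$ at a correct $p \in g \inter h$, which is non-empty by \refprop{tagging:1}. I would then prove $\sigma_p = \sigma_{p'}$ for any two such processes by establishing $\sigma_p \subseteq \sigma_{p'}$ and appealing to symmetry; together with non-emptiness this produces the single common value asserted.

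For the inclusion, suppose $x \in \sigma_p$, with $\xbar$ the complementary group. Then at $p$ some leaf $S^\dagger$ with $S \prefix S^\dagger$ satisfies $S^\dagger(I_i) \sat m_x \delOrderOf{p} m_{\xbar}$; in particular $p$ delivers $m_x$ in $S^\dagger(I_i)$. By \refcor{simulation:0a} the schedule $S^\dagger$ eventually appears in $\Upsilon_i$ at $p'$ as well. Since $p' \in g \inter h \subseteq x = \dst(m_x)$ and $m_x$ is already delivered (by $p$) in $S^\dagger$, I would reuse the argument of \refprop{simulation:3} --- which rests on the group parallelism property --- to produce a finite extension $E$ such that $S^\dagger \concat E$ is eventually always in $\Upsilon_i^{p'}$ and $p'$ delivers $m_x$ in $(S^\dagger \concat E)(I_i)$.

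The decisive step is reconciling the two \emph{local} tagging tests: $p$ tags through $m_x \delOrderOf{p} m_{\xbar}$, whereas $p'$ tags through $m_x \delOrderOf{p'} m_{\xbar}$. Here I would invoke safety. By \refprop{simulation:1}, $\runOf{S^\dagger \concat E, R, i}$ is a run of $\A$, and since $\A$ solves atomic multicast its delivery relation $\delOrder$ is acyclic. In $(S^\dagger \concat E)(I_i)$ we have $m_x \delOrderOf{p} m_{\xbar}$; were $p'$ to deliver $m_{\xbar}$ before $m_x$ we would obtain $m_{\xbar} \delOrderOf{p'} m_x$, hence the $2$-cycle $m_x \delOrder m_{\xbar} \delOrder m_x$, a contradiction. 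Thus $m_x \delOrderOf{p'} m_{\xbar}$ holds, and it persists under further extension, so the next time $p'$ runs $\tagging$ over a leaf above $S^\dagger \concat E$ it adds $x$ to $\tags(S,i)$; therefore $x \in \sigma_{p'}$.

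The main obstacle is exactly this reconciliation. Because the tagging condition is phrased relative to the local process, tags cannot be copied across processes by mere schedule propagation; the real content is that the ordering property of atomic multicast forces the two intersection members $p, p' \in g \inter h$ to agree on the relative order of the group-$g$ and group-$h$ messages they both receive. A secondary point requiring care is that the sender of $m_x$ may be faulty, so I cannot lean on a correct multicaster when forcing $p'$ to deliver $m_x$; instead group parallelism must be triggered by the fact that $m_x$ is \emph{delivered} (by $p$) in $S^\dagger$, which is precisely the hypothesis under which the extension argument of \refprop{simulation:3} goes through.
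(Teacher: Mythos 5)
Your proof is correct, and it shares the paper's skeleton --- take a witness leaf $S'$ for the tag $x$ at $p$, transport it to $p'$ via \refcor{simulation:0a}, and conclude that $p'$ eventually tags $x$ --- but it diverges at the final inference, which is where the real content lies. The paper simply asserts that once the leaf $S'$ with $S'(I_i) \sat m_x \delOrderOf{p} m_{\xbar}$ appears in $\Upsilon_i$ at $q$, process $q$ adds $x$ when it runs $\tagging$; this is immediate only if one reads the predicate at \refline{tagging:3} as a property of the simulated configuration alone (the prose before \refprop{tagging:1} supports that reading: ``a process delivers first a message\ldots''). You instead take the literal, process-local reading of \refline{tagging:3} ($p'$ tags through $\delOrderOf{p'}$, not $\delOrderOf{p}$) and close the resulting gap with two extra ingredients: a group-parallelism extension forcing $p'$ to deliver $m_x$ above the transported leaf, and the acyclicity of $\delOrder$ to rule out $m_{\xbar} \delOrderOf{p'} m_x$. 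Both ingredients are sound --- in particular you are right that the trigger for the extension must be ``$m_x$ is delivered by $p$ in $S^\dagger(I_i)$'' rather than ``multicast by a correct process,'' since the sender in $I_i$ may be faulty, and the group-parallelism property does cover that disjunct even though \refprop{simulation:3} as stated does not. What your route buys is robustness: the corollary holds under either reading of the tagging test, and it makes explicit that cross-process agreement on tags ultimately rests on the ordering property of atomic multicast. Your opening observation that $\tags$ is only ever written by processes in $g \inter h$ (the guard at \refline{omega:2}), so the statement must be read as restricted to correct processes in the intersection, is also a point the paper's proof glosses over.
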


\begin{proof}
  Consider some correct processes $p$.
  Applying~\refprop{tagging:1}, $\tags(S,i)^{p,t}$ converges toward some non-empty value $T$.
  Consider some group $x \in T$.
  According to \refline{tagging:3}, it must be the case that for some schedule $S'$ suffixing $S$, $S'(I_i) \sat m_x \delOrderOf{p} m_{\xbar}$.
  Applying \refcor{simulation:0a}, $S$ and $S'$ are eventually included in $\Upsilon_i$ at every other correct process $q$.
  Assume this happens at time $t'$.
  At time $t'$, when $q$ executes $\tagging$, it adds $x$ to $\tags(S,i)$ (\reflines{tagging:3}{tagging:5}).
\end{proof}

\subsection{Extracting the leader}
\labappendix{omega:extract}

Outside $g \inter h$, a process returns the value $\bot$ when querying its failure detector module (\refline{query:1}).
In the intersection, a process traverses the simulation forest and eventually locates a correct leader in $g \inter h$.
To this end, it relies on $\extract$, the last building block of \refalg{omega} (\reflines{extract:1}{extract:9}).

As in \cite{omega}, the procedure $\extract$ uses valency arguments to compute the eventual leader.
A schedule $S$ in a tree $\Upsilon_i$ is \emph{$g$-valent} (respectively, \emph{$h$-valent}) when $\tags(S,i)$ equals $\{g\}$ (resp. $\{h\}$).
In the case where $\tags(S,i)$ contains both groups, $S$ is \emph{bivalent}.
When $S$ is not bivalent, yet tagged, we shall say that it is \emph{univalent}.

From \refprop{tagging:1}, eventually the tags of every schedule in $\Upsilon_i$ get stable.
An index $i \in [0,n]$ is \emph{critical} when either
\begin{inparaenum}
\item the root of $\Upsilon_i$ is stable and bivalent, and for each group $x \in \{g,h\}$, some correct process multicasts a message to $x$ in $I_i$, or
\item the root of $\Upsilon_{i}$ is stable and $g$-valent, the root of $\Upsilon_j$ is stable and $h$-valent, and $I_i$ and $I_j$ are adjacent.
\end{inparaenum}
In the first case, index $i$ is \emph{bivalent} critical and in the other, it is \emph{univalent} critical.

For a process $p \in g \inter h$, $\extract$ iterates over all the trees in the simulation forest (\refline{extract:1}).
If $i$ is univalent critical, then $p$ returns the process connecting the two configurations (\refline{extract:2}).
If now $i$ is bivalent critical, function $\locate(P,i)$ is executed for every subset $P$ of $g \inter h$ (\refline{extract:6}).
This function traverses $\Upsilon_i$ to locate a correct process in a particular subtree, named a decision gadget (\refline{locate:16}).
If the traversal fails, or the valency of the root of $\Upsilon_i$ is not established yet, $\extract$ returns the local process (\refline{extract:9}).

In \refprops{extract:1}{extract:6}, several results are established regarding the behavior of $\extract$ at a correct process in $g \inter h$.
Based on these results, \reftheo{correctness} proves that \refalg{omega} properly emulates $\Omega_{g \inter h}$.

In detail, consider the simulation forest $(\Upsilon_i)_i$ at some correct process $p \in g \inter h$.
\refprop{extract:1} proves that eventually some index $i$ is critical.
If $i$ is univalent, \refprop{extract:2} shows that the process connecting the two configurations must be correct.
The case of a bivalent critical index is considered in \refprops{extract:3}{extract:6}.
In particular, \refprop{extract:3} explains how a correct process in $g \inter h$ is computed from a decision gadget.
Then, \refprops{extract:4}{extract:6} prove that a decision gadget eventually shows up in $\Upsilon_i$.

\tikzstyle{acircle} = [draw,circle,minimum size=3.2em,inner sep=0em]

\begin{figure}[t]
  \centering
  \captionsetup{justification=centering}
  \hspace{-5em}%
  \begin{subfigure}[t]{.5\textwidth}
    \centering
    \begin{tikzpicture}[
        node distance=6em,
        scale=.75,
        transform shape
      ]
      \begin{scope}[<-,thick]
        \node[adot,label=90:{$J_0,\{g\}$}] (j0) at (0,0) {};
        \node[adot,right of=j0] (j1) {};
        \node[adot,right of=j1,xshift=3em,label=90:{$J_k,\{g,h\}$}] (jk) {};
        \node[adot,right of=jk] (jn1) {};
        \node[adot,right of=jn1,label=90:{$J_{n},\{h\}$}] (jn) {};
        \node[below of=j0,yshift=5em] () {$G$};
        \node[below of=j1,yshift=5em] () {$hG$};
        \node[below of=jk,yshift=5em] () {$?{\color{OliveGreen}{G}}?$};
        \node[below of=jn1,yshift=5em] () {$gH$};
        \node[below of=jn,yshift=5em] () {$H$};
        \path[basic] (j0) edge node [midway,above] {$\ndist{p_1}$} (j1);
        \path[basic] (jn1) edge node [midway,above] {$\ndist{p_{n-1}}$} (jn);
        \path[basic,dashed] (j1) -- (jk) -- (jn1);
        \node[adot,below of=j1,label=90:{$J',\{x\}$}] (jprime) {};
        \node[below of=jprime,yshift=4.9em] () {$?{\color{OliveGreen}{hG}}?$};
        \path[basic] (jk) edge node [midway,above] {$\ndist{q}$} (jprime);
        \node[adot,right of=jprime,label=90:{$\hat{J},\{x\}$}] (jhat) {};
        \node[below of=jhat,yshift=5em] () {$\overline{X}{\color{OliveGreen}{hG}}\overline{X}$};
        \path[basic,dashed] (jprime) -- (jhat);
      \end{scope}
    \end{tikzpicture}
  \end{subfigure}
  \hspace{5em}%
  \begin{minipage}{.3\textwidth}
    \vspace{-6.5em}%
    \begin{subfigure}[t]{.3\textwidth}
      \centering
      \begin{tikzpicture}[
          node distance=6em,
          scale=.75,
          transform shape
        ]
        \begin{scope}[<-,thick]
          \node[adot,label=90:{$\hat{J},\{h\}$}] (jhat) {};
          \node[below of=jhat,yshift=5em] () {$G{\color{OliveGreen}{hG}}G$};
          \node[adot,right of=jhat,label=90:{$J_c,\{g\}$}] (jc) {};
          \node[below of=jc,yshift=5em] () {$G{\color{OliveGreen}{G}}G$};
          \path[basic] (jhat) edge node [midway,above] {$\ndist{?}$} (jc);
        \end{scope}
      \end{tikzpicture}
    \end{subfigure}

    \vspace{2em}
    
    \begin{subfigure}[t]{.3\textwidth}
      \centering
      \begin{tikzpicture}[
          node distance=6em,
          scale=.75,
          transform shape
        ]
        \begin{scope}[<-,thick]
          \node[adot,label=90:{$\hat{J},\{g\}$}] (jhat) {};
          \node[below of=jhat,yshift=5em] () {$H{\color{OliveGreen}{hG}}H$};
          \node[adot,right of=jhat] (jhat1) {};
          \node[below of=jc,yshift=5em] () {$H{\color{OliveGreen}{hhG}}H$};
          \path[basic] (jhat) edge node [midway,above] {$\ndist{?}$} (jhat1);
          \node[adot,right of=jhat1,label=90:{$J_c$}] (jc) {};
          \node[adot,right of=jc,label=90:{$?,\{h\}$}] (jhatn) {};
          \node[below of=jhatn,yshift=5em] () {$H{\color{OliveGreen}{H}}H$};
          \path[basic,dashed] (jhat1) -- (jc) -- (jhatn);
        \end{scope}
      \end{tikzpicture}
    \end{subfigure}
  \end{minipage}
  \caption{
    Illustration of the proof of \refprop{extract:1}.
    \textit{(left)} The traversal of $\I$ to find configuration $J'$ then, if it is $x$-valent, configuration $\hat{J}$.
    \textit{(right)} If $\hat{J}$ is $x$-valent, there exists a critical configuration $J_c$, found by consecutive adjacency.
    The two cases $x=g$ \textit{(bottom right)} and $x=h$ \textit{(top right)} are detailed.
    \labfigure{critical}
  }
\end{figure}
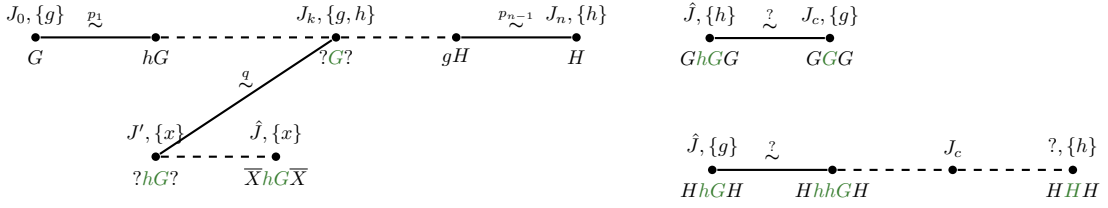

\begin{proposition}
  \labprop{extract:1}
  Eventually some index $i \in [0,n]$ is critical.
\end{proposition}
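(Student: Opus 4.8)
The plan is to adapt the valence argument of CHT~\cite{omega} to the two-colored initial configurations in $\I$, with the extra care demanded by the correct-process side condition in the definition of a \emph{bivalent} critical index. First I would record two facts about the limiting tags of the roots. By \refcor{tagging:2}, the root of every tree $\Upsilon_i$ has, at all correct processes, a tag set that converges to a common non-empty value contained in $\{g,h\}$; I call this the (stable) valence of $I_i$. Using \refprop{simulation:3} together with the tagging rule at \reflines{tagging:3}{tagging:5}, I would establish the observation (O1): if a \emph{correct} process of $g \inter h$ multicasts to $x \in \{g,h\}$ in $I_i$, then $x$ eventually belongs to the valence of $I_i$ --- indeed \refprop{simulation:3} yields a schedule extension in which that process delivers its $x$-message first, so the tag $x$ is added. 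Dually, (O2): if \emph{no} process multicasts to $x$ in $I_i$, then $x$ never appears, since no $x$-addressed message ever transits.

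Next I would fix the two extreme configurations: $I^g$, in which every process of $g \inter h$ multicasts to $g$, and $I^h$, in which every process multicasts to $h$. By (O1)--(O2), $I^g$ is stably $g$-valent and $I^h$ stably $h$-valent. I would connect them by a chain $I^g = J_0, J_1, \ldots, J_m = I^h$ of configurations of $\I$ in which consecutive configurations are adjacent, each step flipping the target of a single process of $g \inter h$ from $g$ to $h$. Since the chain is finite, \refcor{tagging:2} lets me wait until every root along it reaches its stable valence. Walking from the $g$-valent end toward the $h$-valent end, I consider the first configuration $J_k$ that is not $g$-valent; its predecessor $J_{k-1}$ is $g$-valent, so either $J_k$ is $h$-valent, in which case the adjacent pair $(J_{k-1},J_k)$ already witnesses a \emph{univalent} critical index, or $J_k$ is bivalent.

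The main obstacle is that a bivalent $J_k$ obtained this way need not satisfy the side condition of a bivalent critical index, namely that a correct process multicasts to each of $g$ and $h$: its bivalence may be produced entirely by messages of \emph{faulty} processes, whose $x$-message the simulation can still deliver first. I would dispose of the easy case $\cardinalOf{(g \inter h) \inter \correct} \geq 2$ directly: picking two correct processes $c_1,c_2 \in g \inter h$ and the configuration sending $c_1$ to $g$ and $c_2$ to $h$, observation (O1) forces the valence to be exactly $\{g,h\}$, so that index is bivalent critical. For the remaining case of a single correct process $c \in g \inter h$ (the case of none being vacuous, since the procedure then runs only at eventually-crashed processes), the side condition of bivalent criticality can never hold, and I must instead produce a univalent critical \emph{pair}.

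Here I would use the refinement sketched in \reffigure{critical}. Starting from the bivalent $J_k$, I isolate the flip of the unique correct process $c$ and scan the flips of the faulty processes to locate the configuration $\hat{J}$ at which a tag is about to disappear; setting the remaining faulty targets to the opposite group $\bar{x}$ yields, by (O1)--(O2), two adjacent configurations $\hat{J}$ and $J_c$ of opposite univalence. The delicate point --- and what I expect to be the real work --- is verifying that the process connecting $\hat{J}$ and $J_c$ is exactly $c$, so that the adjacency is genuine and both valences are stable; this rests on (O1)--(O2) applied at the endpoints, on the monotonicity of the tags from \refprop{tagging:1}, and on the fact that all runs simulated inside the trees $\Upsilon_i$ share the true failure pattern $F$. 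Once this adjacent opposite-univalent pair is exhibited, $J_c$ is univalent critical, completing the argument that eventually some index $i \in [0,n]$ is critical.
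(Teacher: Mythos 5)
Your skeleton matches the paper's: a chain of adjacent configurations from the all-$g$ to the all-$h$ extreme, whose endpoints are forced univalent because no message of the other colour exists, and whose first non-$g$-valent element is either an adjacent univalent pair or a bivalent configuration; the difficulty is then confined to the case where the bivalent configuration fails the correct-process side condition. But your way out of that case rests on observation (O1), and (O1) has a genuine gap. \refprop{simulation:3} only guarantees a schedule extension in which only the correct processes of $\dst(m)$ take steps and $p$ \emph{delivers} $m$; it does not guarantee that $p$ delivers $m$ \emph{before} any message addressed to $\xbar$, which is what the tagging rule at \reflines{tagging:3}{tagging:5} requires in order to add the tag $x$. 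Worse, (O1) appears to be unprovable: to force delivery of an $x$-message multicast by a correct $c_1$ you need a $(\correct \inter x)$-fair schedule (group parallelism), and such a schedule necessarily lets a correct $c_2 \in g \inter h$ take steps and inject its $\xbar$-message, which the algorithm $\A$ is free to always deliver first. So the configuration sending $c_1$ to $g$ and $c_2$ to $h$ may legitimately be univalent, your $\cardinalOf{P} \geq 2$ shortcut collapses, and that entire branch of your case analysis is uncovered.

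The paper's proof is constructed precisely to avoid (O1): it only ever uses the safe direction (your (O2): no $x$-message implies no $x$-tag) together with the non-emptiness of tags from \refprop{tagging:1}. When the bivalent $J_k$ fails the side condition, it does not try to certify bivalence of a hand-picked configuration; instead it flips one correct process to obtain $J'$, neutralizes the faulty processes by sending them all to $\xbar$ in $\hat{J}$ --- proving $\hat{J}$ keeps the valency of $J'$ via a schedule in which only correct processes take steps and which is applicable to both configurations --- and then continues flipping correct processes one at a time until it reaches the all-$\xbar$ configuration, which is $\xbar$-valent by (O2); somewhere along this walk an adjacent pair of opposite univalencies, i.e.\ a univalent critical index, must appear. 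This indistinguishability step for $\hat{J}$ is exactly the ``real work'' you defer in your $\cardinalOf{P}=1$ discussion, and it is also what you would need to repair the $\cardinalOf{P} \geq 2$ case. To fix your proof, drop (O1) and run the paper's traversal uniformly whenever the side condition fails, regardless of the number of correct processes in $g \inter h$.
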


\begin{proof}
  The proof is illustrated in \reffigure{critical}.
  By \refprop{tagging:1}, $\tags(\schedNull,i)$ reaches its limit valency, say $v_i$, after time $t_i$ at process $p$.
  Hence, after time $t = \max_{i \in [0,n]} t_i$, every root of a simulation tree has reached its limit valency.
  
  Assume that $g \inter h = \{p_1, \ldots, p_{\upsilon}\}$.
  Consider the configurations $(J_i)_{i \in [0,\upsilon]}$ such that in configuration $J_i$, every process $p_j$ multicasts a message $m_h$ to $h$, if $j \leq i$, and a message $m_g$ to $g$ otherwise.
  Clearly, $(J_i)_i \subseteq \I$.
  For $i \in [0,\upsilon)$, by construction $J_i \ndist{p_i} J_{i+1}$.    
  Since all the processes in $g \inter h$ multicast a message to $g$, $J_0$ must be $g$-valent.
  Similarly, $J_{\upsilon}$ is $h$-valent.
  As a consequence, some configuration $J_k$ is either univalent critical, and we are done, or bivalent.  
  
  If for each group $x \in \{g,h\}$, some correct process multicasts a message to $x$ in $J_k$, we are done.
  Otherwise, all the processes in $P=(g \inter h){\inter}\correct(R)$ must multicast a message to say group $g$.
  (Messages multicast by $P$ are in {\color{OliveGreen}{green}} in \reffigure{critical}.)
  Name $J'$ a configuration identical to $J_k$, except that some process $q \in P$ now multicasts a message to $h$.
  If $J'$ is bivalent, we are done.
  
  Otherwise, $J'$ must be $x$-valent for some $x$.
  Consider $\hat{J}$ identical to $J'$ for all the processes in $P$ but where all the processes in $(g \inter h) \setminus P$ multicast a message to $\xbar$.
  If $\hat{J}$ is bivalent, we are done.
  Otherwise, it must be $x$-valent:
  Indeed, consider a schedule $S$ where only the correct process take steps from $J'$ and during which $p$ delivers a message. 
  By \refprop{simulation:3}, such a schedule eventually appears at process $p$.
  Configuration $S(J')$ is $x$-valent.
  As schedule $S$ is also applicable to $\hat{J}$, this configuration must be also $x$-valent. 

  We continue our traversal of $\I$ as follows:
  For each process $q'$ in $P$ that multicasts a message to $x$, let $\hat{J'}$ the configuration adjacent to $\hat{J}$ in which $q'$ multicasts a message to $\xbar$.
  If $\hat{J'}$ is bivalent we are done.
  Otherwise, we set $\hat{J}$ to $\hat{J'}$ and pursue our traversal.
  At the end of the traversal, either a bivalent configuration is found, or all the processes multicast a message to $\xbar$.
  Thus, necessarily $\hat{J'}$ is $\xbar$-valent and the configuration that precedes it in the traversal is $x$-valent.
  It follows that there exists a critical configuration $J_c$.
  The two cases, $x=g$ and $x=h$, are illustrated in \reffigure{critical}(right).
\end{proof}

\begin{proposition}
  \labprop{extract:2}  
  Fix some configuration $I_i$.
  Assume that for some configuration $I_j \in \I$ and process $q$ with $I_i \ndist{q} I_j$, the root of $\Upsilon_i$ is stable and $g$-valent, while the root of $\Upsilon_j$ is stable and $h$-valent.
  Then, process $q$ is correct and in $g \inter h$.
\end{proposition}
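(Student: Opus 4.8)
The plan is to establish the two conclusions separately, first that $q \in g \inter h$ from the rigid shape of the initial configurations, and then that $q$ is correct by an indistinguishability argument transferring a tag between $\Upsilon_i$ and $\Upsilon_j$. For membership, recall that by definition of $\I$ the processes outside $g \inter h$ never multicast, so their local state is identical in every configuration of $\I$; only the processes in $g \inter h$ may differ, according to whether they multicast to $g$ or to $h$. Since the roots of $\Upsilon_i$ and $\Upsilon_j$ carry distinct (stable) valencies, we have $I_i \neq I_j$, and as $I_i \ndist{q} I_j$ the unique process on which they disagree is $q$. Hence $q$ is a process that is allowed to vary, i.e.\ $q \in g \inter h$.

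For correctness I would argue by contradiction, assuming $q$ is faulty in the considered run $\R$ of \refalg{omega}. The goal is to produce a single schedule $S$ that (i) contains no step of $q$, (ii) is univalent in $\Upsilon_i$, and (iii) is applicable to $I_j$ with the same tag. Since the emulating process $p$ belongs to $g \inter h$, it multicasts a message $m$ in $I_i$, and under the closed dissemination model $p \in \dst(m)$. Seeding with a single $q$-free step of $p$ that performs this multicast and then invoking \refprop{simulation:3} yields a finite extension whose steps are taken only by correct processes of $\dst(m)$; because $q$ is faulty it is excluded from that set, so the resulting schedule is $q$-free, lies in $\Upsilon_i$, and has $p$ deliver $m$. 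Truncating it at the first delivery by $p$ gives a descendant $S$ of the root in which $p$ has delivered exactly one message, to a single group, so $S$ is univalent; as the root of $\Upsilon_i$ is $g$-valent, this tag is necessarily $g$, i.e.\ $S(I_i) \sat m_g \delOrderOf{p} m_h$.

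It remains to transfer $S$ to $\Upsilon_j$. In the model the initial buffer is empty and messages enter it only through steps, so a process that takes no step sends nothing; since $q$ takes no step in $S$, the message it would have multicast never appears, and $S(I_i)$ and $S(I_j)$ agree on the state of every process except $q$. Consequently $S$ is applicable to $I_j$ and, being compatible with the same path of $G$, belongs to $\Upsilon_j$ by \reflem{simulation:0} (and \refcor{simulation:0a} for eventual appearance at $p$). Because $p \neq q$ performs the very same first delivery, $S(I_j) \sat m_g \delOrderOf{p} m_h$ as well, and since the \phDelivered{} phase is terminal this first-$g$ property is inherited by every leaf below $S$; hence the \tagging{} procedure eventually sets $g \in \tags(\schedNull,j)$. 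This contradicts the stable $h$-valency of the root of $\Upsilon_j$, which admits no $g$-tagged successor. Therefore $q$ cannot be faulty, and combined with the first part, $q$ is a correct process in $g \inter h$.

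The main obstacle is the indistinguishability step in the last paragraph: one must be certain that deleting $q$'s steps also deletes $q$'s multicast message in its entirety, which is exactly what the empty-initial-buffer and step-realized-send features of the model guarantee, making $S(I_i)$ and $S(I_j)$ genuinely agree off $q$. The only other delicate point is ensuring that the witness $S$ is truly $q$-free, which holds because \refprop{simulation:3} drives its constructed extension solely with correct processes of the destination group, and $q$, being faulty, is not among them.
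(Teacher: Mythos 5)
Your proposal is correct and follows essentially the same route as the paper's proof: membership of $q$ in $g \inter h$ from the fixed shape of the initial configurations in $\I$, then a contradiction obtained by using \refprop{simulation:3} to build a $q$-free schedule $S$ in $\Upsilon_i$ in which $p$ delivers, and transferring $S$ to $\Upsilon_j$ via $S(I_i)=S(I_j)$ to contradict the stable $h$-valency of its root. The only cosmetic deviation is your phrasing of the multicast as a seeding "step" of $p$, whereas in the model the multicast is already encoded in the initial configuration $I_i$; this does not affect the argument.
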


\begin{proof}
  By definition, the processes outside $g \inter h$ do not multicast a message in the initial configurations $\I$.
  It follows that $q$ belongs to $g \inter h$.
  For the sake of contradiction, assume that $q$ is faulty.

  Applying \refprop{simulation:3}, there exists a schedule $S$ starting from $I_i$ such that
  \begin{inparaenum}
  \item process $q$ takes no step in $S$,
  \item some message $m$ multicast in $I_i$ by a correct process (e.g., $p$) is delivered at $p$ in $S(I_i)$, and
  \item $S$ is eventually in $\Upsilon_i^{p}$, say at some time $t$.
  \end{inparaenum}
  Pick $\pi \in G^{p,t}$ such that $S$ is compatible with $\pi$ (\reflem{simulation:0}).
  Consider the first time after $t$ at which process $p$ executes \refline{omega:3}.
  Configuration $I_j$ is identical to $I_i$ for $\procSet \setminus \{q\}$, process $q$ takes no step in $\pi$ and $\BUFF$ is empty.
  Hence, by a short induction, $S$ is also applicable to $I_j$ and $S(I_j)=S(I_i)$.
  Furthermore, by \reflem{simulation:1}, $S \in \Upsilon_j$.
  Index $i$ is univalent critical, thus $\tags(\schedNull,I_i) = \{g\}$.
  Now, since $\schedNull \prefix S$, $S \in \Upsilon_j$ and $S(I_{j})=S(I_i)$, $\tags(\schedNull,I_{j}) = \{g\}$;
  contradiction.
\end{proof}

Let us now turn our attention to the case where the root of $\Upsilon_i$ is bivalent.
Similarly to \cite{omega}, we show the existence of a decision gadget in the simulation tree.
This decision gadget entails a correct process.
Furthermore, if index $i$ is critical, this process must be in $g \inter h$.

\tikzstyle{edge} = [draw, -latex',Black,->]
\tikzstyle{dedge} = [draw, -latex',Black,dashed,-]
\tikzstyle{vertex} = [inner sep=.1pt, circle, fill]

\begin{figure}[t]
  \centering
  \captionsetup{justification=centering}
  \hspace{-4em}%
  \begin{subfigure}[t]{.3\textwidth}
    \centering  
    \begin{tikzpicture}
      \begin{scope}[<-,thick]
        \node[adot,label=90:{$S$}] (S1) at (0,0) {};
        \node[adot,label=270:{$S',\{x\}$}] (S2) at (-1,-1) {};
        \node[adot,label=270:{$S'',\{\xbar\}$}] (S3) at (1,-1) {};      
        \path[edge] (S1) edge node [midway,left] {$(q,m,d)$} (S2);
        \path[edge] (S1) edge node [midway,right] {$(q,m,d')$} (S3);
      \end{scope}
    \end{tikzpicture}
    \caption{
      \labfigure{fork}
      A fork.
    }
  \end{subfigure}
  \begin{subfigure}[t]{.3\textwidth}
    \centering  
    \begin{tikzpicture}
      \begin{scope}[thick]
        \node[adot,label=90:{$S$}] (S3) at (2,-2) {};
        \node[adot,label=270:{$S',\{x\}$}] (S4) at (1,-3) {};
        \node[adot] (S5) at (3,-3) {};
        \node[adot,label=270:{$S'',\{\xbar\}$}] (S6) at (2,-4) {}; 
        \path[edge] (S3) edge node [midway,left] {$(q,m,d')$} (S4);
        \path[edge] (S3) edge node [midway,right] {$(q',\any,\any)$} (S5);
        \path[edge] (S5) edge node [midway,right] {$(q,m,d')$} (S6);
      \end{scope}
    \end{tikzpicture}
    \caption{
      \labfigure{hook}
      A hook.
    }
  \end{subfigure}
    \begin{subfigure}[t]{.3\textwidth}
    \centering  
    \begin{tikzpicture}
      \begin{scope}[thick]
        \node[adot,label=45:{$S$}] (S1) at (0,0) {};
        \node[adot,label=270:{$S_{x},\{x\}$}] (S2) at (-1.5,0) {};
        \node[adot] (S2b) at (-1,-1) {};
        \node[adot] (S3) at (2,-2) {};
        \node[adot] (S4) at (1,-3) {};
        \node[adot] (S5) at (3,-3) {};
        \node[adot,label=270:{$S_{\xbar},\{\xbar\}$}] (S6) at (2,-4) {}; 
        \path[edge] (S1) edge node [midway,left,above] {$(q,m,d)$} (S2);
        \path[edge] (S1) edge node [rotate=45,midway,below] {$(q,m,d')$} (S2b);
        \path[dedge] (S1) edge node [] {} (S3);
        \path[edge] (S3) edge node [midway,left] {$(q,m,d')$} (S4);
        \path[edge] (S3) edge node [midway,right] {$(q',\any,\any)$} (S5);
        \path[edge] (S5) edge node [midway,right] {$(q,m,d')$} (S6);
      \end{scope}
    \end{tikzpicture}
    \caption{
      \labfigure{subtree}
      A deciding subtree.
    }
  \end{subfigure}
    \caption{      
      \labfigure{gadget}%
      In $\Upsilon_i$, a decision gadget is a tuple $(S,S',S'')$, with $S'$ $x$-valent and $S''$ $\xbar$-valent.
      It can be a fork, or a hook.
      In a fork \textit{(a)}, process $q$ is correct.
      In a hook \textit{(b)}, this is the case of $q'$.
      A deciding subtree \textit{(c)} is of the form $(S,S_{x},S_{\xbar})$, with $S_{x}$ $x$-valent and $S_{\xbar}$ $\xbar$-valent.
      Such a subtree must contain a decision gadget.
  }
\end{figure}
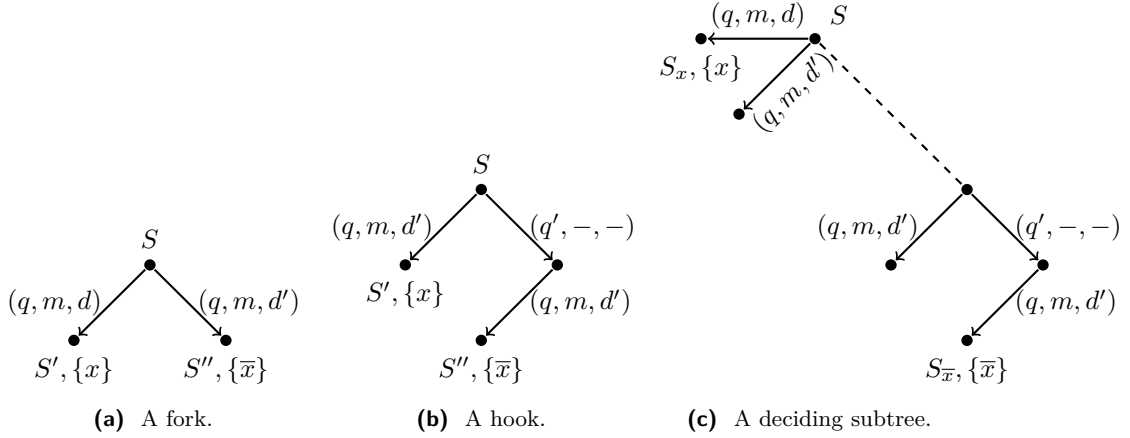

Consider a schedule $S$ in $\Upsilon_i$ having two successors $S'$ and $S''$.
Let $(S,S',S'')$ be the subtree formed by the two paths $[S,S']$ and $[S,S'']$.
This subtree is a \emph{decision gadget} when it is either a fork or a hook (see \reffigure{gadget}).
A \emph{fork} contains three vertices with $S' = S \concat (q,m,d)$ and $S'' = S \concat (q,m,d')$.
A \emph{hook} contains additionally a successor of $S$ with $S' = S \concat (q,m,d')$ and $S'' = \hat{S} \concat (q',\any,\any) \concat (q,m,d')$.
In both cases, $S$ is bivalent, while $S'$ and $S''$ are respectively $x$-valent and $\xbar$ valent, for some valency $x$.
A decision gadget induces a \emph{deciding process}.
This process is defined as $q$ for a fork, and the process $q'$ for a hook.
Intuitively, this process takes a step that fixes the valency of the run.

With more details, a subtree $(S,S_x,S_{\xbar})$ of $\Upsilon_i$ is \emph{deciding} when for some successor $S'$ of $S$, some message $m$ addressed to a process $q$, and some values $d$ and $d'$ of the failure detector $D$,
\begin{inparaenum}
\item $S$ is bivalent,
\item $S_{x} = S \concat (q,m,d)$ is $x$-valent,
\item $S_{\xbar} = S' \concat (q,m,d')$ is $\xbar$-valent, and
\item for every schedule $\hat{S}$ in $[S,S')$, $\hat{S} \concat (q,m,d')$ is univalent.
\end{inparaenum}

As illustrated in \reffigure{gadget}, when $(S,S_x,S_{\xbar})$ is deciding it must contain at least one decision gadget.
Indeed, if $S \concat (q,m,d')$ is $\xbar$-valent, then $S'=S$ and $(S, S \concat (q,m,d'), S \concat (q,m,d'))$ forms a fork.
Otherwise, there exist two neighboring vertices $S''$ and $S'$ in $[S,S_{\xbar}]$ such that $S'' \concat (q,m,d')$ is $x$-valent and $S' \concat (q,m,d')$ is $\xbar$-valent, leading to the fact that $(S',S'' \concat (p,m,d'),S' \concat (p,m,d'))$ is a hook.

During the traversal of $\Upsilon_i$, when the $\extract$ procedure encounters a deciding subtree, function $\deciding$ is executed.
In detail, following \cite{omega}, since a decision gadget is a finite graph, it can be encoded as a natural.
This allows to totally order them.
Calling $\deciding(S,S_x,S_{\xbar})$ returns the deciding process of the first decision gadget for such an order in the subtree $(S,S_x,S_{\xbar})$.

\refprops{extract:3}{extract:6} establishes two fundamental properties related to decision gadgets.
First, that the deciding process of a decision gadget must be correct and a member of $g \inter h$.
Second, that a decision gadget eventually shows up in every bivalent tree.

\begin{proposition}
  \labprop{extract:3}
  Consider a stable deciding subtree in $\Upsilon_i$ and let $\hat{q}$ be its deciding process.
  Process $\hat{q}$ is correct.
  Moreover, if $i$ is critical, $\hat{q}$ belongs to $g \inter h$.
\end{proposition}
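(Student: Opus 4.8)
The plan is to follow the FLP/CHT deciding-process argument, treating forks and hooks uniformly, and to obtain the two conclusions (correctness, then membership under criticality) by two applications of one commutativity idea. First I would fix the deciding subtree and its gadget, writing $\hat q$ for the deciding process ($\hat q = q$ in a fork, $\hat q = q'$ in a hook) and $S_x$, $S_{\xbar}$ for the $x$-valent and $\xbar$-valent leaves. The feature I would extract is that the two sibling configurations $S_x(I_i)$ and $S_{\xbar}(I_i)$ differ only in the local state of $\hat q$ and in the messages $\hat q$ emits on its pivot step: in a fork they both apply a step on the same message $m$ but with failure-detector values $d$ versus $d'$, and in a hook they differ by an inserted step $(q',\any,\any)$. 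In either case, to every process other than $\hat q$ the two configurations are identical until a message sent by $\hat q$ on that pivot step is received; I would record this as an indistinguishability claim via \reflem{flp}.

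For correctness I would argue by contradiction, assuming $\hat q$ faulty. From the bivalent parent $S$, \refprop{simulation:3} (where strong genuineness enters) yields a finite continuation $E$ built only from correct processes that drives some process in $g \inter h$ to a first delivery, hence fixes a single valency $y$; moreover, by its last clause, $S_x \concat E$ and $S_{\xbar} \concat E$ are eventually in the tree whenever $E$ is applicable to the corresponding configuration. Since $E$ contains no step of $\hat q$ and the two siblings agree on the state of every correct process, $E$ is applicable to both and the correct processes execute exactly the same steps and the same first delivery in each; hence $S_x \concat E$ and $S_{\xbar} \concat E$ carry the same tag $y$. By monotonicity of tags along prefixes (\refprop{tagging:1}) this forces $S_x$ and $S_{\xbar}$ to share the valency $y$, contradicting that one is $x$-valent and the other $\xbar$-valent. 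Therefore $\hat q$ is correct.

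For the membership claim I would assume $i$ critical and suppose, for contradiction, $\hat q \notin g \inter h$. By genuineness $\hat q$ takes steps only if addressed, so $\hat q \in (g \union h) \setminus (g \inter h)$; without loss of generality $\hat q \in g \setminus h$, so every message $\hat q$ handles is addressed to $g$ alone. Criticality guarantees that in $I_i$ a correct process multicasts to each of $g$ and $h$, so by \refprop{simulation:3} some correct process in $g \inter h$ can be driven to its first delivery by a continuation $E'$ that omits $\hat q$'s distinguishing step. Reusing the commutativity argument of the previous paragraph, $E'$ is applicable to both siblings and yields the same first intersection delivery in each, again collapsing their valencies to a single value and contradicting the gadget; hence $\hat q \in g \inter h$.

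The main obstacle I expect is this commutativity/applicability step: showing that the correct-process continuation $E$ (respectively $E'$) is genuinely applicable to both siblings and produces the identical first delivery despite the surplus messages $\hat q$ may have injected on its pivot step. This is exactly where one must argue that a single step of a process that is either faulty or outside $g \inter h$ cannot by itself determine which of a $g$- and an $h$-message an intersection process delivers first; controlling the message buffer so that $\hat q$'s surplus messages are never the ones that decide the valency — delaying them using well-formedness and, where two disjoint schedules must be glued, \reflem{model:4} — is the delicate part of the proof.
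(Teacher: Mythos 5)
Your overall strategy matches the paper's: use \refprop{simulation:3} to obtain a finite continuation $E$ containing no step of $\hat{q}$ (because $\hat{q}$ is assumed faulty for the first claim, and because $\hat{q}\notin\dst(m_x)$ for the second, where $m_x$ exists by bivalent criticality), transplant $E$ onto both legs of the gadget via the last clause of \refprop{simulation:3}, conclude that both legs acquire the same tag, and contradict the stability of the two opposite valencies.

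There is, however, a genuine gap in where you anchor $E$. You invoke \refprop{simulation:3} at the bivalent parent $S$ and then assert that $E$ is applicable to both siblings. For a fork this is harmless, but for a hook the deciding process is $q'$ while \emph{both} legs contain the step $(q,m,d')$ of the other process $q$, which may be correct and belong to $\dst(\hat{m})$. A continuation built from $S$ may therefore contain a step of $q$ that consumes exactly the message $m$ already consumed inside the gadget (indeed $\sched$ always picks the oldest pending message), so $E$ need not be applicable to either leg; moreover, $q$'s local state in $S\concat(q,m,d')(I_i)$ differs from its state in $S(I_i)$, so even when applicable, $E$'s steps of $q$ would not replay identically. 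Your proposed remedy (delaying $\hat{q}$'s surplus messages using well-formedness and \reflem{model:4}) targets a non-problem: $E$ is a fixed schedule, so the extra messages injected by $\hat{q}$ are simply never received in $E$ and cannot affect which of $m_g$, $m_h$ is delivered first. The paper's resolution is to anchor $E$ at the $x$-valent leaf $S'$ itself, so that $S'\concat E$ is in $\Upsilon_i$ and tagged $x$ by stability; since every process other than $\hat{q}$ is in the same state in $S'$ and $S''$ and $S'.\BUFF[r]\subseteq S''.\BUFF[r]$ for every $r\neq\hat{q}$, the same $E$ is applicable to $S''(I_i)$, yields the same deliveries, and forces tag $x$ onto the $\xbar$-valent leg. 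With the anchor relocated to the leaf, your argument goes through for both claims.
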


\begin{proof}
  Assume a point in time $t$ where $\Upsilon_i$ contains a stable deciding subtree.
  Let $(S,S',S'')$ be the first deciding gadget in the subtree.
  \begin{itemize}
  \item[($\hat{q} \in \correct(R)$)]
    We proceed by contradiction, assuming that process $\hat{q}$ is faulty.
    Name $\hat{m}$ a message multicast by $p$ in $I_i$.
    Applying \refprop{simulation:3}, there exists a schedule $E$ such that
    \begin{inparaenum}
    \item $S' \concat E$ is eventually in $\Upsilon_i^p$,
    \item $\hat{q}$ takes no steps in $E$, and
    \item $p$ delivers $\hat{m}$ in $S' \concat E(I_i)$,
    \item for every schedule $\hat{S} \in \Upsilon_i^{p,t}$, if $E$ is applicable to $\hat{S}(I_i)$ then eventually $\hat{S} \concat E$ is in $\Upsilon_i^{p}$.
    \end{inparaenum}
    
    Consider a point in time $t'>t$ where $S' \concat E$ is in $\Upsilon_i$ (by \textit{(i)}).
    Schedule $S' \concat E$ is tagged at time $t'$ (by \textit{(iii)}).
    Because $S'$ is $x$-valent and the deciding gadget stable at time $t$, $S' \concat E$ must be also $x$-valent.
    Process $\hat{q}$ takes no steps in $E$ (by \textit{(ii)}).
    Every process $q'$ outside of $\hat{q}$ is in the same state in both $S'$ and $S''$, $S'.\BUFF[q'] \subseteq S''.\BUFF[q']$ (see \reffigure{gadget}).
    Hence, $E$ is also applicable to $S''(I_i)$ and $S' \concat E(I_i) = S'' \concat E(I_i)$.
    At some point in time $t''>t$, $S'' \concat E$ is in $\Upsilon_i^p$ with a tag $x$ (by \textit{(iv)})
    This contradicts that the deciding subtree is stable and $S''$ $\xbar$-valent at time $t$.
  \item[($i~\text{critical} \implies \forall x \in \{g,h\} \sep \hat{q} \in x$)]
    The proof is similar to the previous case.
    Instead of message $\hat{m}$, we consider this time the message $m_x$ addressed to $x$ and multicast by a correct process.
    Because $i$ is bivalent critical such a message $m_x$ exists.
  \end{itemize}  
\end{proof}

Procedure $\extract$ calls $\locate(Q,i)$, for each subset $Q=\{q_1,\ldots\}$ of $\procSet$ (\refline{extract:5}).
Starting from $\schedNull$, function $\locate$ considers the processes of $Q$ in a round-robin manner (\reflines{locate:2}{locate:3}).
For each process $q$, $\locate$ tries to extend the current schedule $S$ with a step $(q,m,d)$, where $m$ is the last message addressed to $q$.
If this step leads to a bivalent schedule, the search continues (\refline{locate:8}).
Otherwise, $\locate$ determines if it may lead to a decision gadget (\reflines{locate:9}{locate:16}), and returns its deciding process.
If none of the above cases applies, e.g., the successors of $S$ have no valency yet, the traversal aborts and returns $\bot$ (\refline{locate:17}).
Hence, the search for a leader in $\Upsilon_i$ eventually terminates.

Function $\locate$ traverses the tree $\Upsilon_i$ following a depth-first search of the bivalent and univalent vertices.
The search occurs in an order $<$ over the vertices of $\Upsilon_i$ (\reflinestwo{locate:4}{locate:10}), picking the smaller schedules first.
To emulate $\Omega_{g \inter h}$, this order should satisfy that the depth-first method eventually stabilizes.
This requires that if $S'$ is chosen at some point in time to extend $S$, the set of schedules smaller than $S'$ is bounded.
This property is easily obtained from an ordering of the paths the schedules are compatible with (by \reflem{simulation:0}).
In detail, for two samples $s=(p,d,k)$ and $s'=(p',d',k')$, let $s < s'$ be defined as $(k,p) < (k',p')$ with the usual semantic.
For two paths $\pi$ and $\pi'$, $\pi < \pi'$ holds when for some $j$ and every $k<j$, $\pi[j] < \pi'[j] \land \pi[k] = \pi[k']$ is true.
%
Two schedules $S$ and $S'$ in $\Upsilon_i$ are ordered according to the smallest paths in $G$ they are compatible with.

Consider some time $t_0$ after which the root of $\Upsilon_i$ is bivalent.
Let $(S_k)_{k \in \naturalSet^0}$ be the sequence of values taken after time $t_0$ by variable $S$ at $p$ right before the call to $\locate(Q,i)$ returns (\refline{extract:6}).

\begin{proposition}
  \labprop{extract:4}
  It is true that
  \begin{inparaenum}
  \item each $S_k$ is bivalent, and
  \item $(S_k)_{k}$ converges.
  \end{inparaenum}
\end{proposition}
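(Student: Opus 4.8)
The plan is to treat the two claims separately: the first is a direct invariant of the pseudo-code of $\locate$, while the second rests on the stabilization of the simulation tree together with the well-foundedness of the order $<$.

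For claim (i), I would maintain the invariant that throughout any execution of $\locate(Q,i)$ the variable $S$ holds a bivalent schedule. Indeed, $S$ is initialized to $\schedNull$ (\refline{locate:1}), and after $t_0$ the root $\schedNull$ of $\Upsilon_i$ is bivalent by the standing assumption. Thereafter $S$ is reassigned only at \refline{locate:7} and \refline{locate:13}, and in both cases the guarding test ($\tags(S_x,i)=\{g,h\}$, respectively $\tags(S_{\xbar},i)=\{g,h\}$) forces the new value to be bivalent. Since these are the only writes to $S$, its value right before either return point (\refline{locate:16} or \refline{locate:17}) is bivalent, hence every $S_k$ is bivalent.

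For claim (ii), I would first record two monotonicity facts holding at the fixed process $p$: the tree $\Upsilon_i^p$ only grows over time (schedules are never removed), and for each fixed schedule the set $\tags(\cdot,i)$ grows monotonically and converges (\refprop{tagging:1}); in particular a schedule that is once bivalent stays bivalent, as $\{g,h\}$ is the maximal tag set. Consequently the traversal performed by $\locate$ is a deterministic function of the current pair (tree, tags): at each descent it uses the round-robin process $q$ and, among the currently tagged successors, follows the gadget-extending logic of \reflines{locate:4}{locate:16}, always preferring the $<$-smaller schedule. The heart of the argument is that each descent choice eventually stabilizes, which I would prove by induction on the depth of the traversal. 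Fix a bivalent schedule $T$ reached at some depth and assume inductively that the traversal reaches $T$ unchanged from some time on. The successors of $T$ to which $\locate$ may descend form a set that grows monotonically with time (once bivalent, always bivalent), and $\locate$ always descends to the $<$-smallest such successor; hence the chosen successor is $<$-non-increasing across successive calls. By the defining property of $<$ — the set $\{S'' : S'' < S'\}$ is finite for every schedule $S'$ — all these choices lie in a finite initial segment once a first admissible successor appears, and a non-increasing sequence in a finite totally ordered set is eventually constant. The choice at this depth thus stabilizes to some $T'$, closing the induction and yielding a well-defined (finite or infinite) limit path $\schedNull = T_0 \prec T_1 \prec \cdots$ of bivalent schedules.

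Finally I would convert this per-depth stabilization into convergence of $(S_k)_k$. For any bound $d$, after the time by which the choices at depths $0,\dots,d$ have stabilized — so that the schedules $T_0,\dots,T_d$ and their tags are present and stable in $\Upsilon_i^p$ — every subsequent call to $\locate(Q,i)$ descends through $T_0,\dots,T_d$ before it returns; hence the value $S_k$ recorded at return agrees with the limit path up to depth $d$. As $d$ is arbitrary, the common prefix of $S_k$ and the limit path grows without bound, so $(S_k)_k$ converges (to $T_L$ if the limit path is finite of length $L$, and in the prefix metric to the infinite limit path otherwise). The step I expect to be the main obstacle is exactly the per-depth stabilization: one must rule out the choice at a given depth oscillating forever, and this is precisely where the finiteness of $<$-predecessors, combined with the permanence of bivalence, is indispensable.
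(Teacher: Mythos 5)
Your argument for part~(i) is the paper's: $S$ starts at $\schedNull$, which is bivalent after $t_0$, and the guards at \reflinestwo{locate:7}{locate:13} only ever reassign $S$ to a bivalent schedule.

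For part~(ii) there is a genuine gap. Your per-depth stabilization argument (monotone growth of $\Upsilon_i$ and of the tags, plus finiteness of the $<$-initial segments) only shows that successive traversals agree on longer and longer prefixes of a limit path of bivalent schedules; you explicitly concede that this path may be infinite and fall back on ``convergence in the prefix metric''. That is not the convergence the proposition needs: in \refprop{extract:5} the limit must be a single finite schedule $\overline{S}$ that variable $S$ eventually always reaches before the call returns, i.e., $(S_k)_k$ must be eventually constant. Ruling out the infinite limit path is precisely the hard part, and it cannot be done by combinatorics on $<$ alone---it requires the liveness of the simulated algorithm. The paper's proof splits on $Q$: if $Q \neq \correct(R)$, some faulty process in $Q$ eventually contributes no further step at \refline{locate:4}, so the descent halts; if $Q = \correct(R)$ and $(S_k)_k$ did not converge, the ever-growing bivalent prefixes are compatible with a $Q$-fair replicated sampling sequence, so by \refprop{simulation:2} they induce a $\correct(R)$-fair run of $\A$ in which $p$ delivers its message. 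The finite prefix in which that first delivery occurs, together with every schedule extending it (in particular $S_k$ for $k$ large enough), can then carry only the tag of the group whose message $p$ delivered first, hence is univalent---contradicting part~(i). Your proof never invokes the termination of $\A$, so it cannot reach this conclusion.
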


\begin{proof}
  We prove successively each part of the proposition.
  \begin{itemize}
  \item (Each $S_k$ is bivalent.)
    Initially, $S=\schedNull$ (\refline{locate:1}).
    Since $\locate(Q,i)$ is called after time $t_0$, $\schedNull$ is bivalent.
    Then, observe that $S$ is set either at \refline{locate:7} or \refline{locate:13}.
    In both cases, after the assignment, $S$ is still bivalent.
  \item ($(S_k)_{k}$ is convergent.)
    \begin{itemize}
    \item ($Q \neq \correct(R)$.)
      For some faulty process $q$, there is eventually no step $(q,m,d)$ to extend variable $S$ at \refline{locate:4}.
      Hence $S_k$ eventually always takes the same value.
    \item (Otherwise.)
      For the sake of contradiction, consider that $(S_k)_{k}$ does not converge.
      From $(S_k)_{k}$, we build a sub-sequence $(\hat{S}_k)_{k \in \naturalSet^0}$ as follows:
      \begin{construction}{convergent}
        Schedule $\hat{S}_{0}$ is set to $\schedNull$.
        Assuming $\hat{S}_k$, $\hat{S}_{k+1}$ is built as follows.
        Variable $S$ is changed either at \refline{locate:1}, \refline{locate:7}, or \refline{locate:13}.
        As $\hat{S}_{k}$ prefixes every $S_{r \geq r_k}$, eventually for every new call to $\locate$, variable $S$ is set to $\hat{S}_{k}$.
        Let $q$ be the process considered next at \refline{locate:3} after that assignment.
        Since $(S_k)_k$ does not converge, eventually for some step $(q,m,d)$, $\hat{S}_{k} \concat (q,m,d)$ is in $\Upsilon_i$.
        By definition of $<$, the loop at \refline{locate:4} eventually stabilizes for some failure detector sample $d$.
        $\hat{S}_{k+1}$ is set to the schedule $\hat{S}_{k} \concat (q,m,d)$.
      \end{construction}

      The series $(\hat{S})_{k}$ satisfies that
      \begin{inparaenum}
      \item $\hat{S}_0=\schedNull$, 
      \item for all $k \geq 0$, $\hat{S}_{k+1} = \hat{S}_k \concat (q,m,d)$ with $q=q_{\modOf{k}{\cardinalOf{Q}}}$,
      \item each $\hat{S}_k$ strictly prefixes $\hat{S}_{k+1}.$, and
      \item for all $k \geq 0$, there exists a rank $r_k$ such that $\hat{S}_k$ prefixes every $S_{r \geq r_k}$.        
      \end{inparaenum}
      Properties \textit{(i)}-\textit{(iii)} follow from \refconstruction{convergent}.
      Property \textit{(iv)} is obtained by induction.
      In detail, this is immediate if $\hat{S}_k = \schedNull$.
      Next, assume \textit{(iii)} holds at rank $k$.
      Step $(q,m,d)$ is such that $\hat{S}_k \concat (q,m,d)$ is eventually always the smallest schedule at \refline{locate:4}.
      Hence, for some rank $r_{k+1} \geq r_{k}$, for every $r \geq r_{k+1}$, $\hat{S}_{k+1} \prefix S_{r}$.

      Let $(\pi_k)_k$ be paths in $G$ with which the schedules in $(\hat{S}_k)_k$ are compatible.
      It is easy to see that $(\pi_k)_k$ is a $Q$-fair replicated sampling sequence.
      Moreover, according to the pseudo-code of $\locate$, $\hat{S}_k=\sched(\pi_k,I_i)$.
      Applying \refprop{simulation:2}, $(\runOf{S_k,R,i})_{k}$ converges toward a $Q$-fair run $\hat{\R}$ of $\A$.
      Since $Q=\correct(R)$, $p \in Q$ and $p$ multicasts a message in this run, $p$ eventually delivers it in $\hat{\R}$.
      Hence, $S_{k \geq \kappa}$ is univalent from some rank $\kappa$;
      a contradiction.
    \end{itemize}
  \end{itemize}
\end{proof}

\begin{proposition}
  \labprop{extract:5}
  Eventually \locate(Q,i) always returns $\bot$, or for some stable and deciding subtree $(S,S_x,S_{\xbar})$ of $\Upsilon_i$, it always returns $\deciding(S,S_x,S_{\xbar})$.
\end{proposition}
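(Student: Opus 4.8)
The plan is to leverage \refprop{extract:4}. That proposition gives that the sequence $(S_k)_k$ of values held by the variable $S$ just before each return of $\locate(Q,i)$ converges; since every $S_k$ is bivalent and the infinite-limit case is excluded exactly by the fair-run argument in the proof of \refprop{extract:4}, the limit is a fixed finite bivalent schedule $S_\infty$. I would first fix a time after which, in every call, $S$ equals $S_\infty$ at the moment of return, and after which every schedule of $\Upsilon_i$ has reached its limit valency (\refprop{tagging:1} and \refcor{tagging:2}). The remaining task is to show that, past this time, each call executes the loop body with $S = S_\infty$ in the same way.

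The key is the stability of the order-$<$ scans at \refline{locate:4} and \refline{locate:10}. At the fixed node $S_\infty$ the round-robin selects a fixed process $q$ (\refline{locate:3}) and a fixed pending message $m$, so every competing candidate considered in the scans is an extension by a step of the \emph{same} process $q$. Two such extensions are compared in $<$ only through their $q$-sample, and the counters attached to $q$'s samples increase monotonically; hence a later-created sample is larger and can never displace an already-present smaller tagged extension. Consequently the smallest tagged $q$-extension $S_x = S_\infty \concat (q,m,d)$ stabilizes, and inductively so does the smallest opposite-valency extension inspected by the inner scan together with the outcome of the univalent-chain test at \refline{locate:15}. Because $S_\infty$ is the final value of $S$, no advance is taken from it, so $S_x$ is eventually univalent, the test at \refline{locate:6} fails, and control always enters the inner scan. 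Thus the entire loop body at $S_\infty$ becomes deterministic.

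The conclusion then follows from a case split on this stabilized behaviour. If the inner scan eventually selects an opposite-valency extension $S_{\xbar} = S' \concat (q,m,d')$ meeting the condition of \refline{locate:15}, then $(S_\infty, S_x, S_{\xbar})$ satisfies precisely the four clauses defining a deciding subtree, is stable since all tags involved are frozen, and $\locate$ returns $\deciding(S_\infty, S_x, S_{\xbar})$ at \refline{locate:16}; as the triple is fixed and $\deciding$ reads only the frozen subtree, this value is the same in every call. Otherwise the outer scan is exhausted without any jump or return, control reaches \refline{locate:17}, and $\bot$ is returned in every call. This establishes the stated dichotomy.

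The main obstacle I anticipate is the stabilization argument of the second paragraph. Because slow processes may emit small-counter samples arbitrarily late, one cannot claim that freshly created vertices of $G$ are globally maximal in $<$; the monotonicity must be localized to the fixed extending process $q$, whose samples alone order the competing extensions and whose counter grows without bound, so that only finitely many candidates can precede any already-present smallest tagged extension. Turning this localization into a rigorous, inductive control of both the outer scan and the nested univalent-chain scan of $\locate$---all while $\Upsilon_i$ and $G$ keep growing---is where the genuine difficulty lies.
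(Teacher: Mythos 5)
Your proposal is correct and follows essentially the same route as the paper: invoke \refprop{extract:4} to fix the limit schedule of the variable $S$, split on whether the call keeps returning $\bot$ at \refline{locate:17} or returns at \refline{locate:16}, and in the latter case use the fact that only finitely many schedules precede any given one in the order $<$ to argue that the scans at \reflinestwo{locate:4}{locate:10} stabilize on a fixed stable deciding subtree. Your localization of the order argument to the samples of the fixed extending process $q$ is a slightly more explicit rendering of the same finiteness property the paper invokes directly from the definition of $<$.
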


\begin{proof}
  Applying \refprop{extract:4}, $(S_t)_{t}$ converges toward some schedule $\overline{S}$.
  Consider this happens at some time $t_1 \geq t_0$
  According to the pseudo-code of $\locate$, after assigning $\overline{S}$ to variable $S$, the function returns.
  If $\locate(Q,i)$ eventually always returns at \refline{locate:17}, we are done.
  Otherwise, it returns at \refline{locate:16} infinitely often the value of $\deciding(S,S_x,S_{\xbar})$
  Let $t_2 \geq t_1$ be a time after which this holds.

  Consider that $\locate(Q,i)$ returns at \refline{locate:16}.
  The variables in function $\locate$ are such that 
  \begin{inparaenum}
  \item $S'$ is a successor of $S$,
  \item $S_x = S \concat (q,m,d)$ is $x$-valent,
  \item $S_{\xbar} = S' \concat (q,m,d')$ is $\xbar$-valent, and
  \item every schedule $\hat{S}$ along the path from $S$ to $S'$ satisfies that $\hat{S} \concat (q,m,d')$ is univalent.
  \end{inparaenum}
  Hence $(S,S_x,S_{\xbar})$ is a stable and deciding subtree of $\Upsilon_i$.
  
  From the definition of $<$, there is a finite number of successors $S''$ of $S$ such that for some $d''$, $S'' \concat (q,m,d'') < S' \concat (q,m,d')$.
  Hence, after some time the loop at \refline{locate:10} stabilizes.
  Let this happens at time $t_3 \geq t_1$.
  
  Based on what precedes, there exists a stable and deciding subtree $(S,S_x,S_{\xbar})$ in $\Upsilon_i$ such that $\locate(Q,i)$ always returns $\deciding(S,S_x,S_{\xbar})$ from time $t \geq \max(t_2,t_3)$.
\end{proof}

\begin{proposition}
  \labprop{extract:6}
  For every $i \in [0,n]$, $\locate(\correct(R),i)$ cannot return $\bot$ forever.
\end{proposition}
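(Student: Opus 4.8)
The plan is to argue by contradiction: suppose that, once the root of $\Upsilon_i$ is bivalent (the regime of \refprop{extract:4}), $\locate(\correct(R),i)$ returns $\bot$ in infinitely many invocations. By \refprop{extract:4} the schedules $(S_k)_k$ held by the variable $S$ just before each such return converge to a schedule $\overline{S}$, and every $S_k$ is bivalent. Moreover, $\locate$ only ever moves $S$ to a bivalent vertex---the assignments at \reflinestwo{locate:7}{locate:13} are guarded by the bivalency tests at \reflinestwo{locate:6}{locate:12}, and the initial value $\schedNull$ at \refline{locate:1} is the (bivalent) root---so every value ever taken by $S$ during a traversal is bivalent.

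First I would check that $\overline{S}$ is finite. If it were infinite, then, since $\locate$ visits the processes of $\correct(R)$ in round-robin at \refline{locate:3}, $\runOf{\overline{S},R,i}$ (\refconstruction{run}) would be a fair run of $\A$ from $I_i$; as $p \in g \inter h$ multicasts a message to $g$ or $h$ in $I_i$, strong genuineness forces $p$ to deliver it, so some finite prefix of $\overline{S}$ is univalent. That prefix is one of the (bivalent) values taken by $S$, a contradiction. Hence $\overline{S}$ is finite, and from some point on every returning invocation has $S=\overline{S}$ just before yielding $\bot$; let $q \in \correct(R)$ be the process examined next at \refline{locate:3} and $m$ the oldest message addressed to $q$ in $\overline{S}(I_i)$.

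The heart of the proof is then a standard FLP valency step at the bivalent schedule $\overline{S}$. Since the traversal returns $\bot$, no tagged successor $\overline{S}\concat(q,m,d)$ is bivalent (otherwise \refline{locate:7} would fire), so each is univalent. If two of them carry opposite valencies, then $\overline{S}$ together with them is a fork (\reffigure{fork}). Otherwise all are, say, $x$-valent; as $\overline{S}$ is bivalent it has an $\xbar$-valent descendant, and walking a path $\overline{S}=U_0,\dots,U_r$ to it while tracking the valency of $U_j\concat(q,m,d')$---which is $x$ at $j=0$ and $\xbar$ at $j=r$---exhibits a first index $j$ where this valency flips across a single step by some process $q'$, giving a hook (\reffigure{hook}). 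In either case one obtains a deciding subtree $(\overline{S},\overline{S}\concat(q,m,d),S'\concat(q,m,d'))$, whose deciding process $\locate$ would return at \refline{locate:16}.

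The main obstacle is to guarantee that this deciding subtree is actually present in $\Upsilon_i$ with stable valencies, so that the guard at \refline{locate:15} is met and $\deciding(S,S_x,S_{\xbar})$---not $\bot$---is returned. This is exactly what \refprop{simulation:3} provides: for each $U_j$ on the path I would use it to force $p$ to deliver its message in a finite extension, so that $U_j$ and the successors $U_j\concat(q,m,d')$ are simulated and, by \refprop{tagging:1} and \refcor{tagging:2}, eventually tagged with their final valency. Coupling this simulation-theoretic step with the combinatorial FLP argument is the delicate part; once both are in place, $\locate(\correct(R),i)$ returns a deciding process rather than $\bot$, contradicting the hypothesis and establishing the claim.
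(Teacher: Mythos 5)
Your argument is correct and follows essentially the same route as the paper: both rely on the convergence of the traversal variable $S$ to a bivalent limit (\refprop{extract:4}, \refprop{extract:5}), on \refprop{sampling:3}/\refprop{simulation:3} together with \refprop{tagging:1} to ensure the relevant successor schedules eventually appear in $\Upsilon_i$ and are stably tagged, and on the bivalence of that limit to force a deciding subtree that $\locate$ then returns at \refline{locate:16}. The differences are only presentational: you spell out the fork/hook case analysis (which the paper defers to the discussion of deciding subtrees preceding \refprop{extract:3}, its proof of this proposition instead enumerating the three ways $\locate$ can return $\bot$ and showing each is transient), and your separate finiteness check for $\overline{S}$ is already subsumed by \refprop{extract:4}.
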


\begin{proof}
  Assume $\locate(\correct(R),i)$ returns $\bot$ at \refline{locate:17}.
  Let $q \in Q$ be the next process to schedule in $S$ and $m$ the oldest message (if none, then $\msgNull$) addressed to $q$ in $S(I_i)$ (\refline{locate:3}).
  As $\locate(Q,i)$ returns $\bot$, either
  \begin{itemize}
  \item[(1)] there is no sample $d$, such that $S \concat (q,m,d)$ is in $\Upsilon_i$ and tagged (\refline{locate:4}), or
  \item[(2)] $S \concat (q,m,d)$ is $x$-valent, and for every successor $S'$ of $S$, either
    \begin{itemize}
    \item[(2.a)] no schedule $S' \concat (q,m,d')$ is $\xbar$-valent (\refline{locate:10}), or
    \item[(2.b)] some schedule $S' \concat (q,m,d')$ is $\xbar$-valent, but for some $\hat{S} \in [S,S')$, $\hat{S} \concat (q,m,d')$ has no tag yet (\refline{locate:15}).
    \end{itemize}
  \end{itemize}

  Applying \refprop{sampling:3}, process $q$ eventually takes some step $(q,m,d)$ from $S$.
  Moreover by \refprop{tagging:1}, $S \concat (q,m,d)$ is eventually tagged.
  Hence, case \textit{(1)} cannot happen infinitely often.
  
  Consider that $S \concat (q,m,d)$ is $x$-valent.
  Since $S$ is bivalent, then some successor $S'$ of $S$ is tagged with $\xbar$.
  Hence, case \textit{(2.a)} cannot occurs infinitely often.

  Then, applying again \refprop{tagging:1}, for every $\hat{S} \in [S,S')$, $\hat{S} \concat (q,m,d')$ is eventually tagged.
  From what precedes, \textit{(2.b)} cannot happen infinitely often.
  
  By \refprop{extract:5}, there exists a limit for $(S_k)_k$.
  It follows that \refline{locate:16} occurs infinitely often.
\end{proof}

A procedure is stable when from some point in time, it always returns at the same line with the same values for its variables.
The result below establishes that this happens eventually to $\extract$.

\begin{proposition}
  \labprop{extract:7}
  Procedure \extract is eventually stable.
\end{proposition}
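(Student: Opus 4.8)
The plan is to reduce the claim to the fact that the finitely many quantities on which the control flow of $\extract$ depends all eventually stop changing, and then to take a common time after which none of them move. Recall that $\extract$ runs only at a correct process $p \in g \inter h$ (elsewhere $\query$ returns $\bot$ at \refline{query:1}), that it iterates over the finite index set $i \in [0,n]$ and, in its bivalent branch, over the finite set of nonempty $Q \in 2^{\procSet}$ (\refline{extract:5}), and that it is a deterministic automaton. Hence, once every datum it reads has frozen, its execution path through the pseudo-code and the value it returns are determined, which is exactly what eventual stability means.

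First I would freeze the root valencies. Applying \refprop{tagging:1} to process $p$ and the root schedule $\schedNull$ (which lies in each $\Upsilon_i$ from \refline{omega:var:3}), each $\tags(\schedNull,i)^{p,t}$ converges to a fixed non-empty value; since $[0,n]$ is finite, there is a time $t_1$ after which all $n+1$ root valencies are stable at $p$. After $t_1$ the two guards the iteration branches on --- the univalent-critical test at \refline{extract:2} and the bivalent test at \refline{extract:4} --- have fixed truth values, because they depend only on these valencies and on the fixed adjacency relation among the initial configurations. When the univalent-critical guard holds I would fix a deterministic witness (say the least $j$ and then the least $q$ with $I_i \ndist{q} I_j$), so that the value returned at \refline{extract:3} is itself fixed after $t_1$.

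Next I would freeze the searches. For every $i$ whose root is stably bivalent after $t_1$ and every nonempty $Q$, \refprop{extract:5} (whose hypothesis that the root is bivalent is met by taking its $t_0 \ge t_1$) tells us that $\locate(Q,i)$ eventually always returns the same value, either $\bot$ or $\deciding(S,S_x,S_{\xbar})$ for a fixed stable deciding subtree. As there are only finitely many pairs $(Q,i)$, there is a time $t_2 \ge t_1$ after which all of these calls are stable; for the $i$ whose root is not stably bivalent, the branch at \refline{extract:4} is simply never entered after $t_1$, so no further $\locate$ call occurs there. Consequently, after $t = \max(t_1,t_2)$ the first $Q$ in the fixed iteration order with $\locate(Q,i) \in g \inter h$ --- hence the value returned at \refline{extract:8} --- is determined, as is the fall-through return of $p$ at \refline{extract:9}.

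Finally I would assemble: after time $t$, every guard, every loop bound, and every subroutine return that $\extract$ consults is constant, so the deterministic traversal always reaches the same return statement with the same values, which is the desired stability. The substance of the argument lives entirely in the cited stabilization results (\refprop{tagging:1} for valencies and \refprop{extract:5} for $\locate$), so the only real work here is bookkeeping. The main point to get right --- and the place where a careless argument could fail --- is to verify that \emph{all} of $\extract$'s branching depends solely on data that stabilizes and that the relevant index sets are finite (in particular $2^{\procSet}$, finite because $\procSet$ is), together with the ordering subtlety that \refprop{extract:5} applies only once the root valency of $\Upsilon_i$ has itself been frozen, which is precisely why the valencies must be stabilized before the $\locate$ calls.
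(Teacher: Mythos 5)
Your proof is correct and takes essentially the same route as the paper's: stabilize the root valencies via the tagging results (the paper cites \refcor{tagging:2}, you cite \refprop{tagging:1}, either suffices at a process in $g \inter h$), stabilize each $\locate(Q,i)$ via \refprop{extract:5}, and conclude by finiteness of the index sets; the extra bookkeeping you supply about fixed witnesses and iteration order is implicit in the paper's much terser argument.
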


\begin{proof}
  By~\refcor{tagging:2}, the valency of the root of $\Upsilon_i$ is eventually stable.
  Applying~\refprop{extract:5}, for every $Q \subseteq \procSet$, $\locate(Q,i)$ is also eventually stable.
  Thus, procedure $\extract$ eventually stabilizes.
\end{proof}

\begin{proposition}
  \labprop{extract:8}
  Eventually \extract always returns the same correct process in $g \inter h$ picked at \refline{extract:3} or \refline{extract:8}.
\end{proposition}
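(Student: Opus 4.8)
The plan is to combine the eventual stability of \extract\ (\refprop{extract:7}) with the existence of a critical index (\refprop{extract:1}), and then, by tracing the outer loop over $i \in [0,n]$, to pin down that the (eventually constant) return is taken at \refline{extract:3} or \refline{extract:8} and yields a correct process in $g \inter h$, thereby excluding the fall-through at \refline{extract:9}.

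First I would fix a time past which \extract\ is stable (\refprop{extract:7}), every root valency $\tags(\schedNull,i)$ has converged (\refcor{tagging:2}), and every call $\locate(Q,i)$ has stabilized (\refprop{extract:5}); beyond this time let $i^*$ be a critical index (\refprop{extract:1}), chosen smallest. I would then follow the outer loop and argue it must return before reaching \refline{extract:9}. For each $i < i^*$: the guard at \refline{extract:2} is exactly the definition of a univalent critical index, so by minimality of $i^*$ it fails and no return occurs at \refline{extract:3}; if the root of $\Upsilon_i$ is univalent, the branch guarded at \refline{extract:4} is skipped and the loop advances; if it is bivalent, then for each $Q$ the value $\locate(Q,i)$ is, by \refprop{extract:5} together with \refprop{extract:3}, either $\bot$ or a \emph{correct} process (the deciding process of a stable deciding subtree). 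If some such value lies in $g \inter h$, the test at \refline{extract:7} passes and \extract\ returns it at \refline{extract:8}; this value is correct by \refprop{extract:3} and in $g \inter h$ by the test itself, so the conclusion already holds. Otherwise the inner loop exhausts all $Q$ and the outer loop advances to $i+1$.

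The decisive step is reaching $i = i^*$. If $i^*$ is univalent critical, the guard at \refline{extract:2} holds and \extract\ returns $q$ at \refline{extract:3}, which is correct and in $g \inter h$ by \refprop{extract:2}. If $i^*$ is bivalent critical, the root of $\Upsilon_{i^*}$ is bivalent, so the branch at \refline{extract:4} is entered; by \refprop{extract:6}, $\locate(\correct(R),i^*)$ cannot return $\bot$ forever, so after stabilization it returns a non-$\bot$ value, which by \refprop{extract:5} and \refprop{extract:3} is the deciding process of a stable deciding subtree, hence correct and---because $i^*$ is critical---a member of $g \inter h$. Thus by the time the inner loop reaches $Q = \correct(R)$ (or an earlier $Q$ yielding a member of the intersection), the test at \refline{extract:7} passes and \extract\ returns at \refline{extract:8} a correct process in $g \inter h$. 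In every case the return is taken at \refline{extract:3} or \refline{extract:8}, never at \refline{extract:9}; invoking \refprop{extract:7} once more makes this value eventually constant, which yields the claim.

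The main obstacle will be the bookkeeping of the nested loops: one must rule out any premature escape through \refline{extract:9} while still permitting a legitimate early return at \refline{extract:8} for a bivalent but non-critical index. The asymmetry that makes this work is that \refprop{extract:3} guarantees correctness of the deciding process \emph{unconditionally}, whereas its membership in $g \inter h$ is guaranteed only at critical indices; for non-critical bivalent indices correctness is supplied by \refprop{extract:3} and membership is enforced by the explicit test at \refline{extract:7}, so no incorrect process and no process outside $g \inter h$ can ever be returned at \refline{extract:8}.
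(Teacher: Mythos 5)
Your proof is correct and follows essentially the same route as the paper: stabilization via \refprop{extract:7}, existence of a critical index via \refprop{extract:1}, then the univalent/bivalent case split using \refprop{extract:2} and \refprops{extract:3}{extract:6} to rule out the fall-through at \refline{extract:9} and to certify that whatever is stably returned at \refline{extract:3} or \refline{extract:8} is correct and in $g \inter h$. Your extra bookkeeping for non-critical bivalent indices $i < i^*$ (unconditional correctness from \refprop{extract:3} plus the explicit membership test at \refline{extract:7}) is exactly the observation the paper's last two sentences rely on, just spelled out more explicitly.
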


\begin{proof}
  Let $t_0$ be the first time at which $\extract$ is stable (by \refprop{extract:7}).
  Applying \refprop{extract:1}, there exists eventually a critical index $i \in [0,n]$ at process $p$.
  We note $t_1$ the time at which this happens first.
  Then, consider some time $t>\max(t_0,t_1)$.
  As $i$ is critical,
  \begin{itemize}
  \item (Case $i$ is univalent)
    By definition, index $i$ must pass the test at \refline{extract:2}.
  \item (Case $i$ is bivalent)
    Applying \refprop{extract:6}, $\locate(\correct(R),i)$ eventually returns some process $q$ at \refline{extract:6}.
    By \refprop{extract:5}, there exists a stable and deciding subtree $(S,S_x,S_{\xbar})$ of $\Upsilon_i$, such that $q=\deciding(S,S_x,S_{\xbar})$.
    By \refprop{extract:3}, process $q$ belongs to $g \inter h$, thus it passes the test at \refline{extract:7}.
  \end{itemize}
  From which it follows that $p$ eventually never executes \refline{extract:9}.
  Assume now that $\extract$ forever returns at \refline{extract:3}.
  By \refprop{extract:2}, $q$ is correct and in $g \inter h$.
  Otherwise, it returns forever at \refline{extract:8}.
  Then, by \refprop{extract:3}, process $q$ is correct and since the test at \refline{extract:7} was passed, $q$ belongs to $g \inter h$.
\end{proof}

\subsection{Correctness of \refalg{omega}}
\labappendix{omega:correctness}

Based on the prior results, we may now establish that:

\begin{theorem}
  \labtheo{correctness}
  \refalg{omega} implements $\Omega_{g \inter h}$.
\end{theorem}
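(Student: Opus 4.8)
The plan is to discharge the two obligations defining a correct emulation of $\Omega_{g \inter h}$: that the emulated output has the right \emph{range} (namely $\bot$ outside $g \inter h$ and a process identifier inside it), and that it satisfies the \emph{leadership} property of $\Omega_{g \inter h}$, i.e.\ that all correct processes in $g \inter h$ eventually output, permanently, one and the same correct process of $g \inter h$. Almost all of the technical content has already been packaged into the propositions of \refappendix{omega:extract}, so the theorem is essentially an assembly step; the only genuine work left is the inter-process agreement argument.

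First I would dispatch the range. A process writes to $\result$ only through \reflinestwo{omega:2}{omega:3}, which are guarded by the test $p \in g \inter h$. Hence for $p \notin g \inter h$ the variable keeps its initial value $\bot$ assigned at \refline{omega:var:1}, and the query at \refline{query:1} returns $\bot$, as required. For $p \in g \inter h$, \refprop{extract:8} already guarantees that $\extract$ eventually always returns a process of $g \inter h$, so the range is valid at those processes too. This also yields \emph{half} of leadership: by \refprop{extract:8} every correct $p \in g \inter h$ individually stabilises on a correct process lying in $g \inter h$ (the correctness and membership coming from \refprop{extract:2} in the univalent case and \refprop{extract:3} in the bivalent case).

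The remaining and main step is to show that this per-process limit is the \emph{same} identifier at all correct processes of $g \inter h$. The plan is a determinism-plus-convergence argument in the CHT style. I would first observe that the composite procedure $\extract$ (and, within it, $\locate$) is a deterministic function of the simulation forest $(\Upsilon_i)_{i}$, the valency tags $\tags(\cdot,i)$, the fixed enumeration of indices $i \in [0,n]$, the enumeration of subsets $Q$, and the fixed total order $<$ on schedules and decision gadgets. Next I would argue that all of these inputs converge to identical limits across correct processes in $g \inter h$: by \refcor{simulation:0a} any schedule that eventually appears in $\Upsilon_i$ at one correct process eventually appears in $\Upsilon_i$ at every correct process, and by \refcor{tagging:2} the tag of any such schedule converges to a common value everywhere. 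Consequently the critical index produced by \refprop{extract:1} (the first $i$ in the fixed enumeration whose extraction yields a process of $g \inter h$) and, in the bivalent case, the first stable deciding subtree located by $\locate$ (\refprop{extract:5}) are eventually identical at all correct processes in $g \inter h$; since $\deciding$ returns the deciding process of the \emph{first} gadget in the fixed order, every such process extracts the same identifier $\ell$, and by the previous paragraph $\ell$ is correct and belongs to $g \inter h$. The hard part is precisely this agreement: it hinges on the facts that the eventual stable state of the tagged forest is \emph{shared} among correct processes in $g \inter h$ (the two corollaries above) and that $\extract$ breaks every remaining choice deterministically through the total order $<$; once both are in place, the common limit $\ell$ witnesses the leadership property and closes the proof.
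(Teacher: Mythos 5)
Your proposal is correct and follows essentially the same route as the paper: range first (initial $\bot$ outside $g\inter h$, valid identifiers inside), then per-process stabilization on a correct member of $g \inter h$ via \refprop{extract:8}, then agreement by combining \refcor{simulation:0a}, \refcor{tagging:2} and the eventual stability/determinism of $\extract$ (\refprop{extract:7}). Your last paragraph in fact spells out the determinism-plus-shared-limit reasoning that the paper's one-line agreement step leaves implicit, so no gap remains.
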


\begin{proof}
  Name $R$ some run of \refalg{omega} and let $P=\correct(R) \inter (g \inter h)$.

  For starters, we show that the range of the failure detector implemented with \refalg{omega} is correct.
  If $p$ is outside $g \inter h$, a call to $\query$ returns $\bot$ at \refline{query:1}.
  Otherwise, there are three cases to consider in the $\extract$ procedure.
  As detailed below, the procedure always returns a process identifier that belongs to $g \inter h$.
  \begin{itemize}
  \item (\refline{extract:3}) the call returns some process in $q$; by \refprop{extract:2}, $q \in P$.
  \item (\refline{extract:8}) function $\locate$ is called and the result satisfies the test at \refline{extract:7}.
  \item (\refline{extract:9}) $p$ simply retrieves its identity.
  \end{itemize}
  
  By \refprop{extract:8}, every correct process in $p \in P$ eventually always elects some correct process $l_p \in g \inter h$.
  Applying~\refcortwo{simulation:0a}{tagging:2} and \refprop{extract:7}, for any two $p,q \in P$, $l_p=l_q$.
\end{proof}

\fi

\end{document}